\newcommand{\oneover}[1]{\ensuremath{\frac{1}{#1}}}
\newcommand{\hil}{\ensuremath{\mathcal{H}}}
\newcommand{\states}[1]{\ensuremath{\mathfrak{S}(\mathcal{H}_{#1})}}
\newcommand{\ket}[1]{\ensuremath{|#1\rangle}}
\newcommand{\bra}[1]{\ensuremath{\langle #1|}}
\newcommand{\braket}[2]{\ensuremath{\langle #1 | #2 \rangle}}
\newcommand{\ketbra}[2]{\ensuremath{| #1 \rangle \langle #2 |}}
\newcommand{\dketbra}[1]{\ensuremath{| #1 \rangle \langle #1 |}}
\newcommand{\ave}[1]{\ensuremath{\left\langle#1\right\rangle}}
\newcommand{\abs}[1]{\ensuremath{\left|#1\right|}}
\newcommand{\norm}[1]{\ensuremath{\left\|#1\right\|}}
\newcommand{\tr}[1]{\ensuremath{\operatorname{Tr}\left[#1\right]}}
\newcommand{\trabs}[1]{\ensuremath{\operatorname{Tr}\left|#1\right|}}
\newcommand{\ptr}[2]{\ensuremath{\operatorname{Tr}_{#1}\left[#2\right]}}
\newcommand{\trd}[2]{\ensuremath{D\left(#1,#2\right)}}
\newcommand{\fid}[2]{\ensuremath{F\left(#1,#2\right)}}
\newcommand{\vecop}[1]{\ensuremath{\underline{\hat{#1}}}}
\newcommand{\comm}[2]{\ensuremath{\left[#1,#2\right]}}
\newcommand{\vecund}[1]{\ensuremath{\underline{#1}}}
\newcommand{\veclad}[1]{\ensuremath{\underline{\tilde{#1}}}}
\newcommand{\eqq}[1]{Eq.~(\ref{#1})}
\title{Decoding Protocols \\ for Classical Communication \\ on Quantum Channels}
\author{Matteo Rosati}
\keywords{{LaTeX} {PhD Thesis} {Physics} {Scuola Normale Superiore di Pisa}}
\begin{document}

\frontmatter

\maketitle


\begin{acknowledgements}      
Pursuing successful research means learning to cope with imperfection and the limits of human capabilities. The first difficulty is the lack of knowledge: a young researcher has limited experience of what has been done before and only time can fill this gap. But, as time passes, one realizes that it is impossible to keep track of all the news in one's field and even the best researchers seem to be experts only in some subfields, with limited knowledge of the neighbouring ones. However, what they lack in knowledge, they make up for in imagination. On a more fundamental level, the lack of knowledge is intrinsic in the activity we call research, because it takes place at a fragmented frontier of the human knowledge; hence a young researcher has to learn how to investigate the unknown, trying to fill the gaps encountered along the way. \\
The second difficulty is devising proper questions. This is a complex task because it requires one to gauge several competing factors: if the question is too general it may sound naive because of its huge scale; if the question is too specific it may be trivial or worse, it may still be difficult to answer but also of very limited interest to others; more subtly, a question may look relevant and tractable, but a true nightmare in technical terms. Moreover, there is no set of rules to follow for devising good questions because different people usually have different approaches and one can only hope to find one's own method through successive approximations. \\
The third difficulty is, of course, actually looking for the answer. It is well known that this usually requires a lot of persistence and passion for research: even standard methods may be painful to apply; one may get stuck at some point and need to follow different approaches; the results may be unsatisfactory. \\
However, all the difficulties are forgotten when each piece of the puzzle finally finds its right place; nothing can substitute the feeling of expectation and anxiety when opening a plot or reaching the end of a calculation. What was nature hiding? Will it work? How good will the result be? Then, even though it took entire months of research, even though maybe we did not answer our initial question, even though it may not be accepted by the journal we wanted and perhaps it is not going to change the world, we will have captured a new little piece of reality, colouring another small spot on the map of knowledge. \\
During the last three years, I was lucky to have people who could help me to successfully face these problems. I would like to thank Vittorio and Andrea for teaching me how to do research, for their patience and readiness to hear my ideas and doubts. I am extremely grateful to my girlfriend, Virginia, and my parents, Marco and Nicoletta, for their steady, intelligent and heartfelt support in good times and bad times. I would also like to thank and wish good luck to some colleagues that shared this adventure with me: Vasco, Stefano, Giacomo, Max, Marcello, Michael, Luigi and Federico, along with the whole research group at SNS. Eventually, I am thankful to all my friends, Edoardo above all, to my close relatives and to my teachers for having shaped my life so far, bringing me where I am now.
\end{acknowledgements}
\begin{abstract}
We study the problem of decoding classical information encoded on quantum states at
the output of a quantum channel, with particular focus on increasing the communication
rates towards the maximum allowed by Quantum Mechanics. \\
After a brief introduction
to the main theoretical formalism employed in the rest of the thesis, i.e., continuous-variable
Quantum Information Theory and Quantum Communication Theory, we consider
several decoding schemes. \\
First, we treat the problem from an abstract perspective,
presenting a method to decompose any quantum measurement into a sequence of easier
nested measurements through a binary-tree search. Furthermore we show that this
decomposition can be used to build a capacity-achieving decoding protocol for classical
communication on quantum channels and to solve the optimal discrimination of some sets
of quantum states. These results clarify the structure of optimal quantum measurements,
showing that it can be recast in a more operational and experimentally-oriented fashion.\\
Second, we consider a more practical approach and describe three receiver structures
for coherent states of the electromagnetic field with applications to single-mode state
discrimination and multi-mode decoding at the output of a quantum channel. We treat
the problem bearing in mind the technological limitations faced nowadays in the field
of optical communications: we evaluate the performance of general decoding schemes
based on such technology and report increased performance of two schemes, the first
one employing a non-Gaussian transformation and the second one employing a code
tailored to be read out easily by the most common detectors. Eventually we characterize
a large class of multi-mode adaptive receivers based on common technological resources,
obtaining a no-go theorem for their capacity. 
\end{abstract}


\tableofcontents





\mainmatter


\chapter{Introduction}


Quantum Communication~\cite{holevoBOOK,wildeBOOK} is a research field where Quantum Information Science may have its first promising applications, thanks to technological requirements somewhat easier than, e.g., Quantum Computation~\cite{nChuangBOOK}. Its study has begun before Quantum Information Science was actually born, stemming from the same core interest of Classical Information Theory~\cite{covThomBOOK}, i.e., the analysis of information transmission under the distinct requirements of reliability, resource-efficiency and security. Since information is transmitted through physical media that ultimately follow the laws of Quantum Mechanics, there rose an obvious interest in applying those laws to such practical task, specifically in the regime where quantum effects become relevant as compared to classical fluctuations. Following the idea of using optical pulses of an attenuated laser field, i.e., optical coherent states~\cite{glauber}, to transmit classical information, in the 1970's the pioneering theoretical works of Holevo~\cite{holevo1} and Helstrom~\cite{helstromBOOK} focused on two basic problems for the starting field, namely extending the Shannon Capacity Theorem~\cite{covThomBOOK} to quantum channels and the Classical Detection and Estimation Theory to quantum signals. Both these problems face a fundamental difference between the classical and quantum behaviour, intimately connected to the Quantum Uncertainty Principle.\\
 The first problem consists in determining the ultimate rate, or \textit{capacity}, at which classical information can be reliably transmitted on a quantum channel, i.e., without introducing errors in the process. While in the classical case the capacity of the channel is only bounded by the necessary presence of statistical noise in realistic descriptions of the communication media, in the quantum case even an ideal noiseless channel (with constrained input energy in the case of infinite-dimensional signals) has a finite capacity due to the presence of intrinsic quantum-mechanical noise. This line of research has succeeded in several purposes: establishing the ultimate capacity for the transmission of classical information on a quantum channel and demonstrating its achievability in the asymptotic limit of infinitely long sequences~\cite{holevoBOOK,holevo1,holevo2,holevo3,schumawest1,schumawest2,winter,oga1,oga2,oga3,hauswoot,oganaga,hayanaga,hayashi1,hayashi2,seq1,seq2,sen,polarWildeGuha,NOSTROHol}; clarifying the role of entanglement during encoding and decoding operations~\cite{hastings,bennetEntEnc}; generalizing the capacity to the settings where shared entanglement can be used as a resource~\cite{bennetEntAss1,bennetEntAss2} or quantum information has to be transmitted~\cite{lloydQuant,qCap1,qCap2}, preserving its coherence; giving an impulse to the development of Quantum Cryptography~\cite{crypto}. In particular for continuous-variables systems~\cite{RevGauss} it has been recently proved that a coherent-state encoding with amplitudes extracted at random from a Gaussian distribution achieves the ultimate communication capacity of phase-insensitive Gaussian channels~\cite{gaussOpt,maj1,maj2}, such as those modeling modern-day optical fiber and free-space communication~\cite{cavesDrum,LOSSY}. \\
 The second problem consists instead in recovering the information encoded on the quantum carriers by performing a parameter estimation or, in simpler cases, the \textit{optimal discrimination} of a set of quantum states with the maximum success probability of correctly identifying the message. Again, while the success probability in the classical case is only affected by the necessary imperfections of the measurement procedure, in the quantum case most sets of states can never be exactly distinguished due to their non-orthogonality in the Hilbert space of the system, even if ideal measurements are employed. This line of research has brought several contributions: establishing a general optimization strategy for the success probability of discrimination of an arbitrary set of quantum states~\cite{helstromBOOK,holevoDisc,YKL} through the identification of optimal measurement operators; deriving a closed expression in the case of two states and simplified methods in the case of multiple states with particular symmetries~\cite{multiHel1,multiHel2,opMeasCoh,bae,chiribella,symGuha}; generalizing the optimization to different approaches, characterized by other figures of merit~\cite{ivanovic,dieks,peres,chefBarn,mixedStrat,revDisc,genDisc,genDiscErr}, e.g., unambiguous discrimination. \\
  From a more practical perspective there has been a parallel effort to actually implement the optimal measurements described in theoretical works. In particular, regarding receivers for optical signals, in the 1970's Kennedy~\cite{kenDet,opKen,opGaussDet,NOSTRODisc} introduced the technique of signal nulling; Dolinar~\cite{dol,NOSTRODisc} refined it by means of signal splitting and conditional operations via fast feedback, devising a receiver that attains the maximum probability of success for discriminating any two coherent states in the asymptotic limit of infinite splitting steps~\cite{implProj,multicopy1,multicopy2,dolExp}. These receivers are particularly useful when employing low-energy signals as in long-distance communication, for which standard techniques such as homodyne and heterodyne detection~\cite{cavesDrum} do not perform well. More recently the problems of parameter estimation and state discrimination have given birth to the broader field of Quantum Metrology and Sensing~\cite{metroRev1,metroRev2}, also very promising for the practical application of Quantum Information Science. Interesting proposals include: Quantum Illumination to identify low-reflectivity objects in highly noisy environment~\cite{qIllum}, Quantum Reading to increase the performance of optical memory readout~\cite{qRead} and the prototypical example of Quantum Phase Estimation for interferometric purposes~\cite{phaseEst,ligoCaves}. \\
  
 In light of the previous discussion, it is clear that the intense theoretical work of the past has made the field of Quantum Communication ripe with interesting results that would allow for higher rates of information transmission on modern-day communication media. However it is not yet clear how to translate these theoretical advantages into practical technological advancements. The main obstacle in this direction seems to be the task of information decoding at the receiver end of the channel. Indeed all proofs of the achievability of the channel capacity so far rely on joint decoding operators~\cite{schumawest1,schumawest2,holevo2,holevo3,seq1,seq2,sen,polarWildeGuha,NOSTROHol}, which read out entire blocks of letters at the same time; these are very difficult to implement in practice~\cite{opticalFiber,deepSpace,seqCoh,guha1,guha2,banaszek,NOSTROHad,NOSTROAdap}. This problem is also closely related to state discrimination: the optimal codewords for communication constitute a set of multi-mode quantum states and a good receiver aims at extracting the information they encode by discriminating them. An additional difficulty is that these receivers should be realizable with technological resources available nowadays or in the near future, e.g., in the optical case, photodetectors and Gaussian operations such as beam splitters and squeezing devices~\cite{RevGauss,guhaDet,guhaDet1}.\\
 In this thesis we address the problem of decoding information at the output of a quantum channel from several perspectives. We start in Ch.~\ref{ch:Rev} with a review of the main results in Quantum Communication and State Discrimination related to our work; in Ch.~\ref{ch:Opt} we discuss a tree-search decomposition of quantum measurements and later on employ it to describe optimal decoding and discrimination protocols. Eventually in Ch.~\ref{ch:Imple} we consider optical implementations for coherent-state communication, proposing two structured single-mode and multi-mode receiver schemes and analyzing a large class of multi-mode adaptive decoders.

\chapter{Quantum Information Theory} \label{ch:Rev} 

    \graphicspath{{Chapter2/Figs/Raster/}{Chapter2/Figs/PDF/}{Chapter2/Figs/}}

In this Chapter we review the basic formalism of Quantum Information Theory, starting with a brief general introduction (Sec.~\ref{sec:basics}) and proceeding with particular attention to bosonic Gaussian systems (Sec.~\ref{sec:basicsGauss}), which are the main focus of the rest of the thesis. Then we introduce the field of Quantum Communication, describing in detail the transmission of classical information on quantum channels and the optimal solution for the important class of bosonic phase-insensitive Gaussian channels (Sec.~\ref{sec:commQ}). Eventually we consider the problem of State Discrimination, reviewing general optimization strategies and specific solutions, as well as CV implementations (Sec.~\ref{sec:disc}).

\section{Elements of Quantum Information Theory}\label{sec:basics}
\subsection{Quantum states}\label{quantumStates}
The state of a quantum system, such as a photon in an optical fiber or an atom trapped in an optical cavity, can be described by a bounded, linear, positive and unit-trace operator $\hat{\rho}$ on a complex Hilbert space $\mathcal{H}$ of dimension $d$. The simplest operators are pure states, represented by projectors $\dketbra{\psi}$ on unit vectors of the Hilbert space, $\ket{\psi}\in\mathcal{H}$.
The set of all quantum states, labelled by $\mathfrak{S}(\mathcal{H})$, is convex and the pure states are its extremal points, from which any other state can be obtained through convex combinations. Indeed pure states represent all those experimental settings where the observer has full knowledge of how the state of the system has been produced; on the other hand mixed states can be written as statistical mixtures of pure states and represent the cases of incomplete knowledge of the preparation procedure, when the state of the system can be one among several possibilities according to a given probability distribution. 
More generally, any mixed state $\hat{\rho}\in\mathfrak{S}(\mathcal{H})$ can be unravelled as a statistical mixture of other pure or mixed states in infinitely many ways: each unravelling is represented by an ensemble $\mathcal{E}=\{\hat{\rho}_{k},p_{k}\}_{k=1,\cdots,K}$ of states, $\hat{\rho}_{k}$, with an associated probability distribution, $p_{k}$, such that their average state is $\hat{\rho}=\sum_{k=1}^{K}p_{k}\hat{\rho}_{k}$. For example each state $\hat{\rho}$ of rank $r\leq d$ has a diagonal unraveling corresponding to its spectral decomposition, $\mathcal{E}_{diag}(\hat{\rho})=\{\dketbra{e_{k}(\hat{\rho})},\nu_{k}(\hat{\rho})\}_{k=1,\cdots,r}$, where $\{\ket{e_{k}(\hat{\rho})}\}$ form an orthonormal basis of the Hilbert space of the system. \\
For systems comprising several parts, the total Hilbert space is given by the tensor product of the Hilbert space of each constituent. Due to the different structure of such product with respect to the ordinary Cartesian one, a large amount of exotic states that exhibit stronger-than-classical correlations appears when considering composite systems. For example take the Hilbert space of a bipartite system $\mathcal{H}_{AB}=\mathcal{H}_{A}\otimes\mathcal{H}_{B}$: there are pure states that cannot be written as the tensor product of states of the single subsystems, e.g., $\ket{\psi}_{A}\otimes\ket{\phi}_{B}+\ket{\phi}_{A}\otimes\ket{\psi}_{B}$, as opposed to, e.g., $\ket{\psi}_{A}\otimes\ket{\phi}_{B}$. The former states are called entangled, while the latter ones are separable. More generally a quantum state of an $n$-partite system is called separable if it can be unravelled as an ensemble of product states given by the tensor product of quantum states on each subsystem, i.e., \begin{equation}\hat{\rho}_{sep}=\sum_{k=1}^{K}p_{k}\hat{\rho}_{k}^{(1)}\otimes\cdots\otimes\hat{\rho}_{k}^{(n)}\end{equation} with $p_{k}\geq0$, $\sum_{k}p_{k}=1$ and $\hat{\rho}_{k}^{(j)}$ states of the $j$-th subsystem; otherwise the state is entangled.\\
A second important feature of composite quantum systems is purification. Suppose we have a mixed state of a single system $\hat{\rho}_{A}\in\mathfrak{S}(\mathcal{H}_{A})$. Then it is always possible to purify it by considering an enlarged Hilbert space $\mathcal{H}_{AB}$ such that the total state of the composite system is pure but it is locally equivalent to the original mixed state, i.e., there exists $\dketbra{\rho}\in\mathcal{H}_{AB}$ pure such that its reduced state on $\mathcal{H}_{A}$, obtained by partial trace on $\mathcal{H}_{B}$, is $\ptr{B}{\dketbra{\rho}}\equiv\hat{\rho}_{A}$. In particular, a purification can always be obtained by considering the eigensystem of $\hat{\rho}_{A}=\sum_{k=1}^{r}\nu_{k}\dketbra{e_{k}}$ and an additional Hilbert space $\mathcal{H}_{B}$ isomorphic to $\mathcal{H}_{A}$; then it is easy to check that 
\begin{equation}\label{purification}\ket{\rho}_{AB}=\sum_{k=1}^{r}\sqrt{\nu_{k}}\ket{e_{k}}_{A}\otimes\ket{f_{k}}_{B}\end{equation} 
purifies $\hat{\rho}_{A}$, where $\{\ket{f_{k}}\}_{k=1,\cdots,d}$ is any local basis on $\mathcal{H}_{B}$. Conversely, a result known as Schmidt decomposition holds: any pure state of a bipartite system can be written as a superposition of the kind~\eqq{purification}, which employs only $d$ out of the $d^{2}$ elements constituting the orthonormal basis $\{\ket{e_{k}}_{A}\otimes\ket{f_{j}}_{B}\}_{k,j=1,\cdots,d}$ of $\mathcal{H}_{AB}$.\\
Eventually a useful tool for the characterization of quantum states is their Von Neumann (or simply quantum) entropy~\cite{nChuangBOOK}, defined for $\hat{\rho}_{A}\in\hil_{A}$ as 
\begin{equation}\label{VNEnt}
S(A)=S(\hat{\rho}_{A})=-\tr{\hat{\rho}_{A}\log\hat{\rho}_{A}}=-\sum_{k=1}^{r}\nu_{k}\log\nu_{k}=H(\{\nu_{k}\}_{k=1,\cdots,r}),
\end{equation} 
where as before $r\leq d$ is the rank of the operator $\hat{\rho}$, $\{\nu_{k}\}_{k=1,\cdots,r}$ its eigenvalues and $H(\cdot)$ the Shannon (or classical) entropy of a classical probability distribution~\cite{covThomBOOK}. Here and in the rest of the thesis we will take the $\log$ in base $2$. The entropy,~\eqq{VNEnt}, is always positive and it is  commonly interpreted as a measure of the lack of knowledge about a given state; correspondingly, it attains its minimum value on pure states, i.e., $S(\dketbra{\psi})=0$, and its maximum on the completely mixed state, i.e., $S(\hat{\mathbf{1}}/d)=\log d$. Several other properties of the quantum entropy and of its derived quantities will be discussed in Sec.~\ref{sec:commQ}. For now let us focus on bipartite systems and notice a first striking feature of quantumness: while the Shannon entropy cannot increase when discarding a subsystem, the Von Neumann entropy can and this happens only with entangled states. Indeed consider a pure state of a bipartite system $\hat{\rho}_{AB}=\dketbra{\psi}\in\mathcal{H}_{AB}$ and its reduced states on each subsystem, $\hat{\rho}_{A,B}$. Since the total state is pure, $S(\hat{\rho}_{AB})=0$; hence classical intuition would suggest that $S(\hat{\rho}_{A})=S(\hat{\rho}_{B})=0$ too, because full knowledge of the state of the total system implies full knowledge of the state of its single components. This is true indeed if the state is separable, i.e., $\ket{\psi}_{AB}=\ket{\phi}_{A}\otimes\ket{\phi'}_{B}$, but if the state is entangled it happens that the local entropies are non-zero. The most extreme example is obtained by considering the maximally entangled state $\ket{\psi}_{AB}=\sum_{k=1}^{d}\ket{e_{k}}_{A}\otimes\ket{e_{k}}_{B}/\sqrt{d}$, where \ket{e_{k}} are elements of local bases; the reduced states are maximally mixed, $\hat{\rho}_{A,B}=\hat{\mathbf{1}}_{A,B}/d$, thus have maximum entropy. Such a deviation from classical intuition is typical of quantum correlations between subsystems: global knowledge does not imply local knowledge when strong correlations are at play.

\subsection{Quantum operations}\label{quantumOps}
The transformations to which a quantum state can be subject are described by maps $\Phi:\mathcal{H}_{A}\rightarrow\mathcal{H}_{B}$ from the Hilbert space of the initial system to that of the final one, which in general may be different. Any physically realizable map has to satisfy three basic requisites:
\begin{enumerate}
\item Linearity (L), so that the transformation of a statistical mixture of states is equivalent to the same mixture of the single transformed components, i.e., \begin{equation}\Phi\left(\sum_{k=1}^{K}p_{k}\hat{\rho}_{k}\right)=\sum_{k=1}^{K}p_{k}\Phi(\hat{\rho}_{k});\end{equation}
\item Complete Positivity (CP), so that any positive operator is transformed into another positive operator even if only a subsystem is affected by the map, i.e., 
\begin{equation}\hat{\rho}_{AR}\geq 0 \Rightarrow (\Phi_{A}\otimes\mathcal{I}_{R})(\hat{\rho}_{AR})\geq 0,\end{equation}
where $\hat{\rho}_{AR}\in\mathcal{H}_{AR}$ is the state of a composite system and $\mathcal{I}$ is the identity map. Note that CP implies simple Positivity (P), i.e., the property that $\Phi$ preserves positivity when acting on the entire state of a system. Moreover CP is equivalent to P when considering separable states but not for entangled ones;
\item Trace-preserving property (T), so that any unit-trace operator is transformed into another unit-trace operator, i.e., \begin{equation}\tr{\hat{\rho}}=1 \Rightarrow \tr{\Phi(\hat{\rho})}=1.\end{equation}
\end{enumerate}
There are several ways to characterize a physical (LCPT) map, but two among them are most common. The first one, called Kraus representation, employs a set of Kraus operators $\{\hat{M}_{k}\}_{k=1,\cdots,K}$ normalized as $\sum_{k=1}^{K}\hat{M}_{k}^{\dag}\hat{M}_{k}=\hat{\mathbf{1}}$, so that the action of the map can be written as
\begin{equation}\label{kraus}
\Phi(\hat{\rho})=\sum_{k=1}^{K}\hat{M}_{k}\hat{\rho} \hat{M}_{k}^{\dag},
\end{equation}
where $\dag$ represents hermitian conjugation of an operator. For example unitary maps have a Kraus representation with a single unitary operator, i.e., $\hat{M}_{1}=\hat{U}$ with $\hat{U}^{\dag}\hat{U}=\hat{U}\hat{U}^{\dag}=\hat{\mathbf{1}}$, and describe reversible transformations of the Hilbert space of the system into itself when it is isolated from the rest of the universe. More generally, a realistic evolution has to consider interaction with some external system, usually called the environment. This gives rise to the Stinespring representation of quantum maps, which is a sort of purification applied to physical transformations instead of states: any physical map on a system can be described as a unitary map on a larger system. Indeed let us consider a system of interest $S$ in the state $\hat{\rho}_{S}\in\mathcal{H}_{S}$ and add a second environmental system $E$ of dimension $d_{E}$ that includes any degree of freedom relevant to the evolution of $S$. The initial state of $E$ can always be taken pure w.l.o.g., i.e., $\dketbra{\psi}_{E}\in\mathcal{H}_{E}$.  Then the composite Hilbert space of the system and the environment, $\mathcal{H}_{SE}$, is isolated and transforms reversibly under a unitary map with Kraus operator $\hat{U}_{SE}$. After the evolution the final state of the system can be recovered by partial trace, obtaining an overall quantum map
\begin{equation}\label{stinespring}
\Phi(\hat{\rho}_{S})=\ptr{E}{\hat{U}_{SE}\left(\hat{\rho}_{S}\otimes\dketbra{\psi}_{E}\right)\hat{U}_{SE}^{\dag}}.
\end{equation}
Both the Kraus and Stinespring representations are valid descriptions of a quantum map. They can be related by defining the isometry $\hat{V}_{S\rightarrow SE}=\hat{U}_{SE}\ket{\psi}_{E}$ and computing the trace in~\eqq{stinespring} with respect to an orthonormal basis $\{\ket{e_{k}}_{E}\}_{k=1,\cdots,d_{E}}$ of $\mathcal{H}_{E}$, so that $\Phi(\hat{\rho}_{S})=\sum_{k=1}^{d_{E}}{}_{E}\bra{e_{k}}\hat{V}\hat{\rho}_{S}\hat{V}^{\dag}\ket{e_{k}}_{E}$. Then the Kraus operators $\{\hat{M}_{k}\}_{k=1,\cdots,d_{E}}$ of $\Phi$ are connected to the Stinespring unitary $\hat{U}_{SE}$ and reference state $\ket{\psi}_{E}$ by
\begin{align}
&\hat{M}_{k}={}_{E}\bra{e_{k}}\hat{V}_{S\rightarrow SE}={}_{E}\bra{e_{k}}\hat{U}_{SE}\ket{\psi}_{E},\\
&\hat{U}_{SE}\ket{\psi}_{E}=\sum_{k=1}^{d_{E}}\hat{M}_{k}\otimes\ket{e_{k}}_{E}.
\end{align}Finally, one can define the dual $\Phi^{*}$ of a channel $\Phi$ with Kraus representation $\left\{\hat{M}_{k}\right\}_{k=1}^{K}$ as its equivalent in the Heisenberg representation, i.e., 
\begin{equation}\label{dualCh}
\tr{\hat{O}\Phi\left(\hat{\rho}\right)}=\tr{\Phi^{*}\left(\hat{O}\right)\hat{\rho}},
\end{equation}
so that the Kraus representation of $\Phi^{*}$ is trivially given by $\left\{\hat{M}^{\dag}_{k}\right\}_{k=1}^{K}$.\\

A peculiar kind of evolution is that taking place when a measurement is performed on the system. The most general quantum measurement is described by a Positive Operator-Valued Measurement (POVM), i.e., a set of positive operators $\{\hat{E}_{k}\}_{k=1,\cdots,K}$, with $\hat{E}_{k}\geq 0$ and $\sum_{k=1}^{K}\hat{E}_{k}=\hat{\mathbf{1}}$. When applying the POVM to a state $\hat{\rho}$ of the system, each of these operators represents one of $K$ possible outcomes of the measurement, occurring with probability
\begin{equation}\label{outcomeProb}
p(k|\hat{\rho})=\tr{\hat{E}_{k}\hat{\rho}}
\end{equation}
and transforming the state as
\begin{equation}\label{outcomeState}
\hat{\rho}_{k}=\frac{\sqrt{\hat{E}_{k}}\hat{\rho}\sqrt{\hat{E}_{k}}}{p(k|\hat{\rho})}.
\end{equation}
If the measurement outcome is not registered instead, the final state of the system will be an average of~\eqq{outcomeState} with weights~\eqq{outcomeProb}, i.e., $\hat{\rho}_{out}=\sum_{k=1}^{K}\sqrt{\hat{E}_{k}}\hat{\rho}\sqrt{\hat{E}_{k}}$, which corresponds to a quantum map of Kraus operators $\sqrt{\hat{E}_{k}}$ applied to the input state $\hat{\rho}$. \\
Let us note that, by choosing the measurement operators to be orthogonal projectors, i.e., $\hat{E}_{k}=\hat{\Pi}_{k}$ with $\hat{\Pi}_{k}\hat{\Pi}_{j}=\delta_{k,j}\hat{\Pi}_{k}$, the ordinary Von-Neumann projective measurement~\cite{nChuangBOOK} can be recovered. Then there exists a third kind of purification that regards measurements, such that a POVM on a system can always be described as a projective measurement on an enlarged system~\cite{nChuangBOOK}.\\

\subsection{Quantifying distance between quantum states}\label{subsec:distance}
We conclude this introductory section with an operational notion of distance between two quantum states $\hat{\rho},\hat{\sigma}\in\mathfrak{S}(\hil)$, which can be useful in several practical settings, e.g., evaluating the error in the approximate implementation of a quantum operation or estimating the overlap between two desired states produced by some physical process and thus their degree of distinguishability. The two most common distance-like quantities employed in Quantum Information are the trace-distance and the fidelity. The trace-distance is a quantum analogue of the Kolmogorov distance~\cite{nChuangBOOK} between two probability distributions and it is defined as
\begin{equation}\label{trd}
D(\hat{\rho},\hat{\sigma})=\oneover{2}\norm{\hat{\rho}-\hat{\sigma}}_{1}=\oneover{2}\trabs{\hat{\rho}-\hat{\sigma}},
\end{equation}
where $\norm{\cdot}_{1}$ is the trace-norm of an operator.
The trace-distance is a true distance since it is non-negative, symmetric and it satisfies the triangular inequality, i.e.,
\begin{equation}\label{triangTD}
\trd{\hat{\rho}}{\hat{\sigma}}\leq \trd{\hat{\rho}}{\hat{\tau}}+\trd{\hat{\tau}}{\hat{\sigma}}
\end{equation}
for all $\hat{\rho},\hat{\sigma},\hat{\tau}\in\states{}$. Other important properties are:
\begin{enumerate}
\item $\trd{\hat{\rho}}{\hat{\sigma}}\leq1$, with equality if and only if the states are orthogonal, i.e., $\hat{\rho}\hat{\sigma}=\hat{\sigma}\hat{\rho}=0$;
\item It can be written as an optimization over projection operators $\hat{\Pi}$, i.e.,
\begin{equation}
\trd{\hat{\rho}}{\hat{\sigma}}=\max_{\substack{0\leq\hat{\Pi}\leq\hat{\mathbf{1}}~s.t.\\ \hat{\Pi}^{2}=\hat{\Pi}}}\tr{\hat{\Pi}\left(\hat{\rho}-\hat{\sigma}\right)},
\end{equation}
or over arbitrary operators $\hat{\Lambda}$ with absolute value less than $\hat{\mathbf{1}}$, i.e.,
\begin{equation}\label{normLemma}
\trd{\hat{\rho}}{\hat{\sigma}}=\max_{-\hat{\mathbf{1}}\leq\hat{\Lambda}\leq\hat{\mathbf{1}}}\tr{\hat{\Lambda}\left(\hat{\rho}-\hat{\sigma}\right)}.
\end{equation}
Both properties are inherited from the trace-norm;
\item It is non-increasing under physical maps, i.e., for all $\Phi$ LCPT
\begin{equation}\label{trdNIncreasing}
\trd{\Phi\left(\hat{\rho}\right)}{\Phi\left(\hat{\sigma}\right)}\leq\trd{\hat{\rho}}{\hat{\sigma}},
\end{equation}
with equality if and only if $\Phi\equiv\mathcal{U}$ is reversible. For example this implies that states get closer after partial trace, which is a non-reversible physical evolution;
\item It is strongly convex, i.e., given two convex combinations of states $\hat{\rho}=\sum_{k=1}^{K}p_{k}\hat{\rho}_{k}$ and $\hat{\sigma}=\sum_{k=1}^{K}q_{k}\hat{\sigma}_{k}$ it holds
\begin{equation}
\trd{\hat{\rho}}{\hat{\sigma}}\leq\oneover{2}\sum_{k=1}^{K}\abs{p_{k}-q_{k}}+\sum_{k=1}^{K}p_{k}\trd{\hat{\rho}_{k}}{\hat{\sigma}_{k}},
\end{equation}
where the first term on the right-hand side is the Kolmogorov distance between the classical probability distributions forming the convex combinations. Let us note that this property implies other kinds of convexity, such as the ordinary one.
\end{enumerate}
In addition to these general properties there are some that are specifically employed in Quantum Communication Theory, see~\cite{winter,WINTERPHD,wildeBOOK,NOSTROHol}. Here we state three of them in the form of lemmas that will be used to obtain some results in Ch.~\ref{ch:Opt}. For the proofs refer to App.~\ref{app:measLemmas} or~\cite{NOSTROHol}. 
The first lemma is simply a generalization of the non-increasing property of the trace-distance, \eqq{trdNIncreasing}, to the action of single elements of a POVM on unnormalized states, i.e., density operators $\hat{\rho}$ such that $\tr{\hat{\rho}}\leq1$.
\newtheorem{lemma}{Lemma}
\begin{lemma}\label{contra}
  (Contractivity of trace-distance for POVM elements) Let $\hat{\rho},\hat{\sigma}$ be 
  unnormalized states and $\hat{E}$ a positive and 
  less-than-one operator.
  Then it holds
  \begin{equation}\label{contreq}
    \trd{\hat{E}\hat{\rho} \hat{E}}{\hat{E}\hat{\sigma} \hat{E}}\leq \trd{\hat{\rho}}{\hat{\sigma}}.
  \end{equation}
\end{lemma}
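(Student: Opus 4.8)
The plan is to exploit the dual, variational characterisation of the trace-distance recorded in \eqq{normLemma}, namely $\trd{\hat{\rho}}{\hat{\sigma}}=\max_{-\hat{\mathbf{1}}\leq\hat{\Lambda}\leq\hat{\mathbf{1}}}\tr{\hat{\Lambda}\left(\hat{\rho}-\hat{\sigma}\right)}$, which holds for any Hermitian argument and in particular for $\hat{E}\hat{\rho}\hat{E}-\hat{E}\hat{\sigma}\hat{E}=\hat{E}\left(\hat{\rho}-\hat{\sigma}\right)\hat{E}$. First I would write the left-hand side of \eqq{contreq} as $\trd{\hat{E}\hat{\rho}\hat{E}}{\hat{E}\hat{\sigma}\hat{E}}=\max_{-\hat{\mathbf{1}}\leq\hat{\Lambda}\leq\hat{\mathbf{1}}}\tr{\hat{\Lambda}\,\hat{E}\left(\hat{\rho}-\hat{\sigma}\right)\hat{E}}$ and then transfer the two factors of $\hat{E}$ onto $\hat{\Lambda}$ by cyclicity of the trace, obtaining the term $\tr{\left(\hat{E}\hat{\Lambda}\hat{E}\right)\left(\hat{\rho}-\hat{\sigma}\right)}$.

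The crux of the argument is to show that the conjugated operator $\hat{E}\hat{\Lambda}\hat{E}$ is itself an admissible competitor in the maximisation defining $\trd{\hat{\rho}}{\hat{\sigma}}$, i.e.\ that it is Hermitian and satisfies $-\hat{\mathbf{1}}\leq\hat{E}\hat{\Lambda}\hat{E}\leq\hat{\mathbf{1}}$. Hermiticity is immediate, since $\hat{E}$ is positive and hence Hermitian. For the operator bounds I would conjugate the constraint $-\hat{\mathbf{1}}\leq\hat{\Lambda}\leq\hat{\mathbf{1}}$ by $\hat{E}$, using the elementary fact that for any $\hat{X}$ and Hermitian $\hat{A}\leq\hat{B}$ one has $\hat{X}^{\dag}\hat{A}\hat{X}\leq\hat{X}^{\dag}\hat{B}\hat{X}$, because $\hat{X}^{\dag}\left(\hat{B}-\hat{A}\right)\hat{X}$ is the congruence of a positive operator and therefore positive; with $\hat{X}=\hat{E}$ this yields $-\hat{E}^{2}\leq\hat{E}\hat{\Lambda}\hat{E}\leq\hat{E}^{2}$. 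It then remains to observe that $0\leq\hat{E}\leq\hat{\mathbf{1}}$ forces $\hat{E}^{2}\leq\hat{\mathbf{1}}$: indeed $\hat{E}$ and $\hat{\mathbf{1}}-\hat{E}$ are commuting positive operators, so their product $\hat{E}-\hat{E}^{2}$ is positive, whence $\hat{E}^{2}\leq\hat{E}\leq\hat{\mathbf{1}}$. Combining the two chains gives $-\hat{\mathbf{1}}\leq-\hat{E}^{2}\leq\hat{E}\hat{\Lambda}\hat{E}\leq\hat{E}^{2}\leq\hat{\mathbf{1}}$, as required.

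With admissibility in hand I would conclude by comparison of the two maxima: for every $\hat{\Lambda}$ in the constraint set, $\tr{\left(\hat{E}\hat{\Lambda}\hat{E}\right)\left(\hat{\rho}-\hat{\sigma}\right)}\leq\max_{-\hat{\mathbf{1}}\leq\hat{\Lambda}'\leq\hat{\mathbf{1}}}\tr{\hat{\Lambda}'\left(\hat{\rho}-\hat{\sigma}\right)}=\trd{\hat{\rho}}{\hat{\sigma}}$, since $\hat{E}\hat{\Lambda}\hat{E}$ is one of the operators $\hat{\Lambda}'$ over which the right-hand maximum runs. Taking the maximum over $\hat{\Lambda}$ on the left then gives $\trd{\hat{E}\hat{\rho}\hat{E}}{\hat{E}\hat{\sigma}\hat{E}}\leq\trd{\hat{\rho}}{\hat{\sigma}}$, which is \eqq{contreq}. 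I expect the only genuine obstacle to lie in the admissibility step — the monotonicity of conjugation together with the inequality $\hat{E}^{2}\leq\hat{\mathbf{1}}$ — everything else being bookkeeping; note in passing that the hypothesis $\tr{\hat{\rho}},\tr{\hat{\sigma}}\leq1$ is never used, so the bound in fact holds for any Hermitian difference $\hat{\rho}-\hat{\sigma}$. As an alternative that bypasses \eqq{normLemma}, one could argue directly from the trace-norm via the H\"older-type submultiplicativity $\norm{\hat{E}\left(\hat{\rho}-\hat{\sigma}\right)\hat{E}}_{1}\leq\norm{\hat{E}}_{\infty}^{2}\,\norm{\hat{\rho}-\hat{\sigma}}_{1}$ combined with $\norm{\hat{E}}_{\infty}\leq1$.
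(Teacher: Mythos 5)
Your proof is correct, and it follows exactly the route one would expect from the paper's toolkit: the paper does not print its own proof of Lemma~\ref{contra} (App.~\ref{app:measLemmas} only proves Lemma~\ref{trdist} and the other two lemmas, deferring this one to the cited reference), but the standard argument there is precisely the one you give — apply the variational characterisation of the trace norm to $\hat{E}(\hat{\rho}-\hat{\sigma})\hat{E}$, move the two factors of $\hat{E}$ onto $\hat{\Lambda}$ by cyclicity, and check that $\hat{E}\hat{\Lambda}\hat{E}$ is an admissible competitor because $-\hat{\mathbf{1}}\leq-\hat{E}^{2}\leq\hat{E}\hat{\Lambda}\hat{E}\leq\hat{E}^{2}\leq\hat{\mathbf{1}}$. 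Your admissibility step and the observation that Lemma~\ref{trdist} really holds for arbitrary Hermitian arguments (which is what its proof establishes, and what is needed here since $\hat{E}\hat{\rho}\hat{E}-\hat{E}\hat{\sigma}\hat{E}$ need not be positive) are both sound, as is the one-line alternative via $\norm{\hat{E}}_{\infty}\leq1$.
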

 The last two lemmas have been extensively employed in Quantum Shannon Theory~\cite{winter,WINTERPHD,wildeBOOK} when applied to ordinary density operators; here we prove their validity in the case of unnormalized states.
\begin{lemma}\label{appclose}
 (Measurement on approximately close states) Let $\hat{\rho},\hat{\sigma}$ be two unnormalized states. Let $\hat{E}$ be a positive and less-than-one 
 operator, i.e. $0\leq \hat{E}\leq\hat{\mathbf{1}}$. Then it holds
 \begin{equation}\label{appcloseq}
   \tr{\hat{E}\hat{\rho}}\geq\tr{\hat{E}\hat{\sigma}}-2D(\hat{\rho},\hat{\sigma}).
 \end{equation} 
\end{lemma}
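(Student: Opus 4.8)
The plan is to reduce the claim to the intuitive statement that a single POVM element cannot separate two operators by more than their trace-distance, and then to exploit the Jordan (positive/negative part) decomposition of the Hermitian difference. First I would rewrite the desired inequality~\eqq{appcloseq} in the equivalent form $\tr{\hat{E}\left(\hat{\sigma}-\hat{\rho}\right)}\leq 2\trd{\hat{\rho}}{\hat{\sigma}}$, using the symmetry $\trd{\hat{\rho}}{\hat{\sigma}}=\trd{\hat{\sigma}}{\hat{\rho}}$ and recalling that $2\trd{\hat{\rho}}{\hat{\sigma}}=\trabs{\hat{\sigma}-\hat{\rho}}$ by the definition~\eqq{trd} together with $\trabs{\hat{\rho}-\hat{\sigma}}=\trabs{\hat{\sigma}-\hat{\rho}}$.

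Next I would introduce the Hermitian operator $\hat{\Delta}=\hat{\sigma}-\hat{\rho}$ and decompose it into its positive and negative parts, $\hat{\Delta}=\hat{\Delta}_{+}-\hat{\Delta}_{-}$, where $\hat{\Delta}_{\pm}\geq0$ have mutually orthogonal supports, so that the trace-norm splits cleanly as $\trabs{\hat{\Delta}}=\tr{\hat{\Delta}_{+}}+\tr{\hat{\Delta}_{-}}$. Writing $\tr{\hat{E}\hat{\Delta}}=\tr{\hat{E}\hat{\Delta}_{+}}-\tr{\hat{E}\hat{\Delta}_{-}}$, I would then bound the two terms separately using only $0\leq\hat{E}\leq\hat{\mathbf{1}}$: since the trace of a product of positive operators is non-negative, $\tr{\left(\hat{\mathbf{1}}-\hat{E}\right)\hat{\Delta}_{+}}\geq0$ yields $\tr{\hat{E}\hat{\Delta}_{+}}\leq\tr{\hat{\Delta}_{+}}$, while $\hat{E}\geq0$ and $\hat{\Delta}_{-}\geq0$ give $\tr{\hat{E}\hat{\Delta}_{-}}\geq0$. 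Combining the two, $\tr{\hat{E}\hat{\Delta}}\leq\tr{\hat{\Delta}_{+}}\leq\tr{\hat{\Delta}_{+}}+\tr{\hat{\Delta}_{-}}=\trabs{\hat{\Delta}}=2\trd{\hat{\rho}}{\hat{\sigma}}$, which is exactly the rearranged claim.

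The point requiring the most care — and the main conceptual obstacle in the unnormalized setting — is that here I cannot invoke $\tr{\hat{\rho}}=\tr{\hat{\sigma}}$, so the positive and negative parts of $\hat{\Delta}$ need not carry equal trace and $\tr{\hat{\Delta}_{+}}$ is in general \emph{not} equal to $\trd{\hat{\rho}}{\hat{\sigma}}$, as it would be for genuine density operators. The argument sidesteps this by discarding the non-negative term $\tr{\hat{E}\hat{\Delta}_{-}}$ and bounding the surviving contribution by the \emph{full} trace-norm $\tr{\hat{\Delta}_{+}}+\tr{\hat{\Delta}_{-}}$ rather than by either part alone; this looseness is precisely what lets the bound survive the relaxation from normalized to unnormalized states. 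An alternative route would apply the variational formula~\eqq{normLemma} with the choice $\hat{\Lambda}=2\hat{E}-\hat{\mathbf{1}}$, which satisfies $-\hat{\mathbf{1}}\leq\hat{\Lambda}\leq\hat{\mathbf{1}}$, but this reintroduces the residual trace $\tr{\hat{\sigma}-\hat{\rho}}$, which no longer vanishes for unnormalized states, so I would favour the direct decomposition above as the cleaner path.
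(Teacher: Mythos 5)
Your proof is correct and rests on essentially the same mechanism as the paper's: the paper invokes its variational formula for the trace norm (Lemma~\ref{trdist}, $\norm{\hat{\omega}}_{1}=\max_{-\hat{\mathbf{1}}\leq\hat{\Lambda}\leq\hat{\mathbf{1}}}\tr{\hat{\Lambda}\hat{\omega}}$) and observes that $\hat{E}$ is a feasible point of that maximization, while you simply inline the Jordan-decomposition argument by which that lemma is itself proved. Your closing remark is also on target — the paper likewise avoids the choice $\hat{\Lambda}=2\hat{E}-\hat{\mathbf{1}}$ and accepts the looser factor of $2$ precisely because $\tr{\hat{\rho}}\neq\tr{\hat{\sigma}}$ in the unnormalized setting.
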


\begin{lemma}\label{gentop}
  (Gentle operator) Let $\hat{\rho}$ be an unnormalized state and $\hat{E}$ a positive and less-than-one operator. Let also $\ave{\cdot}$ denote the average with respect to some probability distribution, which $\hat{\rho}$ and $\hat{E}$ may depend on. 
  Suppose that, for some $\epsilon>0$,
  \begin{equation}\label{gentopequno}
    \ave{\tr{\hat{E}\hat{\rho}}}\geq 1-\epsilon.
  \end{equation}
  Then it holds
  \begin{equation}\label{gentopeqdue}
    \ave{D\left(\sqrt{\hat{E}}\hat{\rho}\sqrt{\hat{E}},\hat{\rho}\right)}\leq \sqrt{\epsilon}.
  \end{equation}
\end{lemma}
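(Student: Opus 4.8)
The plan is to reduce the averaged statement to a pointwise (deterministic) bound and only at the very end invoke the hypothesis~\eqq{gentopequno}. Concretely, I would first establish that for every fixed realization of the underlying probability distribution
\begin{equation}\label{gentopPointwise}
\trd{\sqrt{\hat{E}}\hat{\rho}\sqrt{\hat{E}}}{\hat{\rho}}\leq\sqrt{1-\tr{\hat{E}\hat{\rho}}}.
\end{equation}
Given~\eqref{gentopPointwise}, the claim follows by averaging both sides and using Jensen's inequality for the concave square root, namely $\ave{\sqrt{1-\tr{\hat{E}\hat{\rho}}}}\leq\sqrt{\ave{1-\tr{\hat{E}\hat{\rho}}}}=\sqrt{1-\ave{\tr{\hat{E}\hat{\rho}}}}\leq\sqrt{\epsilon}$, where linearity carries the average inside the trace and the last inequality is precisely~\eqq{gentopequno}. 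This averaging step is routine.

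The core is therefore~\eqref{gentopPointwise}. My starting point is the algebraic identity
\begin{equation}\label{gentopSplit}
\hat{\rho}-\sqrt{\hat{E}}\hat{\rho}\sqrt{\hat{E}}=(\hat{\mathbf{1}}-\sqrt{\hat{E}})\hat{\rho}+\sqrt{\hat{E}}\hat{\rho}(\hat{\mathbf{1}}-\sqrt{\hat{E}}),
\end{equation}
which splits the difference into two pieces. Applying the triangle inequality for the trace-norm, the sub-multiplicativity $\norm{\hat{X}\hat{Y}}_{1}\leq\norm{\hat{X}}\norm{\hat{Y}}_{1}$ together with $\norm{\sqrt{\hat{E}}}\leq1$ (from $0\leq\hat{E}\leq\hat{\mathbf{1}}$), and the identity $\norm{\hat{A}}_{1}=\norm{\hat{A}^{\dag}}_{1}$, I would bound each summand of~\eqref{gentopSplit} by $\norm{(\hat{\mathbf{1}}-\sqrt{\hat{E}})\hat{\rho}}_{1}$, obtaining $\norm{\hat{\rho}-\sqrt{\hat{E}}\hat{\rho}\sqrt{\hat{E}}}_{1}\leq2\norm{(\hat{\mathbf{1}}-\sqrt{\hat{E}})\hat{\rho}}_{1}$.

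It then remains to control $\norm{(\hat{\mathbf{1}}-\sqrt{\hat{E}})\hat{\rho}}_{1}$. Here I would write $\hat{\rho}=\sqrt{\hat{\rho}}\sqrt{\hat{\rho}}$ and apply the H\"older/Cauchy--Schwarz inequality $\norm{\hat{A}\hat{B}}_{1}\leq\norm{\hat{A}}_{2}\norm{\hat{B}}_{2}$ to the factorization $(\hat{\mathbf{1}}-\sqrt{\hat{E}})\sqrt{\hat{\rho}}\cdot\sqrt{\hat{\rho}}$, giving $\norm{(\hat{\mathbf{1}}-\sqrt{\hat{E}})\hat{\rho}}_{1}\leq\norm{(\hat{\mathbf{1}}-\sqrt{\hat{E}})\sqrt{\hat{\rho}}}_{2}\,\norm{\sqrt{\hat{\rho}}}_{2}$. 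The second factor equals $\sqrt{\tr{\hat{\rho}}}\leq1$, which is exactly where the unnormalized hypothesis $\tr{\hat{\rho}}\leq1$ enters. The first factor squared equals $\tr{(\hat{\mathbf{1}}-\sqrt{\hat{E}})^{2}\hat{\rho}}$, so the decisive inequality is the operator bound $(\hat{\mathbf{1}}-\sqrt{\hat{E}})^{2}=\hat{\mathbf{1}}-2\sqrt{\hat{E}}+\hat{E}\leq\hat{\mathbf{1}}-\hat{E}$, valid because $0\leq\hat{E}\leq\hat{\mathbf{1}}$ forces $\sqrt{\hat{E}}\geq\hat{E}$ by functional calculus. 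Hence $\tr{(\hat{\mathbf{1}}-\sqrt{\hat{E}})^{2}\hat{\rho}}\leq\tr{(\hat{\mathbf{1}}-\hat{E})\hat{\rho}}\leq1-\tr{\hat{E}\hat{\rho}}$, and chaining the estimates yields~\eqref{gentopPointwise}.

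The main obstacle I anticipate is producing the pointwise bound~\eqref{gentopPointwise} with the correct constant: the crux is pairing the operator inequality $\sqrt{\hat{E}}\geq\hat{E}$ with the right factorization of $\hat{\rho}$, since it is precisely this combination that makes the square root appear with the argument $1-\tr{\hat{E}\hat{\rho}}$. Choosing a different split or a different H\"older pairing tends to spoil either the factor of $2$ or the clean dependence on $\tr{\hat{E}\hat{\rho}}$. Everything downstream — the trace-norm manipulations in~\eqref{gentopSplit} and the final averaging by Jensen — is mechanical by comparison.
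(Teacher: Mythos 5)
Your proof is correct and follows essentially the same route as the paper's: the same split of $\hat{\rho}-\sqrt{\hat{E}}\hat{\rho}\sqrt{\hat{E}}$ into the two summands $(\hat{\mathbf{1}}-\sqrt{\hat{E}})\hat{\rho}$ and $\sqrt{\hat{E}}\hat{\rho}(\hat{\mathbf{1}}-\sqrt{\hat{E}})$, the same Cauchy--Schwarz pairing with $\sqrt{\hat{\rho}}$ and $\tr{\hat{\rho}}\leq1$, the same key operator inequality $(\hat{\mathbf{1}}-\sqrt{\hat{E}})^{2}\leq\hat{\mathbf{1}}-\hat{E}$, and the same Jensen/concavity step at the end. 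Your treatment of the second summand --- reducing it to the first via $\norm{\sqrt{\hat{E}}\hat{X}}_{1}\leq\norm{\hat{X}}_{1}$ and hermitian conjugation --- is in fact slightly cleaner than the paper's separate spectral-decomposition estimate, which invokes $\tr{\hat{E}}\leq1$, a condition not actually implied by $0\leq\hat{E}\leq\hat{\mathbf{1}}$.
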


Let us now discuss the fidelity of two states $\hat{\rho}, \hat{\sigma}$, defined as
\begin{equation}
F(\hat{\rho},\hat{\sigma})=\left(\norm{\sqrt{\hat{\rho}}\sqrt{\hat{\sigma}}}_{1}\right)^{2}=\tr{\sqrt{\sqrt{\hat{\rho}}\hat{\sigma}\sqrt{\hat{\rho}}}}^{2}
\end{equation}
The fidelity is positive and symmetric but, unlike the trace-distance, it does not directly satisfy the triangle inequality so it is not a proper distance. However it can be used to define a true distance measure. Apart from that, it gives a qualitative idea of the distance between two states and it shares many other properties with the trace-distance, though in a converse way:
\begin{enumerate}
\item $0\leq F(\hat{\rho},\hat{\sigma})\leq 1$ with equality on the left if and only if the states are orthogonal, while on the right if $\hat{\rho}=\hat{\sigma}$;
\item If one of the states is pure we have
\begin{equation}\label{fidpure}
\fid{\ket{\psi}}{\hat{\sigma}}=\bra{\psi}\hat{\sigma}\ket{\psi},
\end{equation}
which is equal to the states' overlap;
\item It can be written as an optimization of the overlap between arbitrary purifications of the two states (by Uhlmann's theorem~\cite{uhlmann}), i.e.,
\begin{equation}
\fid{\hat{\rho}_{S}}{\hat{\sigma}_{S}}=\max_{\ket{\rho}_{SR},\ket{\sigma}_{SR}}\fid{\ket{\rho}_{SR}}{\ket{\sigma}_{SR}};
\end{equation}
\item It is non-decreasing under physical maps, i.e.,
\begin{equation}\label{nondec}
\fid{\hat{\rho}}{\hat{\sigma}}\leq\fid{\Phi(\hat{\rho})}{\Phi(\hat{\sigma})},
\end{equation}
with equality if and only if the transformation is reversible;
\item It is strongly concave, i.e.,
\begin{equation}
\fid{\sum_{k=1}^{K}p_{k}\hat{\rho}_{k}}{\sum_{k=1}^{K}q_{k}\hat{\sigma}_{k}}\geq\sum_{k=1}^{K}\sqrt{p_{k}q_{k}}\fid{\hat{\rho}_{k}}{\hat{\sigma}_{k}}.
\end{equation}
\end{enumerate}
Eventually, the trace-distance and fidelity of two states are related as follows:
\begin{equation}
1-\sqrt{\fid{\hat{\rho}}{\hat{\sigma}}}\leq\trd{\hat{\rho}}{\hat{\sigma}}\leq\sqrt{1-\fid{\hat{\rho}}{\hat{\sigma}}}.
\end{equation}

\subsection{Qubit states}\label{subsec:qubits}
Until now we have made no specific example of quantum states and transformations. The easiest case is that of qubits, i.e., quantum states of a bidimensional Hilbert space, $\hil_{2}$. Any density operator $\hat{\rho}\in\states{2}$ can be written in terms of three real parameters defining a real vector $\vecund{r}_{\rho}$\footnote{Note that here and in the following we will drop the hat notation for qubit operators when they appear as indexes, for simplicity.} inside a three-dimensional unit sphere (the Bloch sphere), i.e. 
  \begin{equation}\label{blochStates}
  \hat{\rho}=\frac{\hat{\mathbf{1}}_{2}+\vecund{r}_{\rho}\cdot\vecund{\hat{\sigma}}}{2},
  \end{equation}
   where $\mathbf{1}_{2}$ is the identity operator of the space and  $\vecund{\hat{\sigma}} = (\hat{\sigma}_1,\hat{\sigma}_2,\hat{\sigma}_3)$ is the vector of Pauli matrices 
\begin{equation} 
\hat{\sigma}_{1}=\left(\begin{array}{cc}0&1\\1&0\end{array}\right), \;  \hat{\sigma}_{2}=\left(\begin{array}{cc}0&-i\\i&0\end{array}\right),\; \hat{\sigma}_{3}=\left(\begin{array}{cc}1&0\\0&-1\end{array}\right).
\end{equation} 
In particular, pure states are situated on the sphere's surface, i.e., $r_{\rho}=|\vecund{r}_{\rho}|=1$ for $\hat{\rho}=\dketbra{\psi}$, while the completely mixed state $\hat{\mathbf{1}}_{2}/2$ is at the origin. 
More generally, any hermitian operator $\hat{H}$ on $\hil_{2}$ can be expressed in terms of four real coefficients: a scalar $c_{H}$, which represents its normalization coefficient, and a vector $\vecund{r}_{H}$, which represents the operator in the Bloch space, i.e. 
 \begin{equation}\label{blochHerm}
  \hat{H}=c_{H}\hat{\mathbf{1}}_{2}+\vecund{r}_{H}\cdot\vecund{\hat{\sigma}},
 \end{equation} 
the trace of the operator being determined by $\operatorname{Tr}[\hat{H}]=2c_{H}$, while its eigenvalues by $\lambda_{H}^{(\pm)}=c_{H}\pm r_{H}$ with $r_H= |\vecund{r}_H|$. \\
Unitary transformations of qubits can be written in terms of the identity and Pauli matrices as
\begin{equation}
\hat{U}_{\vecund{n},\theta}=e^{i\theta \vecund{n}\cdot\vecund{\hat{\sigma}}}=\hat{\mathbf{1}}_{2}\cos\theta+i\vecund{n}\cdot\vecund{\hat{\sigma}}\sin\theta,
\end{equation}
i.e., a rotation of angle $\theta$ around the direction $\vecund{n}$ in Bloch space~\cite{nChuangBOOK}.
A generic qubit map $\Phi$ instead can be written in a canonical form due to King and Ruskai~\cite{ruskaiQubitMaps1}:
\begin{equation}
\Phi\left(\frac{\hat{\mathbf{1}}_{2}+\vecund{r}_{\rho}\cdot\vecund{\hat{\sigma}}}{2}\right)=\frac{\hat{\mathbf{1}}_{2}+(T\vecund{r}_{\rho}+\vecund{t})\cdot\vecund{\hat{\sigma}}}{2},
\end{equation}
so that it maps a qubit with Bloch vector $\vecund{r}$ into $T\vecund{r}+\vecund{t}$, with $T$ a square real matrix of order $3$ and $\vecund{t}\in\mathbb{R}^{3}$. Moreover $T$ can be always diagonalized via proper unitary transformations and several properties of $\Phi$ are determined by its three eigenvalues, e.g., complete positivity~\cite{ruskaiQubitMaps}. Some examples of typical qubit channels are discussed in~\cite{nChuangBOOK}.

\section{Bosonic Gaussian Systems}\label{sec:basicsGauss}
\subsection{The Hilbert space of a bosonic field}
Let us now focus on the case of Continuous-Variables (CV) systems~\cite{RevGauss,RevGauss2,RevGauss3,serafiniBOOK} that are represented by operators on an infinite-dimensional Hilbert space. In particular, we are interested in the description of excitations of the electromagnetic field, i.e., photons, that are commonly used to transfer information. Each mode of the field is represented by the state of a quantum harmonic oscillator of frequency $\nu=\omega/(2\pi)$ and hamiltonian $\hat{H}=\omega \hat{n}$\footnote{Here and in the rest of the thesis the measurement units will be chosen such that Planck's constant $\hbar=1$ w.l.o.g.}, where $\hat{n}$ is the photon-number operator. Its eigenstates are labelled by the number of photonic excitations and form the Fock basis $\{\ket{n}\}$ of the system. The photon-number operator can be written as $\hat{n}=\hat{a}^{\dag}\hat{a}$, the composition of the bosonic ladder operators $\hat{a}$, $\hat{a}^{\dag}$, whose action on Fock states is
\begin{equation}\label{ladder}
\begin{aligned}
&\hat{a}\ket{n}=\sqrt{n}\ket{n-1},\\
&\hat{a}^{\dag}\ket{n}=\sqrt{n+1}\ket{n+1},
\end{aligned}
\end{equation}
i.e., respectively to decrease and increase the number of photons by one. 
These satisfy the Bosonic Commutation Relations (BCR)
\begin{equation}
\left[\hat{a},\hat{a}^{\dag}\right]=\hat{a}\hat{a}^{\dag}-\hat{a}^{\dag}\hat{a}=1
\end{equation} 
and are related to the field position and momentum quadratures, respectively $\hat{q}$ and $\hat{p}$, by a unitary matrix $\gamma$:
\begin{align}\label{quadratures}
\left(\begin{array}{c}\hat{q}\\\hat{p}\end{array}\right)=\gamma\left(\begin{array}{c}\hat{a}\\\hat{a}^{\dag}\end{array}\right),\quad
\gamma=\frac{1}{\sqrt{2}}\left(\begin{array}{cc}1&1\\-i&i\end{array}\right).
\end{align}\\
More generally the Hilbert space of $N$ modes of the field is described by the tensor product of $N$ harmonic oscillators with $2N$ bosonic ladder operators $\{\hat{a}_{i}$, $\hat{a}_{i}^{\dag}\}_{j=1,\cdots,N}$. The latter can be arranged in separate column vectors $\vecop{a}=(\hat{a}_{1},\cdots,\hat{a}_{N})^{T}$ and $\vecop{a}^{*}=(\hat{a}_{1}^{\dag},\cdots,\hat{a}_{N}^{\dag})^{T}$, or in a common one $\underline{\hat{A}}=(\hat{a}_{1},\hat{a}_{1}^{\dag},\cdots,\hat{a}_{N},\hat{a}_{N}^{\dag})^{T}$ and satisfy the BCR
\begin{equation}\label{bcr}
\left[\hat{A}_{j},\hat{A}_{k}\right]=\Omega_{j k}, \quad \Omega=\bigoplus_{i=1}^{N}\omega=\bigoplus_{i=1}^{N}\left(\begin{array}{cc}0&1\\-1&0\end{array}\right).
\end{equation}
The matrix $\Omega$ is called symplectic form and it plays an important role in the description of CV systems, as will be shown in the following. It is orthogonal and antisymmetric, i.e., 
$\Omega^{-1}=\Omega^{T}=-\Omega$, 
and has the property
\begin{equation}\label{sympproperty}
M^{T}\Omega M=\bigoplus_{i=1}^{N}\operatorname{Det}(M_{i})\omega\quad\forall M=\bigoplus_{i=1}^{N}M_{i}.
\end{equation}
The vector of quadratures $\vecop{r}=(\hat{q}_{1},\hat{p}_{1},\cdots,\hat{q}_{N},\hat{p}_{N})^{T}$ is provided by a multi-mode extension of~\eqq{quadratures}, i.e.,
 \begin{equation}\label{switch}
 \vecop{r}=\Gamma\vecop{A},\quad \Gamma=\bigoplus_{i=1}^{N}\gamma,
 \end{equation} 
 and its components satisfy the Canonical Commutation Relations (CCR)
\begin{equation}\label{ccr}
\comm{\hat{r}_{j}}{\hat{r}_{k}}=\sum_{\ell,m}\Gamma_{j \ell} \Gamma_{k m} \comm{\hat{A}_{\ell}}{\hat{A}_{m}}=\left(\Gamma\Omega\Gamma^{T}\right)_{j k}=i\Omega_{j k},
\end{equation}
where we have used the property~\eqq{sympproperty} with $M=\Gamma$.\\
The most important class of CV states is that of Gaussian states, due to their simple theoretical description, physical implementation and manipulation. These are all the states $\hat{\rho}_{G}$ than can be completely described in terms of their first and second moments, respectively the vector of mean values $\vecund{m}\in\mathbb{R}^{2N}$ and the real $2N\times 2N$ positive symmetric covariance matrix~$V$:
\begin{equation}\label{moments}
\vecund{m}=\tr{\vecop{r} \hat{\rho}_{G}}, \quad V_{j k}=\oneover{2}\tr{(\hat{r}_{j}\hat{r}_{k}+\hat{r}_{k}\hat{r}_{j})\hat{\rho}_{G}}.
\end{equation}
Equivalently, by inverting the transformation~\eqq{switch} one can define complex-valued moments of the ladder operators $\vecop{A}$, which will be indicated with a $\tilde{\cdot}$, i.e., $\veclad{m}=\Gamma^{\dag}\vecund{m}$ and $\tilde{V}=\Gamma^{\dag}V\Gamma^{*}$. Better characterizations of Gaussian states can be given both in Hilbert space and phase space, starting from their definition in terms of first and second moments. \\
Before proceeding, let us note that the quantum uncertainty principle together with positivity of any quantum state imposes a necessary condition~\cite{gaussOpsStates}, stricter than positivity, on the covariance matrix of any quantum state: 
\begin{equation}\label{uncertainty}
\vecund{v}^{T}\left(V+\frac{i}{2}\Omega\right)\vecund{v}=\tr{(\vecund{v}\cdot\vecop{r})^{2}\hat{\rho}}\geq 0\quad\forall\vecund{v}\in\mathbb{R}^{2N}\quad\Leftrightarrow\quad V+\frac{i}{2}\Omega\geq0.
\end{equation}
In particular, this constraint is also sufficient to ensure that a Gaussian state is physical.

\subsection{Weyl operators and the phase-space representation}\label{phaseSpace}
Let us first analyze the phase-space description and define the multi-mode Weyl (or displacement) operator of real parameters $\vecund{r}\in\mathbb{R}^{2N}$ (latin letters) as 
\begin{equation}\label{dispreal}
\hat{D}(\vecund{r})=e^{i\vecund{r}^{T}\Omega^{T}\vecop{r}}.
\end{equation}
An alternative definition employs the ladder operators and the complex parameters $\vecund{\alpha}\in\mathbb{C}^{N}$ (greek letters):
\begin{equation}\label{dispcomplex}
\hat{D}(\vecund{\alpha})=\bigotimes_{i=1}^{N}e^{\alpha_{i}\hat{a}_{i}^{\dag}-\alpha_{i}^{*}\hat{a}_{i}}.
\end{equation}
The two representations of Eqs.~(\ref{dispreal},\ref{dispcomplex}) turn out to be equivalent upon setting $\veclad{m}=\veclad{m}_{\alpha}=(\alpha_{1},\alpha^{*}_{1},\cdots,\alpha_{N},\alpha^{*}_{N})$, so that 
\begin{equation}
\hat{D}(\vecund{\alpha})=e^{\veclad{m}^{T}\Omega\vecop{A}}=e^{\veclad{m}^{T}\Gamma^{*}\Omega\Gamma^{\dag}\vecop{r}}=e^{-i\vecund{r}^{T}\Omega\vecop{r}}=\hat{D}(\vecund{r}).
\end{equation}
Let us further note that the commutation relations, Eqs.~(\ref{bcr},\ref{ccr}), imply a composition rule for Weyl operators through the well-known Baker-Campbell-Hausdorff (BCH) relations~\cite{puriBOOK}
\begin{align}
\label{bchxyx}&e^{\hat{X}}\hat{Y}e^{-\hat{X}}=\hat{Y}+\sum_{j=1}^{\infty}\oneover{j!}\comm{\hat{X}}{\hat{Y}}_{j},\\
\label{bchxy}&e^{\hat{X}}e^{\hat{Y}}=e^{\hat{X}+\hat{Y}}e^{\oneover{2}\comm{\hat{X}}{\hat{Y}}}\quad\text{if } \comm{\hat{X}}{\hat{Y}}_{2}=0,
\end{align}
with
\begin{equation}
 \comm{\hat{X}}{\hat{Y}}_{j}=\begin{cases}\comm{\hat{X}}{\comm{\hat{X}}{\hat{Y}}_{j-1}}& \forall j\geq 2,\\
\comm{\hat{X}}{\hat{Y}}& \text{for } j=1.\end{cases}
\end{equation}
Indeed we can apply \eqref{bchxy} with $\hat{X}=i\vecund{r}^{T}\Omega^{T}\vecop{r}$ and $\hat{Y}=i\vecund{s}^{T}\Omega^{T}\vecop{r}$ to obtain 
\begin{equation}\label{dispcomposition}\begin{aligned}
\hat{D}(\vecund{r})\hat{D}(\vecund{s})&=\hat{D}(\vecund{r}+\vecund{s})e^{\frac{i}{2}\vecund{r}^{T}\Omega^{T}\vecund{s}},\\
\hat{D}(\vecund{\alpha})\hat{D}(\vecund{\beta})&=\hat{D}(\veclad{m}_{\alpha}+\veclad{s}_{\beta})e^{\oneover{2}\veclad{m}_{\alpha}^{T}\Omega\veclad{s}_{\beta}}=\bigotimes_{i=1}^{N}\hat{D}(\alpha_{i}+\beta_{i})e^{\oneover{2}(\alpha_{i}\beta_{i}^{*}-\alpha_{i}^{*}\beta_{i})}.
\end{aligned}\end{equation}
The Weyl operator is also called displacement operator because it displaces the quadrature and ladder operators when acting on them as a unitary transformation in Heisenberg representation; this can be seen by applying \eqref{bchxyx} with $\hat{X}=i\vecund{r}^{T}\Omega\vecop{r}$ and $\hat{Y}=\hat{r}_{i}$ for all $i=1,\cdots,N$ to obtain
\begin{equation}\label{dispaction}
\hat{D}^{\dag}(\vecund{r})\vecop{r}\hat{D}(\vecund{r})=\vecop{r}+\vecund{r},\quad \hat{D}^{\dag}(\vecund{\alpha})\vecop{a}\hat{D}(\vecund{\alpha})=\vecop{a}+\vecund{\alpha}.
\end{equation}
Thus the displacement operator can be used to define coherent states of the field when applied to a vacuum state with no photons:
$\hat{D}(\alpha)\ket{0}=\ket{\alpha}$. In the rest of this subsection we will refer to the one-mode case, since the multi-mode generalization is straightforward; for example the notation $\ket{\vecund{\alpha}}=\bigotimes_{i=1}^{N}\ket{\alpha_{i}}_{i}$ can be used for a tensor product of coherent states on several modes. These states are perhaps better known as eigenstates of the photon-number decreasing operator, i.e., $\hat{a}\ket{\alpha}=\alpha\ket{\alpha}$, and they describe the pulses produced by a laser~\cite{mandelBOOK}, often used to transmit information on optical fibers or in free space. Their decomposition in the Fock basis can be obtained by computing the overlaps $\braket{n}{\alpha}$ via \eqq{dispaction} and renormalizing the resulting state:
\begin{equation}\label{coherentstates}
\ket{\alpha}=e^{-\oneover{2}\abs{\alpha}^{2}}\sum_{n=0}^{\infty}\frac{\alpha^{n}}{\sqrt{n!}}\ket{n}.
\end{equation}
Using the latter expansion in the Fock basis it can be seen that coherent states form an overcomplete set of operators, i.e., they sum up to the identity operator but they have non-zero overlap,
\begin{align}
&\int \frac{d^{2}\alpha}{\pi}~\dketbra{\alpha}=\hat{\mathbf{1}},\\
&\braket{\alpha}{\beta}=e^{-\frac{\abs{\alpha-\beta}^{2}+\alpha\beta^{*}-\alpha^{*}\beta}{2}},\label{coherentNotOrth}
\end{align}
where the two-dimensional integral is over the real and imaginary parts of $\alpha\in\mathbb{C}$, equivalent to an integral over the corresponding real components of $\vecund{r}\in\mathbb{R}^{2}$ up to a factor of $1/2$. Hence any finite-trace operator can be represented as a coherent-state sum with weights given by its coherent-state averages, i.e.,
\begin{equation}\label{coherentdeco}
\hat{O}=\int \frac{d^{2}\alpha~d^{2}\beta}{\pi^{2}}~\bra{\alpha}\hat{O}\ket{\alpha}\ketbra{\alpha}{\beta}.
\end{equation} \\
The set of Weyl operators instead is complete and orthogonal, thus it forms a basis for the operators on the bosonic Hilbert space with respect to the usual Hilbert-Schmidt product, i.e., $\left(\hat{O},\hat{Q}\right)=\tr{\hat{O}^{\dag}\hat{Q}}$. Indeed by applying Eqs.~(\ref{dispaction},\ref{coherentdeco}) and the definition of the complex delta function, 
\begin{equation}
\delta(\alpha)=\int \frac{d^{2}\beta}{\pi^{2}}~e^{\alpha\beta^{*}-\alpha^{*}\beta}=2\delta^{2}(\vecund{r}_{\alpha}),
\end{equation}
the following orthogonality relations can be proved:
\begin{equation}\label{orthogonalitydisp}
\begin{aligned}
\tr{\hat{D}(-\vecund{r})\hat{D}(\vecund{s})}&=2\pi\delta^{2}(\vecund{r}-\vecund{s}),\\
\tr{\hat{D}(-\alpha)\hat{D}(\beta)}&=\pi\delta(\alpha-\beta).
\end{aligned}
\end{equation}
Then the representation of a state $\hat{\rho}$ in this basis has coefficients known as its characteristic function $\chi_{\hat{\rho}}$:
\begin{equation}\label{charfunction}
\begin{aligned}
\chi_{\hat{\rho}}(\vecund{r})&=\tr{\hat{\rho}\hat{D}(\vecund{r})}\leftrightarrow\hat{\rho}=\int\frac{d^{2}\vecund{r}}{2\pi}~\chi_{\hat{\rho}}(\vecund{r})\hat{D}(-\vecund{r}),\\
\chi_{\hat{\rho}}(\alpha)&=\tr{\hat{\rho}\hat{D}(\alpha)}\leftrightarrow\hat{\rho}=\int\frac{d^{2}\alpha}{\pi}~\chi_{\hat{\rho}}(\alpha)\hat{D}(-\alpha).\\
\end{aligned}
\end{equation}
The phase-space representation of $\hat{\rho}$ is obtained by taking the Fourier transform of its characteristic function times a scale factor,
\begin{equation}\label{wignereco}
\begin{aligned}
W_{\hat{\rho}}^{(u)}(\vecund{r})&=\int\frac{d^{2}\vecund{s}}{2\pi^{2}}~\chi_{\hat{\rho}}(\vecund{s})e^{\frac{u}{2}\abs{\vecund{s}}^{2}}e^{-i\vecund{s}^{T}\Omega^{T}\vecund{r}},\\
W_{\hat{\rho}}^{(u)}(\alpha)&=\int\frac{d^{2}\beta}{\pi^{2}}~\chi_{\hat{\rho}}(\beta)e^{\frac{u}{2}\abs{\beta}^{2}}e^{\alpha\beta^{*}-\alpha^{*}\beta},
\end{aligned}
\end{equation}
where the value of $u=0,\pm1$ determines the peculiar properties of the quasi-probability distribution $W^{(u)}_{\hat{\rho}}$, which is normalized but can take negative values. \\
For $u=1$ we obtain the Glauber-Sudarshan~\cite{glauber,sudarshan} P-representation, $W^{(1)}_{\hat{\rho}}(\alpha)=P(\alpha)$, which is employed to write any density matrix $\hat{\rho}$ as an integral over coherent states with possibly negative coefficients:
\begin{equation}
\hat{\rho}=\int d^{2}\alpha~P_{\hat{\rho}}(\alpha)\dketbra{\alpha}.
\end{equation}
For example a coherent state $\dketbra{\beta}$ has $P_{\beta}(\alpha)=\delta(\alpha-\beta)$ and in general a classical state is identified by a P-function not more singular than a delta function, while Fock states can be expressed in terms of derivatives thereof. \\
For $u=0$ instead we obtain the Wigner function~\cite{wigner} of the state, $W^{(0)}_{\hat{\rho}}(\vecund{r})=W_{\hat{\rho}}(\vecund{r})$, which is its most common representation in the phase-space $\mathbb{R}^{2}$ determined by the quadratures' mean values, $(q,p)$. This is a quasi-probability distribution that attains negative values on non-classical states. Moreover its marginal with respect to a given quadrature provides, up to a normalization factor, the probability distribution of a measurement of its conjugate quadrature, i.e.,
\begin{equation}\label{marginalwigner}
\oneover{2}\int dp~W_{\hat{\rho}}(q,p)= \bra{q}\hat{\rho}\ket{q},
\end{equation}
and this holds true for any couple of rotated quadratures. Moreover the operation \eqq{marginalwigner} can be performed in practice via a heterodyne measurement, as will be discussed later.\\
Finally for $u=-1$ we obtain the Husimi Q-function~\cite{husimi}, $W^{(-1)}_{\hat{\rho}}(\alpha)=Q_{\hat{\rho}}(\alpha)$, which is always positive hence a true probability distribution. Indeed it is easy to show that
\begin{equation}
Q(\alpha)=\oneover{\pi}\bra{\alpha}\hat{\rho}\ket{\alpha},
\end{equation}
i.e., the Husimi function represents the probability of measuring the coherent-state amplitude $\alpha$ via a heterodyne measurement, see \eqq{heterodyne} below. \\
Let us now consider Gaussian states, $\hat{\rho}_{G}$, characterized by their first and second moments, $\vecund{m}$, $V$. They can be equivalently defined as those states whose characteristic and Wigner functions are Gaussian, i.e., 
\begin{align}\label{gaussianchar}
\chi_{\hat{\rho}_{G}}(\vecund{r})&=e^{-\oneover{2}\vecund{r}^{T}\Omega^{T}V\Omega\vecund{r}+i\vecund{r}^{T}\Omega^{T}\vecund{m}},\\
\label{gaussianwigner}W_{\hat{\rho}_{G}}(\vecund{r})&=\frac{e^{-\oneover{2}(\vecund{r}-\vecund{m})^{T}V^{-1}(\vecund{r}-\vecund{m})}}{\pi\sqrt{\det{V}}}.
\end{align}
Eventually let us note a useful property of Gaussian states: the trace of their product is still a Gaussian function of their moments, i.e., if $\hat{\rho}_{G}$ and $\hat{\sigma}_{G}$ have moments respectively $\vecund{m}$, $V$ and $\vecund{m}'$, $V'$ then
\begin{equation}\label{gaussProduct}
\tr{\hat{\rho}_{G}\hat{\sigma}_{G}}=\int \frac{d^{2}\vecund{r}}{2\pi}~\chi_{\hat{\rho}_{G}}(\vecund{r})\chi_{\hat{\sigma}_{G}}(-\vecund{r})=\frac{e^{-\oneover{2}(\vecund{m}-\vecund{m}')^{T}(V+V')^{-1}(\vecund{m}-\vecund{m}')}}{\sqrt{\det(V+V')}},
\end{equation}
where we have used the relations Eqs.~(\ref{orthogonalitydisp},\ref{charfunction}) and performed a $2N$-dimensional Gaussian integral. More generally, the first equality above holds for any bosonic quantum state, not just Gaussian ones. 

\subsection{Gaussian unitaries and the Hilbert-space representation}\label{gaussUni}
Given the importance of Gaussian states, it is natural to define the class of transformations that preserve them as Gaussian too. In particular, Gaussian reversible transformations are represented by all those unitaries $\hat{U}_{G}=e^{-i\hat{H}_{q}}$ whose generating Hamiltonian is quadratic in the quadrature (or ladder) operators, i.e., 
\begin{equation}\label{quadraticHam}
\hat{H}_{q}=\vecop{r}^{T}H\vecop{r}+\vecund{h}^{T}\vecop{r},
\end{equation}
with $H$ a $2N\times2N$ real positive and symmetric matrix and $\vecund{h}\in\mathbb{R}^{2N}$. Indeed this form ensures that the quadrature operators transform in the Heisenberg representation into quadratic combinations of themselves, sending Gaussian states into Gaussian states. In particular, the linear part of the transformation is carried out by displacement operators, as already shown in \eqq{dispaction}, so that we can directly study the purely quadratic case with $\vecund{h}=0$ w.l.o.g. Let us then consider the action of a purely quadratic Hamiltonian on the quadrature operators: since $\comm{\vecop{r}^{T}\hat{H}\vecop{r}}{\vecop{r}}_{k}=(-2i\Omega H)^{k}\vecund{r}$, by applying \eqq{bchxyx} we have
\begin{equation}\label{similarity}
e^{\frac{i}{2}\vecop{r}^{T}\hat{H}\vecop{r}}~\hat{r}~e^{-\frac{i}{2}\vecop{r}^{T}\hat{H}\vecop{r}}=e^{\Omega H}\vecop{r},
\end{equation}
so that the purely quadratic evolution acts on the quadratures vector as a matrix $S=e^{\Omega H}$. Note that the first and second moments of an arbitrary state transform accordingly as $\vecund{r}\rightarrow S\vecund{r}$ and $V\rightarrow SVS^{T}$.\\
The matrix $S$ is symplectic, i.e., it leaves invariant the symplectic form \eqref{bcr},
\begin{equation}\label{symp}
S\Omega S^{T}=\Omega.
\end{equation}
Hence in order to understand the possible transformations generated by a purely quadratic evolution one needs to classify symplectic matrices; as shown in~\cite{sympGroup}, any real symplectic matrix $S$ of order $2N$ can be written in the Euler decomposition as the product 
\begin{equation}\label{sympMatDec}S=O_{1}DO_{2},\end{equation}
with $O_{1,2}$ symplectic orthogonal matrices and $D=\operatorname{diag}(d_{1},d_{1}^{-1},\cdots,d_{N},d_{N}^{-1})$ symplectic diagonal. The matrices $O_{1,2}$ have a corresponding unitary action, \eqq{similarity}, that is called ``passive'' since it preserves the total photon number, while for $D$ the unitary is called ``active'' since it does not preserve the total photon number. As for the hamiltonians generating these unitaries, they have a simple representation in terms of the ladder operators: passive Gaussian unitaries are given by hamiltonians with interactions of the kind $\hat{a}_{j}^{\dag}\hat{a}_{k}$, i.e., between photon-number-increasing and -decreasing operators, while active Gaussian ones are given by hamiltonians with interactions of the kind $\hat{a}_{j}^{\dag}\hat{a}_{k}^{\dag}$, i.e., between ladder operators of the same species. In the following we will discuss the most peculiar Gaussian unitary transformations, discussing their unitary representation with ladder operators and symplectic representation with respect to quadratures in phase-space:
\begin{enumerate}
\item The phase-shifter describes the phase change of a beam with respect to a reference one, usually implemented by sending the beam through a material with a refractive index different from that of the transmission medium. For a phase change of $\phi\in[0,2\pi)$ on a single mode the photon-number decreasing operator transforms as $\hat{a}\rightarrow\hat{a}e^{-i\phi}$, while the unitary and symplectic representations are, respectively,
\begin{equation}\label{ps}
\hat{U}_{p}(\phi)=e^{-i\phi\hat{a}^{\dag}\hat{a}},\quad S_{p}(\phi)=\left(\begin{array}{cc}
\cos\phi&\sin\phi\\
-\sin\phi&\cos\phi
\end{array}\right).
\end{equation}
In particular the symplectic representation is orthogonal, since the transformation is passive;

\item The single-mode squeezing describes the degenerate parametric down-conversion obtained by pumping a non-linear crystal with a laser, producing single-mode states that have enhanced fluctuations on one quadrature and damped fluctuations on its conjugate. For a squeezing amount of $r\in\mathbb{R}$ on a single mode the photon-number decreasing operator transforms as $\hat{a}\rightarrow\hat{a}\cosh r-\hat{a}^{\dag}\sinh r$, while the unitary and symplectic representations are, respectively,
\begin{equation}\label{1msq}
\hat{U}_{sq}(r)=e^{-\frac{r}{2}\left(\hat{a}^{\dag}{}^{2}-\hat{a}^{2}\right)},\quad S_{sq}(r)=\left(\begin{array}{cc}
e^{-r}&0\\
0&e^{r}
\end{array}\right).
\end{equation}
This is clearly an active operation, since the symplectic representation is not orthogonal;

\item The beam splitter describes the mixing of two beams colliding on different sides of a partially reflecting mirror. For a transmission coefficent $\cos^{2}\theta$, $\theta\in[0,2\pi)$, the photon-number decreasing operators of the two modes transform as \begin{equation}\hat{a}_{k}\rightarrow\hat{a}_{k}\cos\theta+(-1)^{k}\hat{a}_{\bar{k}}\sin \theta,\end{equation} with $k=0,1$ and $\bar{k}$ its mod-2 complementary, while the unitary and symplectic representations are, respectively,
\begin{equation}\label{bs}
\hat{U}_{bs}(\theta)=e^{-\theta\left(\hat{a}_{0}^{\dag}\hat{a}_{1}-\hat{a}_{0}\hat{a}_{1}^{\dag}\right)},\quad S_{bs}(\theta)=\left(\begin{array}{cc}
\mathbf{1}\cos\theta&\mathbf{1}\sin\theta\\
-\mathbf{1}\sin\theta&\mathbf{1}\cos\theta
\end{array}\right),
\end{equation}
where $\mathbf{1}$ indicates the identity matrix of order $2$. Once again, $S_{bs}$ is orthogonal and the beam splitter is a passive operation. The beam-splitter is called balanced when $\theta=\pi/4$. Let us further note that a single-mode displacement operation $\hat{D}(\beta)$, \eqq{dispcomplex}, can be implemented on an arbitrary state $\hat{\rho}$ of a system by combining its state with a strong local oscillator, i.e., a coherent state $\ket{\beta}$ with complex amplitude $\beta=b e^{i\phi}$ and intensity $b^{2}\gg\tr{\hat{n}\hat{\rho}}$, on a beam-splitter, \eqq{bs}, of low reflectivity, i.e., $\sin\theta\ll1$. Indeed it is easy to show~\cite{displacementLimit,displacementLimit1} that if $\sin\theta\rightarrow0$ with $\beta\sin\theta$ finite, the resulting operation after tracing out the local oscillator mode coincides with $\hat{D}(\alpha)$;

\item The two-mode squeezing describes the nondegenerate parametric down-conversion obtained by pumping a non-linear crystal with a laser, producing a couple of beams that are strongly correlated. For a squeezing amount of $r\in\mathbb{R}$ the photon-number decreasing operators of the two modes transform as\begin{equation}\hat{a}_{k}\rightarrow\hat{a}_{k}\cosh r+\hat{a}_{\bar{k}}^{\dag}\sinh r,\end{equation} while the unitary and symplectic representations are, respectively,
\begin{equation}\label{2sq}
\hat{U}_{2sq}(r)=e^{-r\left(\hat{a}_{0}^{\dag}\hat{a}_{1}^{\dag}-\hat{a}_{0}\hat{a}_{1}\right)},\quad S_{2sq}(r)=\left(\begin{array}{cc}
\mathbf{1}\cosh r&\sigma_{3}\sinh r\\
\sigma_{3}\sinh r&\mathbf{1}\cosh r
\end{array}\right),
\end{equation}
where $\sigma_{3}=\operatorname{diag}(1,-1)$ is one of the Pauli matrices that generate the SU(2) algebra~\cite{nChuangBOOK}. Let us note that the transformation is active, since the symplectic representation is not orthogonal. Let us furthermore note that the two-mode squeezing can be also produced from two single-mode squeezing operations by applying proper beam-splitter transformations before and afterwards, i.e., $\hat{U}_{2sq}(r)=\hat{U}_{bs}(\pi/4)\hat{U}_{sq}(r)^{\otimes 2}\hat{U}_{bs}(-\pi/4)$;

\item Any $N$-mode interferometer comprising beam-splitters and phase-shifters is described by a generic $N$-mode passive Gaussian unitary. Its action on the photon-number decreasing vector of operators is $\vecop{a}\rightarrow U\vecop{a}$, where $U$ is a unitary matrix of order $N$ defined by $N(N+1)/2$ complex parameters. A useful decomposition of any such matrix for implementation purposes employs a set of $N(N-1)/2$ two-mode passive Gaussian unitaries, i.e., compositions of beam-splitters, \eqq{bs}, and phase-shifters, \eqq{ps}, plus $N$ additional phase-shifters, see~\cite{zeilinger}.

\end{enumerate}
In light of the previous discussion, let us note that, through the Euler decomposition \eqref{sympMatDec} in phase-space, any Gaussian unitary has a corresponding unitary decomposition as
\begin{equation}\label{passActDeco}
\hat{U}_{S}=\hat{U}_{O_{1}}\left[\bigotimes_{i=1}^{N}\hat{U}_{sq}(\log d_{i})\right]\hat{U}_{O_{2}},
\end{equation} 
where $\hat{U}_{O_{1,2}}$ are $N$-mode interferometers and $\hat{U}_{sq}(\log d_{i})$ are single-mode squeezers of parameter determined by the diagonal matrix $D$. \\

Eventually let us discuss a third equivalent definition of Gaussian states as all the ground and thermal states of quadratic positive definite hamiltonians,   \eqq{quadraticHam}. Thanks to this definition a Gaussian state can be written compactly as
\begin{equation}\label{thermalAndGroundStates}
\hat{\rho}_{G}=\frac{e^{-\beta \hat{H}_{q}}}{\tr{e^{-\beta \hat{H}_{q}}}}, 
\end{equation}
with $\beta\geq0$ an inverse-temperature-like parameter. In particular, for $\beta=0$ we obtain the infinite-dimensional completely mixed state, while for $\beta\rightarrow\infty$ only the ground states of the $\hat{H}_{q}$ survive, describing all pure Gaussian states. \\
From the definition \eqref{thermalAndGroundStates} it is easy to obtain a diagonal decomposition: first of all, as already stated, the linear part of the Hamiltonian is equivalent to a displacement operation so that we can restrict to a purely quadratic $\hat{H}_{q}$. Next we compute the normal-mode decomposition of the Hamiltonian via the symplectic diagonalization~\cite{williamson} of $H$, valid for any $2N\times 2N$ positive-definite real matrix, as follows:
\begin{equation}\label{normalMode}
H=S^{T}\Lambda S,
\end{equation}
with $S$ symplectic and $\Lambda=\operatorname{diag}(\lambda_{1},\lambda_{1},\cdots,\lambda_{N},\lambda_{N})$. The symplectic eigenvalues $\{\lambda_{n}\}_{n=1}^{N}$ of $H$ are positive and doubly degenerate; they can be computed as the absolute value of the ordinary eigenvalues of the matrix $i\Omega H$. This decomposition identifies the normal modes of $\hat{H}_{q}$, i.e., those that are not mixed by its action, as linear combinations of the starting ones. Accordingly, the generic $N$-mode Gaussian state, \eqq{thermalAndGroundStates}, can be written as
\begin{equation}\label{almostDecomposed}
\hat{\rho}_{G}\propto\hat{D}^{\dag}(\vecund{r})\hat{U}^{\dag}_{S}\left[\bigotimes_{i=1}^{N}e^{-\frac{\beta}{2} \lambda_{i}\left(\hat{q}_{i}^{2}+\hat{p}_{i}^{2}\right)}\right]\hat{U}_{S}\hat{D}(\vecund{r}),
\end{equation}
where $\vecund{r}\in\mathbb{R}^{2N}$ is determined by the coefficients of the linear part of the hamiltonian, while $\hat{U}_{S}$ is the unitary representation of the symplectic matrix that diagonalizes the quadratic part of the hamiltonian, i.e., a generic Gaussian unitary itself decomposable as in \eqq{passActDeco}. Eventually each normal-mode component of \eqref{almostDecomposed} is the ground or thermal state of a harmonic oscillator of frequency $\lambda_{i}$, diagonal in its Fock basis $\dketbra{n_{i}}_{i}$, so that
\begin{equation}\label{gaussianNormalMode}
\hat{\rho}_{G}=\hat{D}^{\dag}(\vecund{r})\hat{U}^{\dag}_{S}\left[\bigotimes_{i=1}^{N}\left(1-e^{-\beta\lambda_{i}}\right)\sum_{n_{i}=0}^{\infty}e^{-\beta \lambda_{i}n_{i}}\dketbra{n_{i}}_{i}\right]\hat{U}_{S}\hat{D}(\vecund{r}).
\end{equation}
By direct computation of \eqq{moments} for the ladder operators, it is easy to show that the first and second moments of a generic Gaussian state, \eqq{gaussianNormalMode}, are
\begin{equation}\label{momentsDeco}
\vecund{m}=S\vecund{r},\quad V=S\left[\bigoplus_{i=1}^{N} \nu_{i}\mathbf{1}\right]S^{T},
\end{equation}
where $\vecund{r}$, $S$ are the displacement vector and symplectic matrix corresponding to the normal-mode decomposition of $\hat{H}_{q}$  and $\nu_{i}$ are the symplectic eigenvalues of $V_{i}$, connected to those of $H$, i.e., $\lambda_{i}$, as follows:
\begin{equation}
\nu_{i}=\frac{1+e^{-\beta \lambda_{i}}}{2(1-e^{-\beta \lambda_{i}})}.
\end{equation}
Note that the decomposition \eqq{momentsDeco} of the covariance matrix is once again a symplectic diagonalization with symplectic eigenvalues $\nu_{i}$.\\

The Hilbert-space representation, \eqq{gaussianNormalMode}, can be used to describe several basic Gaussian states:
\begin{enumerate}

\item Thermal states have the simplest kind of Hilbert-space representation, since they are diagonal in the Fock basis of each mode. They are defined in terms of their average photon-number $\bar{n}$ as, for a single-mode, 
\begin{equation}\label{therm}
\hat{\rho}_{th}(\bar{n})=\oneover{\bar{n}+1}\sum_{n=0}^{\infty}\left(\frac{\bar{n}}{\bar{n}+1}\right)^{n}\dketbra{n},
\end{equation}
hence constitute the basic building blocks of the decomposition \eqq{gaussianNormalMode}. Their mean value is clearly zero, while the covariance matrix is already diagonal with symplectic eigenvalues $\nu=\bar{n}+1/2$. In particular, for $\bar{n}=0$ we obtain the pure vacuum state with covariance matrix $V_{vac}=\mathbf{1}/2$, while for higher values of $\bar{n}$ we obtain a mixture of Fock states;

\item Coherent states, already discussed in Sec.~\ref{phaseSpace}, are obtained by displacing the vacuum by a complex amount $\alpha$. Hence they have the same covariance matrix of the vacuum but of course exhibit a non-zero mean-value $\vecund{m}_{coh}=(q_{\alpha},p_{\alpha})^{T}$;

\item Single-mode squeezed states are obtained by squeezing the vacuum by an amount $r$, i.e., $\ket{r}=\hat{U}_{sq}(r)\ket{0}$. They have zero mean-value and covariance matrix 
\begin{equation}\label{squeezedCov}
V_{sq}=S_{sq}(r)\frac{\mathbf{1}}{2}S_{sq}^{T}(r)=\frac{S(2r)}{2}
\end{equation} 
with unit symplectic eigenvalues, as for the vacuum. Note that the covariance matrix, \eqq{squeezedCov}, is still diagonal but the fluctuations of its $\hat{q}$ ($\hat{p}$) quadrature are exponentially enhanced (damped) with respect to the vacuum. This effect is inherently quantum and is of help in several settings such as metrological applications~\cite{ligoCaves}; in particular, in the limit $r\rightarrow\infty$ we obtain an infinitely squeezed state with asymptotically zero quantum noise on a given quadrature. A squeezed state lives in a subspace of the total Hilbert space spanned by even-photon-number Fock states; its weights can be computed by disentangling the SU(1,1) Lie algebra generated by the operators $\hat{a}^{2}$, $\hat{a}^{\dag}{}^{2}$ and $\hat{a}^{\dag}\hat{a}$, see~\cite{puriBOOK}, obtaining
\begin{equation}\label{squeezedstates}
\ket{r}=\oneover{\sqrt{\cosh r}}\sum_{n=0}^{\infty}\frac{\sqrt{(2 n)!}}{2^{n} n!}(-\tanh r)^{n}\ket{2n}=\sum_{n=0}^{\infty}c_{n}(r)\ket{2n};
\end{equation}

\item Two-mode squeezed or Einstein-Podolsky-Rosen (EPR) states are obtained by squeezing a pair of vacuum modes by an amount $r$, i.e., $\ket{r}_{AB}=\hat{U}_{2sq}(r)\ket{0,0}_{AB}$. They have a zero mean-value and covariance matrix $V_{2sq}=S_{2sq}(2r)/2$. Their Fock-state decomposition can be obtained as for the one-mode squeezed states by using the disentangling of the $SU(1,1)$ Lie algebra:
\begin{equation}\label{epr}
\ket{r}_{AB}=\sqrt{1-\left(\tanh r\right)^{2}}\sum_{n=0}^{\infty}\left(-\tanh r\right)^{n}\ket{n, n}_{AB},
\end{equation} 
which is a coherent superposition of couples of Fock states with the same photon-number on each subsystem.
Note that for $r=0$ a separable double-vacuum state is obtained, while the state becomes increasingly entangled at larger $r$ and for $r\rightarrow\infty$ reaches asymptotically the infinite-dimensional maximally entangled state, analogous to the finite-dimensional case discussed in Sec.~\ref{quantumStates}. Eventually, by partial trace it is easy to show that an EPR state,  \eqq{epr}, is locally equivalent to a thermal state,  \eqq{therm}, with an average photon-number $\bar{n}=(\cosh(2r)-1)/2$.
\end{enumerate}

\subsection{Gaussian channels and measurements}\label{gaussChMeas}
The class of Gaussian transformations studied in Sec.~\ref{gaussUni} is limited to the reversible case. More generally one can imagine transformations that preserve Gaussian states but are not reversible, i.e., quantum Gaussian channels. These can be described as all those quantum maps $\Phi$ that transform the first and second moments, respectively $\vecund{m}$ and $V$, of a N-mode Gaussian state as
\begin{equation}\label{gaussianCh}
\vecund{m}'=A\vecund{m}+\vecund{b},\quad V'=AVA^{T}+B,
\end{equation}
where $A$ and $B=B^{T}$ are positive $2N\times2N$ real matrices and $\vecund{b}\in\mathbb{R}^{2N}$. This kind of map can always be dilated to a Gaussian unitary dynamics acting on the system and on a Gaussian state of the environment; if such state is also pure, one has the Stinespring representation, \eqq{stinespring}. This unitary dilation helps characterizing the maps of the form \eqq{gaussianCh}; in particular it can be used to obtain a constraint~\cite{holevoBOOK} on the matrices $A$, $B$ that is related to the uncertainty relation, \eqq{uncertainty}, for the environment:
\begin{equation}\label{uncertaintyGC}
B+\frac{i}{2}(\Omega-A\Omega A^{T})\geq0.
\end{equation}
The latter is a necessary and sufficient condition for a map \eqq{gaussianCh} to be physical.
Let us note that the matrix $B$ can be associated to noise addition to the system, while $A$ represents some sort of interaction between the modes of the system. Hence the uncertainty relation \eqref{uncertaintyGC} states that not all kinds of transformations $A$ can be physically implemented noiselessly: when this is the case then $B=0$ and the matrix $A$ must be symplectic, corresponding to a Gaussian unitary; otherwise we call quantum-limited a channel that attains the equality in \eqq{uncertaintyGC} with the minimal amount of noise. As for the vector $\vecund{b}$, as usual it amounts to a displacement operation. More generally, any Gaussian channel can be put in a canonical form $\Phi_{c}$, with $\vecund{b}_{c}=0$ and $A_{c}$ block-diagonal, by applying proper Gaussian unitaries before and after it, i.e., $\Phi=\mathcal{V}\circ\Phi_{c}\circ\mathcal{U}$, and without changing its spectral properties, see~\cite{multimodeGC}.\\
Restricting to the case of one mode, both the interaction and noise matrices can be put in diagonal form and the uncertainty relation, \eqq{uncertaintyGC}, can be written explicitly as \begin{equation}\label{uncertainty1m}4\det B\geq(1-\det A)^{2}.\end{equation} 
Starting from the latter relation a full classification of one-mode Gaussian channels can be carried out~\cite{holevoBOOK,holevoSingleMode} but here we restrict to the most studied and relevant case for practical applications: phase-insensitive (PI) Gaussian channels, defined as those that are invariant with respect to phase-shift operations, i.e., 
\begin{equation}\label{PIG}
\mathcal{U}_{p}(\phi)\circ\Phi_{PI}=\Phi_{PI}\circ\mathcal{U}_{p}(\phi) 
\end{equation}
for all $\mathcal{U}_{p}(\phi)$ with unitary representation given by \eqq{ps}. It is straightforward to see that the phase-space representation of $\Phi_{PI}$ has $A$ and $B$ proportional to the identity: $A=\sqrt{\eta}\mathbf{1}$, $B=\sqrt{\tau}\mathbf{1}$ and the uncertainty relation,  \eqq{uncertainty1m}, becomes $2\sqrt{\tau}\geq|1-\eta|$; these channels transform both quadratures in the same way.
The simplest kinds of PI Gaussian channels are:
\begin{enumerate}
\item The attenuator or lossy channel $\mathcal{E}_{\eta,\bar{n}}$ reduces the intensity of the input signal by $\eta\in[0,1)$ while adding noise of parameter $\sqrt{\tau}=(1-\eta)(\bar{n}+1/2)$, $\bar{n}\geq0$. Its unitary dilation is given by a beam splitter interaction $\hat{U}_{bs}(\arccos\sqrt{\eta})$, \eqq{bs}, with an environmental thermal state $\hat{\rho}_{th}(\bar{n})$, \eqq{therm}, and it transforms the photon-number decreasing operator $\hat{a}$ of the system as
\begin{equation}\label{att}
\hat{a}\rightarrow\hat{a}\sqrt{\eta}+\hat{e}\sqrt{1-\eta},
\end{equation}
where $\hat{e}$ is the photon-number decreasing operator of the environment;
\item The amplifier channel $\mathcal{A}_{\eta,\bar{n}}$ increases the intensity of the input signal by $\eta\geq0$ while adding noise of parameter $\sqrt{\tau}=(\eta-1)(\bar{n}+1/2)$, $\bar{n}\geq0$. Its unitary dilation is given by a two-mode squeezing interaction $\hat{U}_{2sq}(\operatorname{arccosh}\sqrt{\eta})$, \eqq{2sq}, with an environmental thermal state $\hat{\rho}_{th}(\bar{n})$, \eqq{therm}, and it transforms the photon-number decreasing operator $\hat{a}$ of the system as
\begin{equation}\label{amp}
\hat{a}\rightarrow\hat{a}\sqrt{\eta}+\hat{e}^{\dag}\sqrt{\eta-1},
\end{equation}
where $\hat{e}^{\dag}$ is the photon-number increasing operator of the environment;
\item The noise-addition channel, obtained for $\eta=0$, is equivalent to an additive classical-Gaussian-noise channel of intensity $\tau\geq0$. Surprisingly, its unitary dilation requires a less straightforward interaction with an environmental two-mode squeezed state, see~\cite{holevoBOOK,holevoSingleMode}, and it transforms the photon-number decreasing operator $\hat{a}$ as 
\begin{equation}
\hat{a}\rightarrow\hat{a}+\alpha,
\end{equation}
where $\alpha$ is a complex random variable with Gaussian distribution of zero mean and variance $\tau$.
\end{enumerate}
Let us further note that the quantum-limited versions of the attenuator and amplifier channels described above can be obtained by choosing a vacuum environmental state, i.e., $\bar{n}=0$ and hence $\sqrt{\tau}=\sqrt{\tau_{min}}=|1-\eta|/2$. Moreover it is easy to show that any PI Gaussian channel can be written as the composition of a quantum-limited attenuator and a quantum-limited amplifier, see~\cite{gaussOpt}, i.e.,
\begin{equation}\label{ampattdeco} 
\Phi_{PI}=\mathcal{A}_{\kappa,0}\circ\mathcal{E}_{\eta,0}.
\end{equation}
Another important property of the attenuator and amplifier channels just described is that they are the dual, \eqq{dualCh}, of one another, i.e.,
\begin{equation}\label{duality}
\mathcal{A}^{*}_{\kappa,\bar{n}}=\kappa^{-1}\mathcal{E}_{\kappa^{-1},\bar{n}}
\end{equation}
for all $\kappa\geq1$ and vice versa, as can be easily shown by computing $\tr{\hat{\sigma}\mathcal{A}_{\kappa,\bar{n}}(\hat{\rho})}$ through \eqq{gaussProduct}, writing the transformation as a rescaling of the integration variable $\alpha$ times some noise and performing a change of variable $\alpha'=\sqrt{\kappa}\alpha$.

Eventually, let us conclude our brief presentation of Gaussian Quantum Information with a discussion of the most common kinds of measurements that can be performed in the lab. As a start, consider the photodetector (PDT)~\cite{mandelBOOK}, a device that destroys the incoming state and outputs a classical electric current of intensity proportional to that of the measured quantum state, i.e., it performs an approximate measurement of the photon-number operator $\hat{n}$. The typical errors that affect a PDT are non-unit efficiency, i.e., when it estimates a lower intensity than that of the actual signal because of various loss effects, and dark counts, i.e., when it outputs a non-zero current even though no signal was sent, due to the presence of noise. These errors can be mimicked by adding a generic PI lossy channel before the ideal PDT.  \\
By interpreting a non-zero current as the presence of one or more photons, one obtains an on/off detector (OOD) that ``clicks'' whenever the incoming signal is different from the vacuum state. An OOD can be modeled by a two-outcome POVM (see Sec.~\ref{quantumOps})
\begin{equation}\label{photodetector}
\mathcal{M}_{ood}=\left\{\hat{E}_{0}=\dketbra{0},\hat{E}_{1}=\hat{\mathbf{1}}-\hat{E}_{0}\right\}
\end{equation}
followed by tracing out the output state. More refined photon-number-resolving (PNR) detectors, which can in principle distinguish higher-photon-number states, have been studied and experimentally demonstrated in recent years but are still affected by low efficiency~\cite{pnr}. \\
As for Gaussian measurements, they are defined as those measurements whose outcome statistics is Gaussian-distributed if the input state was Gaussian. The basic ones are those of the dyne family~\cite{shapiroDyne,RevGauss2,RevGauss}:
\begin{enumerate}

\item Homodyne detection (HoD) performs a destructive measurement of a single arbitrary quadrature of the field, i.e., $\hat{q}_{\phi}=\hat{q}\cos\phi+\hat{p}\sin\phi$ for some $\phi\in[0,2\pi)$, thus it can be modeled by a continuous-outcome projective POVM 
\begin{equation}\label{homodyne}
\mathcal{M}_{hod}(\phi)=\left\{\hat{E}_{q_{\phi}}=\dketbra{q_{\phi}}\right\}_{q_{\phi}\in\mathbb{R}},
\end{equation}
where $\ket{q_{\phi}}$ are improper eigenstates of the corresponding quadrature operator, again followed by tracing out the system. The Fock representation of such eigenstates is
\begin{equation}\label{quadraturestates}
\ket{q_{\phi}}=\oneover{\pi^{1/4}}e^{-\frac{q_{\phi}^{2}}{2}}\sum_{n=0}^{\infty}\frac{H_{n}(q_{\phi})}{2^{n/2}\sqrt{n!}}e^{-i\phi}\ket{n},
\end{equation}
where $H_{n}(x)$ is an Hermite polynomial~\cite{mandelBOOK}. It can be realized by combining the state of the system with a strong local oscillator of phase $\phi$ (relative to the system) on a balanced beam-splitter, similarly to the the implementation of a displacement operation discussed in Sec.~\ref{gaussUni}, \eqq{bs}, then measuring both outputs with a PDT and finally subtracting the two classical currents thus obtained. In order to see this, apply the beam-splitter unitary $\hat{U}_{bs}(\pi/4)$ to the input modes $\hat{a}$ and $\hat{b}$, populated respectively by the state of the system, $\rho$, and a coherent state $\ket{b e^{i\phi}}$ with $b^{2}\gg\tr{\hat{n}_{a}\hat{\rho}}$. Then the final classical current will be proportional to a measurement of the observable 
\begin{equation}
\hat{O}_{hod}=\frac{\hat{a}'^{\dag}\hat{a}'+\hat{b}'^{\dag}\hat{b}'}{b\sqrt{2}}=\frac{\hat{a}^{\dag}\hat{b}+\hat{b}^{\dag}\hat{a}}{b\sqrt{2}},
\end{equation}
where the primed operators describe the output modes of the beam-splitter. The first moment of such operator on the total state of the system, $\rho\otimes\dketbra{b e^{i\phi}}$, is equal to $q_{\phi}$, while higher moments differ from those of the desired quadrature $\hat{q}_{\phi}$ for terms of order $\tr{\hat{n}_{a}\hat{\rho}}/b^{2}$ that vanish in the strong-local-oscillator limit;

\item Heterodyne detection (HeD) performs a destructive measurement of the photon-number-decreasing operator, $\hat{a}=(\hat{q}+i\hat{p})/\sqrt{2}$, thus it can be modeled by a continuous-outcome POVM
\begin{equation}\label{heterodyne}
\mathcal{M}_{hed}=\left\{\hat{E}_{\alpha}=\oneover{\pi}\dketbra{\alpha}\right\}_{\alpha\in\mathbb{C}},
\end{equation}
where $\ket{\alpha}$ are coherent states of the field. Note that, like HoD, the POVM is a set of complete projection operators but, unlike HoD, these are not orthogonal, see \eqq{coherentNotOrth}, so that the POVM \eqq{heterodyne} is not projective. The HeD can be realized by combining the state of the system with a vacuum state on a balanced beam-splitter, then performing two HoD's on the output modes, each for a different canonical conjugate quadrature $\hat{q}$, $\hat{p}$, and finally subtracting the two classical currents thus obtained. Proceeding as for the HoD, it is easy to show that the correct statistics is recovered. 
\end{enumerate}
Eventually, a generalization of the HoD implementation allows to perform projections on arbitrary Gaussian states by applying a proper Gaussian unitary before the measurement; noisier measurements are obtained instead by applying noisy Gaussian channels, see~\cite{giedkeCirac,eisertPlenio}.

\section{Quantum Communication Theory}\label{sec:commQ}

\subsection{Classical Information Theory}\label{subsec:commC}
The transmission of information in a classical setting has been well studied in the past, starting with the seminal works of Shannon~\cite{shannonSeminal,shannonSeminal1}. We review here the basic concepts and results that will be useful to understand its extension to the quantum setting; for a full review see~\cite{covThomBOOK,gallagerBOOK,wildeBOOK,nChuangBOOK}. \\
Let us consider a set of possible signals that the transmitter, Alice, wants to send to the receiver, Bob, using a classical communication line. Each signal sent through the channel is represented by a random variable $X$ whose value can be chosen among an input alphabet $\mathcal{X}=\{x\}$ with probability $p_{x}=P_{X}(x)$\footnote{Here and in the following we will use the compressed notation $p_{x}$ to indicate the value of the probability function $P_{X}$ at point $x$, switching back to the extended notation $P_{X}(x)$ only when deemed necessary for more clarity.}. Note that the dimension of the alphabet can be infinite, in which case the variable has a continuous probability distribution and sums over it become integrals; moreover several information-theoretic quantities may exhibit an infinite optimal value, so that infinite-dimensional alphabets are usually subject to some kind of constraint, e.g., on the second moment of the probability distribution. For this section we restrict to the easier finite-dimensional case. The signals received by Bob in general are different from those sent by Alice and they are represented by an output random variable $Y$ with values in $\mathcal{Y}=\{y\}$. Accordingly, the channel is represented by a conditional (or transition) probability distribution $p_{y|x}=P_{Y|X}(y|x)$ that describes how probable it is for a given realization $X=x$ of the input to transition to an output value $Y=y$. Hence the probability distribution of an outcome is completely determined by the source of the input signal, i.e., $p_{x}$, and the type of noise introduced by the channel, i.e., $p_{y|x}$, following the usual probability rules:
\begin{equation}
p_{y}=\sum_{x\in\mathcal{X}}p_{y,x}=\sum_{x\in\mathcal{X}}p_{y|x}p_{x},
\end{equation}
where we have introduced the joint probability distribution $p_{y,x}=p_{y|x}p_{x}$. \\
There are two basic quantities employed in the study of Classical Information Theory.
First, the Shannon (or classical) entropy of a random variable is 
\begin{equation}
H(X)=\sum_{x\in\mathcal{X}}h(p_{x})=-\sum_{x\in\mathcal{X}}p_{x}\log p_{x}
\end{equation}
and it measures the average degree of disorder or information content of its probability distribution. Note that we have implicitly defined the function $h(p)=-p\log p$ for $p\in[0,1]$, with $h(0)=0$. The Shannon entropy is minimum for a deterministic variable, i.e., $H(X)=0$ for $p_{x}=\delta_{x,x_{0}}$, while it is maximum for a completely random variable, i.e., $H(X)=\log\abs{\mathcal{X}}$ for $p_{x}=1/\abs{\mathcal{X}}$. Such randomness is considered a resource in several settings, because it is a means of encoding information. Given two (or more) random variables, one can define their joint entropy as the entropy of their joint probability distribution, i.e.,
\begin{equation}
H(X,Y)=\sum_{(x,y)\in\mathcal{X}\times\mathcal{Y}}h(p_{x,y}),
\end{equation}
and the conditional entropy of one variable with respect to the other as the entropy of their conditional probability distribution averaged over the conditioning variable, i.e.,
\begin{equation}\label{condCEnt}
H(Y|X)=\sum_{x\in\mathcal{X}}p_{x}H(Y|x)=\sum_{x\in\mathcal{X}}p_{x}\sum_{y\in\mathcal{Y}}h(p_{y|x})\equiv H(Y,X)-H(X),
\end{equation}
with a similar definition for $H(X|Y)$. The conditional entropy is non-negative by definition and it attains its minimum when the variables are perfectly correlated, i.e., $H(Y|X)=0$ for $p_{y|x}=\delta_{x,y}$ and hence $H(X,Y)=H(X)\equiv H(Y)$, while it is maximum for completely uncorrelated variables, i.e., $H(Y|X)=H(Y)$ for $p_{y|x}=p_{y}$ and hence $H(X,Y)=H(X)+H(Y)$.\\
Second, the mutual information of two random variables is
\begin{equation}\label{mCInfo}
I(X:Y)=H(X)+H(Y)-H(X,Y)=H(Y)-H(Y|X)=H(X)-H(X|Y),
\end{equation}
where we have used \eqq{condCEnt} to obtain the equivalent forms. This symmetric quantity measures the amount of information shared between the two random variables, i.e., their correlations. It attains its minimum for completely uncorrelated variables, i.e., $I(X:Y)=0$ for $H(Y|X)=H(Y)$, and its maximum for perfectly correlated ones, i.e., $I(X:Y)=H(Y)\equiv H(X)$ for $H(Y|X)=0$. Let us also define the mutual information of $X$ and $Y$ conditioned on a third random variable $Z$ as
\begin{equation}\label{condCMInfo}
I(X:Y|Z)=I(X:Y,Z)-I(X:Z)=H(Y|Z)-H(Y|X,Z).
\end{equation}
Here and in the rest of the thesis we will use the notation $x_{(i,j)}=x_{i},\cdots,x_{j}$, with $i,j$ positive integers, to indicate a sequence of $j-i+1$ quantities and $\vecund{x}=x_{(1,J)}$ if the sequence is complete, i.e., the index $i,j$ are respectively the minimum and the maximum allowed for the given quantities, e.g., $1$ and $J$ in this case. In the following we present some useful properties of the quantities just introduced:
\begin{enumerate}
\item Positivity of mutual information, i.e., $I(X:Y)\geq0$, provable through the positivity of relative entropy~\cite{covThomBOOK};
\item Conditioning does not increase entropy, i.e., $H(Y)\geq H(Y|X)$, which follows from the previous property;
\item Chain rule of entropy, i.e., 
\begin{equation}
H(X_{(1,N)})=\sum_{n=1}^{N}H(X_{n}|X_{(1,n-1)}),
\end{equation}
which easily follows from the definition of conditional entropy, \eqq{condCEnt};
\item Chain rule of mutual information, i.e., 
\begin{equation}\label{chainMInfo}
I(X_{(1,N)}:Y)=\sum_{n=1}^{N}I(X_{n}:Y|X_{(1,n-1)}),
\end{equation} 
which can be obtained by applying the definition \eqq{mCInfo} and repeatedly adding and subtracting conditional entropy terms $H(Y|X_{(1,n-1)})$ to reconstruct the terms in the sum;
\item Concavity of entropy, i.e., given $\bar{X}=\sum_{n=1}^{N}p_{n}X_{n}$ a convex combination of random variables $X_{n}$ with probability $p_{n}$ then \begin{equation}
\sum_{n=1}^{N}p_{n}H(X_{n})\leq H(\bar{X}).
\end{equation}
This can be proved using the convexity of relative entropy or by considering $p_{n}$ as the probability distribution of a discrete random variable and conditioning over it;
\item Concavity of mutual information in $p_{x}$ at fixed $p_{y|x}$, i.e., given $\bar{X}$ as in the previous property then 
\begin{equation}
\sum_{n=1}^{N}p_{n}I(X_{n}:Y)\leq I(\bar{X}:Y),
\end{equation} 
which can be proved by using the concavity of entropy;
\item Data-processing inequality, i.e., for three random variables forming a Markov chain, $X\rightarrow Y\rightarrow Z$, it holds $I(X:Y)\geq I(X:Z)$. We say that $X\rightarrow Y\rightarrow Z$ if and only if the variable $Z$ conditioned on $Y$ is uncorrelated to $X$, i.e., $p_{z|x,y}=p_{z|y}$. This is equivalent to $Z\rightarrow Y\rightarrow X$ and also to requiring that the joint distribution of $X$ and $Z$ conditioned on $Y$ is factorized, i.e., $p_{x,z|y}=p_{x|y}p_{z|y}$. This property can be easily proved by applying the definition \eqq{condCMInfo} twice, conditioning on $Y$ and $Z$ to obtain
\begin{equation}
I(X:Z|Y)+I(X:Y)=I(X:Y|Z)+I(X:Z).
\end{equation}
Then the first term on the left-hand side is zero for a Markov chain, while that on the right-hand side is non-negative and can be eliminated, obtaining a lower quantity and the desired relation. Note that by eliminating the second term on the right-hand side instead one concludes that conditioning does not increase the mutual information for a Markov chain.
\end{enumerate}

In its seminal contribution~\cite{shannonSeminal,shannonSeminal1}, Shannon introduced the entropy and mutual information to characterize the compression and transmission of information. His results are based on the methods of typicality and random coding, which we briefly describe here since they are employed in Quantum Communication Theory as well. Let us suppose that Alice chooses a sequence $\vecund{x}=x_{(1,N)}$ of $N$ letters  at random from $\mathcal{X}$ by sampling the probability distribution $P_{X}(x)$. We call the alphabet equipped with a probability distribution a classical source $\mathcal{S}=\{(x,p_{x}):x\in\mathcal{X},p_{x}=P_{X}(x)\}$ and we would like to know its ultimate information content in order to understand if such information can be reliably compressed or transmitted through a given channel. In the asymptotic limit of long sequences, i.e., $N\rightarrow\infty$, there is a set of them, called typical sequences, in which each letter of $\mathcal{X}$ appears approximately its expected number of times, i.e., $N p_{x}$. Moreover the number of such typical sequences or equivalently the dimension of the set is approximately $2^{NH(X)}$, as can be seen by expanding the multinomial coefficient through the Stirling formula. \\
More specifically, given a classical source and a sequence $\vecund{x}$ sampled from it with probability $p_{\vecund{x}}=\prod_{n=1}^{N}p_{x_{n}}$, we define the sample entropy of such sequence as its average information content per symbol, i.e.,
   \begin{equation}
      \bar{H}(\vecund{x})=-\oneover{N}\log{p_{\vecund{x}}}=-\oneover{N}\sum_{n=1}^N\log{p_{x_n}}.
    \end{equation}
Then the $\delta-$typical set of the source is the set of sequences whose sample entropy differs from the entropy of the source for less than a given quantity $\delta>0$:
 \begin{equation}\label{tSet}
 T_{\delta,N}\left(\mathcal{X}\right)=\left\{\vecund{x}\in\mathcal{X}^{N}:\abs{\bar{H}(\vecund{x})-H(X)}\leq\delta\right\}.
\end{equation}
The typical set has the following three main properties as $N\rightarrow\infty$ with $\epsilon$ positive and close to zero:
\begin{enumerate}
\item The probability of sampling a typical sequence from the source is high, i.e., 
\begin{equation}\label{typC1}
Pr\left\{\vecund{x}\in T_{\delta,N}(\mathcal{X})\right\}\geq1-\epsilon;
\end{equation}
\item The size of the typical set is exponential in $N$ with scaling coefficient close to the entropy, i.e., 
\begin{equation}\label{typC2}
(1-\epsilon)2^{N(H(X)-\delta)}\leq \abs{T_{\delta,N}(\mathcal{X})}\leq 2^{N(H(X)+\delta)};\end{equation}
\item The probability distribution of typical sequences is approximately uniform, i.e., 
\begin{equation}\label{typC3}
2^{-N(H(X)+\delta)}\leq p_{\vecund{x}}\leq 2^{-N(H(X)-\delta)}.
\end{equation}
\end{enumerate}
These properties hold for an i.i.d.~sampling of the source and can be derived as a consequence of the law of large numbers. In particular, it can be shown using the stronger Chernoff bound that the parameter $\epsilon$ above decreases exponentially in the length of the sequences, i.e., $\epsilon=O(e^{-N})$, see~\ref{app:chernoff}. Thanks to these properties it is possible to prove the first fundamental theorem by Shannon~\cite{shannonSeminal,shannonSeminal1}:
 \newtheorem{theorem}{Theorem}
\begin{theorem}\label{compressionCThm}(Shannon Compression)
Let $\mathcal{S}=\{(x,p_{x}): x\in\mathcal{X}, p_{x}=P_{X}(x)\}$ be an i.i.d.~source and let us sample from it sequences $\vecund{x}$ of $N$ letters. Then for all $\epsilon>0$ and $N$ sufficiently large there exists a compression protocol that maps such sequences to codewords of length $\lceil NR\rceil$ with $R\geq H(X)+\delta_{\epsilon}$; moreover the protocol can be inverted with a recovery error at most as large as $\epsilon$. Conversely, there exists no such asymptotically reversible map for $R<H(X)$.
\end{theorem}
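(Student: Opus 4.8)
The plan is to establish the achievability and the converse parts separately, in both cases driving the argument with the three typical-set properties of Eqs.~(\ref{typC1}--\ref{typC3}). The compression scheme will simply encode the (few) typical sequences faithfully and discard the rest.

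For the direct part I would exploit the smallness of the typical set. By Eq.~(\ref{typC2}) there are at most $2^{N(H(X)+\delta)}$ sequences in $T_{\delta,N}(\mathcal{X})$, so each of them can be assigned a distinct binary label of length $\lceil N(H(X)+\delta)\rceil$, while every atypical sequence is mapped to a single fixed ``failure'' string. Decoding inverts this injective labelling on the image of the typical set. Choosing $R=H(X)+\delta$ (i.e.\ $\delta_\epsilon\equiv\delta$) makes the codeword length $\lceil NR\rceil$ as claimed, and the only way recovery can fail is if an atypical sequence was drawn; by Eq.~(\ref{typC1}) this happens with probability at most $\epsilon$, so the scheme is reversible up to error $\epsilon$.

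For the converse, suppose $R<H(X)$ and fix $\delta>0$ small enough that $R<H(X)-\delta$. Any rate-$R$ code has at most $2^{\lceil NR\rceil}$ distinct codewords, hence at most that many input sequences can be recovered correctly; let $\mathcal{G}$ denote this set of faithfully decoded sequences, $\abs{\mathcal{G}}\leq 2^{\lceil NR\rceil}$. The success probability is $Pr\{\vecund{x}\in\mathcal{G}\}$, which I split into typical and atypical contributions. The atypical part vanishes as $N\to\infty$ by Eq.~(\ref{typC1}) (indeed $\epsilon=O(e^{-N})$), while for the typical part Eq.~(\ref{typC3}) bounds each sequence's probability by $2^{-N(H(X)-\delta)}$, giving
\begin{equation}
Pr\{\vecund{x}\in\mathcal{G}\cap T_{\delta,N}(\mathcal{X})\}\leq 2^{\lceil NR\rceil}\,2^{-N(H(X)-\delta)}\leq 2^{\,N(R-H(X)+\delta)+1}.
\end{equation}
Since $R<H(X)-\delta$ the exponent tends to $-\infty$, so the whole success probability vanishes and the recovery error tends to one; no asymptotically reversible code of rate $R<H(X)$ can exist.

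The main obstacle is the converse rather than the achievability, which is an essentially mechanical counting of typical sequences. The delicate point in the converse is that an optimal code is free to choose \emph{which} $2^{\lceil NR\rceil}$ sequences to decode correctly, and one must rule out that a clever choice concentrated on the most probable sequences captures non-vanishing mass. This is exactly what the uniform upper bound of Eq.~(\ref{typC3}) provides: since every typical sequence is almost equiprobable, no selection of too few of them can carry appreciable probability, and the residual atypical mass is controlled by Eq.~(\ref{typC1}).
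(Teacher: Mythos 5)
Your proof is correct and follows essentially the same route the paper sketches: the achievability part is exactly the protocol described there (unique binary labels of length $\sim N(H(X)+\delta)$ for the typical sequences, a single reference string for the rest, with the error controlled by Eq.~(\ref{typC1})), and your converse is the standard counting argument via the uniform bound of Eq.~(\ref{typC3}) that the paper delegates to the cited textbooks. No gaps.
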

 The proof of Theorem~\ref{compressionCThm} relies on the typicality properties, Eqs.~(\ref{typC1},\ref{typC2},\ref{typC3}), and it employs a compression protocol that associates to each typical sequence a unique binary sequence of $\sim NR$ bits and a reference one to non-typical sequences. Given the high probability of picking a typical sequence and the size of the typical set, the shortest compression length that allows for infinitesimal decoding errors is obtained for $R\simeq H(X)$, see~\cite{nChuangBOOK,covThomBOOK} for further details.\\

Let us note that the compression operation can be regarded as a problem of optimal transmission through an ideal noiseless channel, where the only objective of Alice is to reliably convey as much information as possible for each use of the channel by reducing the size of her messages without compromising their information content. 
For the more realistic case of transmission through noisy channels, Alice also needs to consider how the noise transforms her messages and possibly protect them by means of a clever encoding. 
Shannon's results in this case additionally rely on random coding: suppose that, out of all the sequences of fixed length $N$ that can be sampled from the source, Alice selects a subset of $\sim 2^{N R}$ typical sequences, \eqq{tSet}, that she calls the codebook $\mathcal{C}\subset\mathcal{X}^{N}$. These can be used for communication with a uniform probability distribution, as guaranteed by \eqq{typC3} for a transmission rate $R\leq H(X)$. The probability of selecting a specific codebook is given by the product of the probabilities of each sample sequence that it contains, i.e.,
\begin{equation}\label{randomCCoding}
P(\mathcal{C})=\prod_{\vecund{x}\in\mathcal{C}}p_{\vecund{x}}=\prod_{\vecund{x}\in\mathcal{C}}\prod_{n=1}^{N}p_{x_{n}}.
\end{equation}
In this way any quantity of relevance for the problem, e.g., the decoding error probability, can be first computed for a given codebook and then averaged over all codebooks; this procedure simplifies the calculations, allowing to reach a solution that holds on average. This in turn implies that at least for some codebooks the result must hold true. In the following we will use the symbol $\ave{\cdot}_{\mathcal{S}}$ to indicate the average over codebooks with probability distribution $P(\mathcal{C})$. Let us then fix a codebook $\mathcal{C}$ and let Alice transmit sequences $\vecund{x}\in\mathcal{C}$ with uniform probability; for each of them the output sequence of the channel is with high probability in the $\delta$-conditionally-typical set defined as
 \begin{equation}\label{ctSet}
 T_{\delta,N}\left(\mathcal{Y}|\vecund{x}\right)=\left\{\vecund{y}\in\mathcal{Y}^{N}:\abs{\bar{H}(\vecund{y}|\vecund{x})-H(Y|X)}\leq\delta\right\},
\end{equation}
with
\begin{equation}
  \bar{H}(\vecund{y}|\vecund{x})=-\oneover{N}\log{p_{\vecund{y}|\vecund{x}}}=-\oneover{N}\sum_{n=1}^N\log{p_{y_{n}|x_n}}
\end{equation}
for a memoryless channel without feedback, i.e., with transition probability factorized over several uses. The properties of such set mirror those of Eqs.~(\ref{typC1},\ref{typC2},\ref{typC3}) but some of them hold in this case only on average over the input conditioning sequences:
\begin{enumerate}
\item The average probability that the channel outputs a conditionally-typical sequence is high, i.e., 
\begin{equation}\label{ctypC1}
\sum_{\vecund{x}\in\mathcal{C}}p_{\vecund{x}}Pr\left\{\vecund{y}\in T_{\delta,N}\left(\mathcal{Y}|\vecund{x}\right)\right\}\geq1-\epsilon;
\end{equation}
\item The size of the conditionally-typical set is exponential in $N$ with scaling coefficient close to the conditional entropy, i.e., 
\begin{equation}
\begin{aligned}\label{ctypC2}
&\abs{T_{\delta,N}\left(\mathcal{Y}|\vecund{x}\right)}\leq 2^{N(H(Y|X)+\delta)},\\
&(1-\epsilon)2^{N(H(Y|X)-\delta)}\leq\sum_{\vecund{x}\in\mathcal{C}}p_{\vecund{x}}\abs{T_{\delta,N}\left(\mathcal{Y}|\vecund{x}\right)};
\end{aligned}
\end{equation}
\item The probability distribution of conditionally-typical sequences is approximately uniform, i.e., 
\begin{equation}\label{ctypC3}
2^{-N(H(Y|X)+\delta)}\leq p_{\vecund{y}|\vecund{x}}\leq 2^{-N(H(Y|X)-\delta)}.
\end{equation}
\end{enumerate}

Since an error-free transmission requires that distinct input codewords are transformed into distinct outputs, the size of the conditionally-typical set crucially determines the performance of the channel.  Let us now state the second Shannon theorem:
\begin{theorem}\label{shanCap} (Shannon capacity)
Let $\{(x,p_{x}): x\in\mathcal{X}, p_{x}=P_{X}(x)\}$ be an input source and $P_{Y|X}(y|x)$ the transition probability of a memoryless channel without feedback with output alphabet $\mathcal{Y}$. Define the capacity $C$ of such channel as the maximum rate of information transmission through the channel, i.e., the maximum number of bits sent per $N$ number of uses of the channel with average error probability asymptotically vanishing in $N$. Then the capacity is equal to the mutual information of $X$ and $Y$ optimized over all input probability distributions of the source, i.e.,
\begin{equation}
C_{Shan}=\max_{P_{X}(x)}I(X:Y).
\end{equation}
\end{theorem}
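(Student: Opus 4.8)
The plan is to establish the two matching inequalities $C_{Shan}\geq\max_{P_X}I(X:Y)$ (achievability) and $C_{Shan}\leq\max_{P_X}I(X:Y)$ (converse) separately, building directly on the typicality machinery assembled in Eqs.~(\ref{typC1})--(\ref{typC3}) and Eqs.~(\ref{ctypC1})--(\ref{ctypC3}). For achievability I would fix an input distribution $P_X$ attaining (or arbitrarily close to) the maximum of $I(X:Y)$ and run Shannon's random-coding argument, \eqq{randomCCoding}: sample $\lceil 2^{NR}\rceil$ codewords $\vecund{x}(m)$ independently from $P_X^{\otimes N}$, and let Bob decode by \emph{joint typicality}, declaring the unique message $m$ whose codeword is jointly typical with the received sequence $\vecund{y}$ and an error otherwise.

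The codebook-averaged error probability splits into two events. The first is that the \emph{transmitted} pair $(\vecund{x}(m),\vecund{y})$ fails to be jointly typical; this is suppressed by the conditional-typicality property \eqq{ctypC1}, the law of large numbers guaranteeing the channel output lands in the conditionally-typical fan of the sent codeword. The second, and harder, event is that some \emph{competing} codeword $\vecund{x}(m')$ with $m'\neq m$ is spuriously jointly typical with $\vecund{y}$. Since the competing codewords are drawn independently of $\vecund{y}$, the probability that any fixed wrong codeword is jointly typical with the output scales as $2^{-N(I(X:Y)-O(\delta))}$, as follows by combining the near-uniform weight \eqq{typC3} of each typical input with the exponential size estimate for the conditionally-typical set (the roles of $\mathcal{X}$ and $\mathcal{Y}$ exchanged in \eqq{ctypC2}). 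A union bound over the $\sim 2^{NR}$ wrong codewords then bounds the total confusion probability by $\sim 2^{-N(I(X:Y)-R-O(\delta))}$, which vanishes as $N\to\infty$ precisely when $R<I(X:Y)$. I expect this collision estimate, and its uniform control through the union bound, to be the main obstacle; everything else is bookkeeping. Since the average error then tends to zero, at least one codebook achieves an error no larger than the average, and a standard expurgation step—discarding the worst half of its codewords, which costs one bit of rate and hence leaves $R$ unchanged asymptotically—upgrades the small \emph{average} error into a small \emph{maximal} error, yielding a code of rate $R$ for every $R<\max_{P_X}I(X:Y)$.

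For the converse I would show that no rate exceeding $\max_{P_X}I(X:Y)$ is achievable, using Fano's inequality together with the data-processing inequality and chain rule \eqq{chainMInfo} already listed among the properties of mutual information. Taking the message $W$ uniform over the $\lceil 2^{NR}\rceil$ codewords, so that $H(W)=NR$, Fano's inequality bounds the post-decoding uncertainty by $H(W|\hat{W})\leq 1+P_e NR$, whence $NR=I(W:\hat{W})+H(W|\hat{W})\leq I(W:\hat{W})+1+P_e NR$. Data processing along the Markov chain $W\to\vecund{X}\to\vecund{Y}\to\hat{W}$ gives $I(W:\hat{W})\leq I(\vecund{X}:\vecund{Y})$, and the memoryless, feedback-free structure of the channel—which factorizes $H(\vecund{Y}|\vecund{X})=\sum_{n=1}^{N}H(Y_n|X_n)$—together with subadditivity of entropy lets me bound $I(\vecund{X}:\vecund{Y})\leq\sum_{n=1}^{N}I(X_n:Y_n)\leq N\max_{P_X}I(X:Y)$. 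Dividing by $N$ and letting $N\to\infty$ with $P_e\to 0$ forces $R\leq\max_{P_X}I(X:Y)$, completing the proof.
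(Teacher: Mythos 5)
Your proof is correct and follows exactly the route the paper itself indicates: the paper only sketches the argument in one line (``the proof of the theorem relies on conditional typicality and random coding''), deferring to \cite{covThomBOOK}, and your achievability-via-random-coding-and-joint-typicality plus Fano/data-processing converse is precisely that standard argument, correctly fleshed out using the typicality properties (\ref{typC1})--(\ref{typC3}) and (\ref{ctypC1})--(\ref{ctypC3}) the paper assembles. No discrepancies to report.
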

Note that the Shannon capacity is additive as a function of the channel, i.e., \begin{equation}C_{Shan}^{(M)}=M C_{Shan}\end{equation}
where the former is the capacity obtained by coding for $M>1$ copies of the channel with correlated inputs or equivalently by employing a source \begin{equation}\mathcal{S}^{M}=\{(x_{(1,M)},p_{x_{(1,M)}}):x_{(1,M)}\in\mathcal{X}^{M},p_{x_{(1,M)}}=P_{X_{(1,M)}}(x_{(1,M)})\}.\end{equation} 
The additivity is due to the fact that the channel transition probability factorizes over several uses, i.e., $p_{y_{(1,M)}|X_{(1,M)}}=\Pi_{m=1}^{M}p_{y_{m}|x_{m}}$. Moreover this result holds true even in some situations where the factorization does not take place, for example: in the presence of feedback to Alice, i.e., when the choice of the $n$-th input letter $x_{n}$ is influenced by the previous $n-1$ outcomes $y_{(1,n-1)}$; for certain channels with memory whose transition probability does not explicitly depend on the previous inputs, see Sec.~\ref{sec:adRec}.
As already said, the proof of the theorem relies on conditional typicality and random coding, showing that there exists a protocol with rate $R\leq C_{Shan}$ for which the average error reaches asymptotically zero in the codewords' length, while it tends to unit for any protocol with rate $R>C_{Shan}$.

\subsection{Classical Communication on Quantum Channels} \label{subsec:commQ}
Since communication exploits a physical medium, usually electromagnetic pulses, it is more than reasonable to ask what limits on the transmission of information are posed by quantum mechanics, in particular in those low-energy and long-distance regimes where the electromagnetic field exhibits a quantum behaviour. In this setting the information carriers and thus the channel are quantum states $\hat{\rho}$ and LCPT maps $\Phi$; moreover Alice and Bob can be interested in several transmission protocols: classical or quantum communication, the latter preserving coherences and superpositions; entanglement-assisted communication, where they share several couples of entangled states as a resource; private communication, where they want to hide information from a third malicious party. In the rest of the thesis we focus on the task of classical communication on quantum channels, which may appear trivial with respect to other tasks but is actually far from simple. For a thorough treatment see~\cite{wildeBOOK,holevoBOOK,holGiovRev}. \\
For classical communication on quantum channels, Alice wants to transmit her usual classical alphabet $\mathcal{X}=\{x\}$ with probability distribution $p_{x}$, but this time she encodes information on a set of quantum states, which we call $\mathcal{X}=\{\hat{\rho}_{x}\}$ with an overload of notation; similarly for the corresponding source 
\begin{equation}\label{qSource}
\mathcal{S}=\{(\hat{\rho}_{x},p_{x}):x\in\mathcal{X},p_{x}=P_{X}(x)\}.
\end{equation}
 After transmission through the channel the output states are $\Phi(\hat{\rho}_{x})$ and Bob performs a decoding protocol to obtain a set of classical outputs $\mathcal{Y}=\{y\}$ with probability distribution $p_{y}$ and estimate which input state was sent. Any such decoding protocol can be described by a POVM $\mathcal{M}=\{\hat{E}_{y}\}_{y\in\mathcal{Y}}$, Sec.~\ref{quantumOps}, which produces a classical transition probability $p_{y|x}=\tr{\hat{E}_{y}\Phi\left(\hat{\rho}_{x}\right)}$, so that it still holds $p_{y}=\sum_{x\in\mathcal{X}}p_{y|x}p_{x}$. The encoding and decoding procedures just defined represent respectively a classical-quantum and a quantum-classical channel, while the actual transmission through $\Phi$ is a quantum-quantum channel. Any physical transmission line can be represented by combining these three classes of channels. \\
The basic information-theoretic quantities employed in Quantum Communication Theory are the quantum entropy and the generalizations of mutual information. The quantum entropy of a density operator $\hat{\rho}$ was already defined in \eqq{VNEnt} as the classical entropy of the spectrum of the operator. Its joint and conditional versions are similarly defined, e.g., for a bipartite state $\hat{\rho}_{AB}\in\states{AB}$ they are 
\begin{align}
&S(AB)=S(\hat{\rho}_{AB}),\\
&S(A|B)=S(AB)-S(B).
\end{align}
The main properties of these quantum entropies are listed below:
\begin{enumerate}
\item $S(A)\in[0,\log d]$, with $d=\operatorname{dim}(\hil_{A})$, as already discussed in Sec.~\ref{sec:basics};
\item For a pure bipartite state $\hat{\rho}_{AB}=\dketbra{\Psi}_{AB}$ the local entropies are equal, i.e., $S(\hat{\rho}_{A})=S(\hat{\rho}_{B})$ as already discussed in Sec.~\ref{sec:basics};
\item The entropy is subadditive, i.e., for a bipartite system
\begin{equation}
S(AB)\leq S(A)+S(B),
\end{equation}
with equality if and only if the total state is separable, $\hat{\rho}_{AB}=\hat{\rho}_{A}\otimes\hat{\rho}_{B}$. This property can be proved from the positivity of relative entropy, see~\cite{wildeBOOK,nChuangBOOK,holevoBOOK};
\item The entropy is concave, i.e., 
\begin{equation}
S\left(\sum_{n=1}^{N}p_{n}\hat{\rho}_{n}\right)\geq\sum_{n=1}^{N}p_{n}S\left(\hat{\rho}_{n}\right),
\end{equation}
for any convex combination of states $\hat{\rho}_{n}$ with probabilities $p_{n}$. This property follows from subadditivity;
\item Any entropic quantity is invariant under unitary transformations of all their variables, e.g., $S(\hat{\rho})=S(U\rho U^{\dag})$ for $\hat{U}$ unitary;
\item Conditioning does not increase entropy, i.e., $S(A|B)\leq S(A)$, which also follows from the positivity of relative entropy;
\item Differently from the classical case, the quantum conditional entropy is not positive. In particular, the quantity can be negative when entangled states are employed, as discussed in Sec.~\ref{sec:basics}.
\end{enumerate}
Let us now describe quantum typical and conditionally-typical spaces, following~\cite{wildeBOOK}. Assume for simplicity and w.l.o.g. that the quantum source comprises the states already at the output of the channel (or equivalently set $\Phi=\mathcal{I}$) and take the spectral decomposition of the average state of the source:
\begin{equation}
\hat{\rho}_{\mathcal{S}}=\sum_{x\in\mathcal{X}}p_{x}\hat{\rho}_{x}=\sum_{j\in\mathcal{J}}\lambda_{j}\dketbra{e_{j}},
\end{equation}
where $\{\ket{e_{j}}\}_{j\in\mathcal{J}}$ form an orthonormal basis of the Hilbert space of the source and $\{\lambda_{j}\}_{j\in\mathcal{J}}$ are the eigenvalues of $\hat{\rho}_{\mathcal{S}}$. Accordingly, for $N$ uses of the source the average state can be written as
\begin{equation}
\hat{\rho}_{\mathcal{S}}^{\otimes N}=\sum_{\vecund{j}\in\mathcal{J}^{N}}\lambda_{\vecund{j}}\dketbra{e_{\vecund{j}}},
\end{equation}
where $\ket{e_{\vecund{j}}}=\Pi_{n=1}^{N}\ket{e_{j_{n}}}$ are eigenstate-sequences with eigenvalues $\lambda_{\vecund{j}}=\Pi_{n=1}^{N}\lambda_{j_{n}}$. Then the $\delta$-typical space of the source is the span of all those eigenstate-sequences whose classical label is $\delta$-typical, i.e.,
\begin{equation}\label{typSpace}
\hil_{\delta,N}(\mathcal{J})=\operatorname{span}\left\{\ket{e_{\vecund{j}}}:\vecund{j}\in T_{\delta,N}(\mathcal{J})\right\},
\end{equation}
and the projector on this space is given by
\begin{equation}\label{typProj}
  \hat{\Pi}=\sum_{\vecund{j}\in T_{\delta,N}}\dketbra{e_{\vecund{j}}}.
\end{equation}
Similar properties as for the classical typical set hold for this space, namely that for $\epsilon>0$ and $N$ sufficiently large:
\begin{enumerate}
 \item The average state $\hat{\rho}_{\mathcal{S}}^{\otimes N}$ resides with high probability in the $\delta-$typical space of the source, i.e., 
 \begin{equation}\label{totaleuno}
 \tr{\hat{\Pi}\hat{\rho}_{\mathcal{S}}^{\otimes N}}\geq1-\epsilon;
 \end{equation}
  \item The size of the $\delta-$typical space is exponential in $N$ with scaling coefficient close to the entropy of the average state of the source, i.e.,
  \begin{equation}\label{totaledue}
  (1-\epsilon)2^{N\left(S(\hat{\rho}_{\mathcal{S}})-\delta\right)}\leq\tr{\hat{\Pi}}\leq 2^{N\left(S(\hat{\rho}_{\mathcal{S}})+\delta\right)};
  \end{equation}
  \item The probability distribution of $\delta-$typical sequences is approximately uniform, i.e.,
\begin{equation}\label{totaletre}
2^{-N\left(S(\hat{\rho}_{\mathcal{S}})+\delta\right)}\hat{\Pi}\leq \hat{\Pi}\hat{\rho}_{\mathcal{S}}^{\otimes N}\hat{\Pi}\leq 2^{-N\left(S(\hat{\rho}_{\mathcal{S}})-\delta\right)}\hat{\Pi}.
\end{equation}
\end{enumerate}
These properties follow directly from their classical counterparts, Eq.~(\ref{typC1},\ref{typC2},\ref{typC3}), and in particular the exponential scaling of $\epsilon$ in $N$ still holds, see App.~\ref{app:chernoff}. \\
The random coding technique introduced in Sec.~\ref{sec:commQ} can be straightforwardly applied also to the quantum case: Alice selects a codebook of $\sim 2^{N R}$ classical typical codewords $\vecund{x}$ and the corresponding codebook of quantum codewords, \eqq{typSpace}, which we call $\mathcal{C}\subset\hil^{\otimes N}$ with another overload of notation. Once again these codewords will be sent with uniform probability for all rates $R\leq H(\hat{\rho}_{\mathcal{S}})$, as guaranteed by~\eqq{typC3}. The probability of a given code $\mathcal{C}$ is still $P(\mathcal{C})$ as given by \eqq{randomCCoding}.
Let us then consider a state $\hat{\rho}_{\vecund{x}}\in\mathcal{X}$ and take its spectral decomposition:
\begin{equation}
\hat{\rho}_{x}=\sum_{j\in\mathcal{J}_{x}}\lambda_{j|x}\dketbra{e_{j|x}},
\end{equation} 
with $\{\ket{e_{j|x}}\}_{j\in\mathcal{J}_{x}}$ an orthonormal basis and $\{\lambda_{j|x}\}_{k\in\mathcal{J}_{x}}$ the corresponding eigenvalues. For each classical sequence $\vecund{x}\in\mathcal{C}$, the corresponding sequence of quantum states can then be written as
\begin{equation}
\hat{\rho}_{\vecund{x}}=\bigotimes_{n=1}^{N}\hat{\rho}_{x_{n}}=\sum_{\vecund{j}\in\mathcal{J}_{\vecund{x}}}\lambda_{\vecund{j}|\vecund{x}}\dketbra{e_{\vecund{j}|\vecund{x}}},
\end{equation}
where $\mathcal{J}_{\vecund{x}}=\mathcal{J}_{x_{1}}\times\cdots\times\mathcal{J}_{x_{N}}$, while
\begin{equation}
\ket{e_{\vecund{j}|\vecund{x}}}=\bigotimes_{n=1}^{N}\ket{e_{j_{n}|x_{n}}},\quad\lambda_{\vecund{j}|\vecund{x}}=\prod_{n=1}^{N}\lambda_{j_{n}|x_{n}}
\end{equation}
are its eigenstate-sequences and eigenvalues.
Hence the $\delta-$conditionally typical space of the source conditioned on an input sequence $\hat{\rho}_{\vecund{x}}$ is defined as 
the span of all those eigenstate-sequences whose classical label is $\delta$-conditionally-typical, i.e., 
\begin{equation}\label{ctypSpace}
  \hil_{\delta,N}\left(\mathcal{J}|\vecund{x}\right)=\operatorname{span}\left\{\ket{e_{\vecund{j}|\vecund{x}}}:\vecund{j}\in T_{\delta,N}\left(\mathcal{J}|\vecund{x}\right)\right\},
\end{equation}
and the projector on this space is
\begin{equation}\label{ctypProj}
 \hat{\Pi}_{\vecund{x}}=\sum_{j_{\vecund{x}}\in T_{\delta,N}\left(\mathcal{J}|\vecund{x}\right)}\dketbra{e_{\vecund{j}|\vecund{x}}}.
\end{equation}
Similar properties as for the classical conditionally-typical set hold for this space, namely that for $\epsilon>0$ and $N$ sufficiently large:
\begin{enumerate}
 \item On average over input sequences, the state $\hat{\rho}_{\vecund{x}}$ resides with high probability in the $\delta$-conditionally-typical space of the source, i.e., 
 \begin{equation}\label{parzialeuno}
 \sum_{\vecund{x}\in\mathcal{C}}p_{\vecund{x}}\tr{\hat{\Pi}_{\vecund{x}}\hat{\rho}_{\vecund{x}}}\geq1-\epsilon;
 \end{equation}
  \item The size of the $\delta-$typical space is exponential in $N$ with scaling coefficient close to the average entropy of the source, i.e.,
  \begin{equation}\label{parzialedue}
  \begin{aligned}
  &\tr{\hat{\Pi}_{\vecund{x}}}\leq 2^{N\left( \sum_{x\in\mathcal{X}}p_{x}S(\hat{\rho}_{x})+\delta\right)},\\
  &(1-\epsilon)2^{N\left(\sum_{x\in\mathcal{X}}p_{x}S(\hat{\rho}_{x})-\delta\right)}\leq \sum_{\vecund{x}\in\mathcal{C}}p_{\vecund{x}}\tr{\hat{\Pi}_{\vecund{x}}};
  \end{aligned}
  \end{equation}
  \item The probability distribution of $\delta$-conditionally-typical sequences is approximately uniform, i.e.,
\begin{equation}\label{parzialetre}
2^{-N\left(\sum_{x\in\mathcal{X}}p_{x}S(\hat{\rho}_{x})+\delta\right)}\hat{\Pi}_{\vecund{x}}\leq \hat{\Pi}_{\vecund{x}}\hat{\rho}_{\vecund{x}}\hat{\Pi}_{\vecund{x}}\leq 2^{-N\left(\sum_{x\in\mathcal{X}}p_{x}S(\hat{\rho}_{x})-\delta\right)}\hat{\Pi}_{\vecund{x}}.
\end{equation}
\end{enumerate}
 Let us finally note that the conditionally-typical spaces of different codewords are in general not orthogonal, i.e., $\hat{\Pi}_{\vecund{x}}\hat{\Pi}_{\vecund{x}'}\neq0$, since they are built using spectral decompositions of non-orthogonal operators. Nevertheless their overlap is small and tends to zero as $N\rightarrow\infty$. \\
Based on the methods above, Holevo~\cite{holevo1,holevo2,holevo3} and Schumacher and Westmoreland~\cite{schumawest1,schumawest2} (HSW), computed an expression for the classical capacity of quantum channels. Given a quantum channel $\Phi$, there are several ways it can be employed to transmit classical information with $N$ uses, depending on whether the input codewords $\hat{\rho}_{\vecund{x}}$ and the output POVM elements $\hat{E}_{\vecund{y}}$ are factorized (classical) over the $N$ uses or not (quantum). In any of these cases, the capacity is computed by maximizing the classical mutual information of the input and output probability distributions, but if the states and/or the measurements are classical such optimization can be carried out over single-mode sources $\{(\hat{\rho}_{x},p_{x})\}$ and/or measurements $\{\hat{E}_{y}\}$. Taking into account this fact, there are four classical capacities that can be defined:
\begin{enumerate}
\item The classical-classical capacity, i.e.,
\begin{equation}\label{cQQ}
C_{cc}(\Phi)=\lim_{N\rightarrow\infty}\oneover{N}\max_{\{(\hat{\rho}_{x},p_{x})\},\{\hat{E}_{y}\}}I(X_{(1,N)}:Y_{(1,N)})=\max_{\{\hat{\rho}_{x}\},\{\hat{E}_{y}\}}C_{Shan};
\end{equation}
\item The classical-quantum capacity, i.e.,
\begin{equation}
C_{cq}(\Phi)=\lim_{N\rightarrow\infty}\oneover{N}\max_{\{(\hat{\rho}_{x},p_{x})\},\{\hat{E}_{\vecund{y}}\}}I(X_{(1,N)}:Y_{(1,N)})=\lim_{N\rightarrow\infty}\oneover{N}\max_{\{\hat{\rho}_{x}\},\{\hat{E}_{\vecund{y}}\}}C_{Shan}^{(N)};
\end{equation}
\item The quantum-classical capacity, i.e.,
\begin{equation}
C_{qc}(\Phi)=\lim_{N\rightarrow\infty}\oneover{N}\max_{\{(\hat{\rho}_{\vecund{x}},p_{\vecund{x}})\},\{\hat{E}_{y}\}}I(X_{(1,N)}:Y_{(1,N)})=\lim_{N\rightarrow\infty}\oneover{N}\max_{\{\hat{\rho}_{\vecund{x}}\},\{\hat{E}_{y}\}}C_{Shan}^{(N)};
\end{equation}
\item The quantum-quantum capacity, i.e.,
\begin{equation}
C_{qq}(\Phi)=\lim_{N\rightarrow\infty}\oneover{N}\max_{\{(\hat{\rho}_{\vecund{x}},p_{\vecund{x}})\},\{\hat{E}_{\vecund{y}}\}}I(X_{(1,N)}:Y_{(1,N)})=\lim_{N\rightarrow\infty}\oneover{N}\max_{\{\hat{\rho}_{\vecund{x}}\},\{\hat{E}_{\vecund{y}}\}}C_{Shan}^{(N)}.
\end{equation}
\end{enumerate}
Let us further note that $C_{qq}$ is the highest among the previous capacities and the true ultimate rate, while $C_{cc}$ is the lowest. However, there is no clear ordering between $C_{cq}$ and $C_{qc}$. In particular, not much is known about the latter, while the former is used to compute $C_{qq}$ as discussed in the following capacity theorem.
\begin{theorem}\label{hsw}
(HSW) The classical capacity of a quantum channel $\Phi$ can be computed as 
\begin{equation}\label{cUlt}
C(\Phi)=C_{qq}(\Phi)=\lim_{N\rightarrow\infty}\frac{C_{\chi}(\Phi^{\otimes N})}{N},
\end{equation}
where 
\begin{equation}\label{cChi}
C_{\chi}(\Phi)=C_{cq}(\Phi)=\max_{\{(\hat{\rho}_{x},p_{x})\}}\chi\left(\{\left(\Phi\left(\hat{\rho}_{x}\right),p_{x}\right)\}\right)
\end{equation}
and the $\chi$-information (or Holevo information) of a quantum source $\mathcal{S}$, \eqq{qSource}, is defined as
\begin{equation}\label{holevoInfo}
\chi(\mathcal{S})=S\left(\sum_{x\in\mathcal{X}}p_{x}\hat{\rho}_{x}\right)-\sum_{x\in\mathcal{X}}p_{x}S\left(\hat{\rho}_{x}\right).
\end{equation}
\end{theorem}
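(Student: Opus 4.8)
The plan is to prove the two bundled equalities by reducing everything to the single-shot statement $C_{cq}(\Phi)=C_\chi(\Phi)$ and then lifting to $C_{qq}$ by blocking. The single-shot claim splits cleanly into an achievability half (rates below $\chi$ of \eqq{holevoInfo} are attainable with asymptotically vanishing error) and a converse half (no rate above $\chi$ is attainable at the single-letter level). Once $C_{cq}(\Phi)=C_\chi(\Phi)$ is in hand, I would observe that encoding over $N$ uses with codewords $\hat{\rho}_{\vecund{x}}$ that are allowed to be entangled across the uses is \emph{exactly} classical-quantum encoding for the blocked channel $\Phi^{\otimes N}$; applying the single-shot equality to $\Phi^{\otimes N}$ therefore gives $C_{qq}(\Phi)=\lim_{N\to\infty}C_{cq}(\Phi^{\otimes N})/N=\lim_{N\to\infty}C_\chi(\Phi^{\otimes N})/N$, which is \eqq{cUlt}. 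The regularization cannot be removed in general precisely because $C_\chi$ need not be additive, see~\cite{hastings}.

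For the achievability half I would fix an ensemble $\{(\Phi(\hat{\rho}_x),p_x)\}$ attaining the maximum in \eqq{cChi} and invoke the random-coding construction already set up above: draw $\sim 2^{NR}$ codewords $\hat{\rho}_{\vecund{x}}=\bigotimes_n\Phi(\hat{\rho}_{x_n})$ i.i.d.\ with probability $P(\mathcal{C})$. The decoder is the square-root (pretty-good) measurement built from $\hat{\Lambda}_{\vecund{x}}=\hat{\Pi}\,\hat{\Pi}_{\vecund{x}}\,\hat{\Pi}$, where $\hat{\Pi}$ and $\hat{\Pi}_{\vecund{x}}$ are the typical and conditionally-typical projectors of \eqq{typProj} and \eqq{ctypProj}; explicitly $\hat{E}_{\vecund{x}}=\bigl(\sum_{\vecund{x}'}\hat{\Lambda}_{\vecund{x}'}\bigr)^{-1/2}\hat{\Lambda}_{\vecund{x}}\bigl(\sum_{\vecund{x}'}\hat{\Lambda}_{\vecund{x}'}\bigr)^{-1/2}$. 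The core step is the averaged error estimate, for which I would use the Hayashi--Nagaoka operator inequality $\hat{\mathbf{1}}-\hat{E}_{\vecund{x}}\le 2(\hat{\mathbf{1}}-\hat{\Lambda}_{\vecund{x}})+4\sum_{\vecund{x}'\neq\vecund{x}}\hat{\Lambda}_{\vecund{x}'}$ to separate a diagonal term from cross terms. The diagonal term $\tr{(\hat{\mathbf{1}}-\hat{\Lambda}_{\vecund{x}})\hat{\rho}_{\vecund{x}}}$ is controlled by combining the high-probability bounds \eqq{totaleuno} and \eqq{parzialeuno} with the Gentle Operator Lemma~\ref{gentop} (to pass from the projectors $\hat{\Pi}$, $\hat{\Pi}_{\vecund{x}}$ to their sandwiched product without disturbing the state) and Lemma~\ref{appclose}, each step costing at most $O(\sqrt{\epsilon})$. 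The cross terms are bounded in expectation by $\ave{\tr{\hat{\Lambda}_{\vecund{x}'}\hat{\rho}_{\vecund{x}}}}$; since distinct codewords are drawn independently, the expectation over $\hat{\rho}_{\vecund{x}}$ replaces it by the average state $\bar{\rho}^{\otimes N}$, and the near-uniformity bound \eqq{totaletre} together with the size bound \eqq{parzialedue} yields a scaling $2^{-N(\chi-\delta')}$; multiplying by the $2^{NR}$ competing codewords gives vanishing total cross error whenever $R<\chi-\delta'$. Letting $\delta,\epsilon\to0$ drives the average error to zero, and since it is an average over codebooks, at least one deterministic code performs at least as well.

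For the converse I would prove that any POVM $\{\hat{E}_y\}$ on the outputs induces a classical mutual information obeying the Holevo bound $I(X:Y)\le\chi(\{(\Phi(\hat{\rho}_x),p_x)\})$. The clean route is to form the classical--quantum state $\hat{\rho}_{XB}=\sum_x p_x\dketbra{x}\otimes\Phi(\hat{\rho}_x)$, note that its quantum mutual information equals $\chi$, and observe that applying the measurement to system $B$ is an LCPT map; monotonicity of the quantum mutual information under such maps (the data-processing inequality, provable from positivity of relative entropy as already used in Sec.~\ref{sec:commQ}) delivers $I(X:Y)\le\chi$. Combining this single-letter bound with Fano's inequality applied to the overall transmission---exactly as in the classical converse of Theorem~\ref{shanCap}---forces any rate with asymptotically vanishing error to satisfy $R\le C_\chi(\Phi^{\otimes N})/N$ for every $N$, hence $R\le C$.

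The step I expect to be the main obstacle is the averaged error analysis in the achievability half: keeping the many small $\epsilon$- and $\delta$-dependent contributions honest, and in particular justifying the gentle-operator passage for the \emph{unnormalized} post-projection states---which is exactly why Lemmas~\ref{contra}--\ref{gentop} were stated in unnormalized form---and verifying that the cross-term expectation really collapses onto the average state so that the $2^{NR}$-versus-$2^{-N\chi}$ competition comes out in favour of reliability for all $R<\chi$. Everything else, namely the reduction to the blocked channel and the converse via data processing and Fano, is conceptually routine given the typicality machinery assembled earlier in the chapter.
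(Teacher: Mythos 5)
Your proposal is correct and is, in substance, the canonical HSW argument that this thesis itself defers to: the paper does not prove Theorem~\ref{hsw} where it is stated but cites the original SRM-based proofs, and every technical ingredient you invoke is assembled and verified elsewhere in the text. Specifically, your achievability half (random coding, a square-root measurement built from $\hat{\Pi}\hat{\Pi}_{\vecund{x}}\hat{\Pi}$, the Hayashi--Nagaoka splitting into a diagonal and a cross term, gentle-operator control of the diagonal term, and the $2^{NR}\cdot 2^{-N(\chi-2\delta)}$ estimate for the cross term) reproduces exactly the computation the paper performs inside its Method~2 decoder, Eqs.~(\ref{PGMred}), (\ref{qugualej}) and (\ref{pdiversoj1}); the only difference is that the paper embeds those estimates in a nested binary-tree-search POVM (its actual novel contribution, Theorem~\ref{MainTh} and Secs.~\ref{EX0}--\ref{EX1}) rather than using the flat SRM directly, which buys an operational bit-by-bit readout at no cost in rate. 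The converse is nowhere proved in the thesis (it is only alluded to via Fano's inequality after \eqq{errProbMed}), so your data-processing-plus-Fano sketch is the right supplement. Two small points to tighten: (i) the existence of the limit in \eqq{cUlt} should be justified, e.g.\ by superadditivity of $C_{\chi}$ under tensor products together with Fekete's lemma; and (ii) in the cross-term step what collapses the expectation is the statistical independence of the codewords $\vecund{x}$ and $\vecund{x}'$, so that $\ave{\tr{\hat{\Pi}_{\vecund{x}'}\hat{\Pi}\hat{\rho}_{\vecund{x}}\hat{\Pi}}}_{\mathcal{S}}$ factorizes and only then does the transmitted state average to $\hat{\rho}^{\otimes N}$, exactly as carried out in \eqq{pdiversoj1}.
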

Clearly, the distinct definitions Eqs.~(\ref{cUlt},\ref{cChi}) are necessary due to the possible super-additivity of the channel capacity with respect to the input states, which has been proved to be true in general by Hastings~\cite{hastings}. However there are several specific cases where additivity holds, e.g., PI Gaussian channels for which the long-standing additivity conjecture has been recently solved by Giovannetti et al.~\cite{gaussOpt}, see also~\cite{maj1,maj2}. The latter result can be stated as follows:
\begin{theorem}\label{cPIG}
The classical capacity of a phase-insensitive quantum Gaussian channel, \eqq{PIG}, of parameters $\eta$ and $\sqrt{\tau}=\abs{1-\eta}(\bar{n}+1/2)$, with maximum input average-mode-energy $E$ is additive and equal to
\begin{equation}\label{ccapPI}
C(\Phi_{PI})=C_{\chi}(\Phi_{PI})=g(e(E))-g(e(0)),
\end{equation}
where 
\begin{equation}
g(\bar{n})=(1+\bar{n})\log(1+\bar{n})-\bar{n}\log \bar{n}
\end{equation}
is the entropy of a Gaussian thermal state $\hat{\rho}(\bar{n})$, while 
\begin{equation}
e(E)=\eta E+\max(0,\eta-1)+\bar{n}\abs{\eta-1}
\end{equation} 
is the maximum output average-mode-energy of the channel. Moreover this optimal value is achieved by a phase-insensitive Gaussian input source of average energy $E$, i.e.,
\begin{equation}\label{opSource}
\mathcal{S}_{PI}(E)=\left\{\left(\dketbra{\alpha},p_{\alpha}\right):\alpha\in\mathbb{C},p_{\alpha}=\frac{e^{-\frac{\abs{\alpha}^{2}}{E}}}{\pi E}\right\}.
\end{equation}
\end{theorem}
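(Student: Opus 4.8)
The plan is to invoke the HSW theorem (Theorem~\ref{hsw}) to reduce the capacity to the regularized Holevo information, $C(\Phi_{PI})=\lim_{N\to\infty}C_{\chi}(\Phi_{PI}^{\otimes N})/N$, and then to establish two facts: (i) the single-letter bound $C_{\chi}(\Phi_{PI})\le g(e(E))-g(e(0))$ under the energy constraint, and (ii) its additivity, $C_{\chi}(\Phi_{PI}^{\otimes N})\le N\,[g(e(E))-g(e(0))]$. Throughout I split the Holevo information, \eqq{holevoInfo}, of an arbitrary input source into the output entropy of the average state minus the average output entropy, and bound the two terms separately. Matching the bound with an explicit achievable ensemble then yields \eqq{ccapPI} and proves optimality of \eqq{opSource}.

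For the first term I combine the energy constraint with the extremal property of thermal states. By \eqq{gaussianCh} the output average state has average mode-energy at most $e(E)$, and among all states of fixed energy the thermal state maximizes the Von~Neumann entropy; hence the first term is at most $g(e(E))$. In the multi-mode case I additionally use subadditivity of the entropy to reduce to single modes and the concavity of $g$ (via Jensen's inequality over the per-mode energies) to conclude that the average-state contribution to $C_{\chi}(\Phi_{PI}^{\otimes N})$ cannot exceed $N\,g(e(E))$. This part is routine.

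The hard part is lower-bounding the second term, i.e.\ proving the minimum-output-entropy statement $\sum_{x} p_{x} S(\Phi_{PI}(\hat{\rho}_{x}))\ge g(e(0))$, saturated by pure coherent (in particular vacuum) inputs, together with its additivity. My route is the decomposition \eqq{ampattdeco}, $\Phi_{PI}=\mathcal{A}_{\kappa,0}\circ\mathcal{E}_{\eta,0}$, into a quantum-limited attenuator followed by a quantum-limited amplifier, reducing the claim to these two building blocks: for the attenuator the vacuum maps to the vacuum and the output entropy vanishes, so the content lies in the amplifier and in controlling the composition. The key technical input, and the main obstacle, is a majorization statement: the output $\Phi_{PI}(\dketbra{0})$ produced by the vacuum majorizes $\Phi_{PI}(\hat{\rho})$ for every input $\hat{\rho}$, and the $N$-mode vacuum output majorizes every output of $\Phi_{PI}^{\otimes N}$. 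Since the Von~Neumann entropy is Schur-concave, majorization immediately yields both the single-letter lower bound $g(e(0))$ and its additivity. Establishing this majorization is precisely the long-standing Gaussian optimizer conjecture, and essentially all the difficulty resides there.

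Finally I verify achievability by evaluating the Holevo information, \eqq{holevoInfo}, on the Gaussian coherent-state ensemble $\mathcal{S}_{PI}(E)$ of \eqq{opSource}. Each $\Phi_{PI}(\dketbra{\alpha})$ is a displaced thermal state, whose entropy equals $g(e(0))$ by displacement (unitary) invariance of the entropy, so the average output entropy is exactly $g(e(0))$; meanwhile the average output state is a thermal state of energy $e(E)$ with entropy $g(e(E))$. This gives $\chi=g(e(E))-g(e(0))$, which matches the upper bound, closes the argument for \eqq{ccapPI}, confirms additivity, and shows that $\mathcal{S}_{PI}(E)$ is optimal.
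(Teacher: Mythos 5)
The thesis does not actually prove Theorem~\ref{cPIG}: it states it as a literature result, citing~\cite{gaussOpt,maj1,maj2}, so there is no in-paper proof to compare against. Your outline is nonetheless a faithful reconstruction of the strategy of those references. The upper bound correctly splits the Holevo information~\eqq{holevoInfo} into the maximum-entropy term — controlled by the output energy bound, the extremality of thermal states at fixed energy, subadditivity, and concavity of $g$ — and the minimum-output-entropy term, whose value $g(e(0))$ and additivity follow from the majorization of every output of $\Phi_{PI}^{\otimes N}$ by the vacuum output together with Schur-concavity of the entropy. The achievability computation is also right: $\Phi_{PI}(\dketbra{\alpha})$ is a displaced thermal state of mean photon number $e(0)$, so the average output entropy of $\mathcal{S}_{PI}(E)$ is exactly $g(e(0))$, while the ensemble-average output is thermal with mean photon number $e(E)$.

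The one point to be explicit about is that your argument is a reduction, not a self-contained proof. The majorization statement (equivalently, the Gaussian minimum-output-entropy theorem and its multi-mode version) is itself the entire content of~\cite{gaussOpt,maj1,maj2}; your remark that via the decomposition \eqq{ampattdeco} ``the content lies in the amplifier'' does not by itself constitute an argument, since the minimum output entropy of a composition is not controlled by that of its factors — the actual proofs require either the structure of the complementary channel of the quantum-limited amplifier or the full majorization machinery. You acknowledge this, and since the thesis likewise imports the result from the literature, treating that theorem as an external input is appropriate here; with it granted, your proof is complete and correct.
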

Coming back to the HSW Theorem~\ref{hsw}, its original proof can be found in~\cite{holevo1,holevo2,holevo3,schumawest1,schumawest2,wildeBOOK}. Here we are particularly interested in the direct part of the theorem that proves the achievability of any rate up to the Holevo information by exhibiting a decoding protocol based on typical projectors, Eqs.~(\ref{typProj},\ref{ctypProj}). Indeed these projectors are entangled over several uses of the channel, hence they require the ability to perform joint measurements on multiple output modes. This fact can be understood as a super-additivity of the channel capacity with respect to the output measurements and no counter-example to that has been found so far, though several alternative achievability proofs have been carried out~\cite{schumawest1,schumawest2,holevo2,holevo3,seq1,seq2,sen,polarWildeGuha,wildeguha1,wildeRen2,wildeRen1,wildeHayden,NOSTROHol}. In the following we briefly discuss these proofs and the measurement procedures they employ and in Sec.~\ref{sec:bisDec} we will present a new one based on~\cite{NOSTROHol}.\\

A proof of the direct part of Theorem~\ref{hsw} is usually carried out by providing an explicit decoding POVM 
\begin{equation}
\mathcal{M}_{dec}=\left\{\hat{E}_{\vecund{x}}\right\}_{\vecund{x}\in\mathcal{C}}\cup\left\{\hat{E}_{err}\right\}
\end{equation}
 with $\abs{\mathcal{C}}+1$ elements, i.e., one outcome for each possible input codeword, $\vecund{y}=\vecund{x}\in\mathcal{C}$, and, if necessary, an additional error outcome, $\vecund{y}=err$. Then the average error probability of the protocol with a given code $\mathcal{C}$ is computed:
 \begin{equation}\label{pErrGen}
 P_{err}(\mathcal{C})=1-2^{-NR}\sum_{\vecund{x}\in\mathcal{C}}p_{\vecund{x}|\vecund{x}}=1-2^{-NR}\sum_{\vecund{x}\in\mathcal{C}}\tr{\hat{E}_{\vecund{x}}\hat{\rho}_{\vecund{x}}},
 \end{equation}
where we have used the definition~\eqq{outcomeProb} of the outcome probability and assumed again w.l.o.g. that $\hat{\rho}_{\vecund{x}}$ are the states at the output of the channel. The proof consists in showing that the error probability~\eqq{pErrGen} approaches zero as the codewords' length $N\rightarrow\infty$ for all rates $R\leq \chi(\mathcal{S})$. As already discussed in Secs.~\ref{subsec:commC},~\ref{subsec:commQ}, a convergence on average over all codes is sufficient. Hence different capacity-achieving protocols are characterized by different decoding POVM structures. The protocols known so far are:
\begin{enumerate}
\item The square-root measurement (SRM) (or pretty good measurement) that was used in the original proof~\cite{holevo2,schumawest1,schumawest2} and relies on the use of typical projectors, Eqs.~(\ref{typProj},\ref{ctypProj}). Each POVM element is proportional to the conditionally-typical projector of the corresponding codeword. However these projectors are non-orthogonal and they would form an overcomplete POVM, i.e., summing up to more than the identity; hence each element is renormalized by their total sum. The resulting POVM is
\begin{equation}\label{srm}
\mathcal{M}_{srm}=\left\{\left(\sum_{\vecund{x}\in\mathcal{C}}\hat{\Pi}_{\vecund{x}}\right)^{-1/2}\hat{\Pi}_{\vecund{x}}\left(\sum_{\vecund{x}\in\mathcal{C}}\hat{\Pi}_{\vecund{x}}\right)^{-1/2}\right\}_{\vecund{x}\in\mathcal{C}},
\end{equation}
where the inverse of an operator is intended as a pseudoinverse, i.e., equal to the ordinary inverse on its support and zero on its kernel. Note that in this case there is no need for an error outcome. The SRM has a rather pragmatic construction but it is far from easy to envisage a feasible implementation of it;
\item The sequential measurement (SM) that was proposed by Giovannetti et al.~\cite{seq1,seq2} (see also~\cite{sen}) and relies on typical projectors too, though with a clear operational structure inspired by the works~\cite{oga1,oga2,oga3,oganaga,hayanaga} related to quantum hypothesis testing~\cite{qhyptest1,qhyptest2}. The codewords $\vecund{x}$ are first labelled according to a certain ordering, i.e., $\vecund{x}^{(\ell)}$ with $\ell=1,\cdots,\abs{\mathcal{C}}$. Then the received state is subject to a binary-outcome projective measurement which distinguishes the conditionally-typical space of the first codeword from its complementary. If the result is inside the conditionally-typical set, then Bob declares that the first codeword was identified, otherwise he repeats the same procedure with the conditionally-typical space of the second codeword and so on in the decided order, until either he declares that some codeword was identified or he has unsuccessfully probed all codewords. In the latter case Bob declares an error. Accordingly the POVM elements of the SM are
\begin{align}\label{seqSucEle}
&\hat{E}_{\vecund{x}^{(\ell)}}=\abs{\hat{\Pi}_{\vecund{x}^{(\ell)}}\hat{\Xi}_{\vecund{x}^{(\ell-1)}}\cdots\hat{\Xi}_{\vecund{x}^{(1)}}}^{2}~\forall\ell=1,\cdots,\abs{\mathcal{C}},\\ \label{seqErrEle}
&\hat{E}_{err}=\hat{\mathbf{1}}-\sum_{\forall\ell=1,\cdots,\abs{\mathcal{C}}}\hat{E}_{\vecund{x}^{(\ell)}},
\end{align}
where we have defined the complementary of a projector as $\hat{\Xi}=\hat{\mathbf{1}}-\hat{\Pi}$, which is itself a projector, and the absolute value of an operator as $\abs{\hat{O}}=\sqrt{\hat{O}^{\dag}\hat{O}}$. The downside of the SM is that it requires to perform several steps, of the order of the number of codewords $2^{NR}$, and still it can fail in some cases. A design for an explicit and structured optical receiver of this kind was proposed in~\cite{wildeguha1,wildeguha2} and, for a lossy bosonic channel, \eqq{att}, it was shown that a sequential decoder can be built with gaussian displacement operators and vacuum-or-not measurements~\cite{sen,qhyptest1,qhyptest2,LOSSY};
\item The successive cancellation decoder (SCD) together with a polar code (PC) that were proposed by Wilde and Guha~\cite{polarWildeGuha,wildeguha1,wildeRen2,wildeRen1,wildeHayden} following a proposal by Arikan~\cite{arikan} for classical channels. This communication scheme relies on the phenomenon of channel polarization, i.e., when a set of channels, after a sequence of simple SWAP and CNOT operations, can be divided in two subsets such that: the channels of the first kind have mutual information approaching the channel capacity, while those of the second kind have mutual information approaching zero asymptotically in the number of channels employed. By coding on the good channels and sending frozen bits on the other ones, faithful transmission is accomplished. However the SCD requires a series of successive joint measurements over channel outputs, each aimed at recovering the output of a single channel. Hence it is affected by a jointness requirement quite similar to those of the SRM and SM, though on somewhat smaller sub-sequences of the codewords~\cite{wildeHayden}. Nevertheless it is less demanding than the SM since it requires a number of steps equal to the codewords' length $N$ and it still has a clear operational meaning. The optimal measurements composing the SCD are Helstrom-optimal projectors, \eqq{helstroOptimalProj}, on the positive and negative support of operators related to the output codewords.
\end{enumerate}
Eventually, let us note that all these capacity-achieving decoding protocols (including the one presented in Sec.~\ref{sec:bisDec}) require the ability to perform joint measurements on multi-mode quantum states and hence are far from easy to implement with current technology. For example, the practical use of the Gaussian source \eqq{opSource} described above is limited by the difficulty in implementing  the optimal detection scheme that is able to read its codewords efficiently. This motivates the search for alternative ways of encoding classical messages into the channel which, while being possibly not as efficient as the Gaussian one, will guarantee nevertheless better performances with readout strategies which are easier to implement.  The proposal of ~\ref{sec:had} is one of such attempts. 

\section{Quantum State Discrimination}\label{sec:disc}
\subsection{Minimum-error discrimination}\label{subsec:discTheo}
The problem of state discrimination stems from a similar setting to that of communication. Indeed it can be stated too in terms of a random variable $X$ with values in a certain set $\mathcal{X}$ with a certain probability distribution $P_{X}(x)$: given an instance of $X$ we want to estimate its true value. The set can comprise classical values $x$ or quantum states $\hat{\rho}_{x}$ of arbitrary dimension and structure, e.g., they can be single-mode or multi-mode states. In the following we will discuss only the case of a discrete random variable, since the techniques developed in the estimation of a continuous variable are quite different from the discrete case, see~\cite{helstromBOOK,metroRev1,metroRev2}. While in the classical setting the main difficulty in discrimination arises from an imperfect knowledge of the system and a noisy measurement procedure, in the quantum case there is a more fundamental reason why a generic set of states cannot be perfectly distinguished: the uncertainty principle. Indeed if we were able to decide with certainty which one of two non-orthogonal quantum states has been prepared, we would then be able to perform perfect cloning, which in turn would allow to measure the value of two non-commuting observables starting from a single copy of a quantum state. \\
A quantum discrimination procedure is also determined by a POVM describing the measurement to be performed, with at least one outcome for each possible state. If we also allow for an error outcome it is even possible to achieve perfect discrimination; this procedure is called unambiguous discrimination (UD) and it consists in accepting only those results that give a certain identification of the state, while discarding all other uncertain results as errors, see~\cite{ivanovic,dieks,peres,chefBarn,revDisc} for further details. In the following instead we will be interested with the standard minimum-error state discrimination (MED) with one outcome for each input state, i.e., $y=x\in\mathcal{X}$, where the figure of merit is the error probability of the measurement protocol $\mathcal{M}=\{\hat{E}_{x}\}_{x\in\mathcal{X}}$  in discriminating the source of $M$ quantum states $\mathcal{S}$, as given by \eqq{qSource}:
\begin{equation}\label{errProbMed}
P_{err}(\mathcal{S},\mathcal{M})=1-\sum_{x\in\mathcal{X}}p_{x} p_{x|x}=1-\sum_{x\in\mathcal{X}}p_{x}\tr{\hat{E}_{x}\hat{\rho}_{x}},
\end{equation}
analogous to that defined in the context of communication, \eqq{pErrGen}.
We are interested in optimizing the error probability, \eqq{errProbMed}, over all POVM choices, finding its minimum value and the measurement that attains it, which in general is not unique~\cite{helstromBOOK}. Hence for a given source both the problems of discrimination and communication can be defined by optimizing a proper figure of merit, i.e., the error probability and the mutual information respectively. Moreover these two quantities can be related via Fano's inequality as employed in the converse Shannon theorem, showing that a small error probability implies a rate close to the capacity, see~\cite{covThomBOOK}.\\
The seminal works of Holevo~\cite{holevoDisc} and Yuen, Kennedy, Lax~\cite{YKL} (HYKL), derived a set of necessary and sufficient conditions for the optimal POVM; in particular~\cite{YKL} showed how to state the MED problem as a semidefinite program~\cite{sdp1,sdp2}, a popular optimization technique that can be easily carried out numerically; for a full treatment see~\cite{helstromBOOK,bae,eldarSdp}. In particular the optimization problem can be stated as
\begin{align}
&\max_{\mathcal{M}}\left[1-P_{err}(\mathcal{S},\mathcal{M})\right]\\
&\text{subject to } \hat{E}_{x}\geq0~\forall x\in\mathcal{X},\quad \sum_{x\in\mathcal{X}}\hat{E}_{x}=\hat{\mathbf{1}},
\end{align}
while its dual as
\begin{align}
&\min_{\hat{Q}}\tr{\hat{Q}}\\
&\text{subject to } \hat{Q}\geq p_{x}\hat{\rho}_{x}~\forall x\in\mathcal{X},
\end{align}
where $\hat{Q}$ is an hermitian operator on the Hilbert space of the system. Moreover this couple of conjugated problems is strongly dual, i.e., they have the same solution 
\begin{equation}\label{pErrOp}
\tr{\hat{Q}_{op}}=1-\mathbb{P}_{err}(\mathcal{S})=1-P_{err}(\mathcal{S},\mathcal{M}_{op}).
\end{equation}
Hence the easier dual problem can be solved numerically for a given set of states $\mathcal{S}$; its solution directly provides the value of the minimum error probability while the optimal POVM is found by solving the HYKL conditions
\begin{equation}
\hat{E}_{x}^{op}\left(\hat{Q}_{op}-p_{x}\hat{\rho}_{x}\right)=\left(\hat{Q}_{op}-p_{x}\hat{\rho}_{x}\right)\hat{E}_{x}^{op}=0~\forall{x}\in\mathcal{X},
\end{equation}
which are necessary and sufficient given a solution $\hat{Q}_{op}$ of the dual problem.
Using this statement of the problem, Eldar et al.~\cite{eldarSdp} were able to generalize a result by Kennedy~\cite{kenProj} and showed that each element $\hat{E}_{x}^{op}$ of the optimal POVM has rank at most as large as the rank of the corresponding state $\hat{\rho}_{x}$. This in particular implies that for a set of pure states $\hat{\rho}_{x}=\dketbra{\psi_{x}}$, the optimal POVM elements are rank-one non-orthogonal projectors; they are also orthogonal if the states are linearly independent, i.e., their number is not greater than the dimension of the Hilbert space, $M\leq d$. More generally, it is clear that the optimal POVM need only have support on the subspace of the whole Hilbert space spanned by the states of the source. \\

The SDP approach to MED has been also used by Bae~\cite{bae} as a starting point for a geometrical treatment of the problem, see also~\cite{qubits,threeQubits}. Such treatment is particularly successful in the qubit case, where an easy geometrical representation of the Hilbert space is available, i.e., the Bloch sphere, Sec.~\ref{subsec:qubits}. The main result of~\cite{bae} in the qubit case states that the optimal success probability of discrimination of a set $\mathcal{S}_{eq}^{(M)}=\{\hat{\rho}_{\ell}/M\}_{\ell=0,\cdots,M-1}$ of $M$ \textit{equiprobable} qubit states can be found by: i) considering the geometric figure determined by the weighted states in the Bloch sphere, i.e., their polytope of vertices $\{\vecund{r}_{\hat{\rho}_{\ell}}/M\}$; ii) finding the polytope similar to the latter and that is also maximal in the Bloch sphere; iii) computing the ratio $R$ between the original and the maximal polytope. Then the optimal success probability is 
\begin{equation}\label{baeDisc}
\mathbb{P}_{succ}\left(\mathcal{S}_{eq}^{(M)}\right)=\frac{1}{M}+R.
\end{equation} \\
Another important case where the solution can be found analytically is that of a cyclic-symmetric set of pure states~\cite{helstromBOOK,multiHel1,multiHel2}, described by a source
\begin{equation}
\mathcal{S}_{U}=\left\{(\ket{\psi_{\ell}},p_{\ell}):\ket{\psi_{\ell}}=\hat{U}^{\ell}\ket{\psi_{0}}, p_{\ell}=\oneover{M}~\forall \ell=0,\cdots,M-1\right\},
\end{equation}
where we have adopted a specific ordering of the states, i.e., $x=x^{(\ell)}$. These states are equiprobable and they can be obtained by repeatedly applying the same unitary operation $\hat{U}$ to a given reference state $\ket{\psi_{0}}$. The optimal POVM can be shown to have the same cyclic symmetry with rank-one projectors $\hat{E}_{\ell}^{op}=\dketbra{\phi_{\ell}}$ for $\ell=0,\cdots,M-1$, where
\begin{equation}
\quad\ket{\phi_{\ell}}=\hat{U}^{\ell}\ket{\phi_{0}}, \quad \ket{\phi_{0}}=\sum_{k=0}^{K-1}\lambda_{k}^{-1/2}e^{-i\theta_{k}\ell}\braket{d_{k}}{\psi_{0}}\ket{d_{k}},
\end{equation}
and $\{\ket{d_{k}}\}_{k=0}^{K-1}$ is the common eigenbasis of the symmetry operator $\hat{U}$ and of the average state of the source $\hat{\rho}_{\mathcal{S}}$, with eigenvalues respectively $\lambda_{k}$ and $e^{-i\theta_{k}}/M$. The minimum error probability attained by this measurement then is
\begin{equation}\label{cycSym}
\mathbb{P}_{err}\left(\mathcal{S}_{U}\right)=1-\abs{\sum_{k=0}^{K-1}\lambda_{k}^{-1/2}\abs{\braket{d_{k}}{\psi_{0}}}^{2}}^{2}.
\end{equation}
Let us note that the number of outcomes of the optimal measurement is bounded by the dimension of the Hilbert space of the system, i.e., $K\leq d$; hence if the number of states is greater than the dimension of the Hilbert space, only $d$ of them can be truly discriminated. This treatment can be generalized to sets with other kinds of symmetries, see~~\cite{bae,chiribella,symGuha}.\\
Eventually, in the case of binary discrimination, i.e., when the source comprises only two states, $\hat{\rho}_{0,1}$, it is easy to show~\cite{holevoBOOK,nChuangBOOK,opMeasCoh} that the optimal POVM comprises two orthogonal rank-one projectors 
\begin{equation}\label{helstroOptimalProj}
\hat{\Pi}_{+}^{op}=\hat{\Pi}_{\{\hat{\rho}_{0}-\hat{\rho}_{1}\geq 0\}}, \quad \hat{\Pi}_{-}^{op}=\hat{\mathbf{1}}-\hat{\Pi}_{+}^{op},
\end{equation}
respectively on the positive and negative support of the operator obtained by taking the difference of the two states. The corresponding optimal error probability is given by the well-known Helstrom bound~\cite{helstromBOOK}:
\begin{equation}\label{helBound}
\mathbb{P}_{err}^{(hel)}(\{(\hat{\rho}_{\ell},p_{\ell})\}_{\ell=0,1})=\oneover{2}-\trd{\hat{\rho}_{0}}{\hat{\rho}_{1}},
\end{equation}
where we have used the trace-distance of the two states, \eqq{trd}. In the case of pure states, the latter formula reduces to
\begin{equation}
\mathbb{P}_{err}^{(hel)}(\{(\ket{\psi_{\ell}},p_{\ell})\}_{n=0,1})=\oneover{2}\left(1-\sqrt{1-4p_{0}p_{1}\abs{\braket{\psi_{0}}{\psi_{1}}}^{2}}\right).
\end{equation}
In Sec.~\ref{sec:opDisc} we discuss a new method to compute the optimal POVM and error probability based on~\cite{NOSTROTheoDisc}.

\subsection{Implementation strategies}\label{subsec:discPract}
The problem of state discrimination has key importance in classical communication on quantum channels. For example, when employing free-space or optical-fiber links over long distances, the attenuation effect is dominating over other kinds of noise and hence the channels are easily modeled by a lossy bosonic channel, \eqq{att}. As stated by Theorem~\ref{cPIG}, a code based on coherent states is optimal in this case. Such states are largely overlapping at low intensity, \eqq{coherentNotOrth}, and it is thus extremely important to design receivers that discriminate them as efficiently as possible and may be employed in principle to increase the information transmission rate.\\
Let us now consider binary coherent-state MED, i.e., $\mathcal{S}_{\pm\alpha}=\left\{\left(\ket{\pm\alpha},\oneover{2}\right)\right\}$, where we restrict to the equiprobable case for simplicity. As in many other cases, it turns out that the optimal POVM achieving the Helstrom bound in this case is highly non-linear~\cite{opMeasCoh} and practically impossible to implement with current technology. The easiest decoder is the standard homodyne receiver (HoR)~\cite{RevGauss,parisDet}, obtained by performing homodyne detection, Eq.~(\ref{homodyne}), of the canonical quadrature $\hat{q}$ and assigning positive (negative) outcomes to $\ket{\alpha}$ ($\ket{-\alpha}$), i.e.
\begin{equation}\label{hor}
\mathcal{M}_{hor}=\left\{\hat{E}_{-}=\int_{q\leq0}dq\dketbra{q},\hat{E}_{+}=\hat{\mathbf{1}}-\hat{E}_{-}\right\}.
\end{equation}
Its error probability can be computed by employing the Fock representations of coherent and quadrature states, Eqs.~(\ref{coherentstates},\ref{quadraturestates}), and the expression for the generating function of Hermite polynomials, see~\cite{parisDet}, and it is given by
\begin{equation}
P_{err}(\mathcal{S}_{\pm\alpha},\mathcal{M}_{hor})=\oneover{2}\left(1-\operatorname{erf}\left(q_{\alpha}\right)\right),
\end{equation} 
where $\operatorname{erf}(x)=\pi^{-1/2}\int_{\abs{y}\leq x}dy\;e^{-y^{2}}$ is the well-known error function. The HoR, \eqq{hor}, has been shown by Takeoka and Sasaki \cite{opGaussDet} to be optimal among all Gaussian measurements for binary coherent-state MED, and it approximates well the optimal measurement at high energy but it is sub-optimal in the low-energy region that we are mainly interested in. \\
A first realistic, yet sub-optimal, non-Gaussian receiver was proposed by Kennedy \cite{kenDet}: it employs a coherent displacement operation, \eqq{dispcomplex}, that perfectly nulls one of the two possible signals (or the most favored one if they are not equiprobable), followed by OOD, \eqq{photodetector}. If the detector registers no photon the result is interpreted as a successful identification of the nulled signal, else if the detector clicks the other signal is chosen, i.e.,
This receiver captures the main ingredient in practical coherent-state discrimination, i.e. signal nulling, which reduces the errors committed when interpreting a no-click result from the OOD. Nevertheless at low intensity values its performance is lower than that of conventional dyne detectors.
Better results, which surpass the homodyne detection also at low intensity values, can be obtained by employing the optimized Kennedy scheme~\cite{opKen}, an imperfect nulling technique where the displacement is arbitrary, i.e.,
\begin{equation}\label{kennedy}
\mathcal{M}_{ken}(\beta)=\left\{\hat{E}_{-}=\dketbra{\beta},\hat{E}_{+}=\hat{\mathbf{1}}-\hat{E}_{-}\right\},
\end{equation} 
and it is chosen so as to minimize the error probability of the protocol, i.e.,
\begin{equation}
P_{err}(\mathcal{S}_{\pm\alpha},\mathcal{M}_{ken}(\beta))=\oneover{2}\left(1+\abs{\braket{\beta}{-\alpha}}^{2}-\abs{\braket{\beta}{+\alpha}}^{2}\right),
\end{equation}
which depends on the difference between the vacuum-overlaps of the two displaced states~\cite{opKen,marq} because $\abs{\braket{\beta}{\alpha}}=\abs{\braket{0}{\alpha-\beta}}$. 
Further improvements can be obtained by embedding the above techniques into a multiplexing procedure~ \cite{implProj,implProj2,sychLeuchs,multicopy1,multicopy2,genDisc,genDiscErr,guhaDol}  along the line first suggested by Dolinar~\cite{dol}: here the received coherent is at first split in $N$ lower-intensity copies via a passive interferometer, Sec.~\ref{gaussUni}; the copies are then individually probed (say via an optimized Kennedy detection) in a feedforward-adaptive routine where the parameters of the $n$-th detection are determined by the outcomes of the previous $n-1$ ones. This approach ensures a reduction of the error probability as $N$ increases so that, assuming perfect OODs, it allows for the saturation of the Helstrom bound in the asymptotic limit of infinitely many iterations. The Dolinar receiver has been recently demonstrated in a proof-of-principle experiment~\cite{dolExp}, although the fast feedforward that it requires is still technologically demanding and it can be seen as its main limitation. Variations of the easier optimized Kennedy scheme have thus been discussed~\cite{opGaussDet,wittmanKen,tsujinoKen,NOSTRODisc} and in particular~\cite{NOSTRODisc} will be presented in Sec.~\ref{sec:cohDisc}. In any case the Dolinar receiver seems to be the most robust to typical detection errors, see~\cite{gereErr,marq}. Moreover the Dolinar receiver is easy to generalize to sets of larger numbers of states but it seems difficult that its generalizations may be able to achieve the optimum multi-state error probability~\cite{bondurant,izumiMulti,fujiwaraMulti,becerraMulti,marq,guhaDet}, so that in general the addition of some highly non-Gaussian operation may be needed~\cite{guhaOpQuantum}.\\
Eventually, as already noted, we may expect that a good discrimination technique performs well also for coherent-state communication, though in general it will not achieve the channel capacity. The relation between these two tasks has been little investigated~\cite{multiHel1,guhaDet} because it is in general a complex one, e.g., the detection problem relevant for communication would need to consider multi-mode codewords. However realistic decoders for coherent-state communication have to take into account the fact that ultimately a single-mode detection involving PDTs will be involved and the only way to perform a multi-mode measurement in this way is to perform some multi-mode interaction before detection. In Ch.~\ref{ch:Imple} we address both the problems from an implementation-oriented perspective based on~\cite{NOSTRODisc,NOSTROHad,NOSTROAdap}.

\chapter{Optimal communication and state discrimination with binary-tree-search protocols}\label{ch:Opt}

\ifpdf
    \graphicspath{{Chapter3/Figs/Raster/}{Chapter3/Figs/PDF/}{Chapter3/Figs/}}
\else
    \graphicspath{{Chapter3/Figs/Vector/}{Chapter3/Figs/}}
\fi

In this chapter we consider the problems of optimal communication, Sec.~\ref{sec:commQ}, and state discrimination, Sec.~\ref{sec:disc}, in whole generality. In Sec.~\ref{sec:bisMeas} we describe an operational decomposition of any POVM, Sec.~\ref{quantumOps}, in terms of binary-outcome ones through a tree-like structure. In Sec.~\ref{sec:bisDec} we apply the decomposition to classical communication on quantum channels, obtaining a new capacity-achieving decoding protocol that requires only a number of steps equal to the length of the codewords. In Sec.~\ref{sec:opDisc} instead we apply the same decomposition to quantum state discrimination, obtaining an expression of the error probability for sets of three or four states that can be optimized numerically; we also discuss the specific case of qubits, where the optimization is particularly straightforward. This chapter is based on~\cite{NOSTROHol,NOSTROTheoDisc}.

\section{A binary-tree-search decomposition for quantum measurements}\label{sec:bisMeas}
In this section we prove that any quantum measurement with an arbitrary number of outcomes can be decomposed into a sequence of nested measurements with binary outcomes, where the previous results determine the choice of successive measurements.
At variance with a previous proof~\cite{previousBisDec}, ours does not make use of the spectral decomposition of the initial measurement operators; we present it here in a form adapted to the main purpose of the following sections.\\
Let us suppose we want to perform a quantum measurement with $M$ possible outcomes: it can be expressed in general as a POVM $\mathcal{M}^{(M)}=\left\{\hat{E}_{\ell}\right\}_{\ell=0}^{M-1}$, see Sec~\ref{quantumOps}. This expression can be interpreted as a one-shot measurement with several possible results and its practical realization may often be very hard. 
On the other hand we could restrict to performing only measurements with two outcomes, as described by \textit{binary} POVMs: $\mathcal{B}\equiv\mathcal{M}^{(2)}=\left\{\hat{B}_{0},\hat{B}_{1}\right\}$. This may be useful when limited technological capabilities or specific theoretical requirements constrain the  number of allowed outcomes and the complexity of our measurement. 
It is then natural to ask whether this smaller set of resources is sufficient to describe a general quantum measurement. We answer positively by showing that the more general $M$-outcome formalism can be broken up into several binary steps and interpreted as a sequence of nested POVMs with two outcomes, trading a \textit{one-shot}, \textit{multiple-outcome} measurement for a \textit{multiple-step}, \textit{yes-no} measurement.\\
The nested POVM can be expressed in terms of \textit{conditional binary} POVMs 
\begin{equation}
\mathcal{B}_{\vecund{k}}=\left\{\hat{B}_{\vecund{k},0},\hat{B}_{\vecund{k},1}\right\},
\end{equation} 
each complete by itself, to be applied only if a specific string $\vecund{k}$ of previous results is obtained. 
 For example for $M=4$ the nested POVM can be realized in two steps and written compactly as the collection of three binary POVMs: 
  \begin{equation}
  \mathcal{N}^{(4)}=\left\{\mathcal{B}^{(2)}_{0},\mathcal{B}^{(2)}_{1}\right\}\circ\left\{\mathcal{B}^{(1)}\right\},
  \end{equation}
   properly composed as follows and shown in Fig.~\ref{schema}. The measurement starts by applying the first-step binary POVM $\mathcal{B}^{(1)}=\left\{\hat{B}^{(1)}_{k_{1}}\right\}_{k_{1}=0,1}$ then, depending on its outcome $k_{1}$, it selects $\mathcal{B}^{(2)}_{k_{1}}=\left\{\hat{B}^{(2)}_{k_{1},k_{2}}\right\}_{k_{2}=0,1}$ among the two POVMs available in the second-step collection $\left\{\mathcal{B}^{(2)}_{0},\mathcal{B}^{(2)}_{1}\right\}$. Eventually, the chosen second-step POVM is applied, receiving an outcome $k_{2}$. The total outcome is a string of two bits, i.e., $k_{1},k_{2}$, whose value identifies one of  four possible outcomes, as desired. Suppose now to apply this measurement on a state $\hat{\hat{\rho}}$ of some physical system: if the first-step outcome is $k_{1}=0$, the resulting unnormalized evolved state is $\sqrt{\hat{B}^{(1)}_{0}}\hat{\rho}\sqrt{\hat{B}^{(1)}_{0}}$; if then the second-step outcome is $k_{2}=0$, the final unnormalized state of the system is $\sqrt{\hat{B}^{(2)}_{0,0}}\sqrt{\hat{B}^{(1)}_{0}}\hat{\rho}\sqrt{\hat{B}^{(1)}_{0}}\sqrt{\hat{B}^{(2)}_{0,0}}$. This means that the nested POVM has a more explicit representation as 
 \begin{equation} 
 \mathcal{N}^{(4)}=\left\{\hat{F}_{k_{1},k_{2}}=\left|\sqrt{\hat{B}^{(2)}_{k_{1},k_{2}}}\sqrt{\hat{B}^{(1)}_{k_{1}}}\right|^{2}\right\}_{k_{1},k_{2}=0,1}.
 \end{equation} 
 \begin{figure}[t!]
\center \includegraphics[trim={0 8cm 0 10cm},clip,scale=.43]{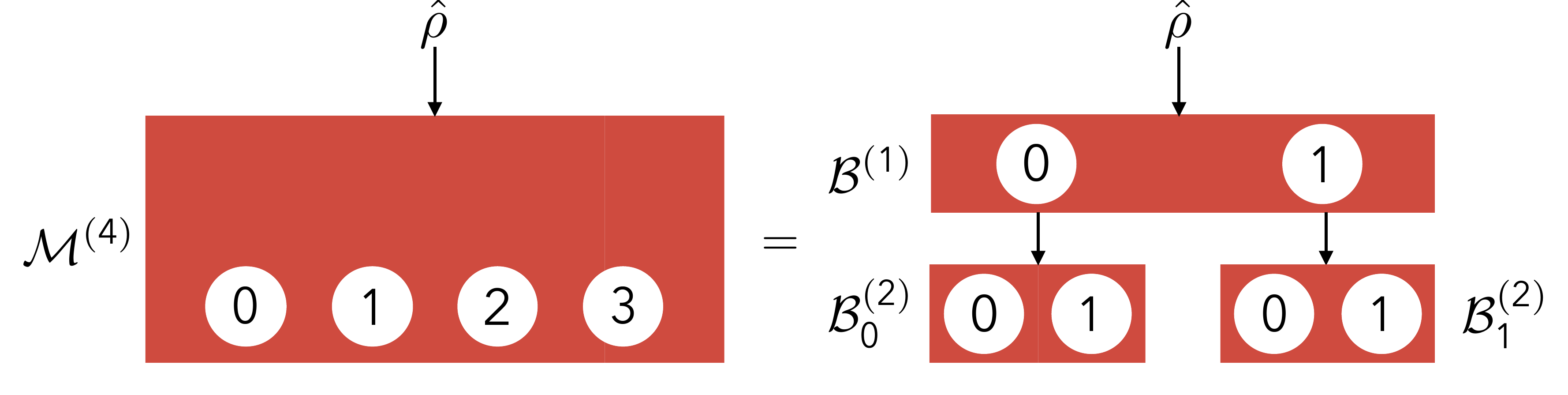}\caption{Schematic depiction of the nested decomposition for $M=4$, explicitly discussed in the text. Any four-outcome measurement $\mathcal{M}^{(4)}$ acting on a state $\hat{\rho}$ is equivalent to the concatenation of two-outcome measurements: the first-step one $\mathcal{B}^{(1)}$, with result $k_{1}=0,1$, and the second-step ones $\mathcal{B}^{(2)}_{k_{1}}$, which are mutually exclusive and applied only if the corresponding first outcome $k_{1}$ was obtained. }
 \label{schema}
 \end{figure}\\
More generally, we can define a nested POVM $\mathcal{N}^{(M)}$ of order $M=2^{u_{\tiny{F}}}$ as 
\begin{equation}\label{nested}
\begin{aligned}
\mathcal{N}^{(M)}&=\left\{\mathcal{B}^{(u_{\tiny{F}})}_{k_{(1,u_{\tiny{F}}-1)}}\right\}_{k_{1},\cdots,k_{u_{\tiny{F}}-1}=0,1}\circ\cdots\circ\left\{\mathcal{B}^{(1)}\right\}\\
&=\left\{\hat{F}_{k_{(1,u_{\tiny{F}})}}=\left|\sqrt{\hat{B}^{(u_{\tiny{F}})}_{k_{(1,u_{\tiny{F}})}}} \cdots\sqrt{\hat{B}^{(1)}_{k_{1}}}\right|^{2}\right\}_{k_{1},\cdots,k_{u_{\tiny{F}}}=0,1},
\end{aligned}
\end{equation}
i.e., the collection of $2^{u_{\tiny{F}}}-1$ binary POVMs $\mathcal{B}^{(u)}_{k_{(1,u-1)}}$, for all previous outcomes $k_{(1,u-1)}$ at a given step $u$ and all steps $u=1,\cdots,u_{\tiny{F}}$. We can certify that $\mathcal{N}^{(M)}$ so constructed actually is a POVM by checking positivity and completeness of its elements $\hat{F}_{k_{(1,u_{\tiny{F}})}}$: the former requirement is trivial, while the latter follows from the fact that each binary POVM is complete, as can be easily shown by employing the completeness of each binary POVM $\mathcal{B}^{(u)}_{k_{(1,u-1)}}$ at each step $u$. Indeed we can start by summing over the last bit $k_{u_{\tiny{F}}}=0,1$, coupling elements that differ only for its value, i.e., $\hat{F}_{k_{(1,u_{\tiny{F}}-1)},0}$ and $\hat{F}_{k_{(1,u_{\tiny{F}}-1)},1}$. 
These are made of the same sequence of operators apart from the most interior ones, $\hat{B}^{(u_{\tiny{F}})}_{k_{(1,u_{\tiny{F}}-1)},0}$ and $\hat{B}^{(u_{\tiny{F}})}_{k_{(1,u_{\tiny{F}}-1)},1}$, which are instead two different elements of the same binary POVM $\mathcal{B}^{(u_{\tiny{F}})}_{k_{(1,u_{\tiny{F}}-1)}}$, thus satisfy a completeness relation and their sum can be simplified. The same procedure is then applied recursively on previous bits as follows:
\begin{equation}
\begin{aligned}
\sum_{k_{1},\cdots,k_{u_{\tiny{F}}}}&\hat{F}_{k_{(1,u_{\tiny{F}})}}=\sum_{k_{1},\cdots,k_{u_{\tiny{F}}-1}}\left(\hat{F}_{k_{(1,u_{\tiny{F}}-1)},0}+\hat{F}_{k_{(1,u_{\tiny{F}}-1)},1}\right)\\
&=\sum_{k_{1},\cdots,k_{u_{\tiny{F}}-1}}\sqrt{\hat{B}^{(1)}_{k_{1}}}\cdots\sqrt{\hat{B}^{(u_{\tiny{F}}-1)}_{k_{(1,u_{\tiny{F}}-1)}}}\left(\hat{B}^{(u_{\tiny{F}})}_{k_{(1,u_{\tiny{F}}-1)},0}+\hat{B}^{(u_{\tiny{F}})}_{k_{(1,u_{\tiny{F}}-1)},1}\right)\sqrt{\hat{B}^{(u_{\tiny{F}}-1)}_{k_{(1,u_{\tiny{F}}-1)}}}\cdots\sqrt{\hat{B}^{(1)}_{k_{1}}}\\
&=\sum_{k_{1},\cdots,k_{u_{\tiny{F}}-1}}\sqrt{\hat{B}^{(1)}_{k_{1}}}\cdots\sqrt{\hat{B}^{(u_{\tiny{F}}-2)}_{k_{(1,u_{\tiny{F}}-2)}}}\hat{B}^{(u_{\tiny{F}}-1)}_{k_{(1,u_{\tiny{F}}-1)}}\sqrt{\hat{B}^{(u_{\tiny{F}}-2)}_{k_{(1,u_{\tiny{F}}-2)}}}\cdots\sqrt{\hat{B}^{(1)}_{k_{1}}}\\
&=\sum_{k_{1},\cdots,k_{u_{\tiny{F}}-2}}\sqrt{\hat{B}^{(1)}_{k_{1}}}\cdots\sqrt{\hat{B}^{(u_{\tiny{F}}-2)}_{k_{(1,u_{\tiny{F}}-2)}}}\left(\hat{B}^{(u_{\tiny{F}}-1)}_{k_{(1,u_{\tiny{F}}-2)},0}+\hat{B}^{(u_{\tiny{F}}-1)}_{k_{(1,u_{\tiny{F}}-2)},1}\right)\sqrt{\hat{B}^{(u_{\tiny{F}}-2)}_{k_{(1,u_{\tiny{F}}-2)}}}\cdots\sqrt{\hat{B}^{(1)}_{k_{1}}}\\
&=\cdots=\sum_{k_{1}}\sqrt{\hat{B}^{(1)}_{k_{1}}}\left(\hat{B}^{(2)}_{k_{1},0}+\hat{B}^{(2)}_{k_{1},1}\right) \sqrt{\hat{B}^{(1)}_{k_{1}}} =\hat{B}^{(1)}_{0}+\hat{B}^{(1)}_{1}=\hat{\mathbf{1}}.\label{complete}
\end{aligned}
\end{equation}
We note that the previous result does not change if instead of employing complete binary POVMs, we relax to weak completeness, as defined in Sec.~\ref{sec:bisMeas}. Indeed in this case it still holds 
\begin{equation}
\sqrt{\hat{B}^{(u-1)}_{k_{(1,u-1)}}}\left(\hat{B}^{(u)}_{k_{(1,u-1)},0}+\hat{B}^{(u)}_{k_{(1,u-1)},1}\right)\sqrt{\hat{B}^{(u-1)}_{k_{(1,u-1)}}}=\hat{B}^{(u-1)}_{k_{(1,u-1)}}
\end{equation} 
and the equalities in \eqq{complete} are unchanged.\\

In light of the previous discussion we can now state the main theorem:
\begin{theorem}\label{decomposition}(Binary-tree decomposition)
Any $M$-outcome POVM $\mathcal{M}^{(M)}=\{\hat{E}_{j}\}_{j=0}^{M-1}$ is equivalent to a nested POVM $\mathcal{N}^{(\tilde{M})}$, $\tilde{M}=2^{u_{\tiny{F}}}$, as in ~\eqq{nested}, composed exclusively of binary POVMs $\mathcal{B}^{(u)}_{k_{(1,u-1)}}$, with a total number of steps $u_{\tiny{F}}$ equal to:
\begin{enumerate}
\item $\log M$, if $M$ is a power of $2$;
\item $\lceil\log M\rceil$ otherwise, where $\lceil\cdot\rceil$ is the ceiling function, equal to the smallest integer following the argument.
\end{enumerate}
 \end{theorem}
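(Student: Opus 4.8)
The plan is to produce the binary tree by an explicit recursive construction, avoiding the spectral decomposition of the $\hat{E}_\ell$. First I would dispose of case~2 by reducing it to case~1: given $\mathcal{M}^{(M)}=\{\hat{E}_\ell\}_{\ell=0}^{M-1}$, set $u_{\tiny{F}}=\lceil\log M\rceil$, $\tilde{M}=2^{u_{\tiny{F}}}$, and pad the family with $\tilde{M}-M$ null operators $\hat{E}_\ell=\hat{0}$ for $M\le\ell<\tilde{M}$. Positivity and completeness are untouched by adjoining zeros, and a null outcome never occurs on any state, so the padded POVM has $\tilde{M}=2^{u_{\tiny{F}}}$ outcomes and is statistically identical to $\mathcal{M}^{(M)}$. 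It therefore suffices to treat a POVM whose number of outcomes is a power of two, whence the total depth is fixed at $u_{\tiny{F}}=\log\tilde{M}=\lceil\log M\rceil$, which collapses to $\log M$ exactly when $M$ is already a power of two.

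For the power-of-two case I would build a perfect binary tree of depth $u_{\tiny{F}}$ by repeated balanced bisection. The construction is self-similar: at a node carrying a weakly-complete effective POVM $\{\hat{E}^{(\mathrm{node})}_\ell\}_{\ell\in G}$ (summing to a projector $\hat{\Pi}$), split the index set into two equal halves $G=G_0\sqcup G_1$ and take as the conditional binary POVM the group sums $\hat{B}_c=\sum_{\ell\in G_c}\hat{E}^{(\mathrm{node})}_\ell$ for $c=0,1$. This is a legitimate binary POVM because $\hat{B}_0+\hat{B}_1=\hat{\Pi}$, which is $\hat{\mathbf{1}}$ at the root (strong completeness) and a support projector at deeper nodes (weak completeness, precisely the relaxation already shown after \eqq{complete} to preserve the telescoping). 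Conditioned on outcome $c$, one descends to the child effective POVM $\{\hat{B}_c^{-1/2}\hat{E}^{(\mathrm{node})}_\ell\hat{B}_c^{-1/2}\}_{\ell\in G_c}$, where the inverse is the pseudoinverse on $\operatorname{supp}\hat{B}_c$; it sums to the support projector $\hat{\Pi}_c$ of $\hat{B}_c$, hence is again weakly complete, so the recursion reproduces an input of the same type and can be iterated down to singleton groups. Zero-padded branches are harmless: there $\hat{B}_c=\hat{0}$, the pseudoinverse vanishes, and the branch simply never fires.

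It remains to verify that the nested leaf element reconstructs the original operator, $\hat{F}_{k_{(1,u_{\tiny{F}})}}=\hat{E}_{\ell(k)}$, which I would prove by induction on subtree depth using the peeling identity
\begin{equation}
\sqrt{\hat{B}_c}\,\bigl(\hat{B}_c^{-1/2}\hat{E}^{(\mathrm{node})}_\ell\hat{B}_c^{-1/2}\bigr)\,\sqrt{\hat{B}_c}=\hat{\Pi}_c\,\hat{E}^{(\mathrm{node})}_\ell\,\hat{\Pi}_c=\hat{E}^{(\mathrm{node})}_\ell .
\end{equation}
The base case (a singleton leaf) is trivial, and the inductive step substitutes the child element $\hat{B}_c^{-1/2}\hat{E}^{(\mathrm{node})}_\ell\hat{B}_c^{-1/2}$ for the sub-tree result (valid by induction) and applies the identity above. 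The second equality is the crux and the main obstacle: it holds only if $\operatorname{supp}(\hat{E}^{(\mathrm{node})}_\ell)\subseteq\operatorname{supp}(\hat{B}_c)$, so that $\hat{\Pi}_c$ acts trivially on $\hat{E}^{(\mathrm{node})}_\ell$. This is exactly where positivity of the POVM elements is indispensable, since $\hat{E}^{(\mathrm{node})}_\ell\le\hat{B}_c$ for $\ell\in G_c$ forces the required support inclusion; the careful bookkeeping of pseudoinverses on the correct supports and of the passage from strong to weak completeness below the root are the only further delicate points. The step count is then immediate: balanced bisection of $\tilde{M}=2^{u_{\tiny{F}}}$ outcomes down to singletons yields a perfect tree in which every root-to-leaf path has length $u_{\tiny{F}}=\lceil\log M\rceil$, equal to $\log M$ in the power-of-two case, as claimed.
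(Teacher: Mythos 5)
Your proposal is correct and, once the recursion is unrolled, coincides with the paper's construction: your group sums $\hat{B}_c=\sum_{\ell\in G_c}\hat{E}^{(\mathrm{node})}_\ell$ with child elements $\hat{B}_c^{-1/2}\hat{E}^{(\mathrm{node})}_\ell\hat{B}_c^{-1/2}$ reproduce exactly the operators in \eqq{elements}, and your zero-padding handles the non-power-of-two case just as in the paper. Your inductive ``peeling'' verification is the paper's ``invert the outer square roots'' step, with the added merit that you make explicit the support inclusion $\operatorname{supp}(\hat{E}^{(\mathrm{node})}_\ell)\subseteq\operatorname{supp}(\hat{B}_c)$ (from $0\le\hat{E}^{(\mathrm{node})}_\ell\le\hat{B}_c$) on which that inversion silently relies.
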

 \begin{proof}
 Consider the first case above, i.e., $M=2^{u_{\tiny{F}}}\equiv\tilde{M}$. 
We start by providing a binary representation of the labels $\ell$ of the initial POVM $\mathcal{M}^{(M)}$, i.e., we define $\hat{E}_{k_{(1,u)}}\equiv \hat{E}_{\ell^{(k)}}$, with $\ell^{(k)}=\sum_{u=1}^{u_{\tiny{F}}}2^{u-1}k_{u}$. In order to prove the theorem we have to show that by combining the elements of the initial $M$-outcome POVM $\mathcal{M}^{(M)}$ one can always define a set of binary POVMs $\mathcal{B}^{(u)}_{k_{(1,u-1)}}$, for all $k_{1},\cdots,k_{u-1}=0,1$ and $u=1,\cdots,u_{\tiny{F}}$, such that: i) their nested composition is a POVM of the form $\mathcal{N}^{(M)}$, \eqq{nested}; ii) the elements of the latter are equal to the elements of $\mathcal{M}^{(M)}$.  \\
First of all we construct the binary elements at each step $u$ by taking the sum of all the elements $\hat{E}_{k_{(1,u)},k_{(u+1,u_{\tiny{F}})}}$ with a fixed value of the first $u$ bits, then renormalizing it by all previous binary elements, as in a Square Root Measurement~\cite{schumawest1,schumawest2}. For example define the elements of the first-step POVM $\mathcal{B}^{(1)}$ as
 \begin{equation}\label{first}
 \hat{B}^{(1)}_{k_{1}}=\sum_{k_{(2,u_{\tiny{F}})}}\hat{E}_{k_{1},k_{(2,u_{\tiny{F}})}},
 \end{equation}
 for each value of the outcome $k_{1}=0,1$; here and in the following we omit the obvious set of values assumed by the sum index. Being a sum of positive operators, the elements so defined are themselves positive; moreover their sum equals the sum of all the elements of $\mathcal{M}^{(M)}$, implying that they are complete. At the second step define the elements of the two possible POVMs $\mathcal{B}^{(2)}_{k_{1}}$ as
  \begin{equation}\label{second}
 \hat{B}^{(2)}_{k_{(1,2)}}=\left(\hat{B}^{(1)}_{k_{1}}\right)^{-1/2}\sum_{k_{(3,u_{\tiny{F}})}}\hat{E}_{k_{(1,2)},k_{(3,u_{\tiny{F}})}}\left(\hat{B}^{(1)}_{k_{1}}\right)^{-1/2},
 \end{equation}
 where the inverse of an operator is its pseudoinverse.  Also in this case the defined elements  are positive by construction, but they are not complete. Indeed it is easy to show, employing the definition \eqq{first}, that $\hat{B}^{(2)}_{k_{1},0}+\hat{B}^{(2)}_{k_{1},1}=\hat{\mathbf{1}}_{k_{1}}$. Here $\hat{\mathbf{1}}_{k_{1}}$ is the projector on the support of the previous-outcome operator, $\hat{B}^{(1)}_{k_{1}}$, which may have a non-trivial kernel, so that in general it holds $\hat{\mathbf{1}}_{k_{1}}\leq\hat{\mathbf{1}}$. This problem may be overcome easily by redefining the POVM elements as $\hat{\tilde{B}}^{(2)}_{k_{(1,2)}}=\hat{B}^{(2)}_{k_{(1,2)}}\oplus\left(\hat{\mathbf{1}}-\hat{\mathbf{1}}_{k_{1}}\right)/2$, i.e., trivially expanding the support of those already defined in \eqq{second}, so that $\hat{\tilde{B}}^{(2)}_{k_{1},0}+\hat{\tilde{B}}^{(2)}_{k_{1},1}=\hat{\mathbf{1}}_{k_{1}}\oplus\left(\hat{\mathbf{1}}-\hat{\mathbf{1}}_{k_{1}}\right)=\hat{\mathbf{1}}$. This operation is trivial because, in the construction \eqq{nested} of the nested POVM, the operators $\hat{B}^{(2)}_{k_{(1,2)}}$ always act after the operator $\hat{B}^{(1)}_{k_{1}}$, so that the value of the former outside the support of the latter is completely irrelevant. In other words, completeness of the binary POVMs is not necessary for the definition of $\mathcal{N}^{(M)}$ as a proper POVM; it is sufficient to ask for \textit{weak completeness}, i.e., that $\mathcal{B}^{(u)}_{k_{(1,u-1)}}$ is complete on the support of the operator preceding it in the decomposition, $\hat{B}^{(u-1)}_{k_{(1,u-1)}}$. \\
Generalizing the previous discussion, at the $u$-th step we can define the elements of the $2^{u-1}$ possible POVMs $\mathcal{B}^{(u)}_{k_{(1,u-1)}}$ as 
   \begin{equation}\label{elements}
\hat{B}^{(u)}_{k_{(1,u)}}=\abs{\left(\sum_{k_{(u+1,u_{\tiny{F}})}}\hat{E}_{k_{(1,u)},k_{(u+1,u_{\tiny{F}})}}\right)^{1/2}\left(\hat{B}^{(1)}_{k_{1}}\right)^{-1/2}\cdots\left(\hat{B}^{(u-1)}_{k_{(1,u-1)}}\right)^{-1/2}}^{2}.
 \end{equation}
These elements are positive by construction and they satisfy the weak completeness relation $\hat{B}^{(u)}_{k_{(1,u-1),0}}+\hat{B}^{(u)}_{k_{(1,u-1),1}}=\hat{\mathbf{1}}_{k_{(1,u-1)}}$, which is sufficient to define the POVM $\mathcal{N}^{(M)}$. Hence we are left to show that, when combining the binary elements~\eqq{elements} as in~\eqq{nested}, the elements $\hat{F}_{k_{(1,u_{\tiny{F}})}}$ so constructed are equal to the original $\hat{E}_{k_{(1,u_{\tiny{F}})}}$. Indeed let us evaluate~\eqq{elements} for $u=u_{\tiny{F}}$, i.e., at the last step, noting that the sum contains only one term:
\begin{equation}\label{last}
 \hat{B}^{(u_{\tiny{F}})}_{k_{(1,u_{\tiny{F}})}}=\abs{\left(\hat{E}_{k_{(1,u_{\tiny{F}})}}\right)^{1/2}\left(\hat{B}^{(1)}_{k_{1}}\right)^{-1/2}\cdots\left(\hat{B}^{(u_{\tiny{F}}-1)}_{k_{(1,u_{\tiny{F}}-1)}}\right)^{-1/2}}^{2}.
\end{equation}
Let us then successively invert the outer square roots on the left-hand side of the equation exactly $u_{\tiny{F}}-1$ times, to obtain the relation 
\begin{equation} 
\hat{E}_{k_{(1,u_{\tiny{F}})}}=\left|\sqrt{\hat{B}^{(u_{\tiny{F}})}_{k_{(1,u_{\tiny{F}})}}} \cdots\sqrt{\hat{B}^{(1)}_{k_{1}}}\right|^{2}\equiv \hat{F}_{k_{(1,u_{\tiny{F}})}},
\end{equation} 
 which demonstrates that we can recover the initial POVM with the procedure outlined above. This completes the proof when $M$ is an exact power of $2$. \\
 If this is not the case, it means that $\log_{2}M$ is not an integer and it suffices to consider the nested decomposition for the next higher integer, i.e., set $u_{\tiny{F}}=\lceil\log_{2}M\rceil$, $\tilde{M}=2^{u_{\tiny{F}}}$. Let us then trivially expand the initial $M$-outcome POVM to a $\tilde{M}$-outcome one as  
\begin{equation} 
\mathcal{M}^{(\tilde{M})}=\mathcal{M}^{(M)}\cup\left\{\hat{E}_{k_{(1,u_{\tiny{F}})}}=0, \forall j^{(k)}>M-1\right\},
\end{equation} 
 by adding $\tilde{M}-M$ null elements. The nested decomposition $\mathcal{N}^{(\tilde{M})}$ equivalent to $\mathcal{M}^{(\tilde{M})}$ can be computed again by Eqs.~(\ref{nested},\ref{elements}) and it comprises $\tilde{M}-M$ null elements too. If we isolate these elements from the rest we obtain a decomposition \begin{equation}
  \mathcal{N}^{(\tilde{M})}=\mathcal{N}^{(M)}\cup\left\{\hat{F}_{k_{(1,u_{\tiny{F}})}}=0, \forall j^{(k)}>M-1\right\},
  \end{equation}  
  where $\mathcal{N}^{(M)}$ can be interpreted as a nested representation of the initial POVM $\mathcal{M}^{(M)}$.  
  \end{proof}

\section{Achieving the Holevo bound via a binary-tree-search decoding protocol}\label{sec:bisDec}
In this section we present a new decoding protocol to realize transmission of classical information on a quantum channel at asymptotically maximum capacity, achieving the Holevo bound, Theorem~\ref{hsw}, and thus the optimal communication rate, see Sec.~\ref{subsec:commQ}. At variance with previous proposals this scheme recovers the message bit by bit, making use of a nested series of ``yes-no'' measurements organized as in Sec.~\ref{sec:bisMeas} and thus determining which codeword was sent in $\log M$ steps, $M$ being the number of codewords, with an exponential advantage in the length of the codewords. We stress however that, being our individual binary measurements explicitly many-body operations, it is still not clear how such advantage could be translated in a decoding scheme which is efficient from the computational point of view, i.e., in terms of the number of quantum gates one has to apply to the received string of quantum information carriers. A similar problem arises also in the case of polar codes, see ~\cite{wildeHayden}, and it is caused by the lack of an explicit implementation of the ``{atomic}'' steps involved in the two protocols, i.e., the ``yes-no'' binary measurements for the present method and the Helstrom measurement for polar coding. Still, our method enlarges the number of decoding protocols that are known to be capacity-achieving, increasing the possibility of finding an implementation. In this respect it is also worth noticing that the proposed scheme exhibits the nontrivial advantage of gaining a bit of information at each step of the procedure, a feature which may be extremely appealing when dealing with faulty decoders, as it  allows  partial identification of the transmitted message even in the presence of subsequent detection failures. As in several previous works on the subject, in our derivation we heavily rely on the structure of typical projectors, Sec.~\ref{subsec:commQ}.

\subsection{Structure of the protocol}\label{bisProt}
 Let us now introduce our decoding protocol (Fig. \ref{figure1}) which, given a quantum codeword $\hat{\rho}_{\vecund{x}^{(\ell)}}$ extracted from a quantum code $\mathcal{C}$ of size $M=2^{NR}$ generated by sampling $N$-long sequences from the source $\mathcal{S}$, \eqq{qSource}, tries to identify it by using a binary-tree-search method. The measurement process comprises ${u}_{\mbox{\tiny{\tiny{F}}}}= NR$ nested detection events, each aimed to recover one bit of information from the transmitted signal. 
As a preliminary step, Bob assigns an ordering of the codewords $\mathcal{C}$ as in Sec.~\ref{sec:bisMeas}, identifying each of them with a unique string $\vecund{k}$ of ${u}_{\mbox{\tiny{\tiny{F}}}}$ bits, e.g., by providing a binary representation of their label so that $\hat{\rho}_{\vecund{x}^{(\ell)}}\equiv\hat{\rho}_{\vecund{x}^{(\vecund{k})}}\equiv\hat{\rho}_{\vecund{k}}\in\mathcal{C}$ for each $\ell=\ell^{(\vecund{k})}\in\{0,\dots,M-1\}$ and the probability of each codeword is
\begin{equation}\label{cwProb}
p_{\vecund{k}}=p_{\vecund{x}^{(\vecund{k})}}=\Pi_{n=1}^{N}p_{x_{n}}
\end{equation}
 In particular the first bit of the string $\vecund{k}$  identifies two distinct subsets of ${\mathcal{C}}$ containing each $M/2$ codewords: the subset 
    ${\mathcal{C}}^{(1)}_0$ formed by the codewords whose corresponding strings start with $k_1=0$, and the subset  ${\mathcal{C}}^{(1)}_{1}$ characterized by those for which instead $k_1=1$. The second bit of the string $\vecund{k}$  is then used to  further halve ${\mathcal{C}}^{(1)}_{0}$  and ${\mathcal{C}}^{(1)}_{1}$. Specifically 
for  $k_1=0,1$, ${\mathcal{C}}^{(1)}_{k_1}$ is split into  the sub-subsets ${\mathcal{C}}^{(2)}_{k_1,k_2=0}$ and  ${\mathcal{C}}^{(2)}_{k_1,k_2=1}$  which includes the $M/4$ codewords whose bits strings have $k_1$ as first bit and $k_2=0$ and $k_2=1$ as second bit, respectively.  Proceeding along the same line 
Bob  hence identifies a hierarchy  of subsets organized in ${u}_{\mbox{\tiny{\tiny{F}}}}$ sets, 
the $u$-th one being composed  by  $2^u$ disjoint subsets 
\begin{equation}
{\mathcal{C}}^{(u)}_{k_{(1,u)}}=\left\{\hat{\rho}_{k_{(1,u)},k_{(u+1,u_{\tiny{F}})}}\right\}_{k_{u+1},\cdots,k_{u_{\tiny{\tiny{F}}}}=0,1},
\end{equation} 
each containing $2^{{u}_{\mbox{\tiny{\tiny{F}}}}-u} = M/2^u$ codewords. By construction    for all $u\in\{ 1, \cdots, {u}_{\mbox{\tiny{\tiny{F}}}}\}$
they  fulfill the identities
\begin{align} 
&{\mathcal{C}}^{(u)}_{k_{(1,u-1)},0}\; \bigcap \;   {\mathcal{C}}^{(u)}_{k_{(1,u-1)},1} \; =\; \emptyset \;, \label{inter} \\
&{\mathcal{C}}^{(u)}_{k_{(1,u-1)},0}\; \bigcup \; {\mathcal{C}}^{(u)}_{k_{(1,u-1)},1}\; = \; {\mathcal{C}}^{(u-1)}_{k_{(1,u-1)}} \;, \label{complete1} 
\end{align} 
and the completeness relation 
\begin{equation} 
{\mathcal{C}} = \bigcup_{k_1, k_2, \cdots, k_{u_{\tiny{F}}} \in \{0,1\}} {\mathcal{C}}^{(u)}_{k_{(1,u_{\tiny{F}})}}\;.
\end{equation}

\begin{figure}[t!]
	\center
	\includegraphics[trim={0 0cm 0 0cm},clip,scale=.43]{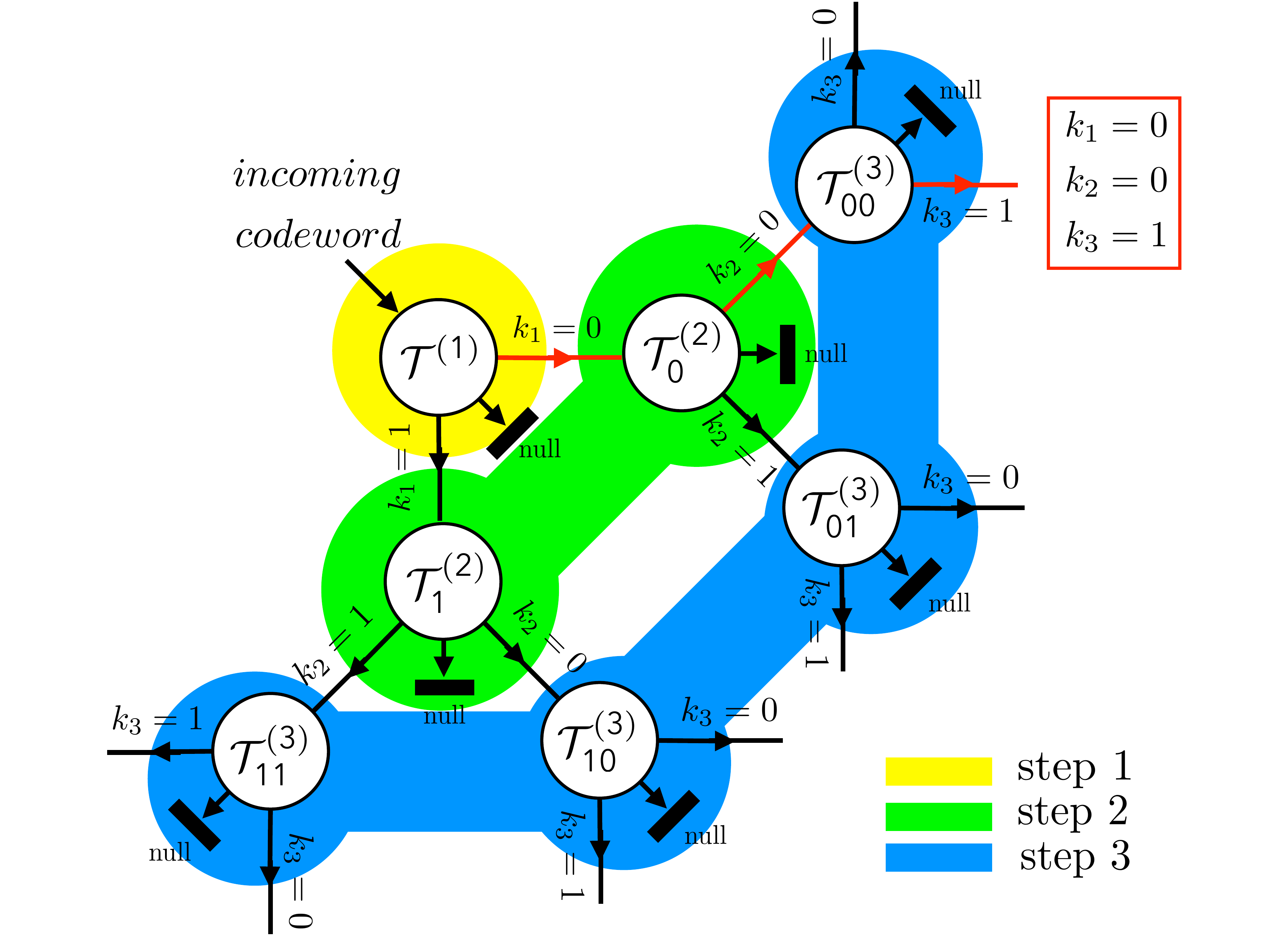}
	\caption{Schematic representation of the binary-tree-search decoding procedure. It consists in a sequence of adaptive measurements which are performed in series of 
	 ${u}_{\mbox{\tiny{\tiny{F}}}}$ concatenated steps, each being characterized by a POVM (the white circles) which admits three possible outcomes: two being associated respectively to 
	 the identification of the corresponding bit as  $0$ or $1$, and one, the {\it null} outcome, associated with the event where no decision can be made on the value of the bit.  
	The POVM to be performed at the $u$-th step depends upon the value of the bit obtained at the previous ones: for instance at the step number 2 Bob will perform
	either the POVM $\mathcal{T}_0^{(2)}$ or the POVM $\mathcal{T}_1^{(2)}$ depending on the value of $k_1$ he has obtained at the first step of the procedure, while at the step
	number $3$ Bob will perform  the POVMs $\mathcal{T}_{00}^{(3)}$, $\mathcal{T}_{01}^{(2)}$, $\mathcal{T}_{10}^{(3)}$, or $\mathcal{T}_{11}^{(2)}$ depending on the values of $k_1$ and $k_2$ obtained in the previous two steps.   The figure refers to the case of  ${u}_{\mbox{\tiny{\tiny{F}}}}=3$, the redline representing the trajectory which yields Bob to assign the binary string $\vecund{k}=(0,0,1)$ to the received codeword.   }
	\label{figure1}
\end{figure}
Note that, as in Sec.~\ref{sec:commQ}, we overload the notation $\mathcal{C}^{(u)}_{k_{(1,u-1)},0}$ to indicate both a set of classical indices $\vecund{k}$ and their quantum counterparts $\hat{\rho}_{\vecund{k}}$.

To recover which codeword Alice is transmitting,  Bob performs a nested sequence of ${u}_{\mbox{\tiny{\tiny{F}}}}$ ternary-outcome measurements organized as shown in Fig.~\ref{figure1}. 
The first of these measures is aimed to determine the value of the first bit $k_1$ of the bit string associated with the transmitted codeword, i.e., it allows Bob to determine whether the codeword is in the subset  ${\mathcal{C}}^{(1)}_0$ or in the subset  ${\mathcal{C}}^{(1)}_1$. The exact form of the measurement will be assigned in the following sections where three alternative examples of the scheme will be discussed in details: for the moment it is sufficient to observe that it can be described, in the same spirit of Sec.~\ref{sec:bisMeas}, by a POVM $\mathcal{T}^{(1)}$  
 of  elements $\hat{T}_{0}^{(1)}$, $\hat{T}_{1}^{(1)}$ associated respectively to the outcomes $k_1=0$ and $k_1=1$, plus an error term $\hat{T}_{err}^{(1)} = \hat{\mathbf{1}} -\hat{T}_{0}^{(1)} -\hat{T}_{1}^{(1)}$ associated with the case in which no decision can be made on the value of $k_1$: if this event occurs Bob simply declares failure of the decoding procedure and stops the protocol (in the first implementation of the scheme we discuss  in Sec.~\ref{EX0}  this element is not present, which is equivalent to setting $\hat{T}_{err}^{(1)}=0$ and establishes exactly the binary-tree structure of Sec.~\ref{sec:bisMeas}). 
 Once $k_1$ has been determined, Bob proceeds with the second step of the protocol aimed
 to recover the value of the bit $k_2$ of the transmitted codeword. To this purpose, conditioned on the value of $k_1=0,1$ obtained in the previous step, Bob performs now a new POVM $\mathcal{T}_{k_1}^{(2)}$  aimed to determine whether the received codeword belongs to ${\mathcal{C}}^{(2)}_{k_1,0}$ or to ${\mathcal{C}}^{(2)}_{k_1,1}$.
Also  $\mathcal{T}_{k_1}^{(2)}$ is  characterized by three elements: $\hat{T}_{k_1,0}^{(2)}$, $\hat{T}_{k_1,1}^{(2)}$ corresponding to the cases $k_2=0$ and $k_2=1$ respectively, and  $\hat{T}_{k_1,err}^{(2)} = \hat{\mathbf{1}} -\hat{T}_{k_1,0}^{(2)} -\hat{T}_{k_1,1}^{(2)}$ corresponding to the failure event. 
  The procedure iterates  until Bob either gets an error event or recovers all the ${u}_{\mbox{\tiny{\tiny{F}}}}$ bits which identify the transmitted codeword.   Specifically, assuming that no failures have occurred in the first $u-1$ steps yielding the values $k_1$, $k_2$, $\cdots$, $k_{u-1}$ for the associated bits,  at the $u$-th step Bob 
  performs on the system a POVM 
  \begin{equation}\label{tOutcome}
  \mathcal{T}_{k_{(1,u-1)}}^{(u)}=\left\{\hat{T}_{k_{(1,u-1)}, 0}^{(u)},\hat{T}_{k_{(1,u-1)}, 1}^{(u)},\hat{T}_{k_{(1,u-1)}, err}^{(u)}\right\}
  \end{equation}
   to decide whether the received codeword belongs
  to the set  ${\mathcal{C}}_{k_{(1,u-1)}, 0}^{(u)}$ or  ${\mathcal{C}}_{k_{(1,u-1)}, 1}^{(u)}$. \\
The construction above gives rise to a nested POVM similar to \eqq{nested}, though with an additional error element:
\begin{equation}\label{nestedT}
\begin{aligned}
\tilde{\mathcal{N}}^{(M)}&=\left\{\mathcal{T}^{(u_{\tiny{F}})}_{k_{(1,u_{\tiny{F}}-1)}}\right\}_{k_{1},\cdots,k_{u_{\tiny{F}}-1}=0,1}\circ\cdots\circ\left\{\mathcal{T}^{(1)}\right\}\\
&=\left\{\hat{G}_{\vecund{k}}=\left|\sqrt{\hat{T}^{(u_{\tiny{F}})}_{k_{(1,u_{\tiny{F}})}}} \cdots\sqrt{\hat{T}^{(1)}_{k_{1}}}\right|^{2}\right\}_{\vecund{k}\in\mathcal{C}}\bigcup\left\{\hat{G}_{err}=\hat{\mathbf{1}}-\sum_{\vecund{k}}\hat{G}_{\vecund{k}\in\mathcal{C}}\right\}.
\end{aligned}
\end{equation}
The probability of correctly recovering a given string of bits $\vecund{k}$ when measuring an input state $\hat{\rho}_{\vecund{k}'}\in \mathcal{C}$ can now be expressed as in \eqq{outcomeProb}:   
\begin{equation}  \label{fff111} 
  p_{\vecund{k}| \vecund{k}'} = \tr{\hat{G}_{\vecund{k}}\hat{\rho}_{\vecund{k}'}}.
  \end{equation} 
   Hence the success probability of the procedure, complementary to the error probability \eqq{pErrGen}, is 
  \begin{equation}    \label{psucc}
   P_{succ}(\mathcal{C})=\oneover{M}\sum_{\vecund{k}\in\mathcal{C}}  p_{\vecund{k}| \vecund{k}}=\oneover{M}\sum_{\vecund{k}\in\mathcal{C}}\tr{\hat{G}_{\vecund{k}}\hat{\rho}_{\vecund{k}}}. 
  \end{equation}

 \subsection{Achievability proof}  \label{succprobA}
 In this subsection we are going to show that, as long as the rate is $R\leq\max_{\mathcal{S}}\chi(\mathcal{S})$ as detailed in Theorem~\ref{hsw}, 
 it is possible to assign the ternary-outcome POVMs, \eqq{tOutcome} in such a way that, in the limit of large $N$, the success probability \eqq{psucc} of the nested decoding protocol \eqq{nestedT}, converges asymptotically to 1 when averaged over all possible codes generated by the output source $\mathcal{S}$, i.e.,
   \begin{equation}\label{dopolemma12323}
 \lim_{N \rightarrow \infty} \ave{P_{succ}(\mathcal{C})}_{\mathcal{S}}=1,
    \end{equation}
    where we recall that the average $\ave{\cdot}$ is with respect to the probability distribution of codes, \eqq{randomCCoding}.
   Accordingly the corresponding average error probability asymptotically nullifies, 
   proving hence that binary-tree-search decoding procedures can be used to  saturate the Holevo bound. 
In order to achieve this goal we start by presenting a sufficient condition  on $\hat{T}^{(u)}_{k_{(1,u)}}$ which, if fulfilled, would yield the limit \eqq{dopolemma12323} independently of the value of $R$, see Theorem~\ref{MainTh}. Subsequently we show that  for all
rates $R$  respecting the Holevo Bound, Theorem~\ref{hsw}, we can indeed fulfill such sufficient condition. This is done by presenting three independent choices of  the operators $\hat{T}^{(u)}_{k_{(1,u)}}$, corresponding to three different ways of constructing the binary-tree-search scheme: via orthogonal projections (see Sec.~\ref{EX0}); via SRM detections (see Sec.~\ref{EX00}); and via SM detections (see Sec.~\ref{EX1}). 

\begin{theorem}\label{MainTh}(Sufficient Capacity-Achieving Condition for Nested Decoding) 
For  $N$ integer let $\mathcal{C}$ be a quantum code formed by $M=2^{NR}$ quantum codewords of length $N$ sampled from the source $\mathcal{S}$, \eqq{qSource}, 
and a binary-tree-search protocol defined by the nested POVM $\tilde{\mathcal{N}}^{(M)}$, \eqq{nestedT} characterized by the
ternary-outcome POVMs $\mathcal{T}^{(u)}_{k_{(1,u-1)}}$, \eqq{tOutcome}.
The corresponding success probability, \eqq{psucc}, converges to one as in \eqq{dopolemma12323} on average over all codes according to \eqq{randomCCoding} if, for all $k_{u}=0,1$ and $u\in \{ 1, \cdots, u_F\}$,  one of the following conditions is fulfilled
\item{i)} 
\begin{align}
\ave{\operatorname{Tr}\left[\hat{T}^{(u)}_{k_{(1,u)}}\hat{\rho}_{k_{(1,u)},k_{(u+1,u_{\tiny{F}})}}\right]}_{\mathcal{S}}\geq 1-\epsilon(M);\label{IMPOIMPO}
\end{align}
\item{ii)} 
\begin{align}
\ave{\operatorname{Tr}\left[\hat{T}^{(u)}_{k_{(1,u)}}\hat{\Pi}\hat{\rho}_{k_{(1,u)},k_{(u+1,u_{\tiny{F}})}}\hat{\Pi}\right]}_{\mathcal{S}}\geq 1-\epsilon(M),\label{IMPOIMPO11}
\end{align}
with $\epsilon(M)>0$ being a function that decreases asymptotically to zero faster than $N^{-2}$ as $N\rightarrow\infty$ and $\hat{\Pi}$ being the projector on the typical subspace of the average state of the source, \eqq{typProj}.
\end{theorem}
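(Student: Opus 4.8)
The plan is to bound the average failure probability $1-\ave{P_{succ}(\mathcal{C})}_{\mathcal{S}}$ by controlling how far the sequentially measured state drifts from the transmitted codeword. Fix a codeword $\hat{\rho}_{\vecund{k}}\in\mathcal{C}$ and write $\hat{A}^{(u)}=\sqrt{\hat{T}^{(u)}_{k_{(1,u)}}}$ for the correct-bit operators along its branch, and
\begin{equation}
\hat{\sigma}_{\vecund{k}}=\hat{A}^{(u_{\tiny{F}})}\cdots\hat{A}^{(1)}\,\hat{\rho}_{\vecund{k}}\,\hat{A}^{(1)}\cdots\hat{A}^{(u_{\tiny{F}})}
\end{equation}
for the unnormalized post-measurement state, so that the per-codeword success probability appearing in \eqref{psucc} is exactly $\tr{\hat{G}_{\vecund{k}}\hat{\rho}_{\vecund{k}}}=\tr{\hat{\sigma}_{\vecund{k}}}$. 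Since $\tr{\hat{\rho}_{\vecund{k}}}=1$ and each $\hat{T}^{(u)}$ is a genuine POVM element, the trace drop is controlled by the trace-distance, $1-\tr{\hat{\sigma}_{\vecund{k}}}\leq 2\,\trd{\hat{\rho}_{\vecund{k}}}{\hat{\sigma}_{\vecund{k}}}$. Hence it suffices to prove that $\oneover{M}\sum_{\vecund{k}}\ave{\trd{\hat{\rho}_{\vecund{k}}}{\hat{\sigma}_{\vecund{k}}}}_{\mathcal{S}}\to0$.

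The crucial step is a telescoping that keeps the undisturbed codeword fixed as the ``core''. I would introduce the hybrid states $\hat{\xi}_{j}=\hat{A}^{(u_{\tiny{F}})}\cdots\hat{A}^{(j+1)}\,\hat{\rho}_{\vecund{k}}\,\hat{A}^{(j+1)}\cdots\hat{A}^{(u_{\tiny{F}})}$, with $\hat{\xi}_{u_{\tiny{F}}}=\hat{\rho}_{\vecund{k}}$ and $\hat{\xi}_{0}=\hat{\sigma}_{\vecund{k}}$. Consecutive hybrids differ only by replacing the core $\hat{\rho}_{\vecund{k}}$ with $\hat{A}^{(j)}\hat{\rho}_{\vecund{k}}\hat{A}^{(j)}$ underneath the same outer operators $\hat{A}^{(j+1)},\dots,\hat{A}^{(u_{\tiny{F}})}$, so repeated application of the contractivity Lemma~\ref{contra} gives $\trd{\hat{\xi}_{j-1}}{\hat{\xi}_{j}}\leq\trd{\hat{A}^{(j)}\hat{\rho}_{\vecund{k}}\hat{A}^{(j)}}{\hat{\rho}_{\vecund{k}}}$. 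The triangle inequality \eqref{triangTD} then yields the linear bound
\begin{equation}
\trd{\hat{\rho}_{\vecund{k}}}{\hat{\sigma}_{\vecund{k}}}\leq\sum_{j=1}^{u_{\tiny{F}}}\trd{\hat{A}^{(j)}\hat{\rho}_{\vecund{k}}\hat{A}^{(j)}}{\hat{\rho}_{\vecund{k}}}.
\end{equation}
Each summand is exactly a gentle-measurement term: condition (i), \eqref{IMPOIMPO}, states $\ave{\tr{\hat{T}^{(u)}_{k_{(1,u)}}\hat{\rho}_{\vecund{k}}}}_{\mathcal{S}}\geq1-\epsilon(M)$, so Lemma~\ref{gentop} bounds $\ave{\trd{\hat{A}^{(j)}\hat{\rho}_{\vecund{k}}\hat{A}^{(j)}}{\hat{\rho}_{\vecund{k}}}}_{\mathcal{S}}\leq\sqrt{\epsilon(M)}$. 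Summing the $u_{\tiny{F}}=NR$ steps and averaging the reduction above gives
\begin{equation}
1-\ave{P_{succ}(\mathcal{C})}_{\mathcal{S}}\leq\oneover{M}\sum_{\vecund{k}}2\,\ave{\trd{\hat{\rho}_{\vecund{k}}}{\hat{\sigma}_{\vecund{k}}}}_{\mathcal{S}}\leq 2NR\sqrt{\epsilon(M)},
\end{equation}
which vanishes precisely because $\epsilon(M)$ decays faster than $N^{-2}$, establishing \eqref{dopolemma12323}.

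The main obstacle, and the reason this telescoping is chosen, is avoiding an exponential blow-up. The naive comparison of consecutive measured states $\trd{\hat{\sigma}_{u-1}}{\hat{\sigma}_{u}}$ forces one to re-expand $\trd{\hat{\sigma}_{u-1}}{\hat{\rho}_{\vecund{k}}}$ at every step; through contractivity this produces a recursion carrying a factor of order $2$ per step, so the accumulated error grows like $3^{\,u_{\tiny{F}}}$ and would require $\epsilon(M)$ to decay \emph{exponentially} in $N$. Freezing $\hat{\rho}_{\vecund{k}}$ as the core and absorbing all outer operators into a single use of Lemma~\ref{contra} makes the $u_{\tiny{F}}$ contributions independent, each of size $O(\sqrt{\epsilon})$; this is exactly what makes the weak hypothesis $\epsilon=o(N^{-2})$ sufficient, and getting this bookkeeping right is the heart of the argument.

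Finally, the case (ii) governed by \eqref{IMPOIMPO11} follows by running the identical telescoping on the typical-subspace–projected codeword $\hat{\Pi}\hat{\rho}_{\vecund{k}}\hat{\Pi}$ and appending one extra link that compares $\hat{\Pi}\hat{\rho}_{\vecund{k}}\hat{\Pi}$ with $\hat{\rho}_{\vecund{k}}$. By i.i.d.\ sampling one has $\ave{\hat{\rho}_{\vecund{k}}}_{\mathcal{S}}=\hat{\rho}_{\mathcal{S}}^{\otimes N}$, so $\ave{\tr{\hat{\Pi}\hat{\rho}_{\vecund{k}}}}_{\mathcal{S}}=\tr{\hat{\Pi}\hat{\rho}_{\mathcal{S}}^{\otimes N}}\geq1-\epsilon$ by the typicality property \eqref{totaleuno}; Lemma~\ref{gentop} then gives $\ave{\trd{\hat{\Pi}\hat{\rho}_{\vecund{k}}\hat{\Pi}}{\hat{\rho}_{\vecund{k}}}}_{\mathcal{S}}\leq\sqrt{\epsilon}$, and Lemma~\ref{appclose} (applied to $\hat{G}_{\vecund{k}}\leq\hat{\mathbf{1}}$) transfers the success probability from $\hat{\Pi}\hat{\rho}_{\vecund{k}}\hat{\Pi}$ back to $\hat{\rho}_{\vecund{k}}$ at the cost of a further $O(\sqrt{\epsilon})$, again without disturbing the overall linear scaling in $NR$. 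I would then conclude that the averaged failure probability is $O\!\left(NR\sqrt{\epsilon(M)}\right)$ in both cases, yielding the claim.
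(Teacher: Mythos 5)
Your proof is correct and follows essentially the same route as the paper: both reduce the multi-step disturbance $D(\hat{\rho}_{\vecund{k}},\hat{\sigma}_{\vecund{k}})$ to a sum of $u_F$ single-step gentle-measurement terms via the triangle inequality \eqref{triangTD} plus the contractivity Lemma~\ref{contra}, then invoke Lemma~\ref{gentop} under condition \eqref{IMPOIMPO}, and handle case (ii) by combining Lemma~\ref{appclose} with the typicality bound \eqref{totaleuno}. The only differences are cosmetic: you organize the telescoping by freezing $\hat{\rho}_{\vecund{k}}$ as the core and peeling inner operators (where the paper peels the outermost operator at each step), and you bound $1-\operatorname{Tr}[\hat{\sigma}_{\vecund{k}}]$ directly by $2D(\hat{\rho}_{\vecund{k}},\hat{\sigma}_{\vecund{k}})$ rather than extracting the last factor with Lemma~\ref{appclose}; both yield the same $O(NR\sqrt{\epsilon(M)})$ bound.
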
 

\begin{proof} 
We start by directly proving  that Eq.~(\ref{IMPOIMPO}) is a sufficient condition for Eq.~(\ref{dopolemma12323}), part i) of the theorem.  Part ii) of the theorem is then obtained by showing that 
Eq.~(\ref{IMPOIMPO11}) implies Eq.~(\ref{IMPOIMPO}). 

{\it Part i)}: The success probability that an element $\hat{\rho}_{\vecund{k}} \in \mathcal{C}$ will be correctly decoded by the nested POVM can be computed as in \eqq{fff111} by setting $\vecund{k}'=\vecund{k}$.
  To put a bound on the average value of this quantity over the possible codes, we 
 observe that
  \begin{align}
   \ave{p_{\vecund{k}|\vecund{k}}}_{\mathcal{S}}
     &= 
    \ave{\tr{    {\hat{T}_{{u}_{\mbox{\tiny{\tiny{F}}}}}}^{2}  \; \hat{T}_{{u}_{\mbox{\tiny{\tiny{F}}}}-1}
  \; \cdots  \; \hat{T}_{{1}}\; 
    \hat{\rho}_{\vecund{k}} \; \hat{T}_{{1}}\; 
 \cdots\;\hat{T}_{{u}_{\mbox{\tiny{\tiny{F}}}}-1}
    }}_{\cal S}  \nonumber \\ 
    &\geq\ave{\tr{   \hat{T}^2_{{u}_{\mbox{\tiny{\tiny{F}}}}}    \hat{\rho}_{\vecund{k}} }-2\trd{\hat{T}_{{u}_{\mbox{\tiny{\tiny{F}}}}-1}\ldots \hat{T}_1\hat{\rho}_{\vecund{k} }\hat{T}_1\ldots 
\hat{T}_{{u}_{\mbox{\tiny{\tiny{F}}}}-1}}{\hat{\rho}_{\vecund{k}}}}_{\cal S}, \label{EQU1}
    \end{align}
    where we introduced the compact notation $\hat{T}_u=\sqrt{\hat{T}^{(u)}_{k_{(1,u)}}}$ and applied Lemma~\ref{appclose} with 
  $\hat{E}=\hat{T}^2_{{u}_{\mbox{\tiny{\tiny{F}}}}} $,
 $\hat{\rho}=  \hat{T}_{{u}_{\mbox{\tiny{\tiny{F}}}}-1}
  \; \cdots  \; \hat{T}_{{1}}\; 
    \hat{\rho}_{\vecund{k}} \; \hat{T}_{{1}}\; 
 \cdots\;\hat{T}_{{u}_{\mbox{\tiny{\tiny{F}}}}-1}$, 
    and $\hat{\sigma}= \hat{\rho}_{\vecund{k}}$.
        By use of the triangular inequality, \eqq{triangTD}, we also observe that 
    \begin{eqnarray}
  \label{ufmenouno}   
  && \trd{\hat{T}_{{u}_{\mbox{\tiny{\tiny{F}}}}-1}\ldots \hat{T}_1\hat{\rho}_{\vecund{k} }\hat{T}_1\ldots 
\hat{T}_{{u}_{\mbox{\tiny{\tiny{F}}}}-1}}{\hat{\rho}_{\vecund{k} }}
\nonumber\\ 
&& \qquad  \leq \trd{\hat{T}_{{u}_{\mbox{\tiny{\tiny{F}}}}-1}\ldots \hat{T}_1\hat{\rho}_{\vecund{k} }\hat{T}_1\ldots 
\hat{T}_{{u}_{\mbox{\tiny{\tiny{F}}}}-1}}{\hat{T}_{{u}_{\mbox{\tiny{\tiny{F}}}}-1}\hat{\rho}_{\vecund{k} }\hat{T}_{{u}_{\mbox{\tiny{\tiny{F}}}}-1}} + \trd{\hat{T}_{{u}_{\mbox{\tiny{\tiny{F}}}}-1}\hat{\rho}_{\vecund{k} }\hat{T}_{{u}_{\mbox{\tiny{\tiny{F}}}}-1}}{\hat{\rho}_{\vecund{k} }}\nonumber \\
&&\qquad  \leq \trd{\hat{T}_{{u}_{\mbox{\tiny{\tiny{F}}}}-2}\ldots \hat{T}_1\hat{\rho}_{\vecund{k} }\hat{T}_1\ldots 
\hat{T}_{{u}_{\mbox{\tiny{\tiny{F}}}}-2}}{\hat{T}_{{u}_{\mbox{\tiny{\tiny{F}}}}-2}\hat{\rho}_{\vecund{k} }\hat{T}_{{u}_{\mbox{\tiny{\tiny{F}}}}-2}} + \trd{\hat{T}_{{u}_{\mbox{\tiny{\tiny{F}}}}-1}\hat{\rho}_{\vecund{k} }\hat{T}_{{u}_{\mbox{\tiny{\tiny{F}}}}-2}}{\hat{\rho}_{\vecund{k} }} \nonumber \\
&&\qquad  \leq 
\sum_{u=1}^{{u}_{\mbox{\tiny{\tiny{F}}}}-1}\trd{\hat{T}_u\hat{\rho}_{\vecund{k}}\hat{T}_u}{\hat{\rho}_{\vecund{k}}},
    \end{eqnarray}
 where the second inequality follows from Lemma \ref{contra} while the third one by direct iteration of the previous passages.  
Replaced into Eq.~(\ref{EQU1})   this finally yields 
    \begin{align}\label{dopolemma}
  \ave{p_{\vecund{k}|\vecund{k}}}_{\mathcal{S}}
\geq\ave{\tr{   \hat{T}^2_{{u}_{\mbox{\tiny{\tiny{F}}}}}    \hat{\rho}_{\vecund{k}} }}_{\cal S}
 -2 \sum_{u=1}^{{u}_{\mbox{\tiny{\tiny{F}}}}-1}\ave{\trd{\hat{T}_u\hat{\rho}_{\vecund{k}}\hat{T}_u}{\hat{\rho}_{\vecund{k}}}}_{\cal S} .
    \end{align}
Assume now that Eq.~(\ref{IMPOIMPO})  holds. Accordingly, for all $k_{u}=0,1$ and $u\in\{ 1,\cdots, {u}_{\mbox{\tiny{\tiny{F}}}}\}$ we have 
\begin{equation} 
\ave{\tr{\hat{T}_u^2{\hat{\rho}}_{\vecund{k}}}}_{\cal S}\geq 1- \epsilon(M), 
\end{equation}
with $\epsilon(M)$ a positive function which goes to zero faster than $N^{-2}$. 
Then thanks to Lemma~\ref{gentop} we can write 
  \begin{align}\label{dopolemma1}
 \ave{p_{\vecund{k}|\vecund{k}}}_{\mathcal{S}}\geq 1- \epsilon(M) -2 n R \sqrt{\epsilon(M)} ,
    \end{align}
which forces  $\ave{p_{\vecund{k}|\vecund{k}}}_{\mathcal{S}}$  to converge to 1 as $N\rightarrow \infty$. This shows that   Eq.~(\ref{IMPOIMPO}) is indeed a  sufficient condition for \eqq{dopolemma12323}. 

{\it Part ii):} To prove  that  Eq.~(\ref{IMPOIMPO11}) is a sufficient condition for \eqq{dopolemma12323} we invoke 
 Lemma \ref{appclose} with $\hat{E}=\hat{T}_{u}^{2}$, $\hat{\rho}=\hat{\rho}_{\vecund{k}}$ 
  and $\hat{\sigma}=\hat{\Pi}\hat{\rho}_{\vecund{k}}\hat{\Pi}$ obtaining the following inequality 
  \begin{equation} \label{misuraSmoothed}
 \ave{\tr{\hat{T}_u^2\hat{\rho}_{\vecund{k}}}}_{\cal S}\geq\ave{\tr{\hat{T}_u^2 \hat{\Pi} \hat{\rho}_{\vecund{k}}\hat{\Pi}} -{2D\left(\hat{\rho}_{\vecund{k}},\hat{\Pi}\hat{\rho}_{\vecund{k}}\hat{\Pi}\right)}}_{\cal S}.
  \end{equation}
From \eqq{totaleuno} we also know that 
        for $n$ sufficiently large and $\epsilon_1=O(e^{-N})$ one has 
     \begin{equation}
       \ave{\tr{\hat{\Pi}\hat{\rho}_{\vecund{k} }}}_{\cal S}=\tr{\hat{\Pi}\ave{\hat{\rho}_{\vecund{k} }}_{\cal S}}=\tr{\hat{\Pi}\hat{\rho}^{\otimes N}}\geq 1-\epsilon_1,\label{genttotaluno}
     \end{equation}
     where we used the fact that the average over the codes of $\hat{\rho}_{\vecund{k}}$ corresponds to the average over all possible $N$-long codewords $\vecund{x}\in\mathcal{X}^{N}$ with respect to their joint probability distribution, \eqq{cwProb}, i.e., 
     \begin{equation}\label{aveTrans}
       \ave{\hat{\rho}_{\vecund{k}}}_{\cal S} =  \sum_{\vecund{x}\in\mathcal{X}^{N}}p_{\vecund{x}}\hat{\rho}_{\vecund{x}}=\hat{\rho}^{\otimes 
       N}.
     \end{equation}
     Accordingly, via Lemma \ref{gentop} we can conclude that 
\begin{equation} 
 \ave{D\left(\hat{\rho}_{\vecund{k}},\hat{\Pi}\hat{\rho}_{\vecund{k}}\hat{\Pi}\right)}_{\cal S}\leq\sqrt{\epsilon_1},\label{genttotaldue}
     \end{equation}
which inserted in  Eq.~(\ref{misuraSmoothed}) finally yields  
  \begin{equation}
    \ave{\tr{\hat{T}_u^2\hat{\rho}_{\vecund{k}}}}_{\cal S}\geq\ave{\tr{\hat{T}_u^2 \hat{\Pi} \hat{\rho}_{\vecund{k}}\hat{\Pi}}}_{\cal S} -2\sqrt{\epsilon_1}.
   \label{smoonosmoo}
  \end{equation}
The latter inequality implies \eqq{IMPOIMPO} thanks to \eqq{IMPOIMPO11} and hence also Eq.~(\ref{dopolemma12323}), via part i) of the theorem.
\end{proof}

 Thanks to Theorem~\ref{MainTh} we can now prove that specific binary-tree-search protocols allow one to attain the Holevo bound, by 
 showing that, for all rates $R$ respecting the Holevo bound, it is possible to identify a nested POVM $\tilde{\mathcal{N}}^{(M)}$ fulfilling Eq.~(\ref{IMPOIMPO11}). 
 Ideally one way of building the POVMs  $\mathcal{T}_{k_{(1,u-1)}}^{(u)}$ which define the binary-tree-search decoding procedure, would be to
  identify its elements $\hat{T}_{k_{(1,u-1)}, 0}^{(u)}$, $\hat{T}_{k_{(1,u-1)}, 1}^{(u)}$ with  the projectors on the subspaces spanned by the codewords of the sets $\mathcal{C}^{(u)}_{k_{(1,u-1)}, 0}$ and $\mathcal{C}^{(u)}_{k_{(1,u-1)}, 1}$ respectively. However this is not possible due to the fact that such spaces are in general not orthogonal, although we expect conditionally-typical spaces of different codewords of the source 
  to be disjoint in the long $N$ limit: some kind of regularization is hence necessary. 
 In the following we shall present three alternative, yet asymptotically equivalent,  ways to realize this:  
the first one makes use of orthogonal projections on subspaces identified by treating  asymmetrically the set $\mathcal{C}^{(u)}_{k_{(1,u-1)}, 0}$ and $\mathcal{C}^{(u)}_{k_{(1,u-1)}, 1}$, 
the second one is based on the SRM construction, \eqq{srm}, and finally the third makes use of the SM, Eqs.~(\ref{seqSucEle},\ref{seqErrEle}). 

 \subsubsection{Method  1: orthogonal projections}\label{EX0}
Consider the set $\mathcal{C}^{(u)}_{k_{(1,u-1)}, 0}$. To each of its codewords $\hat{\rho}_{\vecund{k}}$  we can associate a conditionally-typical space, \eqq{ctypSpace}, 
\begin{equation}
\hil_{\delta,N}\left(\mathcal{J}|\vecund{x}^{(\vecund{k})}\right)\equiv\hil_{\delta,N}\left(\mathcal{J}|\vecund{k}\right),
\end{equation}  
and a corresponding projector, \eqq{ctypProj}, 
\begin{equation}
\hat{\Pi}_{\vecund{x}^{(\vecund{k})}}\equiv\hat{\Pi}_{\vecund{k}},
\end{equation}
 along the lines detailed in Sec.~\ref{subsec:commQ}. Next we construct the space ${\cal H}^{(u)}_{k_{(1,u-1)}, 0}$ spanned by the vectors  which can be written as a direct sum of the elements of the conditionally-typical spaces of all the codewords in $\mathcal{C}^{(u)}_{k_{(1,u-1)}, 0}$, i.e., 
\begin{equation} 
{\cal H}^{(u)}_{k_{(1,u-1)}, 0} = \bigoplus_{\vecund{k} \in \mathcal{C}^{(u)}_{k_{(1,u-1)}, 0}} \hil_{\delta,N}\left(\mathcal{J}|\vecund{k}\right).
\end{equation}  
 By construction it follows that each of the conditionally-typical spaces associated to  $\mathcal{C}^{(u)}_{k_{(1,u-1)}, 0}$ are proper
 subspaces of ${\cal H}^{(u)}_{k_{(1,u-1)}, 0}$. Accordingly, indicating with $\hat{\Pi}^{(u)}_{k_{(1,u-1)}, 0}$ the projector on ${\cal H}^{(u)}_{k_{(1,u-1)}, 0}$
 we have that 
 \begin{equation} \label{NEWIMPO} 
 \hat{\Pi}^{(u)}_{k_{(1,u-1)}, 0} \geq \hat{\Pi}_{\vecund{k}},
 \end{equation} 
for all $\vecund{k} \in \mathcal{C}^{(u)}_{k_{(1,u-1)}, 0}$. Also, due to the partial overlapping among the $\hil_{\delta,N}\left(\mathcal{J}|\vecund{k}\right)$ of $\mathcal{C}^{(u)}_{k_{(1,u-1)}, 0}$
the sum of  the associated $\hat{\Pi}_{\vecund{k}}$  will in general be larger than $\hat{\Pi}^{(u)}_{k_{(1,u-1)}, 0}$, i.e., 
\begin{equation} 
\hat{\Pi}^{(u)}_{k_{(1,u-1)}, 0} \leq  \sum_{\vecund{k} \in \mathcal{C}^{(u)}_{k_{(1,u-1)}, 0}}  \hat{\Pi}_{\vecund{k}}. \label{NEWIMPO1} 
 \end{equation} 
Let us then define the orthogonal-projections method for the binary-tree-search POVM by identifying $\hat{T}_{k_{(1,u-1)}, 0}^{(u)}$ with $\hat{\Pi}^{(u)}_{k_{(1,u-1)}, 0}$ and 
$\hat{T}_{k_{(1,u-1)}, 1}^{(u)}$ with its complementary counterpart, i.e.
\begin{align}
\hat{T}_{k_{(1,u-1)}, 0}^{(u)} &= \hat{\Pi}^{(u)}_{k_{(1,u-1)}, 0} ,\label{orthOne} \\
\hat{T}_{k_{(1,u-1)}, 1}^{(u)} &=  \hat{\Xi}^{(u)}_{k_{(1,u-1)}, 0} = \hat{\mathbf 1} - \hat{\Pi}^{(u)}_{k_{(1,u-1)}, 0}. \label{orthTwo}
\end{align} 
A couple of remarks are mandatory:
\begin{enumerate}
\item Notice that  $\hat{T}_{k_{(1,u-1)}, 1}^{(u)}$ does  not coincide with the projector $\hat{\Pi}^{(u)}_{k_{(1,u-1)}, 1}$ on the subspace ${\cal H}^{(u)}_{k_{(1,u-1)}, 1}$
formed by the direct sum of the typical subspaces   $\hil_{\delta,N}\left(\mathcal{J}|\vecund{k}\right)$ associated with $\mathcal{C}^{(u)}_{k_{(1,u-1)}, 1}$.
Notice also that, due to the partial overlapping of the typical subspaces of different codewords, in general we can neither  establish
 an inequality similar to Eq.~(\ref{NEWIMPO}) which links  $\hat{T}_{k_{(1,u-1)}, 1}^{(u)}$  and the $ \hat{\Pi}_{\vecund{k}}$ of ${\cal H}^{(u)}_{k_{(1,u-1)}, 1}$,
nor fix an ordering between $\hat{T}_{k_{(1,u-1)}, 1}^{(u)}$ and 
$\hat{\Pi}^{(u)}_{k_{(1,u-1)}, 1}$;
\item By construction the scheme we are analyzing here does not include the possibility of the error event described in Sec.~\ref{bisProt}, so that the nested POVM \eqq{nestedT} coincides with the general one \eqq{nested} analyzed in Sec.~\ref{sec:bisMeas}. Indeed in this case we have $\hat{T}_{k_{(1,u-1)}, err}^{(u)}=0$, thus the ternary-outcome POVM is a binary-outcome one $\mathcal{T}_{k_{(1,u-1)}}^{(u)}\equiv\mathcal{B}_{k_{(1,u-1)}}^{(u)}$ and it is also projective. 
\end{enumerate}

From Theorem~\ref{MainTh} 
 the asymptotic attainability of the Holevo bound with this procedure  can be established by 
   showing that  Eq.~(\ref{IMPOIMPO11}) holds, i.e.,  
\begin{lemma}
For all rates $R$ satisfying the Holevo bound, Theorem~\ref{hsw},
the binary-tree-search nested POVM, \eqq{nested}, with binary-outcome elements $\hat{B}_{k_{(1,u)}}^{(u)}$
defined as in Eqs.~(\ref{orthOne},\ref{orthTwo}) fulfills the sufficient condition Eq.~(\ref{IMPOIMPO11}).
\end{lemma}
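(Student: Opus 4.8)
The plan is to verify the sufficient condition \eqq{IMPOIMPO11} of Theorem~\ref{MainTh} separately for the two outcomes $k_u=0$ and $k_u=1$, using the ordering relations \eqq{NEWIMPO} and \eqq{NEWIMPO1}, the typicality estimates of Sec.~\ref{subsec:commQ}, and the fact that distinct codewords are sampled independently under the random-coding average.

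\textbf{Outcome $k_u=0$.} The element at play is $\hat{T}^{(u)}_{k_{(1,u-1)},0}=\hat{\Pi}^{(u)}_{k_{(1,u-1)},0}$ and the codeword $\hat{\rho}_{\vecund{k}}$ lies in $\mathcal{C}^{(u)}_{k_{(1,u-1)},0}$. Since $\hat{\Pi}\hat{\rho}_{\vecund{k}}\hat{\Pi}\geq0$ and, by \eqq{NEWIMPO}, $\hat{\Pi}^{(u)}_{k_{(1,u-1)},0}\geq\hat{\Pi}_{\vecund{k}}$, the trace of the product can only decrease when $\hat{\Pi}^{(u)}_{k_{(1,u-1)},0}$ is replaced by the single-codeword projector $\hat{\Pi}_{\vecund{k}}$, so it is enough to lower bound $\ave{\tr{\hat{\Pi}_{\vecund{k}}\hat{\Pi}\hat{\rho}_{\vecund{k}}\hat{\Pi}}}_{\mathcal{S}}$. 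I would then invoke Lemma~\ref{appclose} with $\hat{E}=\hat{\Pi}_{\vecund{k}}$ to get $\tr{\hat{\Pi}_{\vecund{k}}\hat{\Pi}\hat{\rho}_{\vecund{k}}\hat{\Pi}}\geq\tr{\hat{\Pi}_{\vecund{k}}\hat{\rho}_{\vecund{k}}}-2\trd{\hat{\rho}_{\vecund{k}}}{\hat{\Pi}\hat{\rho}_{\vecund{k}}\hat{\Pi}}$, average over codes, and combine the conditionally-typical bound \eqq{parzialeuno} with the gentle-operator estimate \eqq{genttotaldue}. This yields $\ave{\tr{\hat{\Pi}_{\vecund{k}}\hat{\Pi}\hat{\rho}_{\vecund{k}}\hat{\Pi}}}_{\mathcal{S}}\geq(1-\epsilon)-2\sqrt{\epsilon_1}$, of the required form.

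\textbf{Outcome $k_u=1$.} Here the element is $\hat{T}^{(u)}_{k_{(1,u-1)},1}=\hat{\mathbf{1}}-\hat{\Pi}^{(u)}_{k_{(1,u-1)},0}$ and $\hat{\rho}_{\vecund{k}}\in\mathcal{C}^{(u)}_{k_{(1,u-1)},1}$. Splitting $\tr{(\hat{\mathbf{1}}-\hat{\Pi}^{(u)}_{k_{(1,u-1)},0})\hat{\Pi}\hat{\rho}_{\vecund{k}}\hat{\Pi}}=\tr{\hat{\Pi}\hat{\rho}_{\vecund{k}}}-\tr{\hat{\Pi}^{(u)}_{k_{(1,u-1)},0}\hat{\Pi}\hat{\rho}_{\vecund{k}}\hat{\Pi}}$ (using $\hat{\Pi}^2=\hat{\Pi}$), the first term averages to at least $1-\epsilon_1$ by \eqq{genttotaluno}. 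For the subtracted term I would bound the union projector via \eqq{NEWIMPO1}, $\hat{\Pi}^{(u)}_{k_{(1,u-1)},0}\leq\sum_{\vecund{k}'\in\mathcal{C}^{(u)}_{k_{(1,u-1)},0}}\hat{\Pi}_{\vecund{k}'}$, and then exploit that each $\vecund{k}'$ in this sum is distinct from $\vecund{k}$ by the disjointness \eqq{inter}, so that $\hat{\Pi}_{\vecund{k}'}$ and $\hat{\rho}_{\vecund{k}}$ are independent under the code average and the expectation factorizes as $\tr{\ave{\hat{\Pi}_{\vecund{k}'}}_{\mathcal{S}}\,\hat{\Pi}\,\hat{\rho}^{\otimes N}\,\hat{\Pi}}$ through \eqq{aveTrans}. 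Estimating $\hat{\Pi}\hat{\rho}^{\otimes N}\hat{\Pi}\leq2^{-N(S(\hat{\rho}_{\mathcal{S}})-\delta)}\hat{\Pi}$ by \eqq{totaletre} and $\ave{\tr{\hat{\Pi}_{\vecund{k}'}}}_{\mathcal{S}}\leq2^{N(\sum_{x}p_{x}S(\hat{\rho}_{x})+\delta)}$ by \eqq{parzialedue}, each summand is at most $2^{-N(\chi(\mathcal{S})-2\delta)}$; summing over the at most $M=2^{NR}$ codewords bounds the subtracted term by $2^{-N(\chi(\mathcal{S})-R-2\delta)}$.

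\textbf{Conclusion and main difficulty.} Both error contributions decay exponentially in $N$ provided $R<\chi(\mathcal{S})-2\delta$; since $\delta$ is arbitrary and the Holevo bound of Theorem~\ref{hsw} allows any rate below $\max_{\mathcal{S}}\chi(\mathcal{S})$, a sufficiently small $\delta$ makes the exponent positive and \eqq{IMPOIMPO11} holds with $\epsilon(M)$ vanishing exponentially, hence much faster than $N^{-2}$. The crux---the genuinely new point beyond the standard HSW counting---is the $k_u=1$ branch, where one must control the overlap of $\hat{\rho}_{\vecund{k}}$ not with a single competing codeword but with the entire union subspace $\hat{\Pi}^{(u)}_{k_{(1,u-1)},0}$ spanning half the codebook. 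Inequality \eqq{NEWIMPO1} is exactly the tool that linearizes this union into a sum of single-codeword projectors, after which the independence of distinct codewords and the rank/weight estimates apply term by term; the careful bookkeeping needed to justify the factorization of the code average (no $\vecund{k}'$ coincides with $\vecund{k}$) is the delicate part.
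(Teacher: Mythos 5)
Your proposal is correct and follows essentially the same route as the paper's proof: for $k_u=0$ you use the ordering \eqq{NEWIMPO} together with Lemma~\ref{appclose}, the conditional-typicality bound \eqq{parzialeuno} and the gentle-operator estimate, while for $k_u=1$ you use \eqq{NEWIMPO1} to linearize the union projector, the independence of distinct codewords to factorize the code average, and the operator bounds \eqq{totaletre} and \eqq{parzialedue} to obtain the $2^{NR}\,2^{-N(\chi(\mathcal{S})-2\delta)}$ error term. The only cosmetic difference is that the paper enlarges the sum in the $k_u=1$ branch to all $\vecund{k}'\neq\vecund{k}$ rather than just $\vecund{k}'\in\mathcal{C}^{(u)}_{k_{(1,u-1)},0}$, which is immaterial.
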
  
\begin{proof}
   Consider first the case with $k_u=0$. Given then a generic codeword $\hat{\rho}_{\vecund{k}}\in\mathcal{C}^{(u)}_{k_{(1,u-1)}, 0}$
 we can write 
 \begin{equation}\label{IMPOIMPO111}
 \begin{aligned}  
 \ave{\tr{\hat{T}_{k_{(1,u-1)}, 0}^{(u)} \hat{\Pi}\hat{\rho}_{\vecund{k}}\hat{\Pi}}}_{\cal S}&=\ave{\tr{\hat{\Pi}_{k_{(1,u-1)}, 0}^{(u)}\hat{\Pi}\hat{\rho}_{\vecund{k}}\hat{\Pi}}}_{\cal S}\\
 &\geq \ave{\tr{\hat{\Pi}_{\vecund{k}}  \hat{\Pi}\hat{\rho}_{\vecund{k}}\hat{\Pi}}}_{\cal S} = \sum_{\vecund{x}\in\mathcal{X}^{N}} p_{\vecund{x}}  \tr{\hat{\Pi}_{\vecund{x}}\hat{\Pi}{\hat{\rho}}_{\vecund{x}}\hat{\Pi}}\\
 &\geq 
 \sum_{\vecund{x}\in\mathcal{X}^{N}} p_{\vecund{x}} \;\Big(  {\tr{\hat{\Pi}_{\vecund{x} }\hat{\rho}_{\vecund{x}}}} -2 {\trd{\hat{\Pi}{\hat{\rho}}_{\vecund{x}}\hat{\Pi}}{\hat{\rho}_{\vecund{x}}}}\Big) 
      \geq 1 - \epsilon_2 - 2\sqrt{\epsilon_1},
\end{aligned}
\end{equation}
where the first inequality follows from \eqq{NEWIMPO},  the second equality from the fact that the average over the codes corresponds to the average over all possible $N$-long codewords $\vecund{x}\in\mathcal{X}^{N}$ with respect to their joint probability distribution, \eqq{cwProb}, i.e., 
\begin{equation} \label{prima} 
\ave{\hat{\Pi}_{\vecund{k}}  \hat{\Pi}\hat{\rho}_{\vecund{k}}\hat{\Pi}}_{\cal S} = \sum_{\vecund{x}\in\mathcal{X}^{N}} p_{\vecund{x}} \hat{\Pi}_{\vecund{x}}  \hat{\Pi}\hat{\rho}_{\vecund{x}}\hat{\Pi},
\end{equation} 
the second inequality from applying Lemma \ref{appclose} with 
    $\hat{E}=\hat{\Pi}_{\vecund{x}}$, $\hat{\rho}=\hat{\Pi}\hat{\rho}_{\vecund{x}}\hat{\Pi}$ and $\hat{\sigma}=\hat{\rho}_{\vecund{x}}$, while  
the third inequality  follows  both from the high probability of projecting 
    codeword $\hat{\rho}_{\vecund{x}}$ on its conditionally-typical space \eqq{parzialeuno}
    and from the same concept for the average codeword, together with Lemma \ref{gentop} (as in Eqs.~(\ref{genttotaluno},\ref{genttotaldue})),
    the parameters $\epsilon_1$ and $\epsilon_2$ being both exponentially small in $N$. 
   Hence Eq.~(\ref{IMPOIMPO111}) proves that Eq.~(\ref{IMPOIMPO11}) applies at least for the sets  $\mathcal{C}^{(u)}_{k_{(1,u)}}$  with $k_u=0$. \\
  Take next   $k_u=1$ and  a generic codeword $\hat{\rho}_{\vecund{k}}\in\mathcal{C}^{(u)}_{k_{(1,u-1)}, 1}$. In this case we have 
 \begin{equation}  \label{IMPOIMPO112pre}
 \begin{aligned}
  \ave{\tr{\hat{T}_{k_{(1,u-1)}, 1}^{(u)}\;  \hat{\Pi}\hat{\rho}_{\vecund{k}}\hat{\Pi}}}_{\cal S}&= \ave{\tr{ \hat{\Pi}\hat{\rho}_{\vecund{k}}\hat{\Pi}}}_{\cal S}  -\ave{\tr{\hat{\Pi}_{k_{(1,u-1)}, 0}^{(u)}\;  \hat{\Pi}\hat{\rho}_{\vecund{k}}\hat{\Pi}}}_{\cal S} \\
&\geq \ave{\tr{ \hat{\Pi}\hat{\rho}_{\vecund{k}}\hat{\Pi}}}_{\cal S}  - \sum_{\vecund{k}'\neq \vecund{k}}    \ave{\tr{\hat{\Pi}_{\vecund{k}'}\;  \hat{\Pi}\hat{\rho}_{\vecund{k}}\hat{\Pi}}}_{\cal S}
\end{aligned}
\end{equation}
where the inequality follows from Eq.~(\ref{NEWIMPO1}) by  adding all the remaining  terms $\hat{\Pi}_{\vecund{k}'}$ associated with codewords different from $\hat{\rho}_{\vecund{k}}$.
Observe then that Eqs.~(\ref{totaleuno},\ref{aveTrans}) imply
\begin{equation} \label{prima1} 
\ave{\tr{ \hat{\Pi}\hat{\rho}_{\vecund{k}}\hat{\Pi}}}_{\cal S} = \tr{\hat{\Pi} \hat{\rho}^{\otimes N} } \geq 1 -\epsilon_1,
\end{equation} 
Furthermore  for each term of the sum on the RHS of Eq.~(\ref{IMPOIMPO112pre}) we have
\begin{equation}\label{pdiversoj1}
    \begin{aligned} 
    \ave{\tr{  \hat{\Pi}_{\vecund{k}'}  \hat{\Pi}\hat{\rho}_{\vecund{k}}\hat{\Pi} }}_{\cal S}&= \sum_{\vecund{x}, \vecund{x}' \in\mathcal{X}^{N}} 
       p_{\vecund{x}}\;  p_{\vecund{x}'} \;{\tr{ \hat{\Pi}_{\vecund{x}'} \hat{\Pi}\hat{\rho}_{\vecund{x}}\hat{\Pi} }} = \sum_{\vecund{x}'\in\mathcal{X}^{N}} p_{\vecund{x}'}\; \tr{ 
      \hat{\Pi}_{\vecund{x}'} \hat{\Pi}\hat{\rho}^{\otimes N}\hat{\Pi} }\\
      &\leq \norm{\hat{\Pi}\hat{\rho}^{\otimes N}\hat{\Pi}}_\infty 
     \sum_{\vecund{x}'\in\mathcal{X}^{N}} p_{\vecund{x}'} \; {\tr{\hat{\Pi}_{\vecund{x}'}}} \leq 2^{-N\left(S(\rho)-\delta\right)} \; 2^{N\left(\sum_{x\in\mathcal{X}}p_{x}S(\rho_x)+\delta\right)}\\
     &=2^{-N\left(\chi(\mathcal{S})-2\delta\right)},
    \end{aligned}
    \end{equation}
    where the second inequality follows from typical subspaces' properties (\ref{totaletre},\ref{parzialedue}), $\norm{\cdot}_{\infty}$ is the infinite-norm of an operator, equal to its maximum eigenvalue, and $\chi(\mathcal{S})$ is the Holevo information \eqq{holevoInfo} of the source ${\cal S}$. 
  Replacing Eq.~(\ref{prima1}) and Eq.~(\ref{pdiversoj1}) into Eq.~(\ref{IMPOIMPO112pre}) we arrive hence to 
   \begin{equation}  \label{IMPOIMPO112}
  \ave{\tr{\hat{T}_{k_{(1,u-1)}, 1}^{(u)}\;  \hat{\Pi}\hat{\rho}_{\vecund{k}}\hat{\Pi}}}_{\cal S}
\geq 1 -\epsilon_1 - 2^{N R}~2^{-N\left(\chi(\mathcal{S})-2\delta\right)} \;, 
\end{equation}
which shows that as long as  the rate  $R$ respects the  Holevo bound, i.e., 
  \begin{equation} 
     R<\chi(\mathcal{S})-2\delta,  \label{IMPO0} 
     \end{equation} 
    for some $\delta >0$,  Eq.~(\ref{IMPOIMPO11}) applies also for  the  sets  $\mathcal{C}^{(u)}_{k_{(1,u)}}$ with $k_u=1$.  
\end{proof}
The inequalities Eq.~(\ref{IMPOIMPO111}) and Eq.~(\ref{IMPOIMPO112}) prove that under the constraint Eq.~(\ref{IMPO0}) the proposed implementation of the binary-tree-search decoding scheme asymptotically attains the Holevo bound, yielding
an average error probability which converges to zero in the limit of $N\rightarrow \infty$. 

 \subsubsection{Method 2: via SRM detections} \label{EX00}
   An alternative way to implement the binary-tree-search 
    protocol is to employ ternary-outcome POVMs, \eqq{tOutcome}, inspired by the SRM, \eqq{srm}.
   For each set   $\mathcal{C}^{(u)}_{k_{(1,u)}}$ define the positive operator 
    \begin{equation}
      \hat{S}^{(u)}_{k_{(1,u)}}=\sum_{\vecund{k}\in\mathcal{C}^{(u)}_{k_{(1,u)}}} \hat{\Pi}_{\vecund{k}}, 
    \end{equation}
    i.e., the sum of the conditionally-typical projectors of all the codewords in that set. From the 
    non-orthogonality of conditionally-typical projectors and the completeness property \eqq{complete1} 
    it follows 
    \begin{equation}
      \hat{S}^{(u-1)}_{k_{(1,u-1)}}=\hat{S}^{(u)}_{k_{(1,u-1)},0}+\hat{S}^{(u)}_{k_{(1,u-1)},1}\geq\hat{\mathbf{1}}_{k_{(1,u-1)}}.
    \end{equation}
    Thus we can build the $u-$th measurement to decide whether the word belongs to $\mathcal{C}^{(u)}_{k_{(1,u-1)}, 0}$ or $\mathcal{C}^{(u)}_{k_{(1,u-1)}, 1}$  by using the sum operators for these two sets and renormalizing
    by the sum operator for $\mathcal{C}^{(u-1)}_{k_{(1,u-1)}}$, which contains both of them at the previous step:
    \begin{equation}\label{PGMdets}
\hat{T}^{(u)}_{k_{(1,u)}}=\left(\hat{S}^{(u-1)}_{k_{(1,u-1)}}\right)^{-1/2} \hat{S}^{(u)}_{k_{(1,u)}}\left(
\hat{S}^{(u-1)}_{k_{(1,u-1)}}
\right)^{-1/2}.
       \end{equation}
    In this way we obtain a proper measurement, which again collapses to a binary-outcome one, i.e., $\mathcal{T}^{(u)}_{k_{(1,u-1)}}\equiv\mathcal{B}^{(u)}_{k_{(1,u-1)}}$, so that the total binary-tree-search nested POVM \eqq{nestedT} is equivalent to \eqq{nested} also in this case.
  Let us now show that for all $R$ fulfilling the Holevo bound, the operators \eqq{PGMdets} satisfy the sufficient condition Eq.~(\ref{IMPOIMPO11}) of Theorem~\ref{MainTh}. 

\begin{lemma}
For all rates $R$ satisfying the Holevo bound, Theorem~\ref{hsw},
the binary-tree-search nested POVM, \eqq{nested}, with binary-outcome elements $\hat{B}_{k_{(1,u)}}^{(u)}$
defined as in Eq.~(\ref{PGMdets}) fulfills the sufficient condition Eq.~(\ref{IMPOIMPO11}).
\end{lemma}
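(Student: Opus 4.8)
The plan is to verify that the SRM-type operators defined in \eqq{PGMdets} satisfy the sufficient condition \eqq{IMPOIMPO11}, so that Theorem~\ref{MainTh} immediately delivers capacity-achievability. The key observation is that, although the ``correct'' element $\hat{T}^{(u)}_{k_{(1,u-1)},0}$ renormalises a whole \emph{sum} of conditionally-typical projectors $\hat{S}^{(u)}_{k_{(1,u-1)},0}=\sum_{\vecund{k}\in\mathcal{C}^{(u)}_{k_{(1,u-1)},0}}\hat{\Pi}_{\vecund{k}}$, the estimate can be reduced to the familiar \emph{single-codeword} square-root-measurement bound. I would fix $k_{u}=0$ and a codeword $\hat{\rho}_{\vecund{k}}\in\mathcal{C}^{(u)}_{k_{(1,u-1)},0}$; since $\hat{S}^{(u)}_{k_{(1,u-1)},0}\geq\hat{\Pi}_{\vecund{k}}$ and conjugation by $(\hat{S}^{(u-1)}_{k_{(1,u-1)}})^{-1/2}$ preserves the operator ordering, the first step is
\begin{equation}
\hat{T}^{(u)}_{k_{(1,u-1)},0}\geq\left(\hat{S}^{(u-1)}_{k_{(1,u-1)}}\right)^{-1/2}\hat{\Pi}_{\vecund{k}}\left(\hat{S}^{(u-1)}_{k_{(1,u-1)}}\right)^{-1/2}.
\end{equation}
By the completeness relation \eqq{complete1} the normalising operator is exactly $\hat{S}^{(u-1)}_{k_{(1,u-1)}}=\hat{\Pi}_{\vecund{k}}+\sum_{\vecund{k}''\neq\vecund{k}}\hat{\Pi}_{\vecund{k}''}$, the sum running over the $\leq M$ codewords of the parent set $\mathcal{C}^{(u-1)}_{k_{(1,u-1)}}$, so the right-hand side is precisely the SRM element for $\hat{\rho}_{\vecund{k}}$ against its siblings.

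Next I would invoke the Hayashi--Nagaoka operator inequality~\cite{hayanaga}, which for the single projector $\hat{\Pi}_{\vecund{k}}$ yields
\begin{equation}
\hat{\mathbf{1}}-\left(\hat{S}^{(u-1)}_{k_{(1,u-1)}}\right)^{-1/2}\hat{\Pi}_{\vecund{k}}\left(\hat{S}^{(u-1)}_{k_{(1,u-1)}}\right)^{-1/2}\leq 2\left(\hat{\mathbf{1}}-\hat{\Pi}_{\vecund{k}}\right)+4\sum_{\vecund{k}''\neq\vecund{k}}\hat{\Pi}_{\vecund{k}''}.
\end{equation}
Tracing this against the positive operator $\hat{\Pi}\hat{\rho}_{\vecund{k}}\hat{\Pi}$ and averaging over codes, the three resulting terms are bounded exactly as in Method~1: the leading term obeys $\ave{\tr{\hat{\Pi}\hat{\rho}_{\vecund{k}}\hat{\Pi}}}_{\mathcal{S}}\geq1-\epsilon_{1}$ by \eqq{prima1}; the correct-projector defect $\ave{\tr{(\hat{\mathbf{1}}-\hat{\Pi}_{\vecund{k}})\hat{\Pi}\hat{\rho}_{\vecund{k}}\hat{\Pi}}}_{\mathcal{S}}$ is exponentially small, since $\ave{\tr{\hat{\Pi}_{\vecund{k}}\hat{\Pi}\hat{\rho}_{\vecund{k}}\hat{\Pi}}}_{\mathcal{S}}\geq1-\epsilon_{2}-2\sqrt{\epsilon_{1}}$ was already established in \eqq{IMPOIMPO111} and $\tr{\hat{\Pi}\hat{\rho}_{\vecund{k}}\hat{\Pi}}\leq1$; and each cross term satisfies $\ave{\tr{\hat{\Pi}_{\vecund{k}''}\hat{\Pi}\hat{\rho}_{\vecund{k}}\hat{\Pi}}}_{\mathcal{S}}\leq2^{-N(\chi(\mathcal{S})-2\delta)}$ by \eqq{pdiversoj1}. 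Summing the $\leq M=2^{NR}$ cross terms gives a contribution at most $2^{-N(\chi(\mathcal{S})-2\delta-R)}$, which vanishes whenever the rate respects the Holevo bound \eqq{IMPO0}. Collecting the estimates,
\begin{equation}
\ave{\tr{\hat{T}^{(u)}_{k_{(1,u-1)},0}\,\hat{\Pi}\hat{\rho}_{\vecund{k}}\hat{\Pi}}}_{\mathcal{S}}\geq1-\epsilon_{1}-2(\epsilon_{2}+2\sqrt{\epsilon_{1}})-4\cdot2^{-N(\chi(\mathcal{S})-2\delta-R)},
\end{equation}
which is exponentially small in $N$, hence decays faster than $N^{-2}$, so \eqq{IMPOIMPO11} holds for the sets with $k_{u}=0$. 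The case $k_{u}=1$ follows identically after exchanging $\hat{S}^{(u)}_{k_{(1,u-1)},0}$ and $\hat{S}^{(u)}_{k_{(1,u-1)},1}$, the definition \eqq{PGMdets} being symmetric under this swap.

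I expect the main obstacle to be the correct handling of the renormalising operator, which sums over an entire subset of overlapping, non-orthogonal conditionally-typical projectors rather than a single one; the order-preserving reduction above is exactly what sidesteps this, collapsing the binary-tree SRM element onto a standard single-codeword SRM element and allowing me to reuse, essentially verbatim, the pairwise-overlap and gentle-measurement estimates already derived for Method~1. A secondary point requiring care is that the pseudoinverse $(\hat{S}^{(u-1)}_{k_{(1,u-1)}})^{-1/2}$ and the identity appearing in the Hayashi--Nagaoka bound must be read as restricted to the support of $\hat{S}^{(u-1)}_{k_{(1,u-1)}}$, consistently with the weak-completeness convention of Sec.~\ref{sec:bisMeas}; this does not affect the traces above, because $\hat{\rho}_{\vecund{k}}$ and its typical projectors already live on that support up to the exponentially small corrections controlled by Lemma~\ref{gentop}.
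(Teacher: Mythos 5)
Your proposal is correct and follows essentially the same route as the paper's proof of Method 2: the order-preserving reduction $\hat{T}^{(u)}_{k_{(1,u)}}\geq(\hat{S}^{(u-1)}_{k_{(1,u-1)}})^{-1/2}\hat{\Pi}_{\vecund{k}}(\hat{S}^{(u-1)}_{k_{(1,u-1)}})^{-1/2}$, identification of the right-hand side with the single-codeword SRM element on the parent set, the Hayashi--Nagaoka inequality, and the reuse of the Method-1 estimates \eqq{prima1}, \eqq{qugualej} and \eqq{pdiversoj1} to obtain the same final bound $1-\epsilon_1-2(\epsilon_2+2\sqrt{\epsilon_1})-4\cdot2^{NR}2^{-N(\chi(\mathcal{S})-2\delta)}$. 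Your remark on restricting the pseudoinverse to the support of $\hat{S}^{(u-1)}_{k_{(1,u-1)}}$ is a valid point of care that the paper leaves implicit.
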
  
\begin{proof}
Observe that 
\begin{equation}\label{PGMred}
  \begin{aligned}
    \ave{\tr{ \hat{T}^{(u)}_{k_{(1,u)}}     \hat{\Pi}\hat{\rho}_{\vecund{k}}\hat{\Pi}  }}_{\cal S} 
      &\geq \ave{\tr{\left(\hat{S}^{(u-1)}_{k_{(1,u-1)}}\right)^{-1/2} \hat{\Pi}_{\vecund{k}}  \left(\hat{S}^{(u-1)}_{k_{(1,u-1)}}\right)^{-1/2}
    \hat{\Pi}\hat{\rho}_{\vecund{k}}\hat{\Pi}   }}_{\cal S} \\
    &= \ave{\tr{\hat{\Lambda}_{\vecund{k}} \hat{\Pi}\hat{\rho}_{\vecund{k}}\hat{\Pi} }}_{\cal S}=\ave{p_{\vecund{k}|\vecund{k}}^{(u-1)}}_{\cal S},
    \end{aligned} 
    \end{equation}
    where $p_{\vecund{k}|\vecund{k}}^{(u-1)}$ is the average success probability of correctly recovering the codeword $\hat{\rho}_{\vecund{k}}$
        from the set  $\mathcal{C}^{(u-1)}_{k_{(1,u-1)}}$ 
    when using a SRM strategy and $\hat{\Lambda}_{\vecund{k}}$ is the corresponding POVM 
    element.
    Accordingly  we can bound each term by exploiting the efficiency of the SRM protocol. 
       Specifically, we employ the Hayashi-Nagaoka inequality~\cite{hayanaga} 
       \begin{equation}
         \hat{\mathbf{1}}-\hat{\Lambda}_{\vecund{k}}\leq 2~\hat{\Xi}_{\vecund{k}} + 4 \sum_{\vecund{k}'\neq\vecund{k}}\hat{\Pi}_{\vecund{k}'},
       \end{equation}
       where $\hat{\Pi}_{\vecund{k}}$, $\hat{\Xi}_{\vecund{k}}$ are the usual conditionally-typical projectors and their complementaries, to write the average success probability of \eqq{PGMred} as 
\begin{equation}
 \begin{aligned}
   \ave{p_{\vecund{k}|\vecund{k}}^{(u-1)}}_{\cal S} &\geq \ave{\tr{\hat{\Pi}\hat{\rho}_{\vecund{k}}\hat{\Pi} }}_{\cal S} - 2\ave{\tr{\hat{\Xi}_{\vecund{k}} \hat{\Pi}\hat{\rho}_{\vecund{k}}\hat{\Pi} } 
   }_{\cal S} - 4\sum_{\vecund{k}'\neq\vecund{k}}\ave{\hat{\Pi}_{\vecund{k}'} \hat{\Pi}\hat{\rho}_{\vecund{k}}\hat{\Pi} }_{\cal 
   S}\\
   &\geq 1- \epsilon_1 - 2 (\epsilon_2+2\sqrt{\epsilon_1}) - 4~2^{NR}~2^{-N\left(\chi(\mathcal{S})-2\delta\right)},
 \end{aligned}
 \end{equation}
 where the last inequality follows from Eqs.~(\ref{genttotaluno},\ref{pdiversoj1}) 
 and from the fact that
 \begin{equation}\label{qugualej}
 \begin{aligned}
\ave{\tr{\hat{\Xi}_{\vecund{k}} \hat{\Pi}\hat{\rho}_{\vecund{k}}\hat{\Pi} } 
   }_{\cal S}&= \ave{\tr{\hat{\Pi}\hat{\rho}_{\vecund{k}}\hat{\Pi}}}_{\cal S} -\ave{\tr{\hat{\Pi}_{\vecund{k}} \hat{\Pi}\hat{\rho}_{\vecund{k}}\hat{\Pi} } 
   }_{\cal S}\\
   & =1-  \sum_{\vecund{x}\in\mathcal{X}^{N}} p_{\vecund{x}} {\tr{\hat{\Pi}_{\vecund{x}}\hat{\Pi}\hat{\rho}_{\vecund{x}}\hat{\Pi} }} \leq \epsilon_2+ 2\sqrt{\epsilon_1},
    \end{aligned}
    \end{equation}
  which is derived as in \eqq{IMPOIMPO111}.
 Similarly to what we observed in Eq.~(\ref{IMPOIMPO112}), it then follows that if the rate $R$ fulfills the constraint Eq.~(\ref{IMPO0}) 
    for some $\delta >0$, then  for  $N$ sufficiently large one has that, for all $k_{u}$ and $u$, 
    \begin{equation}
      \ave{\tr{ \hat{T}^{(u)}_{k_{(1,u)}}     \hat{\Pi}\hat{\rho}_{\vecund{k}}\hat{\Pi}  }}_{\cal S}  \geq   \ave{\tr{\hat{\Lambda}_{\vecund{k}} \hat{\Pi}\hat{\rho}_{\vecund{k}}\hat{\Pi} }}_{\cal S}
     \geq  1-\epsilon_3,\label{last}
    \end{equation}
    with $\epsilon_3=O(e^{-N})$ exponentially small in $N$, hence showing that 
   Eq.~ (\ref{IMPOIMPO11}) is satisfied by the selected operators.  
\end{proof}

  \subsubsection{Method 3: via SM detections}\label{EX1}
 Another way to construct valid operators for the POVMs \eqq{tOutcome} necessary to implement the binary-tree-search scheme, is to 
 make use of the SM, Eqs.~(\ref{seqSucEle},\ref{seqErrEle}), for each set $\mathcal{C}^{(u)}_{k_{(1,u-1)}}$, but without gaining knowledge about the result of this subroutine. Accordingly, the single-step outcome operators will be implemented as a black box that applies the SM decoding scheme to the set of codewords belonging to $\mathcal{C}^{(u)}_{k_{(1,u-1)}}$ and may also output an error, see Fig.~\ref{figure2}. The resulting
setting is clearly redundant, since the vast majority of information gathered via the SM is simply neglected in the process.  Also, the same
procedure is iterated at every step, hence increasing the chances of deteriorating the transmitted codeword.
Still, as we shall see in the following, the scheme is efficient enough to allow for the saturation of the Holevo bound. 
\begin{figure}[t!]
	\centering
	\includegraphics[trim={0 0cm 0 0cm},clip,scale=.43]{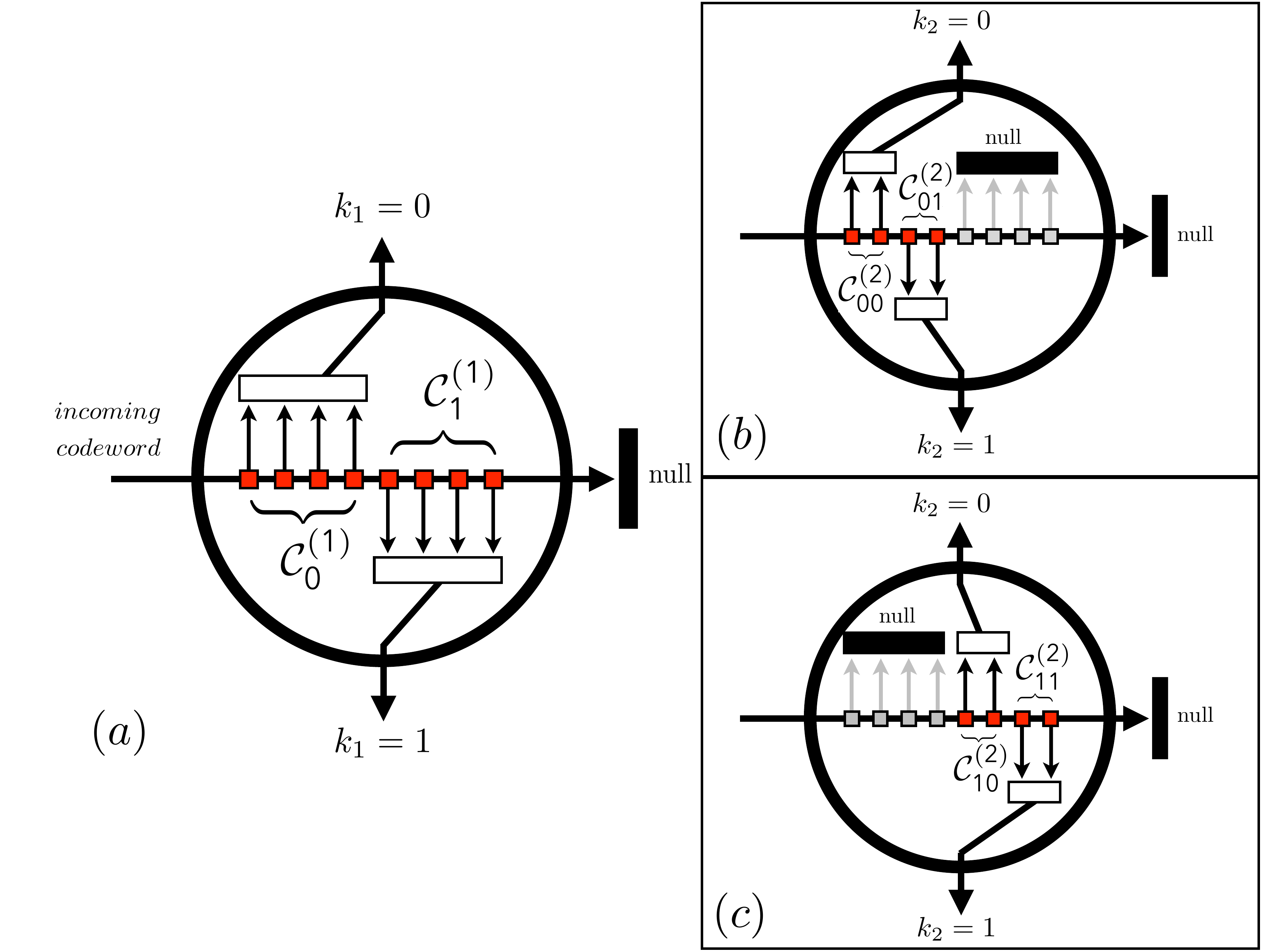}
	\caption{Schematic representation of the first- and second-step  POVMs $\mathcal{T}^{(1)}$, $\mathcal{T}_0^{(2)}$, and $\mathcal{T}_1^{(2)}$  in terms of the SM decoding procedure (little square elements) for $M=8$ codewords. Panel (a): implementation of $\mathcal{T}^{(1)}$. The red color of the square blocks indicates that all the  elements of the sequential decoding POVM are active: their outcomes are used to determine whether the
	incoming codeword belongs to the set  $\mathcal{C}_0^{(1)}$ (first four codewords), or to the set  $\mathcal{C}_1^{(1)}$ (last four codewords) fixing the value of $k_1$. The rectangular elements of the figure indicate that no other information is extracted from the outcomes of the sequential measurement. Panel (b):  implementation of $\mathcal{T}_0^{(2)}$ which discriminates
	between the sets $\mathcal{C}_{00}^{(2)}$ and $\mathcal{C}_{01}^{(2)}$.
	This element operates on the state emerging from the port $k_1=0$ of $\mathcal{T}^{(1)}$, see Fig.~\ref{figure1}. As indicated by the color, only the  first elements of the sequential POVM are active, while the outputs of the remaining ones are equivalent to the error result. Panel (c): implementation of $\mathcal{T}_1^{(2)}$ which discriminates among  $\mathcal{C}_{10}^{(2)}$ and $\mathcal{C}_{11}^{(2)}$. }
	\label{figure2}
\end{figure}
 In order to formalize this construction, for each quantum  codeword $\hat{\rho}_{\vecund{k}}\in \mathcal{C}$ we consider the corresponding element $\hat{E}_{\vecund{x}^{(\vecund{k})}}\equiv\hat{E}_{\vecund{k}}$ of the SM, \eqq{seqSucEle}, and define the elements of $\mathcal{T}^{(u)}_{k_{(1,u-1)}}$
as 
\begin{align}\label{sm01}
 \hat{T}^{(u)}_{k_{(1,u-1)},k_{u}}&=\sum_{\vecund{k}\in\mathcal{C}^{(u)}_{k_{(1,u-1)},k_{u}}} \hat{E}_{\vecund{k}},\quad\text{for }k_{u}=0,1,\\
 \label{smErr}  \hat{T}^{(u)}_{k_{(1,u-1)},err}&=\hat{\mathbf{1}}- \hat{T}^{(u)}_{k_{(1,u-1)},0}- \hat{T}^{(u)}_{k_{(1,u-1)},1}.
  \end{align}
Hence in this case the ternary-outcome POVMs do not reduce to binary-outcome ones. 
   
   \begin{lemma}
For all rates $R$ satisfying the Holevo bound, Theorem~\ref{hsw},
the binary-tree-search nested POVM, \eqq{nestedT}, with ternary-outcome elements $\hat{T}_{k_{(1,u)}}^{(u)}$
defined as in Eqs.~(\ref{sm01},\ref{smErr}) fulfills the sufficient condition Eq.~(\ref{IMPOIMPO11}).
\end{lemma}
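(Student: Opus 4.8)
The plan is to exploit the fact that the step operators are built directly out of the full-codebook SM elements, so that correctness of a single binary-tree step reduces to correctness of the ordinary sequential decoder. First I would use positivity: since $\hat{T}^{(u)}_{k_{(1,u)}}=\sum_{\vecund{k}'\in\mathcal{C}^{(u)}_{k_{(1,u)}}}\hat{E}_{\vecund{k}'}$ is a sum of positive operators and the true codeword $\hat{\rho}_{\vecund{k}}$, with $\vecund{k}=(k_{(1,u)},k_{(u+1,u_F)})$, belongs to $\mathcal{C}^{(u)}_{k_{(1,u)}}$, discarding every term but $\vecund{k}'=\vecund{k}$ yields
\[
\tr{\hat{T}^{(u)}_{k_{(1,u)}}\hat{\Pi}\hat{\rho}_{\vecund{k}}\hat{\Pi}}\geq\tr{\hat{E}_{\vecund{k}}\hat{\Pi}\hat{\rho}_{\vecund{k}}\hat{\Pi}}.
\]
Hence it suffices to show that the smoothed success probability of the single SM element $\hat{E}_{\vecund{k}}$, averaged over codes, tends to one faster than $N^{-2}$; condition Eq.~(\ref{IMPOIMPO11}) then follows immediately.

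Next I would bound this smoothed SM success probability. Writing $\hat{E}_{\vecund{k}}=|\hat{\Pi}_{\vecund{k}}\hat{\Xi}_{\vecund{k}_{\ell-1}}\cdots\hat{\Xi}_{\vecund{k}_1}|^{2}$ for the codeword sitting in position $\ell$ of Bob's ordering, I would apply Sen's non-commutative union bound~\cite{sen} to the ordered projectors $\hat{\Xi}_{\vecund{k}_1},\dots,\hat{\Xi}_{\vecund{k}_{\ell-1}},\hat{\Pi}_{\vecund{k}}$ acting on the subnormalized state $\hat{\Pi}\hat{\rho}_{\vecund{k}}\hat{\Pi}$, obtaining
\[
\tr{\hat{\Pi}\hat{\rho}_{\vecund{k}}\hat{\Pi}}-\tr{\hat{E}_{\vecund{k}}\hat{\Pi}\hat{\rho}_{\vecund{k}}\hat{\Pi}}\leq 2\sqrt{\tr{\hat{\Xi}_{\vecund{k}}\hat{\Pi}\hat{\rho}_{\vecund{k}}\hat{\Pi}}+\sum_{j<\ell}\tr{\hat{\Pi}_{\vecund{k}_j}\hat{\Pi}\hat{\rho}_{\vecund{k}}\hat{\Pi}}}.
\]
Averaging over codes and using concavity of the square root (Jensen) to push $\ave{\cdot}_{\mathcal{S}}$ inside the root, the three estimates required are already in the excerpt: $\ave{\tr{\hat{\Pi}\hat{\rho}_{\vecund{k}}\hat{\Pi}}}_{\mathcal{S}}\geq1-\epsilon_1$ from Eq.~(\ref{prima1}); the own-miss term $\ave{\tr{\hat{\Xi}_{\vecund{k}}\hat{\Pi}\hat{\rho}_{\vecund{k}}\hat{\Pi}}}_{\mathcal{S}}\leq\epsilon_2+2\sqrt{\epsilon_1}$ from Eq.~(\ref{qugualej}); and each cross term $\ave{\tr{\hat{\Pi}_{\vecund{k}_j}\hat{\Pi}\hat{\rho}_{\vecund{k}}\hat{\Pi}}}_{\mathcal{S}}\leq 2^{-N(\chi(\mathcal{S})-2\delta)}$ from Eq.~(\ref{pdiversoj1}), valid since $\vecund{k}_j\neq\vecund{k}$.

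Collecting these and noting that there are fewer than $M=2^{NR}$ cross terms, the quantity inside the root is at most $(\epsilon_2+2\sqrt{\epsilon_1})+2^{-N(\chi(\mathcal{S})-R-2\delta)}$, so that
\[
\ave{\tr{\hat{E}_{\vecund{k}}\hat{\Pi}\hat{\rho}_{\vecund{k}}\hat{\Pi}}}_{\mathcal{S}}\geq 1-\epsilon_1-2\sqrt{(\epsilon_2+2\sqrt{\epsilon_1})+2^{-N(\chi(\mathcal{S})-R-2\delta)}}.
\]
Provided the rate respects the Holevo constraint $R<\chi(\mathcal{S})-2\delta$ as in Eq.~(\ref{IMPO0}), every term on the right is exponentially small in $N$, and taking the square root leaves it exponentially small; the bound thus has the form $1-\epsilon(M)$ with $\epsilon(M)=O(e^{-cN})$ for some $c>0$, which decays faster than $N^{-2}$. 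Together with the positivity reduction this verifies Eq.~(\ref{IMPOIMPO11}) for all $k_u$ and $u$, and Theorem~\ref{MainTh} then delivers the claim.

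The main obstacle is the ordered, non-commuting structure of $\hat{E}_{\vecund{k}}$: the Hayashi--Nagaoka operator inequality used for the SRM in Method~2 does not apply to a nested product of projectors, so the essential technical input is Sen's non-commutative union bound, which is exactly what controls the accumulation of disturbance from the earlier ``no'' tests and tames the redundancy of re-running a full sequential decoder at each step. A secondary point I would verify is that the final estimate is uniform in the position $\ell$ of the codeword in the ordering; this holds because replacing $\sum_{j<\ell}$ by the full sum over all other codewords only loosens the bound, giving the same $2^{-N(\chi(\mathcal{S})-R-2\delta)}$ control independently of $\ell$.
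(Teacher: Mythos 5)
Your proposal is correct and follows essentially the same route as the paper's proof: the same positivity reduction to the single sequential-measurement element $\hat{E}_{\vecund{k}}$, the same application of Sen's non-commutative union bound with concavity of the square root, and the same three estimates (Eqs.~(\ref{prima1}), (\ref{qugualej}), (\ref{pdiversoj1})) leading to an exponentially small $\epsilon_3$ under the rate condition Eq.~(\ref{IMPO0}). No gaps.
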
  
   \begin{proof}
   First observe that
   \begin{equation} \label{IMPO1}
    \ave{\tr{ \hat{T}^{(u)}_{k_{(1,u)}}
    \hat{\Pi}\hat{\rho}_{\vecund{k}}\hat{\Pi}}}_{\cal S} 
      =\sum_{\vecund{k}'\in\mathcal{C}^{(u)}_{k_{(1,u)}}} 
 \ave{\tr{\hat{E}_{\vecund{k}'}  \; \hat{\Pi}\hat{\rho}_{\vecund{k}'}\hat{\Pi}}}_{\cal S}
      \geq \ave{\tr{\hat{E}_{\vecund{k}} 
      \hat{\Pi}\hat{\rho}_{\vecund{k}}\hat{\Pi}}}_{\cal S} ,
    \end{equation}
    where in the last term we recognize the average success probability of the SM protocol for codeword $\vecund{k}$,
    computed on the unnormalized state $\hat{\Pi}\hat{\rho}_{\vecund{k}}\hat{\Pi}$. 
Accordingly we can bound each term by exploiting the efficiency of the SM protocol.
      Specifically, we need to apply the following lemma proved by Sen~\cite{sen}:
      \begin{lemma}\label{Sen} 
(Sen's Lemma) Let $\hat{\rho}$ be an unnormalized density operator and  
  $\hat{\Pi}_1,\ldots,\hat{\Pi}_n$ orthogonal projectors on 
  subspaces of its Hilbert space and $\hat{\Xi}_i$ their complementaries for all $i=1,\cdots,n$. 
  Then 
  \begin{equation}
    \tr{\abs{\hat{\Pi}_n\ldots \hat{\Pi}_1}^{2}\hat{\rho} }\geq 
    \tr{\hat{\rho}}-2\sqrt{\sum_{i=1}^n\tr{ \hat{\Xi}_i \hat{\rho}}}.
  \end{equation}
  \end{lemma}
  The previous lemma together with the concavity of the square root function implies: 
  \begin{equation}
    \begin{aligned}
    \langle\operatorname{Tr}[\hat{E}_{\vecund{k}} 
      \hat{\Pi}\hat{\rho}_{\vecund{k}}&\hat{\Pi}]\rangle_{\cal S}=\ave{\tr{\abs{\hat{\Pi}_{\vecund{k}^{(\ell)}}\hat{\Xi}_{\vecund{k}^{(\ell -1)}}\ldots \hat{\Xi}_{\vecund{k}^{(1)}}}^{2} \hat{\Pi}\hat{\rho}_{\vecund{k}}\hat{\Pi} }}_{\cal S}\\
      &\geq\ave{\tr{ \hat{\Pi}\hat{\rho}_{\vecund{k}}\hat{\Pi} }}_{\cal S}-2\sqrt{\ave{{\tr{ \hat{\Xi}_{\vecund{k}} \hat{\Pi}\hat{\rho}_{\vecund{k}}\hat{\Pi} }}}_{\cal S} +\sum_{\vecund{k}' \neq \vecund{k} }  \ave{ \tr{\hat{\Pi}_{\vecund{k}'} \hat{\Pi}\hat{\rho}_{\vecund{k}'}\hat{\Pi} } }_{\cal S}  },
      \end{aligned}
      \end{equation}
where the equality is obtained by writing explicitly the ordering labels $\ell$ necessary to describe a SM POVM element, while the inequality by also adding under the square root all the terms $\hat{\Pi}_{\vecund{k}^{(\ell')}}$ with $\ell'>\ell$. Now, the term outside the square-root can be treated as in \eqq{aveTrans}; for the first term under square-root we can simply apply \eqq{qugualej}; for each term of the sum under square-root we can instead use the inequality \eqq{pdiversoj1}. Therefore we can write 
      \begin{equation}
      \ave{\tr{\hat{E}_{\vecund{k}} 
      \hat{\Pi}\hat{\rho}_{\vecund{k}}\hat{\Pi}}}_{\cal S} \geq 
      1-\epsilon_1-2\sqrt{\epsilon_2+2\sqrt{\epsilon_1}+2^{NR}~2^{-N\left(\chi(\mathcal{S})-2\delta\right)}},
    \end{equation}
  which via Eq.~(\ref{IMPO1}) implies again that, for rates $R$ fulfilling Eq.~(\ref{IMPO0}) 
    for some $\delta >0$ and $N$ sufficiently large,
    \begin{equation}
   \ave{\tr{ \hat{T}^{(u)}_{k_{(1,u)}}
    \hat{\Pi}\hat{\rho}_{\vecund{k}}\hat{\Pi}}}_{\cal S} 
 \geq \ave{\tr{\hat{E}_{\vecund{k}} 
      \hat{\Pi}\hat{\rho}_{\vecund{k}}\hat{\Pi}}}_{\cal S}
 \geq  1-\epsilon_3,\label{last}
    \end{equation}
    for all $k_{u}$ and $u$,
    with $\epsilon_3=O(e^{-N})$ exponentially small in $N$.
       This proves Eq.~(\ref{IMPOIMPO11}) for the SM nested detections as well.
    \end{proof}
 
 \section{Optimal Quantum State Discrimination via a binary-tree-search measurement}\label{sec:opDisc}
In this section we employ the binary-tree-search decomposition of any POVM, Sec.~\ref{sec:bisMeas}, to compute the optimal success probability of discrimination of $M$ arbitrary quantum states, a problem presented in Sec.~\ref{sec:disc}. In this way the optimization of the whole POVM can be broken into several steps, optimizing first the binary-outcome POVMs at the deepest nesting level $u=u_{F}$ and then moving on to those at higher levels. We obtain an analytical expression for the maximum success probability after the first optimization step and examine its form for the specific case of $M=3,4$. Restricting to the qubit case, we are able to provide a compact expression for the success probability of any set of $M=3,4$ qubit states, whose numerical optimization is straightforward; the results thus obtained highlight some lesser-known features of the discrimination problem~\cite{qubits,bae,threeQubits}.

\subsection{The success probability}\label{StateDisc}
Let us suppose we want to discriminate the states belonging to a quantum source $\mathcal{S}^{(M)}$ of $M$ states, as given by \eqq{qSource}. The average success probability of discriminating the states $\hat{\rho}_{\ell}\in\mathcal{S}^{(M)}$ with a $M$-outcome POVM $\mathcal{M}^{(M)}$ can be computed as in \eqq{errProbMed}. We are particularly interested in the optimal success probability 
\begin{equation}
\mathbb{P}_{succ}(\mathcal{S}^{(M)})=1-\mathbb{P}_{err}(\mathcal{S}^{(M)}),
\end{equation}
obtained by optimizing over all measurements and complementary to the optimal error probability $\mathbb{P}_{err}$, \eqq{pErrOp}. \\
Following Sec.~\ref{sec:bisMeas}, we can always decompose the discrimination POVM into a sequence of nested binary POVMs, i.e., $\mathcal{M}^{(M)}\equiv\mathcal{N}^{(M)}$, \eqq{nested}, writing the success probability as 
\begin{equation}\label{probNest}
P_{succ}\left(\mathcal{S}^{(M)},\mathcal{N}^{(M)}\right)=\sum_{k_{(1,u_{\tiny{F}})}}\operatorname{Tr}\left[\hat{F}_{\vecund{k}}\hat{\sigma}_{\vecund{k}}\right]=\sum_{k_{(1,u_{\tiny{F}})}}\operatorname{Tr}\left[\left|\sqrt{\hat{B}^{(u_{\tiny{F}})}_{k_{(1,u_{\tiny{F}})}}} \cdots\sqrt{\hat{B}^{(1)}_{k_{1}}}\right|^{2}\hat{\sigma}_{\vecund{k}}\right],
\end{equation}
where we have introduced the binary representation $\vecund{k}=\vecund{k}^{(\ell)}$ for the labels $\ell$ of the states and measurement operators, the compact notation $\hat{\sigma}_{\vecund{k}}=p_{\vecund{k}}\hat{\rho}_{\vecund{k}}$ for the weighted states and employed the definition \eqq{nested} for the elements of the nested POVM. Moreover the sum is intended to be over all bits $k_{1},\cdots,k_{u_{\tiny{F}}}=0,1$.
This decomposition is interesting because it establishes a relation between the discrimination probability of a given source of states and that of its subsets of smaller size. Let us indeed suppose that the first measurement is successful, i.e., that an outcome $k_{1}$ occurs if one of the states $\hat{\rho}_{k_{1},k_{(2,u_{\tiny{F}})}}$ with that value of the first bit was present. This happens with probability 
\begin{equation}
P_{succ}(k_{1})=\sum_{k_{(2,u_{\tiny{F}})}}\operatorname{Tr}\left[\hat{B}^{(1)}_{k_{1}}\hat{\sigma}_{k_{1},k_{(2,u_{\tiny{F}})}}\right].
\end{equation}
 In this case the possible weighted states after the measurement are 
 \begin{equation}
  \hat{\tau}_{k_{1},k_{(2,u_{\tiny{F}})}}=\oneover{P_{succ}(k_{1})}\sqrt{\hat{B}^{(1)}_{k_{1}}}\hat{\sigma}_{k_{1},k_{(2,u_{\tiny{F}})}}\sqrt{\hat{B}^{(1)}_{k_{1}}},
  \end{equation} 
  forming a set of size $M/2$: $\mathcal{S}^{(M/2)}_{k_{1}}=\left\{\hat{\tau}_{k_{1},k_{(2,u_{\tiny{F}})}}\right\}_{k_{2},\cdots,k_{u_{\tiny{F}}}=0,1}$. Moreover the collection of remaining measurements can be seen as a nested POVM of order $M/2$: 
  \begin{equation}
   \mathcal{N}^{(M/2)}_{k_{1}}=\left\{\mathcal{B}^{(u_{\tiny{F}})}_{k_{1},k_{(2,u_{\tiny{F}}-1)}}\right\}_{k_{2},\cdots,k_{u_{\tiny{F}}-1}=0,1}\circ\cdots\circ\left\{\mathcal{B}^{(2)}_{k_{1}}\right\}. 
   \end{equation}
Hence we can easily rewrite the probability of discriminating the set of states $\mathcal{S}^{(M)}$ with the POVM $\mathcal{N}^{(M)}$, ~\eqq{probNest}, as the probability of discriminating the set $\mathcal{S}^{(M/2)}_{k_{1}}$ with the POVM $\mathcal{N}^{(M/2)}_{k_{1}}$ if the first measurement had an outcome $k_{1}$, averaged over all values of $k_{1}=0,1$:
\begin{equation}\label{rec}
\begin{aligned}
P_{succ}\Big(\mathcal{S}^{(M)},\mathcal{N}^{(M)}\Big)&=\sum_{k_{1},k_{(2,u_{\tiny{F}})}}P_{succ}(k_{1})\operatorname{Tr}\Bigg[\left|\sqrt{\hat{B}^{(u_{\tiny{F}})}_{k_{1},k_{(2,u_{\tiny{F}})}}}\cdots\sqrt{\hat{B}^{(2)}_{k_{1},k_{2}}}\right|^{2}\hat{\tau}_{k_{1},k_{(2,u_{\tiny{F}})}}\Bigg]\\
&=\sum_{k_{1}}P_{succ}(k_{1})P_{succ}\left(\mathcal{S}^{(M/2)}_{k_{1}},\mathcal{N}^{(M/2)}_{k_{1}}\right).
\end{aligned}
\end{equation}
This expression suggests a recursive optimization: if the optimal discrimination problem is solved for any set  $\mathcal{S}^{(M/2)}$ of a fixed size $M/2$, possibly restricting to a specific Hilbert space, e.g., qubits or continuous-variable states, then the result can be plugged into \eqq{rec} to obtain an expression for the discrimination probability of a set of double the size, which depends on a single couple of measurement operators, i.e., the first-step binary POVM $\mathcal{B}^{(1)}$. However, if a general solution for the discrimination of a smaller set of states is not available, the problem remains hard, since when optimizing $P_{succ}\left(\mathcal{S}^{(M/2)}_{k_{1}},\mathcal{N}^{(M/2)}_{k_{1}}\right)$ we still has to take into account the dependence of the states of $\mathcal{S}^{(M/2)}_{k_{1}}$ on the first-step POVM, which is itself subject to optimization afterwards, thus making the states arbitrary.\\
Fortunately the first step of the recursion has a well-known solution, \eqq{helBound}: 
\begin{equation}
 \mathbb{P}_{succ}\left(\mathcal{S}^{(2)}\right)=\frac{1+\norm{\hat{\sigma}_{0}-\hat{\sigma}_{1}}_{1}}{2}.
\end{equation} 
Then by plugging this expression into the optimization of ~\eqq{rec} for $M=4$ states we can write:
\begin{equation}
\begin{aligned}
\mathbb{P}_{succ}\Big(&\mathcal{S}^{(4)}\Big)=\max_{\mathcal{B}^{(1)}}\sum_{k_{1}}\oneover{2}P_{succ}(k_{1})\left(1+\norm{\hat{\tau}_{k_{1},0}-\hat{\tau}_{k_{1},1}}_{1}\right)\\
&=\max_{\mathcal{B}^{(1)}}\sum_{k_{1}}\oneover{2}\left(\operatorname{Tr}\left[\hat{B}^{(1)}_{k_{1}}\left(\hat{\sigma}_{k_{1},0}+\hat{\sigma}_{k_{1},1}\right)\right]+\norm{\sqrt{\hat{B}^{(1)}_{k_{1}}}\left(\hat{\sigma}_{k_{1},0}-\hat{\sigma}_{k_{1},1}\right)\sqrt{\hat{B}^{(1)}_{k_{1}}}}_{1}\right).
\end{aligned}
\end{equation}
The latter formula can be expressed more compactly by introducing the function
\begin{equation} \label{DEFFQABC}
\mathcal{\tiny{F}}_{\hat{Q}}\left(\hat{A},\hat{B},\hat{C}\right)=\operatorname{Tr}\left[\hat{Q}\hat{A}+\left|\sqrt{\hat{Q}}\hat{B}\sqrt{\hat{Q}}\right|+\left|\sqrt{\hat{\mathbf{1}}-\hat{Q}} \hat{C} \sqrt{\hat{\mathbf{1}}-\hat{Q}}\right|\right],
\end{equation} 
where $\hat{Q}$ is a positive and less-than-one operator, while the arguments $\hat{A},\hat{B},\hat{C}$ are hermitian operators, and its maximum over $\hat{Q}$, i.e., 
\begin{equation} 
\mathcal{\tiny{F}}\left(\hat{A},\hat{B},\hat{C}\right)=\max_{\hat{\mathbf{1}}\geq \hat{Q}\geq0}\mathcal{\tiny{F}}_{\hat{Q}}\left(\hat{A},\hat{B},\hat{C}\right). \label{DEFFABC}
\end{equation}
  Setting $\hat{B}^{(1)}_{0}=\hat{Q}$ and $\hat{B}^{(1)}_{1}=\hat{\mathbf{1}}-\hat{Q}$, we obtain:
 \begin{equation}
 \mathbb{P}_{succ}\Big(\mathcal{S}^{(4)}\Big)=\frac{p_{1,0}+p_{1,1}}{2}+\mathcal{\tiny{F}}\left(\hat{A}^{(4)},\hat{B}^{(4)},\hat{C}^{(4)}\right),
 \label{prob4}
 \end{equation} 
 with 
 \begin{align}
 \hat{A}^{(4)}&=\frac{\hat{\sigma}_{0,0}+\hat{\sigma}_{0,1}-\hat{\sigma}_{1,0}-\hat{\sigma}_{1,1}}{2},\\ 
 \hat{B}^{(4)}&=\frac{\hat{\sigma}_{0,0}-\hat{\sigma}_{0,1}}{2},\\
\hat{C}^{(4)}&=\frac{\hat{\sigma}_{1,0}-\hat{\sigma}_{1,1}}{2}. 
\end{align}
Similarly for $M=3$ states we have:
\begin{equation}
\mathbb{P}_{succ}\Big(\mathcal{S}^{(3)}\Big)=p_{1,0}+\mathcal{\tiny{F}}\left(\hat{A}^{(3)},\hat{B}^{(3)},\hat{C}^{(3)}\right),\label{prob3}
\end{equation}
with 
\begin{align}
\hat{A}^{(3)}&=\frac{\hat{\sigma}_{0,0}+\hat{\sigma}_{0,1}}{2}-\hat{\sigma}_{1,0},\\
\hat{B}^{(3)}&=\hat{B}^{(4)},\\
\hat{C}^{(3)}&=0.
\end{align}
Thus the optimal discrimination problem of $M=3,4$ states has been reduced to the evaluation of the function $\mathcal{\tiny{F}}$, which requires an optimization over a single operator $\hat{Q}$.\\
As already discussed, if the problem of \eqq{prob4} were to be solved exactly for any set of states $\mathcal{S}^{(4)}$, then the result could be plugged into ~\eqq{rec}, obtaining an expression for the optimal discrimination probability of $M=8$ states dependent only on the first binary POVM. Unfortunately a solution of Eqs.~(\ref{prob4},\ref{prob3}) can be found only in some specific cases, listed below and discussed in detail in Appendix~\ref{app:casesOpSol}. In the following we write the positive and negative part of an operator $\hat{X}$ as $\hat{X}_{\pm}=(\hat{X}\pm\abs{\hat{X}})/2$ and it holds $\hat{X}_{+}\geq0$ and $\hat{X}_{-}\leq0$ by definition; moreover the support of $\hat{X}$ is written as $\operatorname{supp}(\hat{X})$.
 \newtheorem{proposition}{Proposition}
\begin{proposition}\label{cases}
The value of the function $\mathcal{\tiny{F}}\left(\hat{A},\hat{B},\hat{C}\right)$ of Eq.~(\ref{DEFFABC}) is 
\begin{equation}\label{value} 
\mathcal{\tiny{F}}\left(\hat{A},\hat{B},\hat{C}\right)=\tr{\left(\hat{A}+\abs{\hat{B}}-\abs{\hat{C}}\right)_{+}}+\norm{\hat{C}}_{1}, 
\end{equation} 
in any of the following cases: 
\begin{enumerate} 
\item the operators $\hat{B}$ and $\hat{C}$ have support respectively on the positive and negative support of $\hat{A}$, i.e., $\operatorname{supp}(\hat{B})\subseteq\operatorname{supp}(\hat{A}_{+})$ and $\operatorname{supp}(\hat{C})\subseteq\operatorname{supp}(\hat{A}_{-})$;
\item $\hat{B}$ and $\hat{C}$ have a definite sign, i.e., $\hat{B}\geq0$ or $\hat{B}\leq0$ and $\hat{C}\geq0$ or $\hat{C}\leq0$; 
\item $\hat{A}$, $\hat{B}$ and $\hat{C}$ all commute with each other, i.e., $[\hat{A},\hat{B}]=[\hat{A},\hat{C}]=[\hat{B},\hat{C}]=0$.
\end{enumerate}
 \end{proposition}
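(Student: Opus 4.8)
The plan is to establish the stated value of $\mathcal{F}$ by first deriving a universal upper bound on $\mathcal{F}_{\hat{Q}}$ that holds for \emph{every} admissible $\hat{Q}$, and then exhibiting, in each of the three cases separately, an explicit operator $\hat{Q}$ that saturates this bound. The point is that the upper bound itself requires no hypotheses; the three cases enter only to guarantee achievability.

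First I would control the two trace-norm contributions in \eqref{DEFFQABC}. Splitting the Hermitian operator into positive and negative parts, $\hat{B}=\hat{B}_{+}+\hat{B}_{-}$ with $\abs{\hat{B}}=\hat{B}_{+}-\hat{B}_{-}$, the operators $\sqrt{\hat{Q}}\hat{B}_{+}\sqrt{\hat{Q}}$ and $-\sqrt{\hat{Q}}\hat{B}_{-}\sqrt{\hat{Q}}$ are positive, so each has trace norm equal to its trace. Hence by the triangle inequality for the trace norm and cyclicity,
$$\trabs{\sqrt{\hat{Q}}\hat{B}\sqrt{\hat{Q}}}\leq\tr{\hat{Q}\abs{\hat{B}}},\qquad \trabs{\sqrt{\hat{\mathbf{1}}-\hat{Q}}\hat{C}\sqrt{\hat{\mathbf{1}}-\hat{Q}}}\leq\tr{(\hat{\mathbf{1}}-\hat{Q})\abs{\hat{C}}}.$$
Inserting these into $\mathcal{F}_{\hat{Q}}$ collapses it to a linear functional of $\hat{Q}$, namely $\mathcal{F}_{\hat{Q}}\leq\tr{\hat{Q}(\hat{A}+\abs{\hat{B}}-\abs{\hat{C}})}+\norm{\hat{C}}_{1}$. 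Maximizing a linear functional $\tr{\hat{Q}\hat{H}}$ over $0\leq\hat{Q}\leq\hat{\mathbf{1}}$ gives $\tr{\hat{H}_{+}}$, attained by the projector onto the positive eigenspace of $\hat{H}=\hat{A}+\abs{\hat{B}}-\abs{\hat{C}}$. This already establishes the claimed formula \eqref{value} as an upper bound in full generality.

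It then remains to exhibit, in each case, a single $\hat{Q}$ for which both triangle inequalities are tight \emph{and} the linear part is maximized. In Case 2 the definiteness of $\hat{B}$ and $\hat{C}$ makes $\sqrt{\hat{Q}}\hat{B}\sqrt{\hat{Q}}$ and $\sqrt{\hat{\mathbf{1}}-\hat{Q}}\hat{C}\sqrt{\hat{\mathbf{1}}-\hat{Q}}$ sign-definite for every $\hat{Q}$, so both inequalities are equalities automatically and I simply take $\hat{Q}$ to be the positive projector of $\hat{A}+\abs{\hat{B}}-\abs{\hat{C}}$. In Case 3 I would diagonalize $\hat{A},\hat{B},\hat{C}$ in a common eigenbasis and choose the same positive projector, which now commutes with $\hat{B}$ and $\hat{C}$; then $\sqrt{\hat{Q}}\hat{B}\sqrt{\hat{Q}}=\hat{Q}\hat{B}\hat{Q}$ and the trace-norm terms reduce to $\tr{\hat{Q}\abs{\hat{B}}}$ and $\tr{(\hat{\mathbf{1}}-\hat{Q})\abs{\hat{C}}}$, restoring equality. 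In Case 1 I would instead take $\hat{Q}=\hat{\Pi}_{A}^{+}$, the projector onto $\operatorname{supp}(\hat{A}_{+})$: the hypothesis $\operatorname{supp}(\hat{B})\subseteq\operatorname{supp}(\hat{A}_{+})$ forces $\sqrt{\hat{Q}}\hat{B}\sqrt{\hat{Q}}=\hat{B}$, so its trace norm is exactly $\norm{\hat{B}}_{1}$, and $\operatorname{supp}(\hat{C})\subseteq\operatorname{supp}(\hat{A}_{-})\subseteq\operatorname{range}(\hat{\mathbf{1}}-\hat{Q})$ forces the other term to equal $\norm{\hat{C}}_{1}$.

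The step I expect to be the main obstacle is Case 1, where one must check that this \emph{one} projector $\hat{\Pi}_{A}^{+}$ does three jobs simultaneously: removing the slack in both triangle inequalities and maximizing the linear functional. The key is the block-diagonal structure of $\hat{A}+\abs{\hat{B}}-\abs{\hat{C}}$ across $\operatorname{supp}(\hat{A}_{+})\oplus\operatorname{supp}(\hat{A}_{-})\oplus\ker\hat{A}$ implied by the support hypotheses, whereby the operator equals $\hat{A}_{+}+\abs{\hat{B}}\geq0$ on the first block, $\hat{A}_{-}-\abs{\hat{C}}\leq0$ on the second, and vanishes on $\ker\hat{A}$; this identifies $\hat{\Pi}_{A}^{+}$ as precisely the positive projector needed and makes $\tr{\hat{\Pi}_{A}^{+}(\hat{A}+\abs{\hat{B}}-\abs{\hat{C}})}=\tr{(\hat{A}+\abs{\hat{B}}-\abs{\hat{C}})_{+}}$. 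Cases 2 and 3 are comparatively routine once the general upper bound is in hand.
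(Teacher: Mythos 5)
Your proof is correct, and for Cases~2 and~3 it coincides with the paper's argument: the same linearization $\trabs{\sqrt{\hat{Q}}\hat{B}\sqrt{\hat{Q}}}\leq\tr{\hat{Q}\abs{\hat{B}}}$ (via the positive/negative-part splitting and the triangle inequality), the same reduction to maximizing $\tr{\hat{Q}(\hat{A}+\abs{\hat{B}}-\abs{\hat{C}})}$ over $0\leq\hat{Q}\leq\hat{\mathbf{1}}$, and the same saturating projector onto the positive part, with tightness of the trace-norm bounds guaranteed by sign-definiteness in Case~2 and by commutativity in Case~3. Where you genuinely diverge is Case~1: the paper does \emph{not} run it through the universal linearized bound, but instead proves subadditivity of $\mathcal{F}$ in all three arguments and two auxiliary lemmas (one for $\hat{A}\geq 0$ with $\hat{B}$ supported inside $\hat{A}$ and $\hat{C}$ orthogonal, one for $\hat{A}\leq 0$), splits $\hat{A}$ into its positive and negative blocks, and adds the two resulting bounds, each saturated by $\hat{\mathbf{1}}_{A_{+}}$. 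Your route folds Case~1 into the same one-line upper bound used for the other cases and then checks that the support hypotheses make $\hat{A}+\abs{\hat{B}}-\abs{\hat{C}}$ block-diagonal, so that $\hat{\Pi}_{A}^{+}$ simultaneously kills the slack in both trace-norm inequalities and realizes $\tr{(\hat{A}+\abs{\hat{B}}-\abs{\hat{C}})_{+}}=\tr{\hat{A}_{+}}+\norm{\hat{B}}_{1}$. This is more economical (no subadditivity machinery, no auxiliary lemmas) and it makes the simplification of Remark~\ref{remark1} immediate from the block structure; the paper's modular version, on the other hand, isolates reusable statements about $\mathcal{F}$ with sign-definite $\hat{A}$ that it can cite independently. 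Both arguments are sound and yield the same value.
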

 \newtheorem{remark}{Remark}
 \begin{remark}\label{remark1}
 In the first case of Proposition~\ref{cases} the expression \eqq{value} can be simplified as
 \begin{equation}
 \mathcal{\tiny{F}}\left(\hat{A},\hat{B},\hat{C}\right)=\operatorname{Tr}\left[\hat{A}_{+}\right]+\norm{\hat{B}}_{1}+\norm{\hat{C}}_{1}.
 \end{equation}
 \end{remark}
 The proofs of Proposition~\ref{cases} and Remark~\ref{remark1} can be found in Appendix~\ref{app:casesOpSol}.
\begin{remark}\label{recRem}
The optimal success probability is invariant under exchange of the states, i.e., under relabelling of the indices $k_{1},k_{2}$ in our case $M=3,4$. Hence it can happen that the conditions listed in Proposition~\ref{cases} are valid only for $\hat{A}, \hat{B}, \hat{C}$ given by a specific ordering of the states.
\end{remark}
The previous remark implies that, when checking whether a set of states satisfies the conditions of Proposition~\ref{cases} or not, one has to consider all possible sets of $\hat{A}, \hat{B}, \hat{C}$ obtainable by different orderings of the states, not only the conventional one employed in Eqs.~(\ref{prob4},\ref{prob3}). Alternatively, one can apply this symmetry under exchange of the states to obtain recursive relations for $\mathcal{\tiny{F}}\left(\hat{A},\hat{B},\hat{C}\right)$, e.g., for $M=3$ and by exchanging $(0,0)\leftrightarrow(1,0)$, it holds
\begin{equation}
\mathcal{\tiny{F}}(\hat{A},\hat{B},0)=\frac{\mathcal{\tiny{F}}(-3\hat{B}-\hat{A},\hat{B}-\hat{A},0)}{2}+\tr{\hat{A}+\hat{B}};\label{rec3}
\end{equation}
then Proposition~\ref{cases} holds on the right-hand side of \eqq{rec3} when $\hat{B}'=\hat{B}-\hat{A}$ has a definite sign, but the latter is simply $\hat{B}'=(\hat{\sigma}_{1,0}-\hat{\sigma}_{0,1})/2$, an expression of the operator $\hat{B}$ for the new ordering of the states. 
 \begin{remark}
 In all the cases listed in Proposition~\ref{cases}, with the conventional ordering of the states of Eqs.~(\ref{prob4},\ref{prob3}), the optimal success probabilities for the discrimination of $N=3,4$ states become
 \begin{align}
 &\mathbb{P}_{succ}\Big(\mathcal{S}^{(3)}\Big)=p_{1,0}+\oneover{2}\operatorname{Tr}\left[\left(\hat{\sigma}_{0,0}+\hat{\sigma}_{0,1}-2\hat{\sigma}_{1,0}+|\hat{\sigma}_{0,0}-\hat{\sigma}_{0,1}|\right)_{+}\right],\label{p3}\\
 &\begin{aligned} \label{p4}
 \mathbb{P}_{succ}\Big(\mathcal{S}^{(4)}\Big)&=\oneover{2}\operatorname{Tr}\left[\left(\hat{\sigma}_{0,0}+\hat{\sigma}_{0,1}-\hat{\sigma}_{1,0}-\hat{\sigma}_{1,1}+|\hat{\sigma}_{0,0}-\hat{\sigma}_{0,1}|-|\hat{\sigma}_{1,0}-\hat{\sigma}_{1,1}|\right)_{+}\right]\\
 &+\frac{p_{1,0}+p_{1,1}+\norm{\hat{\sigma}_{1,0}-\hat{\sigma}_{1,1}}_{1}}{2}.
\end{aligned}
 \end{align}
 \end{remark}

\subsection{The case of three or four qubits}\label{Qubits}
Let us now analyze the probability expressions obtained in the previous subsection in the case of $M=3,4$ \textit{qubit} states. Indeed, since Eqs.~(\ref{prob4},\ref{prob3}) seem not to be solvable analytically for generic sets of states, it is interesting to tackle the problem by choosing the simplest possible Hilbert space for the measured system, i.e., the qubit space $\mathcal{H}_{2}$, see Sec.~\ref{subsec:qubits}. 
Employing the Bloch representation for hermitian operators, \eqq{blochHerm}, we can rewrite the function $\mathcal{\tiny{F}}_{\hat{Q}}\left(\hat{A},\hat{B},\hat{C}\right)$ as (see Appendix~\ref{app:qubitOp} for details) 
\begin{equation}\label{fQub}
\begin{aligned}
\mathcal{\tiny{F}}^{(2)}_{Q}\Big(\hat{A},\hat{B}&,\hat{C}\Big)=2\Bigg(\sqrt{(c_{Q}c_{B}+\vecund{r}_{Q}\cdot\vecund{r}_{B})^{2}+\left(\left(r_{B}\right)^{2}-\left(c_{B}\right)^{2}\right)\left(\left(c_{Q}\right)^{2}-\left(r_{Q}\right)^{2}\right)}\\
&+\sqrt{((1-c_{Q})c_{C}-\vecund{r}_{Q}\cdot\vecund{r}_{C})^{2}+\left(\left(r_{C}\right)^{2}-\left(c_{C}\right)^{2}\right)\left(\left(1-c_{Q}\right)^{2}-\left(r_{Q}\right)^{2}\right)}\\
&+c_{Q}c_{A}+\vecund{r}_{Q}\cdot\vecund{r}_{A}\Bigg),
\end{aligned}
\end{equation}
 when $\hat{B}$ and $\hat{C}$ do not have a definite sign, or
\begin{equation}\label{defSign}
\mathcal{\tiny{F}}^{(2,d)}_{Q}\left(\hat{A},\hat{B},\hat{C}\right)\Big|_{\mathcal{H}_{2}}=2\left(c_{Q}c_{A+|B|-|C|}+\vecund{r}_{Q}\cdot\vecund{r}_{A+|B|-|C|}+c_{|C|}\right),
\end{equation}
when both $\hat{B}$ and $\hat{C}$ have a definite sign.\\
For each source of $M=3,4$ qubit states to discriminate, the operators $\hat{A}, \hat{B}, \hat{C}$, i.e., their coefficients $c$ and $\vecund{r}$, are fixed and the optimization of Eqs.~(\ref{fQub}, \ref{defSign}) is to be carried out only over $\hat{Q}$, i.e., on its coefficients $c_{Q}$ and $\vecund{r}_{Q}$ under the constraints
\begin{equation}\label{constraintsQ} 
1\geq c_{Q}\geq0, \qquad r_{Q}\leq\min[c_{Q},1-c_{Q}], 
\end{equation} 
which ensure that $0\leq\hat{Q}\leq\hat{\mathbf{1}}$.
In particular, for $M=3$ states $\hat{C}=0$ and one can show that $c_{Q}+r_{Q}=\lambda^{(+)}_{Q}=1$ is optimal. Moreover the optimal $\vecund{r}_{Q}$ lies on the plane of $\vecund{r}_{A}$ and $\vecund{r}_{B}$, so that it can be defined in terms of its norm $r_{Q}$ and a single angle $\phi_{Q}$ as $\vecund{r}_{Q}\cdot\vecund{r}_{A}=r_{Q}r_{A}\cos\phi_{Q}$. Then in this case we only need to optimize two parameters, namely $c_{Q}$ and $\phi_{Q}$. For $M=4$ instead there are no further simplifications and one has to optimize four parameters under the constraints \eqq{constraintsQ}.\\
 Let us now consider the case in which \eqq{defSign} is valid, i.e., $\hat{B}$ and $\hat{C}$ have a definite sign: it satisfies condition $2$ of Proposition~\ref{cases}, thus the optimization of \eqq{defSign} must match the expression \eqq{value}. Indeed it can also shown by direct analytical optimization that in this case 
 \begin{equation}
 \mathcal{\tiny{F}}^{(2,d)}(\hat{A},\hat{B},\hat{C})=|c_{A+|B|-|C|}|+c_{A+|B|-|C|}+2c_{C}
 \end{equation} 
 and the optimal value of $\hat{Q}$ is $r_{Q}=0$ and $c_{Q}=\theta(c_{A+|B|-|C|})$, with $\theta(\cdot)$ the step function, valued $1$ when its argument is positive and zero otherwise. \\
 As for the other case, in which \eqq{fQub} is valid, unfortunately the function cannot be completely optimized analytically. Nevertheless its numerical optimization is straightforward and thus, together with Eqs.~(\ref{prob4}, \ref{prob3}), it provides a convenient method to obtain the optimal success probability of discrimination and the optimal measurement operators for $M=3,4$ qubit states. 
\begin{figure}[t!]\center
\includegraphics[trim={0 1.8cm 0 1.8cm},clip,scale=.43]{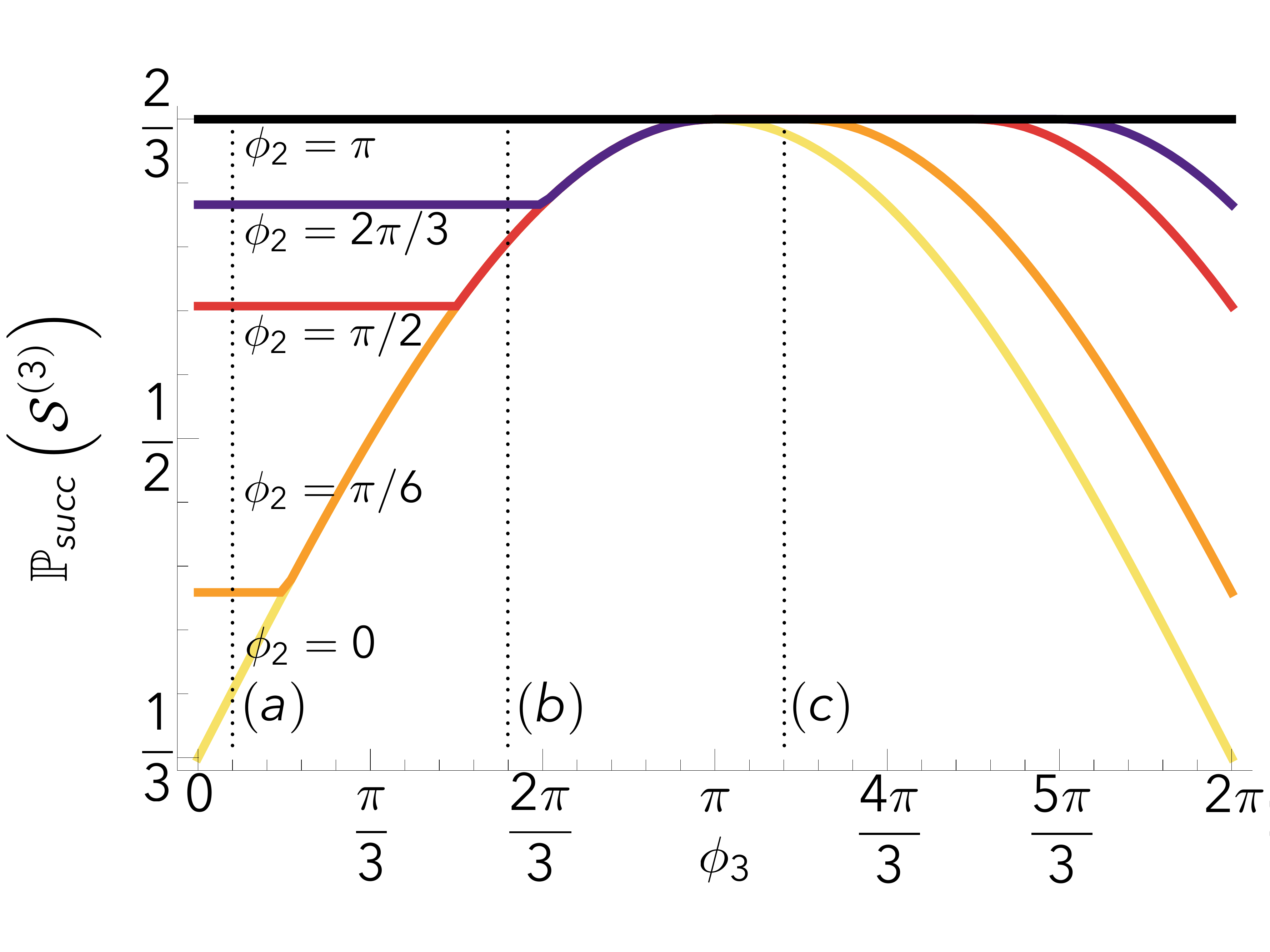}
\caption{Plot of $\mathbb{P}_{succ}\left(\mathcal{S}^{(3)}\right)$, the optimal success probability \eqq{prob3} for a set of three equiprobable pure qubit states, identified by the Bloch vectors $\vecund{r}_{\rho_{1}}=(1,0,0)$, $\vecund{r}_{\rho_{2}}=(\cos\phi_{2},\sin\phi_{2},0)$ and $\vecund{r}_{\rho_{3}}=(\cos\phi_{3},\sin\phi_{3},0)$, as a function of $\phi_{3}$ for several values of $\phi_{2}=0, \pi/6, \pi/2, 2\pi/3, \pi$ (respectively from yellow/light-gray to black). The results are obtained by numerical optimization of \eqq{fQub} over $c_{Q}$, $\vecund{r}_{Q}$. Observe that, for all values of $\phi_{2}$, there is a range of values of $\phi_{3}$ where $\mathbb{P}_{succ}\left(\mathcal{S}^{(3)}\right)$ attains the maximum allowed for non-orthogonal states, i.e., the same as for symmetric states. Outside of this range the quantity decreases, reaching a constant minimum for $\phi_{3}\leq\phi_{2}$ (see text and Fig.~\ref{f5} for an explanation). The cases $\phi_{2}=\pi/6, 2\pi/3$ are explicitly depicted in Fig.~\ref{f5} for three values of $\phi_{3}$ identified by the labelled dotted lines.}\label{f4}
\end{figure} 
As an example let us consider $M=3$ \textit{equiprobable pure} qubit states situated on the $(x,y)$ plane of the Bloch sphere, i.e., a combination of $\hat{\mathbf{1}}_{2}$, $\hat{\sigma}_{1}$ and $\hat{\sigma}_{2}$; this is a simple choice for the sake of clarity, but we stress that no additional optimization difficulties are met when considering non-equiprobable and mixed states. Let us fix the first state to be on the $x$ axis, i.e., $\vecund{r}_{\rho_{1}}=(1,0,0)$, without loss of generality. Then we can study the optimal success probability by varying the angles of the other two vectors with respect to the first one: $\vecund{r}_{\rho_{2}}=(\cos\phi_{2},\sin\phi_{2},0)$ and $\vecund{r}_{\rho_{3}}=(\cos\phi_{3},\sin\phi_{3},0)$. If the states are also symmetrically distributed at constant angles along the circumference, i.e., $\phi_{2}=\phi_{3}=2\pi/3$, the result is well-known~\cite{helstromBOOK}: $\mathbb{P}_{succ}\left(\mathcal{S}_{sym}^{(3)}\right)=2/3$, which is the maximum success probability of discrimination for any three equiprobable qubit states (because no other configuration can achieve a lower average state-overlap than this one). 
For more general states the results are shown in Fig.~\ref{f4} where we plot the optimal success probability, \eqq{prob3}, computed by numerical optimization of \eqq{fQub} over $c_{Q}$, $\vecund{r}_{Q}$, as a function of the third angle $\phi_{3}$ and for several choices of $\phi_{2}$. It can be seen that, for all values of $\phi_{2}$, there is a range of values of $\phi_{3}$ for which $\mathbb{P}_{succ}\left(\mathcal{S}^{(3)}\right)$ is equal to the maximum value of $2/3$, even though the states are not symmetrically distributed on the circumference. In other words there is a wide class of states that can be discriminated with performance as good as if they were symmetric. Outside of this range, whose width depends on $\phi_{2}$, the value of  $\mathbb{P}_{succ}\left(\mathcal{S}^{(3)}\right)$ decreases and it reaches a constant minimum when $\phi_{3}\leq\phi_{2}$. \\

The results of Fig.~\ref{f4} can be explained thanks to the geometrical treatment of~\cite{bae}, briefly discussed in Sec.~\ref{sec:disc}; in particular, \eqq{baeDisc} applies in this case. Let us then plot the states in the Bloch sphere and study the polytope formed by their vertices, as done in Fig.~\ref{f5} for two values of $\phi_{2}$ used in Fig.~\ref{f4} and $\phi_{3}$ corresponding to the dotted lines labelled $a, b, c$ in Fig.~\ref{f4}. If the polytope determined by the qubits, usually a triangle, contains the origin, then the optimal success probability is still maximum, i.e., $\mathbb{P}_{succ}\left(\mathcal{S}_{\supseteq0}^{(3)}\right)\equiv\mathbb{P}_{succ}\left(\mathcal{S}_{sym}^{(3)}\right)$, as in Fig.~\ref{f5}$c$. Indeed in this case the polytope formed by the states is already maximal in the Bloch sphere and $R=1/3$, as for the symmetric set. On the other hand, if the polytope does not contain the origin, the optimal success probability is strictly lower than the maximum one, i.e.,  $\mathbb{P}_{succ}\left(\mathcal{S}_{\nsupseteq0}^{(3)}\right)<\mathbb{P}_{succ}\left(\mathcal{S}_{sym}^{(3)}\right)$, as in Fig.~\ref{f5}$a,b$. Indeed in this second case the polytope formed by the states is not maximal and it can be expanded until its largest side matches a diameter of the circumference, so that $R<1/3$.
As for the region $\phi_{3}\leq\phi_{2}$ where $\mathbb{P}_{succ}\left(\mathcal{S}_{\nsupseteq0}^{(3)}\right)$ is minimum and constant (as in Fig.~\ref{f5}$a,b$ top and $a$ bottom), it can be explained by observing that in this case the largest side of the triangle determined by the states, which in turn determines $R$, is always the one that connects $\vecund{r}_{\rho_{1}}$ and $\vecund{r}_{\rho_{2}}$ independently of $\vecund{r}_{\rho_{3}}$. Since $\vecund{r}_{\rho_{2}}-\vecund{r}_{\rho_{1}}$ is constant for constant $\phi_{2}$, $R$ is constant too in this case. 
\begin{figure}[t!]\center
\includegraphics[trim={0 1cm 0 .8cm},clip,scale=.43]{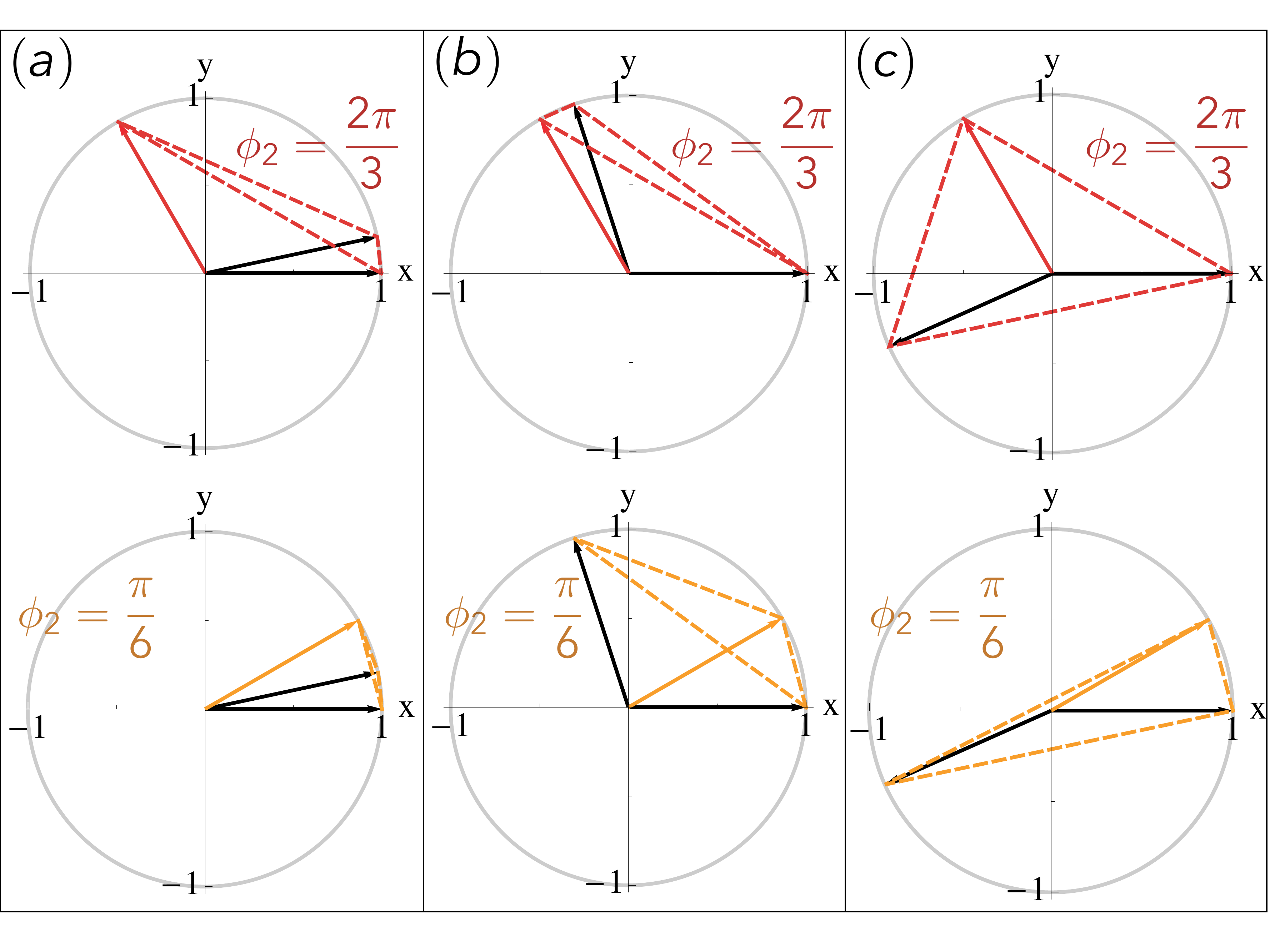}
\caption{Plot of the vectors of the three states $\vecund{r}_{\rho_{1}}=(1,0,0)$ (black fixed on $x$ axis), $\vecund{r}_{\rho_{3}}=(\cos\phi_{3},\sin\phi_{3},0)$ (black) and $\vecund{r}_{\rho_{2}}=(\cos\phi_{2},\sin\phi_{2},0)$ (red/dark gray at the top and orange/light gray at the bottom) on the $(x,y)$ Bloch plane, as well as the triangles formed by them (same color codes as $\vecund{r}_{\rho_{2}}$). The labels $a, b, c$ refer respectively to values of $\phi_{3}=\pi/15, 3\pi/5, 17\pi/15$, also highlighted in Fig.~\ref{f4}, while two values of $\phi_{2}= 2\pi/3,\pi/6$ (respectively red/dark gray and orange/light gray figures) are considered. By comparison with Fig.~\ref{f4} it is evident that $\mathbb{P}_{succ}\left(\mathcal{S}^{(3)}\right)$ is maximum when the triangle formed by the states contains the origin ($c$), while it is lower otherwise. In particular, when $\phi_{3}\leq\phi_{2}$ ($a, b$ top, $a$ bottom) the largest side of the triangle formed by the states is always $\vecund{r}_{\rho_{2}}-\vecund{r}_{\rho_{1}}$ and this determines completely the optimal success probability.}\label{f5}
\end{figure}

\chapter{Structured practical receiver schemes for communication and state discrimination}\label{ch:Imple}

\ifpdf
    \graphicspath{{Chapter4/Figs/Raster/}{Chapter4/Figs/PDF/}{Chapter4/Figs/}}
\else
    \graphicspath{{Chapter4/Figs/Vector/}{Chapter4/Figs/}}
\fi

In this chapter we consider the problems of optimal communication, Sec.~\ref{sec:commQ}, and state discrimination, Sec.~\ref{sec:disc}, from an implementation-oriented perspective, discussing three structured practical receiver structures for coherent states. In Sec.~\ref{sec:cohDisc} we describe a binary coherent-state discrimination scheme which improves on the Kennedy and Dolinar receivers, Sec.~\ref{subsec:discPract}, through the addition of a non-linear Gaussian element known as Probabilistic Amplifier (PA), proposed in~\cite{RalphStart,Caves,OptimalNLA}. In Sec.~\ref{sec:had} instead we consider a generalized family of multi-mode coherent-state communication protocols based on the Hadamard transform~\cite{Hadamard}, building on~\cite{guha1,guha2}; we evaluate the Holevo information of these protocols and the transmission rate achievable with a feasible decoder. Eventually, in Sec.~\ref{sec:adRec} instead we study the maximum transmission rate of a large class of realistic coherent-state adaptive decoders based on passive Gaussian interferometers and arbitrary single mode measurements, ruling out the possibility of achieving the Holevo bound with them and closing a conjecture made in~\cite{guhaDet1}. This chapter is based on~\cite{NOSTRODisc,NOSTROHad,NOSTROAdap}.

\section{Binary coherent-state discrimination via probabilistic amplification}\label{sec:cohDisc}
In this section we propose a practical scheme for the detection of binary coherent-state signals at low energies that employs a 
PA~\cite{RalphStart,AltAmp,ADagA,PhaseAmp,ExpRalph,ExpSciss,ExpZav,ExpQKD,AmpAsSq,CompareSqAmp,AmpSymp,Caves,OptimalNLA}, operated in its non-heralded version, as the underlying non-linear operation to improve the detection efficiency. This approach allows us to improve the statistics by keeping track of all possible outcomes of the PA stage (including failures). When compared with an optimized Kennedy receiver, the resulting discrimination success probability we obtain presents a gain of up to $\sim 1.88\%$ and it approaches the Helstrom bound appreciably faster than the Dolinar receiver, when employed in an adaptive strategy. We also notice that the advantages obtained can be ultimately 
associated with the fact that, in the high-gain limit, the non-heralded version of the PA induces a partial dephasing which preserves quantum coherence among low-photon-number states while removing it elsewhere. Hence a proposal to realize such transformation based on an optical-cavity implementation is presented and its discrimination probability is evaluated.

\subsection{The non-heralded-probabilistic-amplifier receiver}\label{sec2}
 Let us recall that an ordinary PA  performs  probabilistically the amplification of a coherent state $\ket{\alpha}$ into $\sim\ket{g\alpha}$, the failure events being highly probable but heralded by a triggering signal which allows one to discard them. In order to improve the outcome statistics of our receiver, we operate the PA device by considering its non-heralded version (NHPA), i.e., we act on the incoming signal with a standard PA machine 
with the only difference that all events, failures included, are accepted in the subsequent stages of the detection process. 
The NHPA of gain $g\geq 1$ and cutoff $n\in\mathbb{N}$ that we analyze here can be described as a CPTP map $\mathcal{A}_{g,n}$ characterized by two Kraus operators, see \eqq{kraus}, represented in the Fock basis as follows:
 \begin{align}\label{mSucc}
&\hat{M}_{S}=g^{-n}\sum_{k=0}^{N}g^{k}\dketbra{k}+\sum_{k=n+1}^{\infty}\dketbra{k}=\hat{\mathbf{1}}-\sum_{k=0}^{N}\left(1-g^{-(n-k)}\right)\dketbra{k},\\ 
 \label{mFail}&\hat{M}_{F}=\sum_{k=0}^{N}\left(1-g^{-2(n-k)}\right)^{1/2}\dketbra{k},
 \end{align}
which, according to the analysis presented in~\cite{Caves,OptimalNLA}, identify respectively success and failure in the amplification of a regular PA. We stress that this optimal theoretical form of the PA differs from those based  on conditional Gaussian operations~\cite{ExpRalph,ExpSciss,ExpZav,}  which, when employed in a non-heralded way, would result in a Gaussian measurement and thus would perform certainly worse than homodyne detection, as shown in~\cite{opGaussDet}. The action of $\mathcal{A}_{g,n}$ on an input coherent state $\ket{\alpha}$ is depicted in Fig.~\ref{Wigner}, in terms of the Wigner function of the corresponding output state, $W_{\alpha}(\gamma)$, whose computation is carried out in Appendix~\ref{app:wignerPA}. We notice that the state has a slightly non-Gaussian form: ultimately this is the key feature which allows one to
 improve the success probability of the scheme by reducing the difference between the vacuum-overlaps of the input states, see Sec.~\ref{subsec:discPract}. \\

\begin{figure}[t!]\center
\includegraphics[trim={0 1cm 0 .8cm},clip,scale=.43]{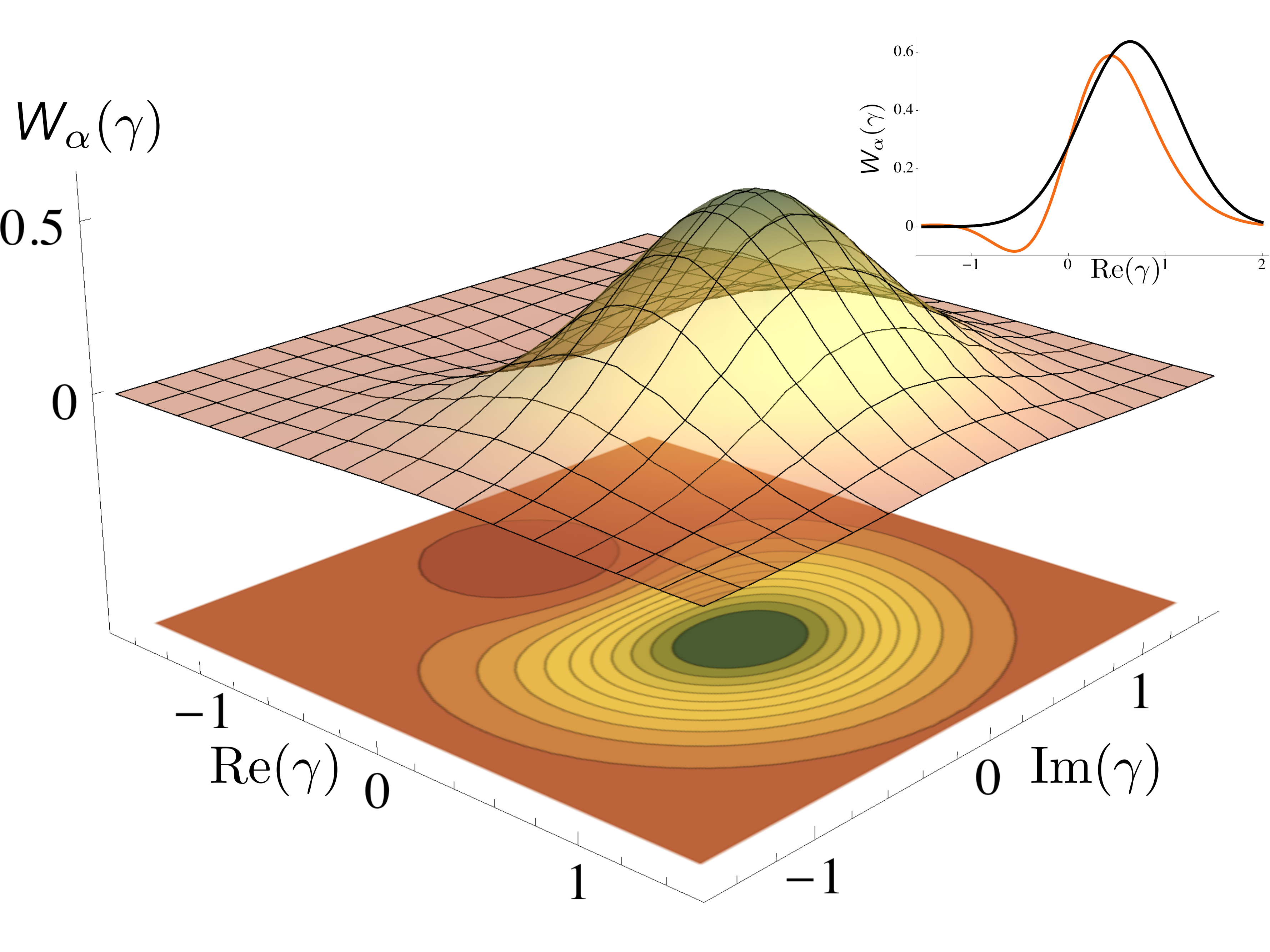}
\caption{Plot of the Wigner function $W_{\alpha}(\gamma)$ of the transformed coherent state $\mathcal{A}_{g,n}(\dketbra{\alpha})$ under the action of the NHPA. The inset shows a cut at $\operatorname{Im}(\gamma)=0$ of the same function (orange/light gray curve) and of the Wigner function of the input state $\dketbra{\alpha}$ (black curve). The former is far from an amplified version of the latter, still it exhibits a non-Gaussian profile, which reduces   the overlap~(\ref{OVERLAP}) improving the scheme success probability~(\ref{succprobAmp}). Both plots are drawn at the optimal gain value of the NHPA receiver for the chosen amplitude: $\alpha\simeq 0.5$, $g\simeq 28$, $n=2$.
 }\label{Wigner}
\end{figure}

Let us now suppose that we want to discriminate the two equiprobable\footnote{The case of non-equiprobable states is straightforward and exhibits the same qualitative behaviour as the equiprobable one.} coherent states $\ket{\pm\alpha}$, $\alpha>0$. Our decoding scheme is structured as shown in Fig.~\ref{contourplot}a:  we first apply a perfectly-nulling displacement for one of the two states, e.g., $\hat{D}(\alpha)$ to null $\ket{-\alpha}$. Next we make use of the NHPA, which leaves the nulled state in the vacuum, while sending the second state into a mixture of the target amplified state that is farther away from the vacuum and of a complex truncated state, being neither the original coherent state nor its desired amplified version. Eventually, we apply a second displacement operation $\hat{D}(-\beta)$, which we optimize in order to maximize the success probability of the protocol, and detect the signal with an OOD, employing the usual Kennedy inference. Let us note in passing that the second displacement is fundamental to get an improvement with respect to the optimized Kennedy strategy, as shown in Appendix~\ref{app:altDecs}. Accordingly, the input states are transformed as follows before entering the OOD:
\begin{align}
&\ket{\alpha_{-}}\rightarrow\hat{D}(-\beta)\mathcal{A}_{g,n}(\dketbra{0})\hat{D}^{\dag}(-\beta)=\dketbra{-\beta}, \\
&\ket{\alpha_{+}}\rightarrow\hat{D}(-\beta)\mathcal{A}_{g,n}(\dketbra{2\alpha})\hat{D}^{\dag}(-\beta), 
\end{align}
where in  the first term we used the fact that $\mathcal{A}_{g,n}$ leaves invariant the vacuum.
We now associate the event where no photons are detected to the arrival of $|\alpha_-\rangle$, while the others to the arrival of $|\alpha_+\rangle$. The success probability of the  protocol reads hence
\begin{equation}
P_{succ}(\alpha,g,n,\beta)=\oneover{2}\left(p_{0|-}+p_{1|+}\right)=\oneover{2}\left(1+p_{0|-}-p_{0|+}\right),\label{succprobAmp}
\end{equation}
where $p_{0|\pm}$ is the probability of detecting no photons from the transformed final state associated with the input $\ket{\pm\alpha}$, i.e., respectively, 
\begin{align} 
&p_{0|-}= \abs{\braket{0}{-\beta}}^{2}=e^{-\abs{\beta}^{2}},\label{OVERLAP1}\\
\label{OVERLAP}
&p_{0|+}=\langle \beta | \mathcal{A}_{g,n}(\dketbra{2\alpha})|\beta\rangle=\abs{\bra{\beta}\hat{M}_{S}\ket{2\alpha}}^{2}+\abs{\bra{\beta}\hat{M}_{F}\ket{2\alpha}}^{2},
 \end{align} 
 where
 \begin{align}
\abs{\bra{\beta}\hat{M}_{S}\ket{2\alpha}}^{2}&=e^{-(\abs{2\alpha}^{2}+\abs{\beta}^{2})}\abs{e^{2\alpha\beta^{*}}-\sum_{k\leq n}\frac{(2\alpha\beta^{*})^{k}}{k!}(1-g^{-(n-k)})}^{2},\\
\abs{\bra{\beta}\hat{M}_{F}\ket{2\alpha}}^{2}&=e^{-(\abs{2\alpha}^{2}+\abs{\beta}^{2})}\abs{\sum_{k\leq n}\frac{(2\alpha\beta^{*})^{k}}{k!}\left(1-g^{-2(n-k)}\right)^{1/2}}^{2},
\end{align}
The success probability of the NHPA, Eq.~(\ref{succprobAmp}), is a function of the amplification parameters $g,n$ and the final displacement $\beta$, as well as the states' amplitude $\alpha$ and has to be compared with the standard Kennedy scheme, \eqq{kennedy}, whose 
success probability can be recovered from the same expression by simply setting $g=1$, i.e., no amplification. Now, for a fixed value of the input amplitude $\alpha$, we may optimize \eqq{succprobAmp} with respect to three parameters: the amplifier's gain $g$, its internal cutoff degree $n$ and the displacement value $-\beta$. We thus define
\begin{align}\label{nhpaDef}
&\mathbb{P}_{succ}^{nhpa}(\alpha)=\max_{\beta,g,n}P_{succ}(\alpha,\beta,g,n),\\
\label{kenDef}&\mathbb{P}_{succ}^{(oken)}(\alpha)=\max_{\beta}P_{succ}(\alpha,\beta,1,n)\text{ for any } n.
\end{align}
In order to discuss the optimal case, let us observe some facts and properties of \eqq{succprobAmp}: 
\begin{enumerate}
\item Since we are treating a two-state discrimination, the best choice of the final displacement is obviously to be aligned to $\alpha$ in phase space and, since the latter can be chosen real w.l.o.g., $\beta\in\mathbb{R}$ too; 
\item We expect the success probability to be maximum in the region $\beta<0$, since in that case $p_{0|+}$ is the overlap between an output state of the NHPA similar to Fig.~\ref{Wigner} and a coherent state centered on the negative real axis, where the former has a negative quasi-probability; 
\item If we set $\beta=0$, thus applying no further displacement after the NHPA, then the behavior of the success probability is trivially upper bounded by that of the unoptimized Kennedy receiver, i.e., $P_{succ}(\alpha,0,g=1,n)$;
\item For $n=1$ and $\beta<0$ the success probability is a decreasing function of the gain, thus the optimal choice is not to amplify and the scheme simply resorts to the optimized Kennedy detector. 
\end{enumerate}
The last two points are proved in Appendix~\ref{app:altDecs}.
In light of the previous observations, we then study the function \eqq{succprobAmp} for $n=2$ and $\beta<0$ and observe the existence of a whole range of non-trivial values of the gain which increase the NHPA detector success probability above that of the optimized Kennedy detector. In particular, at $\alpha\simeq 0.32$, already for $g=3$ we have an increase of $\sim1.26\%$ with respect to \eqq{kenDef}. The maximum increase of $\sim 1.88\%$  is attained for $\alpha\simeq 0.29$ and $g\sim33$, while for $g\rightarrow\infty$ it lowers to $\sim 1.87\%$. 
\begin{figure}[t!]\center
\includegraphics[trim={0 0cm 0 0cm},clip,scale=.43]{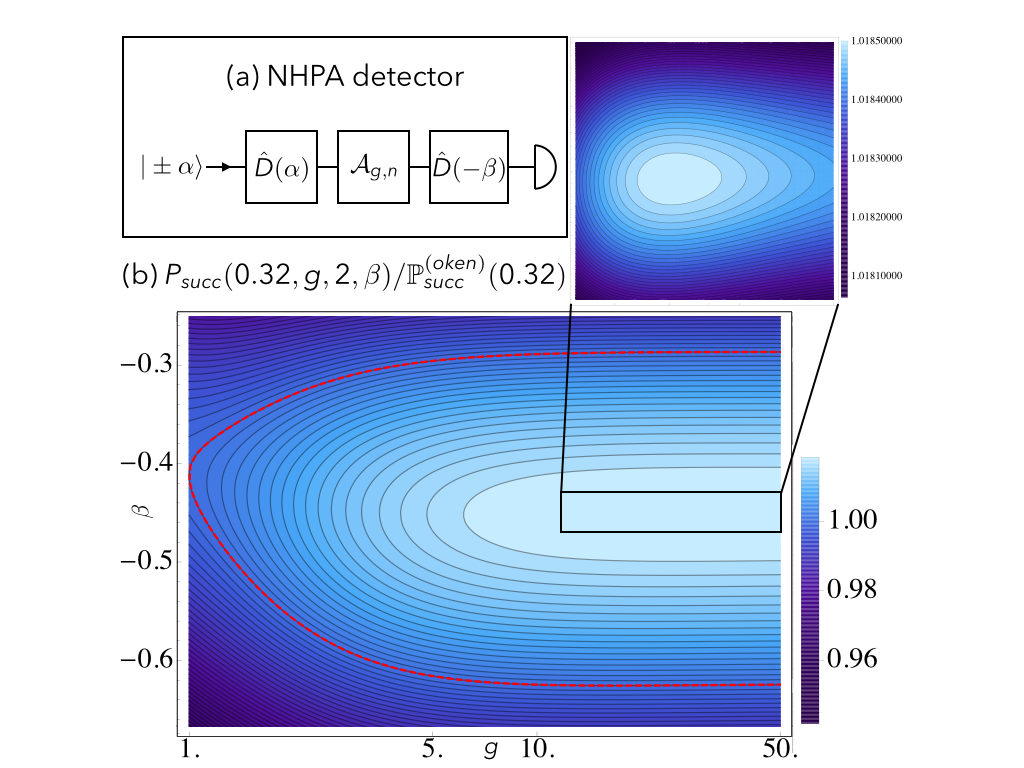}
\caption{(a) Schematic depiction of the NHPA receiver, where $\hat{D}(\alpha)$ is the first coherent-state displacement, $\mathcal{A}_{g,n}$ is the non-linear NHPA and $\hat{D}(-\beta)$ is the final optimized displacement, followed by an ordinary photon detector. The optimized Kennedy scheme is obtained by setting $g=1$, which amounts to removing $\mathcal{A}_{g,n}$. Other schemes described in the text rely on the substitution of $\mathcal{A}_{g,n}$ with simplified versions and/or on the introduction of an optimized squeezing operation $\hat{S}(r)$ before the final displacement. (b) Contour plot of the ratio $P_{succ}(\alpha,g,n,\beta)/{\mathbb P}_{succ}^{(oken)}(\alpha)$ between the 
success probabilities Eq.~(\ref{succprobAmp}) of the NHPA receiver and the one associated with the optimal Kennedy scheme, \eqq{kenDef}, for $n=2$ and $\alpha=0.32$ as a function of the gain $g$ and of the displacement $\beta<0$ for input amplitude (the optimal value of displacement for the Kennedy scheme being  $\beta^{op}_{Ken}\simeq 0.41$).  The red dashed line indicates  the points for which $P_{succ}(\alpha,g,n,\beta)={\mathbb P}_{succ}^{(oken)}(\alpha)$. The inset shows the optimal region for the NHPA receiver, in the range $\beta\in[-0.47,-0.43]$, $g\in[15,100]$.
 }\label{contourplot}
\end{figure} 
 
 \subsection{The partial-dephaser receiver and its cavity implementation}\label{sec3}
It is of primary importance to stress that replacing  $\mathcal{A}_{g,n}$  with an ordinary parametric amplifier or, more generally, a phase-insensitive Gaussian channel would not provide  the advantages  reported in the previous section: indeed in this case  the success probabilities would be worse than those attainable with the optimized Kennedy detector, as shown in  Appendix~\ref{app:altDecs}. Therefore the mere amplification of the incoming signals cannot account for the improved performance of the NHPA receiver. Still, as noted in the previous subsection by numerical analysis of the case $n=2$,  we observe that, fixing $\beta$ and $\alpha$, the advantage gained from the application of $\mathcal{A}_{g,2}$ is 
 almost optimal in the infinite-gain limit, i.e., for $g\rightarrow\infty$ (see Fig.~\ref{contourplot}b). 
In this regime the Kraus operators which define the action of the NHPA reduce to simple projectors 
  on the space of $2$ or more photons and its complementary, i.e.,
  \begin{equation}
  \hat{M}_{S}\rightarrow \hat{\Pi}_{\geq 2}=\sum_{k\geq 2}\dketbra{k},\quad\hat{M}_{F}\rightarrow \hat{\Pi}_{< 2}=\dketbra{0}+\dketbra{1}.
  \end{equation}
  Accordingly, $\mathcal{A}_{\infty,2}$ reduces to a partial dephasing channel that selectively removes the coherence among  such spaces while preserving any other form of quantum coherence in the system. This observation, along with the fact that, to our knowledge, the NHPA in the form \cite{Caves,OptimalNLA} has not been experimentally demonstrated yet, 
   leads us to replace $\mathcal{A}_{\infty,2}$ of Fig.~\ref{contourplot}a with a simplified version of such a map for which we propose a possible implementation. 
Specifically we consider a partial dephasing (PD) CPTP channel  ${\cal D}_n$  that is more destructive than  $\mathcal{A}_{\infty,n}$ as it only preserves coherence into the space formed by the first $n-1$ Fock states while inducing dephasing on the remaining ones, its
Kraus operators being hence
\begin{equation}
\hat{M}_{N}=\hat{\Pi}_{<n}=\sum_{k<n}\dketbra{k},\quad\hat{M}_{k}=\dketbra{k}\quad\forall k\geq n.
\end{equation} 
In Fig.~{\ref{kenCompare} we have tested the performance of this new detection scheme for the case $n=2$ observing only  a tiny decrement (of order $O(10^{-4})$)  of the associated
success probability  with respect to the one obtained in  the NHPA case, proving hence that, rather counterintuitively, the dephasing transformation  ${\cal D}_n$ is a useful resource for the discrimination problem we are considering here. \\
An implementation of ${\cal D}_2$ can be obtained by 
sending the incoming coherent signal $|\alpha\rangle$ inside an optical cavity, coupled to a two-level atom initialized into its ground state $\ket{G}$, via the Jaynes-Cummings hamiltonian \cite{nChuangBOOK}
\begin{equation}\label{ham}
\hat{H}_{JC}=\omega \hat{N}+\gamma(\hat{a}^{\dag}\hat{\sigma}_{-}+\hat{a}\hat{\sigma}_{+}),
\end{equation} 
where $\hat{N}=\hat{a}^{\dag}\hat{a}+\hat{Z}/2$ is a first integral of motion, obtained by combining the bosonic creation and annihiliation operators $\hat{a}^{\dag}$, $\hat{a}$ of the electromagnetic field inside the cavity and the atomic energy operator $\hat{Z}=\dketbra{E}-\dketbra{G}$, with $\ket{E}$  the excited atomic state; $\omega$ is the frequency of both the cavity and the atom at resonance; finally, the second term entering $\hat{H}_{JC}$ represents the cavity-atom coupling of strength $\gamma$, where 
\begin{equation}
\hat{\sigma}_{+}=\ketbra{E}{G},\quad\hat{\sigma}_{-}={\hat{\sigma}_{+}}^{\dag}=\ketbra{G}{E}
\end{equation}
are the atomic operators describing respectively excitation and decay of its quantum state. In Appendix~\ref{app:cavity} we provide a full treatment of the cavity implementation briefly discussed in the following . To induce the transformation ${\cal D}_2$,  we first let the coherent signal and the atom interact for a time $\tau$ chosen in such a way to induce a perfect Rabi oscillation. This guarantees that the joint cavity-atom state $\ket{1,G}$ of the input superposition $\ket{\alpha,G}$ is transferred to $\ket{0,E}$, thus encoding the zero- and one-photon-number cavity space in the atomic levels. This happens for the first time at $\tau=\pi/(2\gamma)$, leaving the system in the joint state
\begin{equation}
|\psi_{RABI}\rangle = e^{-\frac{\abs{\alpha}^{2}}{2}}( \ket{0,G}-i\alpha e^{-i\omega \tau}\ket{0,E} + |\Delta \rangle),   \label{firstRabi} 
\end{equation}
with $|\Delta \rangle$ a combination of terms which, on the optical part, posses at least one photon excitation, see Appendix~\ref{app:cavity}. 
 Next we abruptly decouple the two systems (say detuning the atom energy gap  with respect to the cavity frequency) while inducing a random perturbation on the cavity wavelength. Alternatively,
 we may assume  the optical signal to emerge from the cavity and to be  fed into a long waveguide that dephases the various Fock components of the propagated signals. 
 In both cases the net effect on  $|\psi_{RABI}\rangle$ can be described as an application of the operator $\hat{U}_{p}(\theta)$, \eqq{ps}, with $\theta$ being a random parameter we have to average over: the cavity states $\ket{k}$ simply acquire a phase factor $\exp(-i\theta k)$ but no phase will be added to the first two components of the global state \eqref{firstRabi}, containing the superposition we want to preserve.
After this stage, we apply a second Rabi oscillation in order to bring back the preserved atomic superposition onto the cavity states (e.g. by abruptly restoring the atom-field resonance condition or by
feeding the traveling signal back to the cavity). We describe this process with the same Jaynes-Cummings hamiltonian and interaction time $\tilde{\tau}=3\pi/(2g)$. The output field of the cavity, obtained by tracing out the atomic state and averaging over the random phase, can be written as
\begin{equation} \label{finCav}
\begin{aligned}
\hat{\rho} &= e^{-\abs{\alpha}^{2}}\Big(\dketbra{\alpha_{T}}+\frac{\abs{\alpha}^{4}}{2}\left(D\dketbra{1}+\alpha E\ketbra{2}{1}+\text{h.c.}\right)  \\
&+\sum_{k=2}^{\infty}\frac{\abs{\alpha}^{2k}}{k!}\left(D_{k}(\alpha)\dketbra{k}+\alpha E_{k}(\alpha)\ketbra{k+1}{k}+\text{h.c.}\right)\Big),
\end{aligned}
\end{equation} 
where $\ket{\alpha_{T}}=\ket{0}+\alpha \exp(-2i\omega\tau)\ket{1}$ is the superposition we aimed at preserving, apart from a phase which can be dropped out either by fine tuning the working frequency $\omega$ or by earlier compensation of the input coherent states; the various $D$, $E$ coefficients assume non-trivial values, having fixed $\tau$ in order to favor the desired Rabi transitions, see Appendix~\ref{app:cavity}. The main imperfection of this implementation with respect to the desired partial dephaser ${\cal D}_2$ 
is that  the final state $\hat{\rho}$  actually preserves some extra coherence between adjacent photon-number states. Nevertheless, if we employ this device instead of the ideal dephaser in our receiver setup, the top increase of the success probability previously obtained with the NHPA at $\alpha\simeq0.29$ is reduced to $\sim1.67\%$, i.e., the top performance is reduced of only $\sim 0.21\%$, and in the whole range $\alpha\geq1$ the worst loss of performance is around $\sim 0.46\%$, see Fig. \ref{kenCompare} for a comparison. Moreover the loss is further reduced when introduced the device into a Dolinar scheme with few adaptive steps (compare Fig. \ref{kenCompare} and \ref{dol2Step}).
 \begin{figure}[t!]\center
\includegraphics[trim={0 2.5cm 0 2.2cm},clip,scale=.43]{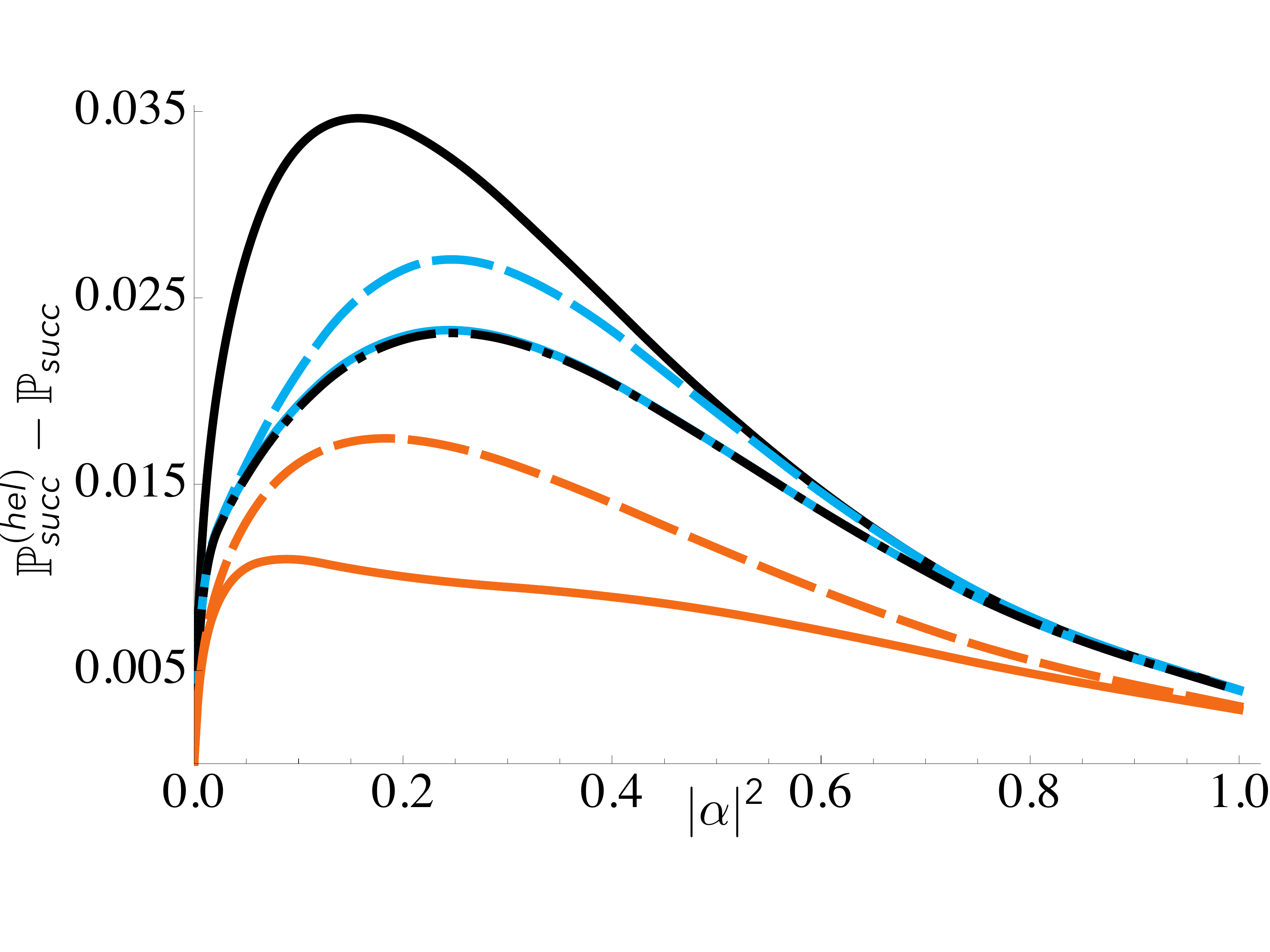}
\caption{Plot of the difference between the success probability of Helstrom, complementary to \eqq{helBound}, and the one of several Kennedy-like receivers, as a function of the input states' average photon number~$\abs{\alpha}^{2}$: optimized Kennedy scheme (black solid line), ${\cal A}_{\infty,2}$ scheme (black dot-dashed line), ${\cal D}_{2}$ scheme (cyan/light gray solid line), cavity implementation  (cyan/light gray dashed line), Takeoka-Sasaki (TS) scheme~\cite{opGaussDet} with squeezing and displacement (black dashed line), ${\cal A}_{\infty,2}$ plus TS scheme (orange/dark gray dashed line), ${\cal A}_{\infty,3}$ plus TS scheme (orange/dark gray solid line). 
 }\label{kenCompare}
\end{figure}

\subsection{Combination with other decoding schemes} \label{sec4}
A possible extension of our scheme can be obtained along the line proposed by Takeoka and Sasaki (TS)~\cite{opGaussDet}, where a standard Kennedy scheme has been improved by adding a single-mode squeezing operation $\hat{U}_{sq}(-r)$, \eqq{1msq},  before the last displacement transformation, 
resulting in an overall active (non phase-insensitive) Gaussian unitary before photon detection. 
When we apply the same method to our scheme, the probabilities~(\ref{OVERLAP1}) and (\ref{OVERLAP}) get replaced by
\begin{align}\label{sqAmp1}
p_{0|-}&=\abs{\braket{\beta,-r}{0}}^{2}, \\
\label{sqAmp2} p_{0|+}&=\bra{\beta, r} \mathcal{A}_{g,\infty}(\dketbra{2\alpha})\ket{\beta,r}=\abs{\sum_{k\geq n}\braket{2\alpha}{k}\braket{k}{\beta,r}}^{2}+\abs{\sum_{k> n}\braket{2\alpha}{k}\braket{k}{\beta,r}}^{2}
\end{align} 
where  $\ket{\beta,r}=\hat{U}_{sq}(r)\hat{D}(\beta)\ket{0}$ is a squeezed-displaced state and we considered directly the infinite-gain limit $\mathcal{A}_{g,\infty}$ where the NHPA behaves as a partial dephaser. The factor $\braket{2\alpha}{k}$ is straightforward to compute through \eqq{coherentstates}, while for $\braket{k}{\beta,r}$ we can apply the definitions Eqs.~(\ref{1msq},\ref{squeezedstates}) and its explicit expression is given in Appendix~\ref{app:altDecs}. We can optimize the success probability associated to Eqs.~(\ref{sqAmp1},\ref{sqAmp2}) with respect to both $\beta$ and $r$. At variance with the passive-Gaussian scheme, setting $n=2$ here turns out to be optimal only in the low-amplitude region $\abs{\alpha}^{2}\lesssim 0.1$. For higher amplitude values, an amplifier cutoff $n=3$ attains instead the most satisfying results, clearly surpassing all other receivers in performance (Fig. \ref{kenCompare}); this is probably due to the fact that the squeezing operation requires additional coherence terms between the zero and one-photon space and the two-photon- one in order to be properly optimized. In particular the optimal value of $r$ is negative, implying hence a squeezing of the $\hat{p}$ quadrature.\\

Since all the proposed detection schemes only require the insertion of an additional operation in a Kennedy-like receiver, it seems reasonable to study their extension to a Dolinar-like one, Sec.~\ref{subsec:discPract}. Accordingly, we now preliminary map
the input coherent states $\ket{\pm\alpha}$ on $N$ low-amplitude copies $\ket{\pm\alpha/\sqrt{N}}^{\otimes N}$ which we probe in sequence exploiting the information acquired at each stage to optimize the parameters (e.g. displacement, amplification, cutoff) of the detection that follows. 
In Fig.~\ref{dol2Step} we show the success probability of Dolinar-like detection schemes, for the simplest case $N=2$, taking the Helstrom bound as a reference. As it may be expected, since the proposed schemes outperform the Kennedy receiver, they also outperform the Dolinar one. The inset shows the same quantity as a function of the number of steps $N$ and at fixed amplitude. 
 \begin{figure}[t!]\center
\includegraphics[trim={0 3cm 0 2.6cm},clip,scale=.43]{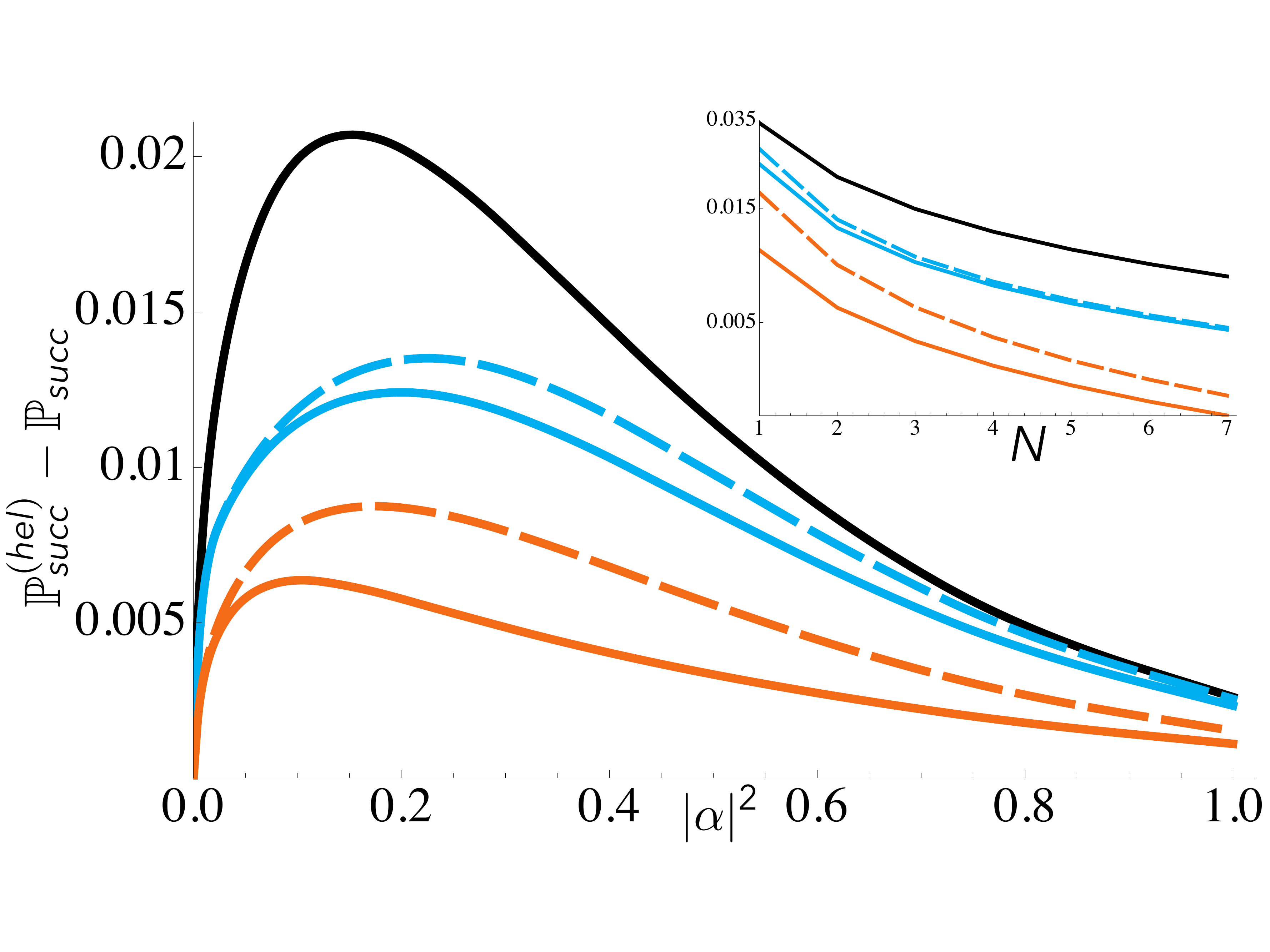}
\caption{Plot of the difference between the success probability of Helstrom and that of several Dolinar-like protocols, as a function of the input states average photon number~$\abs{\alpha}^{2}$, for two multiplexing steps: simple Dolinar scheme with Gaussian optimized displacement (black solid line), ${\cal D}_2$ scheme (cyan/light gray solid line), cavity implementation  (cyan/light gray dashed line), TS scheme, with Gaussian squeezing and displacement (black dashed line), ${\cal A}_{\infty,2}$  plus TS scheme (orange/dark gray dashed line),  ${\cal A}_{\infty,3}$ plus TS scheme (orange/dark gray solid line). 
The inset shows the same quantity (log-scale) as a function of the number of steps $N$ at fixed $\abs{\alpha}^{2}=0.2$, for the same protocols as in the main picture (only 
the case ${\cal A}_{\infty,3}$ is plotted for the dephaser plus TS scheme).}\label{dol2Step}
\end{figure}
 
\section{Multi-phase Hadamard receivers for communication on lossy bosonic channels}\label{sec:had}
In this section we present a practical scheme for transferring classical information over a lossy bosonic channel, Sec.~\ref{gaussChMeas}, by generalizing the proposal by Guha~\cite{guha1}, which we call the Hadamard receiver. These receivers are conceived in order to bridge the gap between single-mode practical detectors, Secs.~\ref{subsec:discPract},\ref{sec:cohDisc}, and multi-mode capacity-achieving optimal measurements, Secs.~\ref{subsec:commQ},\ref{sec:bisDec}, and they are among the first examples of structured joint-detection receivers for optical communications realizable in principle with current technology \cite{banaszek}. The scheme~\cite{guha1} relies on building a codebook (Hadamard code in the following) of $N$ separable codewords of length $N$ from the binary coherent alphabet $\ket{\pm\alpha}$ and transforming them at the receiver by a passive unitary gate $\hat{U}_{had}^{(N)}$  represented by a Hadamard matrix~\cite{Hadamard}. The output is a Pulse-Position-Modulation (PPM) code, which is easily read out by employing single-mode quantum state discrimination techniques, e.g., OOD. In the low-energy regime the scheme surpasses the   information rate obtained by separable detection techniques, jointly reading out larger and larger codewords as the energy gets lower. Further improvements can also be gained~\cite{guha2} by adding a second copy of the original Hadamard  codebook obtained from the latter by simply flipping the sign of the amplitudes of all the coherent states that form its codewords:
the new states behave as before under $\hat{U}_{had}^{(N)}$, producing two (phase-shifted) copies of the initial PPM code that can still be read out  by means of an adaptive Dolinar receiver, Sec.~\ref{subsec:discPract}. Building on these observations here  we analyze in detail the case of a codebook formed by 
 $M$  phase-shifted copies of the Hadamard code (the case $M=2$ corresponding to the model discussed in~\cite{guha2}): we call this a 
 Phase-Shift-Keying (PSK) Hadamard code of order $M$. 
 First, we compute the optimal information rate of this code by evaluating its associated Holevo information, \eqq{holevoInfo}, achievable with optimal joint-detection decoders of the kind discussed in Secs.~\ref{subsec:commQ},\ref{sec:bisDec}. We show that, for all choices of the number of phase modulations $M\geq 2$ and of the codewords' length $N$,  the optimal rate saturates the classical capacity of the channel, Theorem~\ref{ccapPI}, in the low-energy region. Accordingly, in such regime the PSK Hadamard codes are optimal  and any sub-performance resulting from their use is only a consequence of a lack of efficiency at the detection stage.   Next we  compute  the rates attainable when detecting PSK Hadamard codes with a modification of the Hadamard receiver of~\cite{guha2}, which we call PSK Hadamard receiver. It relies on the discrimination  of multiple symmetric coherent states of fixed amplitude, Sec.~\ref{sec:disc}, applied to the output PPM codewords. We then show that this practical Hadamard receiver, while suboptimal, allows for improvements of the transmission rate of up to $\sim6\%$ in an intermediate energy regime with respect to previous proposals~\cite{guha1,guha2}.

\subsection{PSK Hadamard codes}\label{had:sec2}
Here we formalize the  Hadamard  code proposed in~\cite{guha1} and its $M$-phase generalization. 
Suppose we want to transfer classical information through $N$ modes of the electromagnetic field, or $N$ distinct temporal pulses on a single field mode, traveling through a communication medium, e.g., an optical fiber or free space~\cite{cavesDrum}. The transmission line can be approximately represented by a quantum-limited PI lossy bosonic channel of loss $\eta$, see Sec.~\ref{gaussChMeas}.
As discussed in Theorem~\ref{ccapPI}, the classical capacity of this channel at constrained energy can be achieved by a simple Gaussian coherent-state source, \eqq{opSource}. Ultimately this is possible because when we encode information in the optical coherent states of the field, $\ket{\alpha}$, 
they reach the receiver-end of the channel as attenuated versions $\ket{\sqrt{\eta}~\alpha}$, while keeping their original purity.
Since the net action of the channel on these states is a rescaling of energy by $\eta$, we may take as a reference the \textit{received} energy without loss of generality, which is equivalent to setting $\eta=1$ in the following. \\
Let us now construct the Hadamard code that, as for the optimal Gaussian source, employs sequences of coherent states as codewords. At variance with the latter however such sequences are selected by extracting states from a binary coherent alphabet  $\ket{\pm\alpha}$,  with amplitude  $\alpha$
matching the average-energy constraint of the channel, i.e., $|\alpha|^2 = E$.   
The number of different $N$-long strings that we can generate in this way is equal to $2^N$ (specifically they are the vectors 
$\ket{\pm \alpha}_{0}\otimes\cdots\otimes\ket{\pm\alpha}_{N-1}$ where $\ket{\cdot}_{j}$ represents the state of the $j$-th communication mode). 
  Yet in constructing the Hadamard code we only select $N$ specific ones of them
 by picking those sequences  whose signs reproduce the columns of a Hadamard matrix~\cite{Hadamard}.  For instance for $N=2$  the Hadamard matrix, up to a normalization, is equal to 
 \begin{equation}
 H_{2}=\left(\begin{array}{cc}+1&+1\\+1&-1\end{array}\right);
 \end{equation} 
 therefore  from the set of four possible states we select $\ket{v_{0}(\alpha)}=\ket{+\alpha}_{0}\otimes\ket{+\alpha}_{1}$ and $\ket{v_{1}(\alpha)}=\ket{+\alpha}_{0}\otimes\ket{-\alpha}_{1}$ as the  first and second codeword of our Hadamard codebook. 
For arbitrary $N$, we recall that a Hadamard matrix of order $N=2^{i}$ with integer $i\geq0$, is a $N\times N$ matrix $H_{N}$ of elements $\pm 1$,  orthogonal up to a scaling factor of $\sqrt{N}$ and, for simplicity, symmetric (a permutation of columns or rows is sufficient to satisfy this further requirement), i.e., 
\begin{align}
&H_{N}H_{N}^{T}=H_{N}^{T}H_{N}=N\mathbf{1}_{N},\\
&H_{N}^{T}=H_{N}.
\end{align}
Such a matrix can be equivalently defined in terms of its elements as:
\begin{equation} \label{HadTerm}
(H_{N})_{j,k}=(-1)^{j\cdot k},\quad j\cdot k=\sum_{t=0}^{\log_{2}N}j_{t}k_{t},
\end{equation}
where we have defined $j\cdot k$ as the bitwise scalar product of the binary representations of $j,k=0,\cdots,N-1$.
Then a Hadamard code ${\cal H}_1^{(N)}(\alpha)$ of length $N$ and average mode-energy $E$ comprises the  $N$ codewords
\begin{equation} 
\ket{v_{k}(\alpha)}=\bigotimes_{j=0}^{N-1}\ket{(H_{N})_{j,k}~\alpha}_{j},
\end{equation} 
for all $k=0,\cdots,N-1$.
As shown in~\cite{guha1,guha2,banaszek}, this special set of states admits a passive unitary transformation $\hat{U}^{(N)}_{had}$, entirely implementable through passive interferometers, Sec.~\ref{gaussUni}, that transforms the received codewords $\ket{v_{k}(\alpha)}$ into equivalent PPM ones, i.e., 
\begin{equation}
\hat{U}^{(N)}_{had} \ket{v_{k}(\alpha)} = \ket{w_{k}(\alpha)}=\ket{\sqrt{N}~\alpha}_{k}\otimes\left(\bigotimes_{j\neq k}\ket{0}_{j}\right), \label{TRANS1}
\end{equation} characterized by a single high-energy pulse on one of the $N$ available modes, the vector $|0\rangle_j$ representing the vacuum state of the $j$-th mode. It is easy to show that the operator $\hat{U}^{(N)}_{had}$ has symplectic representation $H_{N}/\sqrt{N}$. Indeed we can express any received state as $\ket{v_{k}(\alpha)}=\hat{D}\left(\alpha~\underline{h}_{k}\right)\ket{0}^{\otimes N}$, with $\underline{h}_{k}=\left(\left(H_{N}\right)_{1,k},\cdots,\left(H_{N}\right)_{n-1,k}\right)^{T}$ and when applying the passive unitary we have:
\begin{equation}
\begin{aligned}\label{Transform}
\hat{U}^{(N)}_{had}\ket{v_{k}(\alpha)}&=\hat{D}\left(\frac{\alpha}{\sqrt{N}} H_{N}\underline{h}_{k}\right)\ket{0}^{\otimes n}=\hat{D}\left(\sqrt{N}\alpha(0,\cdots,0,1_{(k)},0,\cdots,0)\right)\ket{0}^{\otimes n}\\
&=\ket{w_{k}(\alpha)},
\end{aligned}
\end{equation}
where the second equality follows from the orthogonality property of $H_{N}$.
Note that the average received energy per mode $E$ is conserved by the transformation, although its total amount for $N$ modes is concentrated on a single pulse of energy $\mathcal{E}=NE$, whose position varies from codeword to codeword. Accordingly, the classical messages encoded into the elements of ${\cal H}_1^{(N)}(\alpha)$ can now be recovered by means of a readout strategy capable of detecting where such concentrated energy lies, e.g. OOD or a Dolinar receiver aimed at discriminating the coherent state $|\sqrt{N} \alpha\rangle$ from the vacuum  $|0\rangle$.\\

A refined version of the receiver proposed in~\cite{guha2} employs an enlarged code, obtained by adding those codewords $\ket{v_{k}(-\alpha)}$ with signs opposite to the ones in $\ket{v_{k}(\alpha)}$. More generally, we define a PSK Hadamard code of order $M$ 
by adding $M$ copies of  the Hadamard code, characterized by $M$ equally spaced phases  of the amplitude $\alpha$, i.e., the set 
\begin{equation} \label{PSKHAD} 
{\cal H}^{(N)}_M(\alpha) = \bigcup_{m=0}^{M-1}\; {\cal H}^{(N)}_1(\alpha_{m})=\left\{\ket{v_{k}(\alpha_{m})}:\alpha_{m}=e^{i\frac{2\pi}{M}m}\alpha\right\}_{k=0,m=0}^{N-1,M-1}.
\end{equation} 
 Since the property \eqq{TRANS1} holds for any $\alpha\in\mathbb{C}$, the elements of ${\cal H}^{(N)}_M(\alpha)$ get
 transformed by $\hat{U}^{(N)}_{had}$ in corresponding phase-shifted PPM codewords, i.e., the vectors
 \begin{equation}
\hat{U}^{(N)}_{had} \ket{v_{k}(\alpha_m)} = \ket{w_{k}(\alpha_m)}\;. \label{TRANS2}
\end{equation}
As we shall discuss explicitly in Sec.~\ref{secHadRate}, the readout of the input messages can benefit from this effect, the idea being to first
determine the value of $k$ by checking in which of the $N$ modes the energy has been concentrated, and then determine $m$ by using a single-mode detection scheme to read out the phase of $\alpha_m$.

\subsection{Optimal communication rates of PSK Hadamard codes}\label{secHadCap}
In this section we compute the optimal communication rate $R^{(N,M)}_{opt}(E)$ of a PSK Hadamard code ${\cal H}^{(N)}_M(\alpha)$ of order $M$ and average energy per mode  $E=|\alpha|^2$. This  is proportional to the Holevo information, \eqq{holevoInfo}, of the code itself, i.e.,
\begin{equation}\label{opRateHad}
R^{(N,M)}_{opt}(E)=\oneover{N}\chi\left(\left\{\ket{v_{k}(\alpha_{m})},\oneover{N}\right\}_{k=0,m=0}^{N-1,M-1}\right)=\oneover{N}S(\hat{\rho}_{had}),
\end{equation}
 where the average state of the code reads out
\begin{equation}
\hat{\rho}_{had}=\sum_{k=0}^{N-1}\sum_{m=0}^{M-1}\frac{\dketbra{v_{k}(\alpha_{m})}}{M N}
\end{equation}
and the Holevo information reduces to the entropy of such states because the code comprises only pure states.
 The quantity $R^{(N,M)}_{opt}(E)$ is the maximum rate of bits one can convey over the channel with the code  ${\cal H}^{(N)}_M(\alpha)$ with an optimal joint-decoding procedure, see Secs.~\ref{subsec:commQ},\ref{sec:bisDec}. To compute it we find it useful to exploit the unitary mapping Eq.~(\ref{TRANS2}). Accordingly, the eigenvalues, and hence the entropy, of $\hat{\rho}_{had}$ coincide with those
of the density matrix 
\begin{equation}\label{ENTROPY} 
\hat{\rho}_{PPM}= \sum_{k=0}^{N-1} \sum_{m=0}^{M-1}\frac{\dketbra{w_{k}(\alpha_{m})}}{M N}= \frac{1}{N} \sum_{k=0}^{N-1}\hat{\rho}^{loc}_{k}\otimes\left(\bigotimes_{j\neq k}\dketbra{0}_{j}\right), 
\end{equation} 
with
\begin{equation} 
\hat{\rho}^{loc}_{k}=\frac{1}{M} \sum_{m=0}^{M-1}\dketbra{\sqrt{N}\alpha_{m}}_{k}.
\end{equation} 
By construction the state $\hat{\rho}_{PPM}$  has non-zero support only on a $M N$-dimensional subspace of the whole Hilbert space, spanned by the linearly independent states  $\ket{w_{k}(\alpha_{m})}$: we compute the entropy of $\hat{\rho}_{PPM}$ by finding its spectrum in this subspace.
We proceed in two steps: first, for all $k$, we diagonalize $\hat{\rho}_{k}^{loc}$, then we diagonalize the resulting multimode superposition, which turns out to be already partially diagonal. The first step has already been carried out in~\cite{multiHel2}, when computing the optimal probability of discriminating the cyclic-symmetric source $\{(|\alpha_m\rangle,N^{-1})\}_{m=0}^{M-1}$, see also Secs.~\ref{subsec:discTheo},\ref{secHadRate}. Accordingly, we can write
\begin{equation}  \label{QUESTA}
\hat{\rho}_k^{loc}=\sum_{\ell=0}^{M-1}\frac{\lambda_{\ell}(\mathcal{E})}{M}\dketbra{d_{\ell}}_{k}, 
\end{equation} 
where the eigenvectors
\begin{equation}\label{mpskEigenvectors}
\ket{d_{\ell}}_k=\sum_{m=0}^{M-1}\frac{e^{-i\frac{2\pi}{M}\ell m}}{\sqrt{M\lambda_{\ell}(\mathcal{E})}}\ket{\sqrt{N}\alpha_{m}}_k\;,
\end{equation}
are obtained by Fourier transform of the original pulses $\ket{\sqrt{N} \alpha_{m}}_k$ and the eigenvalues
\begin{equation}\label{mpskEigen}
\lambda_{\ell}(\mathcal{E})=\sum_{h=0}^{M-1}\exp\left[-\left(1-e^{i\frac{2\pi}{M}h}\right)\mathcal{E}-i\frac{2\pi}{M}\ell h\right]
\end{equation}
depend on the fixed energy of the states, $\mathcal{E}=N E$ in our case.
It is important to note that the eigenvectors $\ket{d_{\ell}}_k$ for all allowed $\ell>0$ share the peculiar property of having a zero overlap with the vacuum state, i.e., 
\begin{equation}\label{OVER}
\braket{0}{d_{\ell}}_k=\delta_{\ell, 0}~e^{-\mathcal{E}/2}\sqrt{\frac{M}{\lambda_{0}(\mathcal{E})}}. 
\end{equation} 
This is due to the fact that the overlap of any PSK state with the vacuum depends only on its fixed energy and not on its specific phase, i.e., $\braket{0}{\sqrt{N}\alpha_{m}}=e^{-\mathcal{E}/2}$. Hence this overlap factors out of the sum in \eqq{mpskEigenvectors}, which then amounts to a simple discrete Fourier transform of a constant.
Replacing Eq.~(\ref{QUESTA}) into Eq.~(\ref{ENTROPY}) we can hence write 
\begin{equation}
\hat{\rho}_{PPM} = \sum_{\ell=0}^{M-1} \nu^{\ell}(\mathcal{E})  \sum_{k=0}^{N-1} \dketbra{e_{k}^{\ell}} \;, 
\end{equation} 
with 
\begin{align}
&\nu^{\ell}(\mathcal{E})={\lambda_{\ell}(\mathcal{E})}{M N},\\
&\ket{e_{k}^{\ell}}=\ket{d_{\ell}}_{k}\otimes\left(\bigotimes_{j\neq k}\ket{0}_{j}\right).\label{NEw1}
\end{align} 
These new multimode states share the PPM structure of the $\ket{w_{k}(\alpha_{m})}$ but with different single-mode pulses, one for each eigenvector of $\hat{\rho}^{loc}_k$. Most importantly, thanks to the property Eq.~(\ref{OVER}) they turn out to be partially orthogonal, i.e., 
\begin{equation}\braket{e_{k}^{\ell}}{e_{h}^{m}}=\begin{cases}
\braket{d_{\ell}}{d_{m}}_{k}=\delta_{\ell, m} &\text{if }k=h,\\
\braket{d_{\ell}}{0}_{k}\cdot\braket{0}{d_{m}}_{h}=\delta_{\ell, 0}\delta_{m, 0} \frac{M e^{-\mathcal{E}}}{\lambda_{0}(\mathcal{E})}&\text{if }k\neq h.
\end{cases}\label{eOverlaps}
\end{equation}
It is then advantageous to write   $\hat{\rho}_{PMM}$ as the sum of two operators, $\hat{\rho}_{PPM}=\hat{\rho}^{(0)}_{PPM}+\hat{\rho}^{(+)}_{PPM}$, with
\begin{align}\label{rho0}
\hat{\rho}_{PPM}^{(0)}&=\nu^{0}(\mathcal{E})
\sum_{k=0}^{N-1}\dketbra{e_{k}^{0}},\\ \label{rho+}
\hat{\rho}_{PPM}^{(+)}&=\sum_{\ell=1}^{M-1}\nu^{\ell}(\mathcal{E})\sum_{k=0}^{N-1}\dketbra{e_{k}^{\ell}}\;.
\end{align}
From Eqs.~(\ref{eOverlaps}-\ref{rho+}) we can deduce several facts: (i) $\hat{\rho}_{PPM}^{(0)}\hat{\rho}_{PPM}^{(+)}=\hat{\rho}_{PPM}^{(+)}\hat{\rho}_{PPM}^{(0)}=0$, i.e., the two operators are disjoint; (ii) the set of states $\ket{e_{k}^{\ell>0}}$ is an orthonormal basis of the space spanned by $\hat{\rho}^{(+)}_{PPM}$, of dimension  $(M-1)n$; (iii) $\hat{\rho}_{PPM}^{(+)}$ is diagonal in this basis, with eigenvalues $\nu^{\ell>0}(\mathcal{E})$ of multiplicity $N$ each. 
Accordingly, to determine the spectral structure of $\hat{\rho}_{PPM}$ we have to diagonalize $\hat{\rho}^{(0)}_{PPM}$. This can be carried out by diagonalizing the Gram matrix \cite{helstromBOOK,multiHel2} of the codewords $\ket{e_{k}^{0}}$, whose average state is proportional to $\hat{\rho}^{(0)}_{PPM}$ itself. Indeed define the Gram matrix of a set of $N$ codewords as the $N\times N$ hermitian matrix whose entries are the overlaps between the desired codewords, i.e., $(\Gamma^{(e)})_{kh}=\braket{e_{k}^{0}}{e_{h}^{0}}$ in our case. Since such matrix is diagonalizable, there exist a diagonal matrix $D=\operatorname{diag}(\mu_{0},\cdots,\mu_{N-1})$ and a unitary matrix $V$ that satisfy $D=V\Gamma^{(e)}V^{\dagger}$, with $\mu_{j}$ the eigenvalues of $\Gamma^{(e)}$. Then we can construct a new set of codewords 
\begin{equation} \label{fStates}
\ket{f_{j}}=\frac{1}{\sqrt{\mu_{j}}}\sum_{k=0}^{N-1}(V_{jk})^{*}\ket{e_{k}^{0}}
\end{equation} 
with two peculiar properties: (i) they form an orthonormal basis of the space spanned by the $\ket{e_{k}^{0}}$, since 
\begin{equation}
\braket{f_{j}}{f_{i}}=\frac{1}{\sqrt{\mu_{j}\mu_{i}}}\sum_{k,h=0}^{N-1}V_{jk}(\Gamma^{(e)})_{kh}(V^{\dagger})_{hi}=\delta_{j,i};
\end{equation}
 (ii) $\hat{\rho}^{(0)}_{PPM}$ is diagonal in this basis. The latter property can be easily shown by inverting the definition  \eqq{fStates}, i.e., $\ket{e_{k}^{0}}=\sum_{j=0}^{N-1}V_{jk}\sqrt{\mu_{j}}\ket{f_{j}}$, and inserting it in \eqq{rho0} to obtain 
\begin{equation} 
\hat{\rho}^{(0)}_{PPM}=\nu^{0}(\mathcal{E})\sum_{k,j,i=0}^{N-1}\sqrt{\mu_{j}\mu_{i}}V_{jk}(V_{ik})^{*}\dketbra{f_{j}}{f_{i}}=\nu^{0}(\mathcal{E})\sum_{j=0}^{N-1}\mu_{j}\dketbra{f_{j}}.
\end{equation}  
We can easily show that the Gram matrix $\Gamma^{(e)}$ of the codewords $\ket{e_{k}^{0}}$, $k=0,\cdots,N-1$ is diagonalized by an orthogonal matrix $V=H_{N}/\sqrt{N}$, proportional to the Hadamard matrix of order $N$. Indeed consider that
\begin{equation}\label{gram}
(V\Gamma^{(e)}V^{T})_{\ell k}=\frac{1}{N}\sum_{i,j=0}^{N-1}\left(H_{N}\right)_{\ell i}(\Gamma^{(e)})_{ij}\left(H_{N}\right)_{jk}=\delta_{\ell,k}+\frac{M e^{-\mathcal{E}}}{n\lambda_{0}(\mathcal{E})}\sum_{i,j\neq i}\left(-1\right)^{\ell\cdot i+k\cdot j},
\end{equation}
where we have applied the definition \eqq{HadTerm} of $H_{N}$ and employed the fact that the off-diagonal terms of $\Gamma^{(e)}$ are all equal, as implied by \eqq{eOverlaps}. The last term in the latter equation can be simplified further by computing the sum
\begin{equation}
\sum_{j=0}^{N-1}(-1)^{k\cdot j}=\prod_{t=1}^{\log_{2}N}\sum_{j_{t}=0,1}(-1)^{k_{t}j_{t}}=\prod_{t=1}^{\log_{2}N}\left(2\delta_{k_{t},0}\right)=N\delta_{k,0}.
\end{equation}
We have then:
\begin{equation}
\begin{aligned}
\sum_{i,j\neq i}\left(-1\right)^{\ell\cdot i+k\cdot j}&=\sum_{i=0}^{N-1}(-1)^{\ell\cdot i}\left(N\delta_{k,0}-(-1)^{k\cdot i}\right)=N^{2}\delta_{\ell,0}\delta_{k,0}-\prod_{t=1}^{\log_{2}N}\left(2\delta_{\ell_{t}+k_{t},0}\right)\\
&=N^{2}\delta_{\ell,0}\delta_{k,0}-N\delta_{\ell,k},
\end{aligned}
\end{equation}
where in the last equality we have used the fact that the binary sum $\ell_{t}+k_{t}$ can be zero also if both bits are equal to one, so that $\delta_{\ell_{t}+k_{t},0}=\delta_{\ell_{t},k_{t}}$. Inserting this expression in \eqq{gram} we obtain:
\begin{equation}
(V\Gamma^{(e)}V^{T})_{\ell k}=\delta_{\ell,k}\left(1+(N\delta_{\ell,0}-1)\frac{M e^{-\mathcal{E}}}{\lambda_{0}(\mathcal{E})}\right)=\delta_{\ell,k}\mu_{\lambda}(\mathcal{E}),
\end{equation}
confirming that $V=H_{N}/\sqrt{N}$ diagonalizes $\Gamma^{(e)}$, with eigenvalues
\begin{equation}\label{gramenvalues}
\mu_{\ell}(\mathcal{E})=1+(N\delta_{\ell,0}-1)\frac{M e^{-\mathcal{E}}}{\lambda_{0}(\mathcal{E})}.
\end{equation} 
Thus the average state \eqq{rho0} can be written in diagonal form as
\begin{align}\label{rho0Fin}
\hat{\rho}^{(0)}_{PPM}=\nu_{0}^{0}(\mathcal{E})\dketbra{f_{0}}+\nu_{+}^{0}(\mathcal{E})\sum_{k=1}^{N-1}\dketbra{f_{k}},
\end{align}
with eigenvalues 
\begin{equation}
 \nu_{0}^{0}(\mathcal{E})=\frac{1}{M N}[\lambda_{0}(\mathcal{E})+(n-1)M e^{-\mathcal{E}}],
\end{equation}  of multiplicity one and 
\begin{equation}
 \nu_{+}^{0}(\mathcal{E})=
\frac{1}{M N}[\lambda_{0}(\mathcal{E})-M e^{-\mathcal{E}}], 
\end{equation} 
of multiplicity $N-1$.
We are now able to compute the optimal rate \eqq{opRateHad}:
\begin{equation}\label{mpskHolCap}
\begin{aligned}
R^{(N,M)}_{opt}(E)&=\frac{S(\hat{\rho}_{PPM})}{N}=-\frac{1}{N}\Big(\nu_{0}^{0}\left(\mathcal{E}\right)\log_{2}\nu_{0}^{0}\left(\mathcal{E}\right)+(N-1)\nu_{+}^{0}\left(\mathcal{E}\right)\log_{2}\nu_{+}^{0}\left(\mathcal{E}\right) \\
&+N\sum_{\ell=1}^{M-1}\nu^{\ell}\left(\mathcal{E}\right)\log_{2}\nu^{\ell}\left(\mathcal{E}\right)\Big),
\end{aligned}
\end{equation}
where we recall that ${\mathcal{E}}=NE$. 
\begin{figure}[t!]\center
\includegraphics[trim={0 1.9cm 0 2cm},clip,scale=.43]{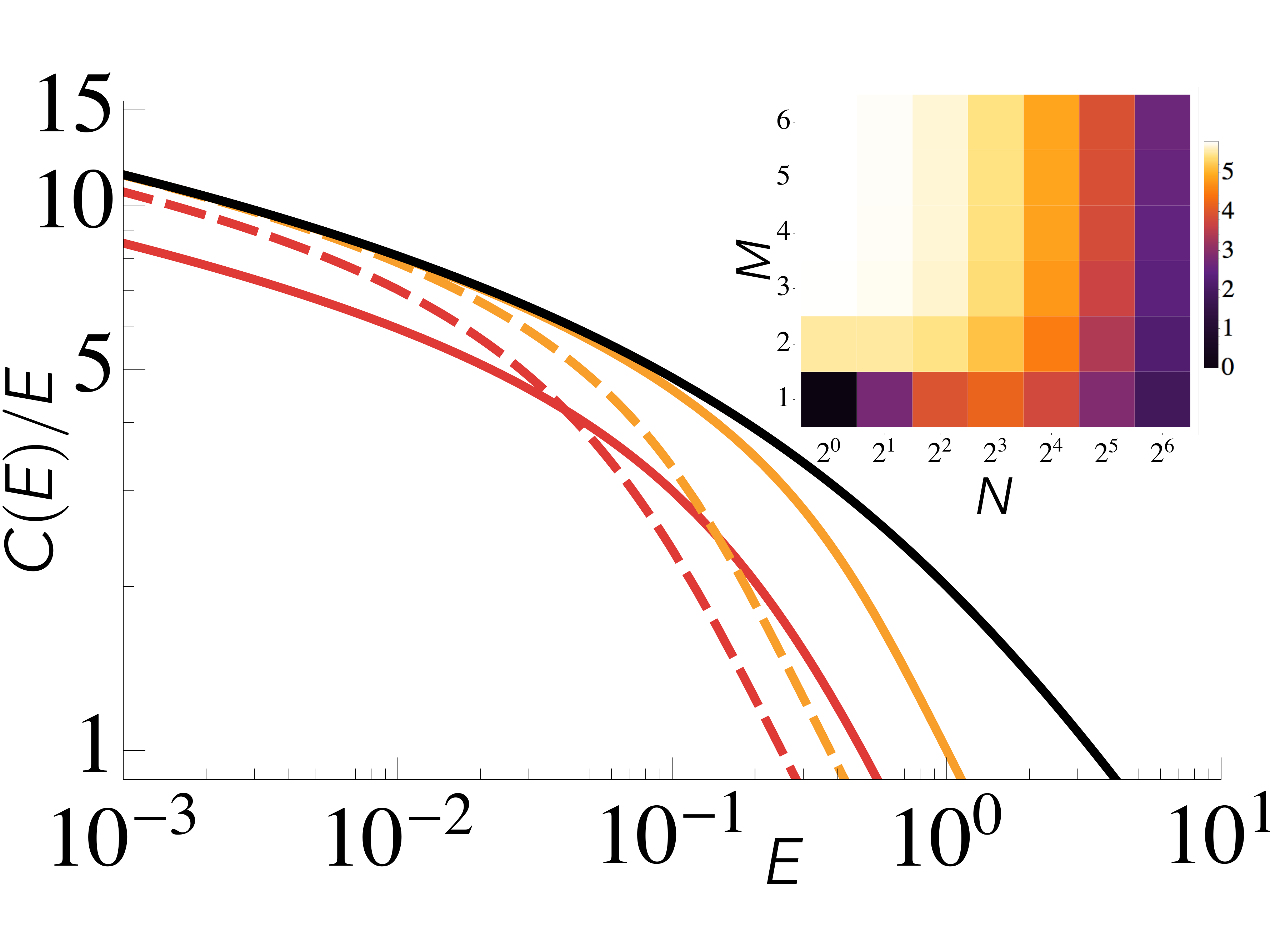}
\caption{Plot (log-log scale) of the optimal rate $R^{(N,M)}_{opt}(E)/E$ per average mode-energy of the PSK Hadamard code of order $M$,  for codewords of length $N=2, 2^{4}$ (respectively solid and dashed lines) and a number of phases $M=1, 4$ (respectively brown/dark-grey and orange/light-grey lines) and the classical capacity per average mode-energy, $C(E)/E$ (black solid line), as a function of the average mode-energy $E$. The quantity $R^{(N,M)}_{opt}(E)/E$ quickly increases at low energy values and for $M>1$ it even saturates the value of $C(E)/E$. For $N>1$ these features appear at lower energy. The inset shows $R^{(N,M)}_{opt}(E)/E$ at $E=0.05$, for several values of the couple $(N,M)$, each represented by a coloured tile. Darker colours are associated to lower values. The general behaviour on this plane seems to favour higher $M$ values at fixed $N$ and lower $N$ values at fixed $M$, except for $M=1$, where the capacity has its peak at $N^{*}(E)=2^{3}$. Also note that for $M>1$ and $N\lesssim N^{*}(E)$ the capacity is approximately constant and equal to its optimal value at that energy, i.e., the light-coloured region of the $(N,M)$ plane. }\label{f10}
\end{figure}  
In Fig.~\ref{f10} we plot $R^{(N,M)}_{opt}(E)/E$ (i.e., the value of  the optimal rate  of ${\cal H}^{(N)}_M(\alpha)$ per average mode-energy) for different values of $M$ and $N$. As a comparison we also report the value of the classical capacity of the channel, \eqq{ccapPI}, per average mode-energy, $C(E)/E$, that in this case is equal to:
\begin{equation} 
C(E)=(E+1)\log_{2}(E+1)-E\log_{2}E. \label{CCC}
\end{equation} 
We observe that  for $M>1$, the quantity $R^{(N,M)}_{opt}(E)/E$ considerably increases in the low-energy regime  where it asymptotically saturates the  upper bound $C(E)/E$; unfortunately this increase takes place at lower energy values for larger codewords' length $N$. 
More specifically, by taking a low-energy expansion, $\mathcal{E}\ll 1$ up to first order in $\mathcal{E}$, we can show that the optimal Hadamard rate $R_{opt}^{(N,M)}(E)$ of Eq.~(\ref{mpskHolCap}) attains the classical capacity Eq.~(\ref{CCC}) of the channel for any codewords' length $N$ and $M>1$. Let us then take the low-energy expansion of the optimal rate up to first order in $E$. The coefficients in \eqq{mpskEigen} behave as:
\begin{equation}
\lambda_{\ell}\left(\mathcal{E}\right)\simeq \sum_{h=0}^{M-1}\left(1-\left(1-e^{i\frac{2\pi}{M}h}\right)\mathcal{E}\right)e^{-i\frac{2\pi}{M}\ell h}=\left(1-\mathcal{E}\right)M\delta_{\ell,0}+\mathcal{E}M\delta_{\ell,1}.
\end{equation}
Hence the eigenvalues of $\hat{\rho}^{(+)}_{PPM}$, \eqq{rho+}, and $\hat{\rho}_{PPM}^{(0)}$, \eqq{rho0Fin}, yield respectively
\begin{align}
&\nu^{\ell>0}(\mathcal{E})\simeq \delta_{\ell,1}\frac{\mathcal{E}}{N},\\
&\nu_{0}^{0}\left(\mathcal{E}\right)\simeq\frac{(1-\mathcal{E})M+(N-1)M(1-\mathcal{E})}{MN}=1-\mathcal{E},\\
&\nu_{+}^{0}(\mathcal{E})\simeq\frac{(1-\mathcal{E})M-M(1-\mathcal{E})}{MN}=0.
\end{align}
Inserting the previous expressions in \eqq{mpskHolCap} then we obtain
\begin{equation}\label{hadCapApp}
R_{opt}^{(N,M)}(E)\simeq-\frac{(1-\mathcal{E})\log_{2}(1-\mathcal{E})+\mathcal{E}\log_{2}(\frac{\mathcal{E}}{N})}{N}\simeq E-E\log_{2}E,
\end{equation}
which coincides with the expansion of the classical capacity Eq.~(\ref{CCC}), meaning that the two quantities are the same at sufficiently low energy $\mathcal{E}\ll 1$, i.e., $E\ll1/N$. Also note that for $M=1$ there is no contribution from the eigenvalues $\nu^{\ell>0}$, so that the second term in \eqq{hadCapApp} is missing. This explains why  for this value, $R_{opt}^{(N,M)}(E)$ does not saturate the bound $C(E)$ even at high values of $N$.\\
Eventually, in the inset of Fig.~\ref{f10} we show the values of $R^{(N,M)}_{opt}(E)/E$, at fixed $E=0.05$, for several values of the couple $(N,M)$, each represented by a coloured tile (darker colours represent lower values). In general, to obtain higher capacities it is clearly convenient to increase $M$ at fixed $N$ and to decrease $N$ at fixed $M$ (the case $M=1$ representing an exception). In particular  for $M>1$,  $R^{(N,M)}_{opt}(E)/E$  seems to remain close to an optimal value for all $N$ below a critical threshold. We conclude that optimal values of the rate $R^{(N,M)}_{opt}(E)$ at fixed energy are obtained for large $M$ and small $N$ values, which motivates us to devise a detection scheme for this family of codes.

\subsection{The rate of the PSK Hadamard receiver}\label{secHadRate}
In this section we want to compute the rate of information transmitted per mode when reading out the PSK Hadamard code with the  PSK Hadamard receiver. As we mentioned in Sec.~\ref{had:sec2}, the ultimate advantage of this code is the unitary equivalence between low-energy pulses distributed on several modes by the sender and high-energy ones concentrated on a single mode by the receiver, i.e., respectively the $\ket{v_{k}(\alpha_m)}$ and $\ket{w_{k}(\alpha_m)}$ states of Eq.~(\ref{TRANS2}): the Hadamard transform is able to concentrate the scattered input energy, for a limited number of codewords. Bob takes advantage of this concentration with the readout operation, which is a separable technique repeated on each mode, labeled Vacuum or Pulse (VP) detection (see Fig.~\ref{f11}). Its purpose is first to determine whether a pulse is present on that mode then, in case of a positive answer, to determine which among the possible states $\{ \ket{\sqrt{N}\alpha_{m}}; m=0, 1, \cdots, M-1\}$ is the one that was selected by Alice. The simplest case $M=2$ was proposed and employed in~\cite{guha2}, while here we discuss its generalization to $M>2$.\\
The VP scheme can be implemented as shown in Fig.~\ref{f11}: first we split the unknown received state in two lower-energy copies, by means of a beam splitter, \eqq{bs}, of reflectivity $\eta_{1}=1/J$ (transmissivity $\tau_{1}=1-\eta_{1}$), then we measure the reflected copy with an OOD. Assuming no dark counts and perfect efficiency, the detector can click (``1'' in the figure) only if a pulse was present. In this case, the transmitted copy of the state is employed for PSK detection, which identifies the phase of the pulse among the $M$ possible ones. On the other hand if the detector does not click (``0'' in the figure) we can not exclude that the received state is the vacuum. Hence we repeat the initial procedure on the transmitted copy, sending it back to the beam splitter, measuring its reflected part on the OOD and applying the same decision rule. Note that we employ a beam splitter with rescaled reflectivity  $\eta_{2}=\eta_{1}/\tau_{1}$, so that the reflected part of the signal at this second step still carries a fraction $1/J$ of the input energy. We iterate this procedure of splitting the state and measuring its reflected part $N$ times, until either a click is registered at some iteration step, triggering PSK detection on the remaining part of the state and the identification of a pulse, or the OOD never clicks, in which case we guess that the received state was the vacuum.  By properly rescaling the reflectivity at each step  $p$ as $\eta_{j}=\eta_{j-1}/\tau_{j-1}$, we can assure to always extract and measure a fraction $1/J$ of the input energy, so that after $J$ extractions, i.e., no clicks, the energy is exhausted. 
\begin{figure}[t!]
\center
\includegraphics[trim={0cm 6cm 0cm 6.5cm},clip,scale=.43]{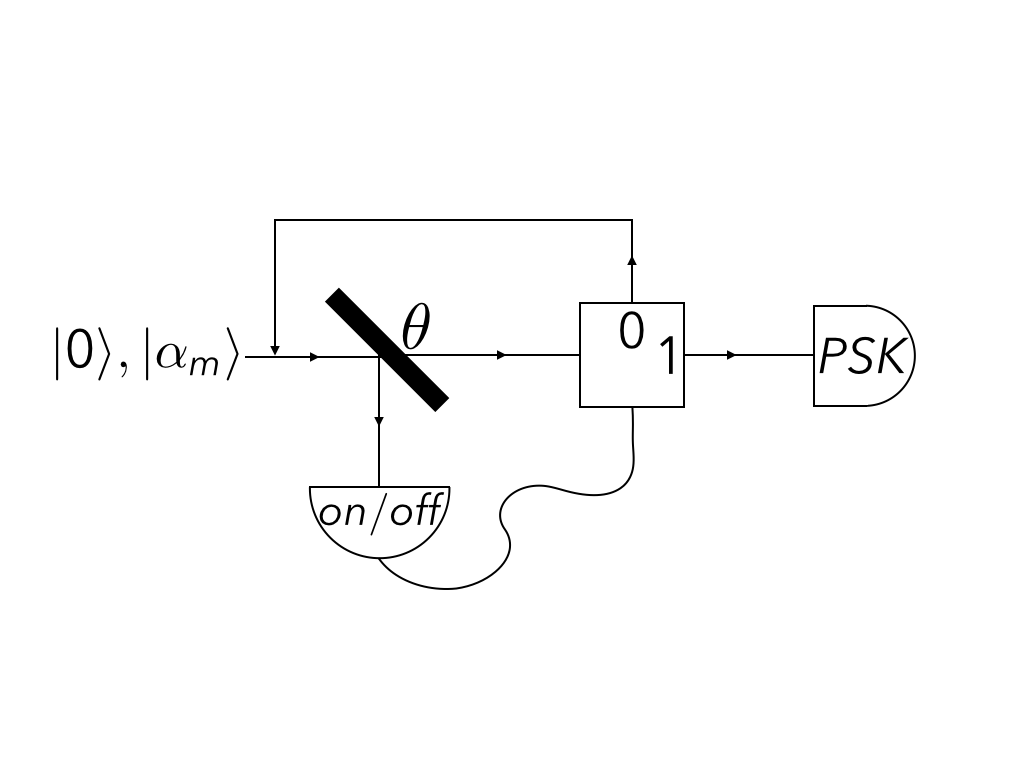}
\caption{Schematic depiction of the single-mode VP detection scheme, which determines whether or not a pulse was present on the given mode and in case of a positive answer determines its phase: the input state, $\ket{\alpha_{m}}$, $m=0,\cdots,M-1$ or the vacuum, is sent through a beam splitter of reflectivity $\eta_{1}=1/J$ (transmissivity $\tau_{1}=1-\eta_{1}$) and its reflected part is measured with an ideal OOD. If the detector clicks, i.e., ``$1$'', then a pulse must be present and the transmitted part of the state is sent through PSK detection, which identifies the pulse among the $M$ possible ones. If the detector does not click instead, i.e., ``0'', then the received state could be the vacuum, so its transmitted part is sent back to the beam splitter, repeating the same detection procedure with a rescaled reflectivity, $\eta_{2}=\eta_{1}/ \tau_{1}$, so that the fraction of energy incident on the OOD is always $1/N$. The scheme is iterated $N$ times, with reflectivity $\eta_{j}=\eta_{j-1}/\tau_{j-1}$ at the $j$-th OOD step, until either a click triggers PSK detection at some iteration step, identifying the phase of the pulse, or the entire energy is exhausted, in which case the receiver guesses that vacuum was sent. The conditional probabilities of detection in the limit $J\rightarrow\infty$, $1-\tau=1/N$, are given by \eqq{DDDlm}. An upper and lower bound on the PSK probability of detection are given in the text (see \eqq{multiHel} and Appendix~\ref{app:naiveProb}).}\label{f11}
\end{figure}\\
Accordingly the conditional probability of detecting a state $\ket{\alpha_{\ell}}$ if $\ket{\alpha_{m}}$ was sent, $P_{vp}^{(M)}(\ell|m;\mathcal{E},N)$, is given by the sum, over all values of the step index $j$, of the probability of detecting the first photon at that step and then switching to PSK detection of the $M$ pulses with remaining energy $\tau^{j}\mathcal{E}$. We have:
\begin{equation}\label{correctCondN}
P_{vp}^{(M)}(\ell|m;\mathcal{E},J)=\sum_{j=1}^{J}e^{-\frac{\mathcal{E}}{J}p}\left(1-e^{-\frac{\mathcal{E}}{J}}\right) P_{psk}^{(M)}\left(\ell|m
;\frac{\mathcal{E}}{J}(J-j)\right),
\end{equation}
where $e^{-\frac{\mathcal{E}}{J}}$ is the conditional probability of registering no click at any detection step if a pulse was present, while $P_{psk}^{(M)}\left(\ell|m;\tau^{j}\mathcal{E}\right)$ represents the PSK-detection probability of guessing $\ket{\alpha_{\ell}}$ if $\ket{\alpha_{m}}$ was sent, after the OOD clicked at the $j$-th step; its specific form will be discussed after computing the rate of the receiver. If instead the OOD never clicks, we guess that the vacuum was present, even though it may have actually been a pulse, making an error with probability
\begin{align}
\label{err0CondN}
P_{vp}^{(M)}(vac|m;\mathcal{E},J)=e^{-\mathcal{E}}.
\end{align}
The expression Eq.~(\ref{correctCondN}) can be evaluated as an integral in the limit of infinite splitting steps, $J\rightarrow\infty$.  It is sufficient to define the variable $x_{j}=j\mathcal{E}/J\in[\mathcal{E}/J,\mathcal{E}]$, whose increment is $\Delta x=\mathcal{E}/J$, infinitesimal in the large-$J$ limit. Hence the conditional probability Eq.~(\ref{correctCondN}) can be written at the first order in $\Delta x$ as
\begin{equation}
\begin{aligned}
P_{vp}^{(M)}(\ell|m;\mathcal{E},J)&\simeq\sum_{j=1}^{J} e^{-x_{j}} \Delta x~P_{psk}^{(M)}\left(\ell|m;\mathcal{E}-x_{j}\right)\\
&\underset{J\rightarrow\infty}{\longrightarrow}\int_{0}^{\mathcal{E}}dx e^{-x} P_{psk}^{(M)}\left(\ell|m;\mathcal{E}-x\right),
\end{aligned}
\end{equation}
which, after a change of variable $t=e^{-x}$, gives the final result:
\begin{equation}
\label{DDDlm}
P_{vp}^{(M)}\left(\ell|m;\mathcal{E}\right)=\lim_{J\rightarrow\infty} P_{vp}^{(M)}(\ell|m;\mathcal{E},J)=\int_{e^{-\mathcal{E}}}^{1}dt~P_{psk}^{(M)}(\ell|m;\mathcal{E}+\ln t).
\end{equation}
Eventually, the detection of a whole Hadamard codeword $\ket{w_{k}(\alpha_{m})}$ is carried out by applying VP to each mode. With this method a codeword with pulse on mode $k$ can be misinterpreted only as one of the other $M-1$ codewords living on the same mode, since all other modes are occupied by the vacuum, which with unit probability never clicks. The only additional source of error is when the pulse on the $k$-th mode does not click, in which case no codeword can be identified. Accordingly, the rate of the whole receiver is computed in terms of the mutual information, \eqq{mCInfo}, of the classical input/output random variables induced by the quantum encoding/decoding operations, respectively:  $x\in X=\{(k,m)\}_{k=0,m=0}^{N-1,M-1}$, determined by the index of the mode where the pulse is present, $k$, and the phase index of the pulse, $m$, and $y\in Y=X\cup\{err\}$, with an additional error outcome associated to the case where VP does not click on any mode. The input probability distribution is uniform, $P_{X}(x)=1/(NM)$, while the conditional output one is:
\begin{equation}
P_{Y|X}(y|x)=\begin{cases}
P_{vp}^{(M)}(vac|m_{x};\mathcal{E})&\text{ if }y=err,\\
P_{vp}^{(M)}(m_{y}|m_{x};\mathcal{E})&\text{ if }k_{y}=k_{x},\\
0&\text{ otherwise}.
\end{cases}
\end{equation}
The rate of the PSK Hadamard receiver with average received energy per mode $E$ is then given by:
\begin{equation}\begin{aligned}\label{hadRate}
R^{(N,M)}_{had}(E)&=\sum_{m_{x},m_{y}=0}^{M-1}\frac{P_{vp}^{(M)}(m_{y}|m_{x};\mathcal{E})}{MN}\log_{2}\left(\frac{M N~P_{vp}^{(M)}(m_{y}|m_{x};\mathcal{E})}{\sum_{m_{x'}=0}^{M-1} P_{vp}^{(M)}(m_{y}|m_{x};\mathcal{E})}\right).
\end{aligned}\end{equation}\\

Let us now discuss the specific form of $P_{psk}^{(M)}(\ell|m;\mathcal{E})$, which determines $P_{vp}^{(M)}\left(m_{y}|m_{x};\mathcal{E}\right)$ and hence the rate $R^{(N,M)}_{had}(E)$. For the case $M=2$, it is known that Dolinar detection attains the optimal success probability, see Sec.~\ref{sec:disc}. For the general case $M>2$, a variety of generalized-Dolinar schemes has been proposed, as also discussed in Sec.~\ref{subsec:discPract}. The large literature on the subject seems to suggest that this generalized Dolinar detection can get close to the PSK Helstrom bound on the discrimination of $M$ symmetrically-distributed coherent states of fixed amplitude, \eqq{cycSym}, in particular surpassing the performance obtained by classical detection techniques, but that it cannot attain the multi-state bound in general. In order to have both a best-case evaluation of the receiver's performance and a realistic, practical and fair comparison with the $M=2$ case, in the following we report both the detection efficiency one could reach by employing an optimal measurement that saturates the PSK Helstrom probability of discrimination~\cite{multiHel1,multiHel2} as the second part of the receiver, 
\begin{align}\label{multiHel}
P_{hel}^{(M)}(\ell|m;\mathcal{E})=\abs{\frac{1}{M}\sum_{j=0}^{M-1}e^{-i \frac{2\pi}{M} j(\ell-m)}\sqrt{\lambda_{j}(\mathcal{E})}}^{2},
\end{align}
with $\lambda_{j}(\mathcal{E})$ defined in \eqq{mpskEigen}, and the detection efficiency that can be attained by employing a very naive but realistic generalization of the Dolinar scheme, $P_{real}^{(M)}(\ell|m;\mathcal{E})$, based on the same splitting method of the VP detector and sequentially nulling one of the hypoteses, discarding it if a click is registered (see Appendix~\ref{app:naiveProb} for its detailed form). Hence we have both an upper and lower bound for the PSK probability of discrimination, i.e., $P_{hel}^{(M)}(\ell|m;\mathcal{E})\geq P_{psk}^{(M)}(\ell|m;\mathcal{E})\geq P_{real}^{(M)}(\ell|m;\mathcal{E})$; when substituting these bounds in \eqq{DDDlm}, we obtain two new expressions for the conditional probability of detection, 
\begin{align}
P_{vp\text{-}hel}^{(M)}(\ell|m;\mathcal{E})&=\int_{e^{-\mathcal{E}}}^{1}dt~P_{hel}^{(M)}(\ell|m;\mathcal{E}+\ln t),\\
P_{vp\text{-}real}^{(M)}(\ell|m;\mathcal{E})&=\int_{e^{-\mathcal{E}}}^{1}dt~P_{real}^{(M)}(\ell|m;\mathcal{E}+\ln t),\label{vpRealGen}
\end{align}
which, through equation \eqq{hadRate}, provide a correspondent upper and lower bound for the rate of the Hadamard receiver, i.e., $R^{(N,M)}_{hel}(E)\geq R^{(N,M)}_{had}(E)\geq R^{(N,M)}_{real}(E)$, with
\begin{equation}\label{helRate}
R^{(N,M)}_{hel}(E)=\sum_{m_{x},m_{y}=0}^{M-1}\frac{P_{vp\text{-}hel}^{(M)}(m_{y}|m_{x};\mathcal{E})}{MN}\log_{2}\left(\frac{M N~P_{vp\text{-}hel}^{(M)}(m_{y}|m_{x};\mathcal{E})}{\sum_{m_{x'}=0}^{M-1} P_{vp\text{-}hel}^{(M)}(m_{y}|m_{x};\mathcal{E})}\right),
\end{equation}
and $R^{(N,M)}_{real}(E)$ explicitly given in Appendix~\ref{app:naiveProb} for $M=3,4$.
 In particular the upper bound is saturated in the case $M=2$, while for $M>2$ there are optimized schemes which get close to it at low energy (see~\cite{marq} for a detailed description). \\
It is important to have a separable communication scheme to compare with, based on the same constellation of single-mode states. It seems reasonable to send one of the $M$ symmetric coherent states on each mode and read them out with generalized Dolinar detection. We call the latter a  PSK separable scheme and its rate is:
\begin{equation}
R^{(M)}_{sep}(E)=\sum_{m_{x},m_{y}=0}^{M-1}\frac{P_{hel}^{(M)}(m_{y}|m_{x};E)}{M}\log_{2}\left(\frac{M~P_{hel}^{(M)}(m_{y}|m_{x};E)}{\sum_{m_{x'}=0}^{M-1} P_{hel}^{(M)}(m_{y}|m_{x};E)}\right).
\end{equation} 
\begin{figure}[t!]\center
\includegraphics[trim={0cm 1cm 0cm 2.5cm},clip,scale=.43]{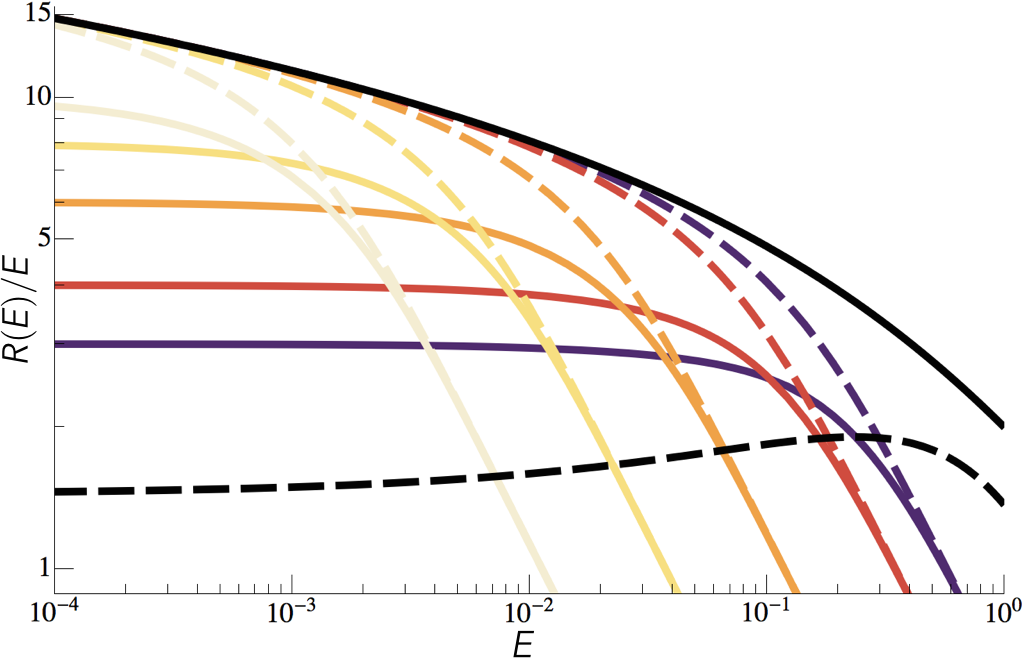}
\caption{Plot (log-log scale) of the upper-bound of the Hadamard rate per average mode-energy, $R_{hel}^{(N,M)}(E)/E$, for $M=3$ phases and codewords of length $N=2^{i}$, $i=3,4,6,8,10$ (solid lines with colours from purple/dark grey to light brown/light grey), the corresponding optimal rate per average mode-energy, $R_{opt}^{(N, M)}(E)/E$, for the same values of $n,M$ (dashed lines with colours from purple/dark grey to light brown/light grey), the separable rate per average mode-energy, $R_{sep}^{(M)}(E)/E$, for the same number of phases $M=3$ (black dashed line) and the classical capacity of the channel, $C(E)/E$ (black solid line), as a function of the average mode-energy $E$. Observe that the Hadamard rates rise over their separable counterpart at low energy, only to level off at a plateau later on; this effect is shifted towards lower energies and higher plateau values for larger codewords' length $N$. Note also that, for any $N$, the Hadamard rate attains its optimal value $R_{opt}^{(N,M)}(E)$ at high energy, but detaches towards the plateau when the latter attains the classical capacity of the channel.}\label{f12}
\end{figure}
In Fig.~\ref{f12} we show the Hadamard rate with Helstrom PSK detection per average mode-energy as a function of the average mode-energy, $R^{(N,M)}_{hel}(E)/E$ vs. E, for $M=3$ phases and several values of the codewords' length $N$, along with $R^{(M)}_{sep}(E)/E$ and $R^{(N,M)}_{opt}(E)/E$, also for $M=3$ and the same $N$ values, and $C(E)/E$. We observe that for any value of $N$ the Hadamard rates surpass the separable rate below a certain low-energy threshold, $E^{*}(N)<1$, then level off at a plateau for smaller energy. As $N$ increases, this energy threshold $E^{*}(N)$ lowers, while the plateau value grows. Moreover each Hadamard rate attains its optimal value, \eqq{mpskHolCap}, at high energy $E>E^{*}(N)$, but detaches from it and from the classical capacity of the channel when leveling off at the plateau. Still let us note that, if we take the envelope of $R^{(N,M)}_{hel}(E)$ for fixed $M$ and all allowed values of $N$ in a given interval $\mathcal{N}$, i.e., 
\begin{equation}\label{helEnv}
R^{(\mathcal{N},M)}_{hel}(E)=\max_{N\in \mathcal{N}}R^{(N,M)}_{hel}(E), 
\end{equation} 
we are able to get closer to $C(E)$ as the energy decreases.
\begin{figure}[t]\center
\includegraphics[trim={0cm 2.8cm 0cm 2.8cm},clip,scale=.43]{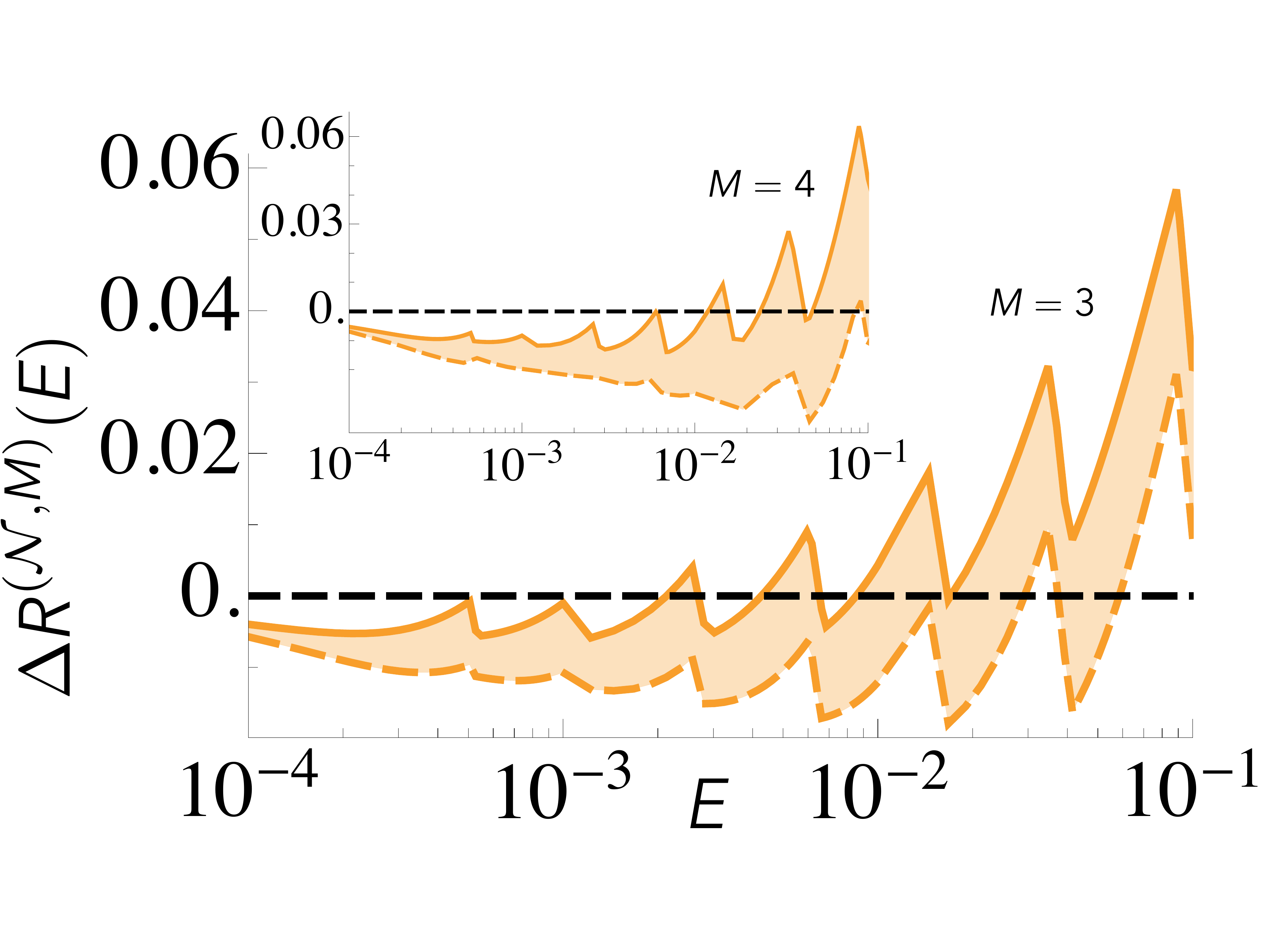}
\caption{Plot (log-linear scale) of $\Delta R_{hel}^{(\mathcal{N}, M)}(E)$ (orange/light-grey solid line) and $\Delta R_{real}^{(\mathcal{N},M)}(E)$ (orange/light-grey dashed line) as a function of the average energy per mode $E$ for $M=3$ and, in the inset, $M=4$. The shaded region between the two curves represents all rate values achievable by Hadamard coding and VP detection. The black dashed line of constant value $0$ represents the reference quantity $R_{hel}^{(\mathcal{N}, 2)}(E)\equiv R_{had}^{(\mathcal{N}, 2)}(E)$. Observe that the Hadamard rate with $M=3, 4$ phases beat both the that with $M=2$ phases in an interval of energy $E\sim[4\cdot10^{-3},10^{-1}]$, with a gain of up to $6\%$ over the rate at $M=2$. Hence the design of better PSK detection techniques, achieving the Helstrom bound for PSK states would provide access to the best communication rates so far in the low-energy regime.}\label{f13}
\end{figure}\\
\begin{figure}[t!]\center
\includegraphics[trim={0cm 2.8cm 0cm 2.8cm},clip,scale=.43]{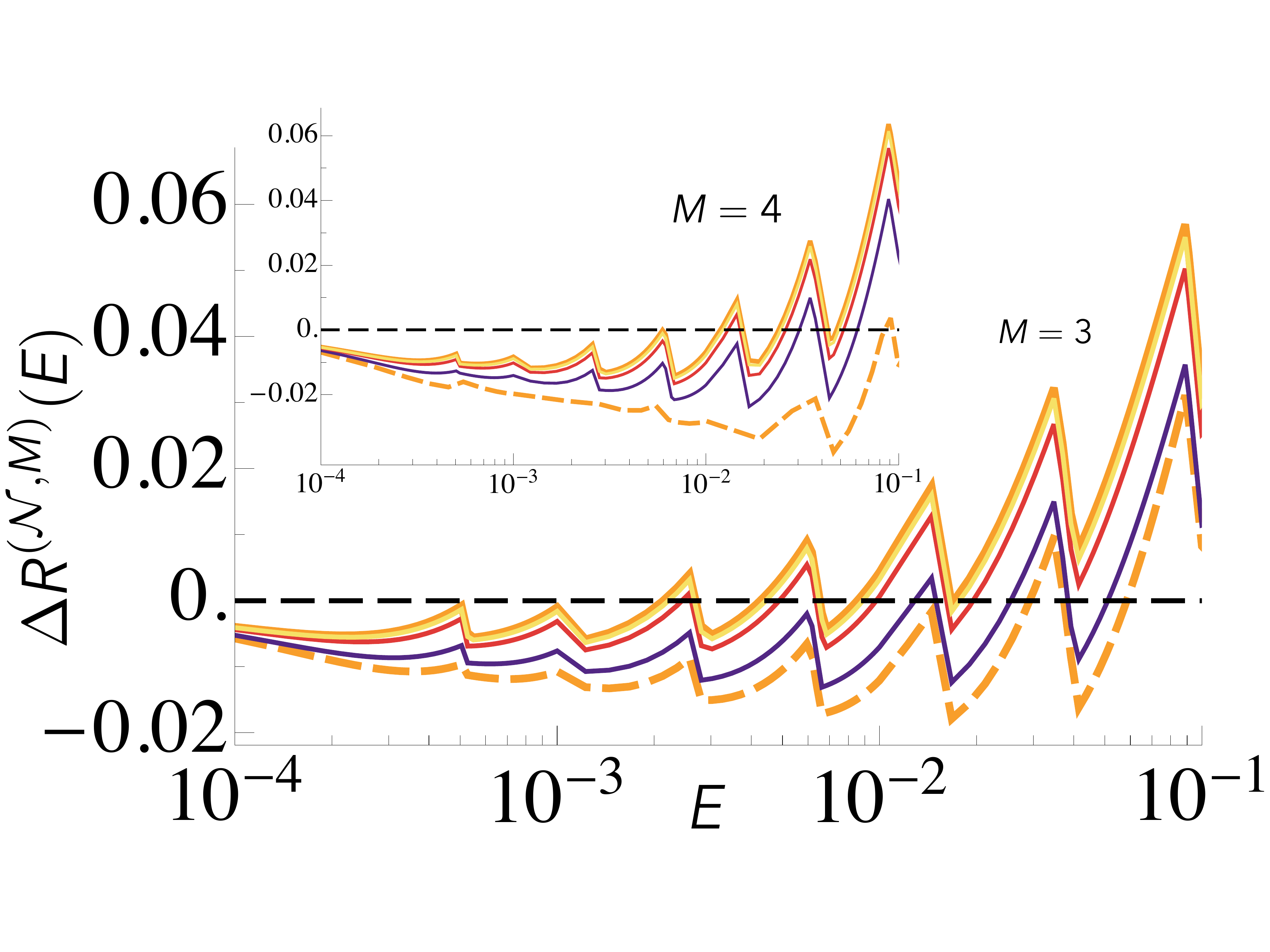}
\caption{Plot (log-linear scale) of the same rates of Fig.~\ref{f13} and of those achievable by the Helstrom-Hadamard receiver with a finite number of splitting steps $J=10, 30, 100$ (respectively from purple/dark-grey to yellow/light-grey solid lines), based on \eqq{correctCondN}. For both numbers of phases $M=3,4$, already $30$ splitting steps (red/middle lines) achieve a rate close to the one obtained in the infinite-$J$ limit (orange/light-grey solid curve).}\label{f14}
\end{figure}\\

Eventually we want to compare the performance of Hadamard rates for $M>2$  with that for the case $M=2$, discussed in~\cite{guha2}. In Fig.~\ref{f13} we show the difference between the upper-bound Hadamard envelopes \eqq{helEnv} for $M=3,4$ and that for $M=2$, relative to the latter, i.e., $\Delta R^{(\mathcal{N},M)}_{hel}(E)=\left[R^{(\mathcal{N},M)}_{hel}(E)-R^{(\mathcal{N},2)}_{hel}(E)\right]/R^{(\mathcal{N},2)}_{hel}(E)$, with $\mathcal{N}=\{2^{i},i=1,\cdots,10\}$. The same quantity is plotted also for the lower bound of the Hadamard rate, discussed in Appendix~\ref{app:naiveProb}, i.e., $\Delta R^{(\mathcal{N},M)}_{real}(E)$, so that the shaded region between the two curves represents all rates achievable by the Hadamard code with VP detection. We observe an advantage of the schemes with a higher number of phases in the energy region $E\in[4\cdot10^{-3},10^{-1}]$. For higher energy values the separable technique becomes superior, since the classical capacity allows for more codewords to be used, while for lower energy the Hadamard rate for $M=2$ performs better, since distinguishability of the signals becomes crucial. We conclude that the practical Hadamard receiver with more than two phases has relevant rate, with a gain of up to $6\%$ with respect to previous proposals, in a crossover amplitude region where separable techniques start performing worse than joint ones. In Fig.~\ref{f14} we show the same plots with the  addition of three curves, corresponding to the maximum rate achievable by the Hadamard receiver with a finite number of splitting steps $J=10, 30, 100$, i.e., employing the conditional probability \eqq{correctCondN} instead of \eqq{DDDlm}. It can be seen that a reasonable approximation of the infinite-$J$ limit is obtained already for $J=30$, independently of the number of phases $M=3,4$.

\section{The capacity of adaptive coherent-state decoders with passive interferometry}\label{sec:adRec}
In this last section we study a class of practical joint decoders for classical communication on quantum PI Gaussian channels, \eqq{PIG}, with coherent-state codewords. Given the general difficulty of implementing truly joint quantum measurements, Secs.~\ref{subsec:commQ},\ref{sec:bisDec}, this problem has strong practical relevance, and several decoder proposals so far, e.g., the Hadamard receivers of Sec.~\ref{sec:had}, are encompassed by the the general class of Adaptive Decoders (AD) depicted in Fig.~\ref{f15}a. The latter combines the available \emph{single-mode} technology, e.g., photodetectors and local transformations, with \emph{multi-mode passive} interferometers and \emph{classical feedforward control}. The rationale behind this choice is that introducing correlations between modes during the decoding procedure may increase the transmission rate of simpler separable measurements, getting closer to the structure of joint quantum measurements that seems to be ultimately necessary to achieve the capacity of PI Gaussian channels. On the contrary, here we prove that the maximum information transmission rate of such channels with coherent-state encoding and AD \emph{is equal to} that obtained with a separable Decoder (SD) employing the same measurement on each mode, as shown in Fig.~\ref{f15}b.
The general idea behind our proof is to map the quantum AD into an effective classical programmable channel with feedback to the encoder. Then we obtain our results by extending Shannon's feedback theorem~\cite{ShanFeed,covThomBOOK} to this kind of channels.\\
Our work gives several major contributions: i) it implies the conjecture by Chung \emph{et al.}~\cite{guhaDet,guhaDet1}, namely that adaptive passive Gaussian interactions, single-mode displacements and photodetectors do not increase the optimal transmission rate; ii) if the classical capacity of PI Gaussian channels, \eqq{ccapPI}, is achieved only by joint measurements, Secs.~\ref{subsec:commQ},\ref{sec:bisDec}, as the evidence suggests so far, then it cannot be achieved with our AD scheme; iii) it extends the results of Takeoka and Guha~\cite{takeokaGuha}, who considered only Gaussian measurements; iv) it extends the analysis made by Shor~\cite{shorAd} in the context of trine states to coherent states and passive interactions. Our results, though already envisaged in previous works on the subject, have strong relevance for future research on practical decoders: i) they extend the study of decoders by considering arbitrary single-mode manipulations before measurement, including non-Gaussian and non-unitary ones; ii) they exclude a decoding advantage of adaptive passive Gaussian interactions, which are the easiest to realize in practice, suggesting that more difficult interactions are necessary to achieve capacity. Eventually, let us note that the possibility of employing ancillary states is partially included in our AD scheme: this is the case if each ancilla is allowed to interact just with one mode before being measured; otherwise, i.e., if the ancillae can interact with several modes, the problem of determining the decoder's optimal rate remains open and could give a practical advantage over SDs, as stated in a new version of the conjecture by Chung et al~\cite{guhaDet1}. 
\begin{figure}[t!]\center
\includegraphics[trim={0cm 3.7cm 0cm 0cm},clip,scale=.43]{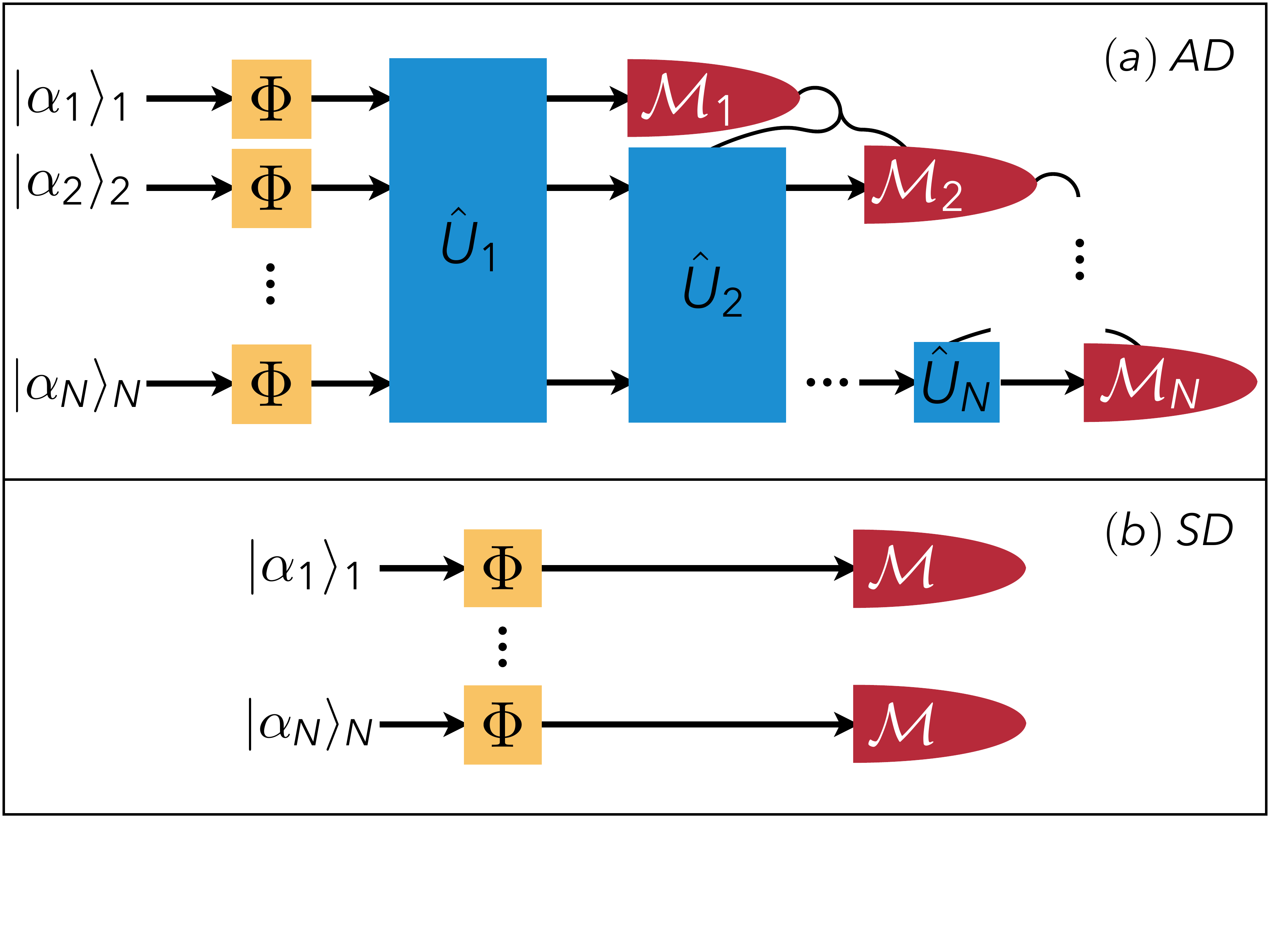}
\caption{Schematic depiction of the class of (a) Adaptive Decoders (AD) and (b) separable Decoders (SD) considered, whose maximum information transmission rate is proved to be equal. (a) In the AD case the sender, Alice, encodes the message into separable sequences of coherent states $\ket{\alpha_{1}}_{1}\otimes\cdots\otimes\ket{\alpha_{N}}_{N}$ and sends it to the receiver, Bob, with $N$ distinct uses of a quantum PI Gaussian channel $\Phi$ (yellow/light-gray boxes), \eqq{PIG}. Bob's AD comprises multi-mode passive Gaussian interferometers $\hat{U}_{j}$ (blue/gray boxes), \eqq{pass}, and arbitrary destructive single-mode measurements $\mathcal{M}_{j}$ (red/dark-gray shapes), \eqq{POVM}, adaptively dependent on the measurement results of previous modes and applied successively on the remaining modes. (b) In the SD case Alice uses the same encoding but Bob performs the same measurement $\mathcal{M}$ on each mode and cannot use adaptive procedures.}\label{f15}
\end{figure}

\subsection{The adaptive decoder}\label{ad}
Let us suppose that Alice wants to transmit a classical message, represented by the sequence of random variables $A_{(1,N)}$ with values $\vecund{\alpha}\in\mathbb{C}^{N}$, by making $N$ transmissions on a quantum PI Gaussian channel, $\Phi_{PI}$, \eqq{PIG}. She employs coherent-state codewords 
\begin{equation}
\ket{\vecund{\alpha}}=\ket{\alpha_{1}}_{1}\cdots\ket{\alpha_{N}}_{N}
\end{equation} 
with joint probability $p_{\vecund{\alpha}}=P_{A_{(1,N)}}\left(\vecund{\alpha};E\right)$ chosen to respect the maximum average-mode-energy constraint
 \begin{equation}\label{eneCon}
 \int d^{2N}p_{\vecund{\alpha}}\abs{\vecund{\alpha}}^{2}\leq N E.
 \end{equation}
 As discussed in Theorem~\ref{ccapPI}, this separable coherent-state encoding achieves the classical capacity of $\Phi_{PI}$ when its probability distribution is i.i.d. and Gaussian on each mode.\\
Let us also suppose that Bob has an AD that outputs the sequence of classical random variables $Y_{(1,N)}$, with values $\vecund{y}\in\mathcal{Y}^{N}$, where $\mathcal{Y}$ can be discrete or continuous, e.g., $\mathcal{Y}=\mathbb{R}$ for homodyne detection \eqq{homodyne}. The conditional probability distribution of the output variables, $P_{Y_{(1,N)}|A_{(1,N)}}\left(\vecund{y}|\vecund{\alpha}\right)$, can be computed as in Sec.~\ref{subsec:commQ} and it is determined by the specific decoding operations of the AD, Fig.~\ref{f15}a, comprising for all $j=1,\cdots,N$: 
\begin{enumerate}
\item A multi-mode passive Gaussian unitary $\hat{U}_{j}\left(y_{(1,j-1)}\right)$, i.e., a network of beam-splitters and phase-shifters conditioned on the outcomes of previous measurements, acting on the set of modes from the $j$-th to the $N$-th as
\begin{equation}
\hat{U}_{j}\left(y_{(1,j-1)}\right)\ket{\alpha_{(j,N)}}=\ket{U_{j}\left(y_{(1,j-1)}\right)\alpha_{(j,N)}},\label{pass}
\end{equation} 
where $U_{j}$ is the $(N-j+1)$-dimensional unitary matrix representing $\hat{U}_{j}$ in phase-space, see Sec.~\ref{gaussUni};
\item Single-mode operations and a final destructive measurement, altogether represented by a local POVM $\mathcal{M}_{j}\left(\lambda\left(y_{(1,j-1)}\right)\right)$ chosen among a set of possible POVMs that are labeled by the (discrete or continuous) index $\lambda\in\Lambda$ conditioned on the outcomes of previous modes. As discussed in Se.~\ref{sec:basics}, each POVM is defined by a collection of positive operators corresponding to the possible single-mode outcomes, 
\begin{equation}\label{POVM}
\mathcal{M}_{j}\left(\lambda\left(y_{(1,j-1)}\right)\right)=\left\{\hat{E}_{y_{j}}\left(\lambda\left(y_{(1,j-1)}\right)\right)\right\}_{y_{j}\in\mathcal{Y}},\hspace{-.5cm}
\end{equation} 
where the operators $\hat{E}_{y_{j}}$ sum up to the identity on the Hilbert space of a single mode.
\end{enumerate}
For our results to hold, a crucial assumption is that the single-mode POVMs completely destroy the measured state before any information is sent to the rest of the system; if instead Bob can perform partial measurements the ADs rate may increase, see~\cite{shorAd}. Let us also note that the generic set of allowed POVMs described above can be restricted case by case by properly choosing the $\hat{E}_{y}$. For example the simplest toolbox for optical-signal processing is that of the Kennedy receiver with POVMs of the form $\mathcal{M}_{ken}(\lambda)$, \eqq{kennedy}, where the index $\lambda\in\mathbb{C}$ is the amplitude of a phase-space displacement in this case. Since the latter depends adaptively on previous outcomes, let us also note that the AD with a single-mode Kennedy structure behaves similarly to a Dolinar receiver, Sec.~\ref{subsec:discPract}.

\subsection{The optimal rate} \label{thm}
The performance of a quantum decoder for the transmission of classical information can be evaluated by computing the mutual information of its classical input and output random variables, as discussed in Sec.~\ref{sec:commQ}. The latter is defined for our AD as $I\left(A_{(1,N)}:Y_{(1,N)}\right)$, \eqq{mCInfo}. The AD's optimal information transmission rate then is obtained by maximizing the mutual information over the input distribution with energy constraint $E$ and the decoding operations and regularizing it as a function of the number of uses $N$, i.e., 
\begin{equation}\label{opDef}
R_{AD}(E)=\lim_{N\rightarrow\infty}~\max_{\substack{ P_{A_{(1,N)}}\left(\vecund{\alpha};E\right), \\ \hat{U}_{j}\left(y_{(1,j-1)}\right), \\\lambda\left(y_{(1,j-1)}\right)\in\Lambda} }\frac{I\left(A_{(1,N)}:Y_{(1,N)}\right)}{N}.\hspace{.5cm}
\end{equation}
We want to compare the AD with the SD of Fig.~\ref{f15}b, comprising for each use of the channel $\Phi_{PI}$ only a single-mode POVM $\mathcal{M}(\lambda)$ chosen from the same set of those in the AD parametrized by $\lambda\in\Lambda$, \eqq{POVM}, but without any interaction or classical communication between modes. Obviously, the optimal rate of this SD is obtained by maximizing the mutual information of the single-mode input and output variables $A_{1}$ and $Y_{1}$ over the input distribution at constrained energy $E$ and the POVM's parameter, i.e.,
\begin{equation}\label{opSep}
R_{SD}(E)=\max_{P_{A_{1}}\left(\alpha_{1};E\right),~\lambda\in\Lambda}I\left(A_{1}:Y_{1}\right).
\end{equation}
In order to show that the optimization \eqq{opDef} reduces to \eqq{opSep}, we find it useful to consider a more general decoder comprising the AD and a classical feedback link from Bob to Alice, that certainly cannot decrease the optimal rate \eqq{opDef}. Exploiting this feedback and the phase-insensitive property of $\Phi_{PI}$,  \eqq{PIG}, Alice can always perform the $\hat{U}_{j}$ instead of Bob. Hence all the AD's interactions are represented by a classical feedback to the encoder, that rearranges the remaining sequences $\alpha_{(j,N)}\in A_{(j,N)}$ into new sequences $\beta_{(j,N)} \in B_{(j,N)}$ with $B_{j}=\left\{\beta_{j}\in\mathbb{C}\right\}$ for all modes $j=1,\cdots,N$, before the transmission on the channel. Crucially, each choice of $\hat{U}_{j}$ corresponds to a different rearrangement performed by the encoder in such a way that the total average-energy constraint \eqq{eneCon} is still respected by the joint probability distribution $P_{B_{(1,N)}}(\vecund{\beta};E)$ of the new messages $B_{(1,N)}$. 
\begin{figure}[t!]\center
\includegraphics[trim={0cm 9cm 0cm 9cm},clip,scale=.43]{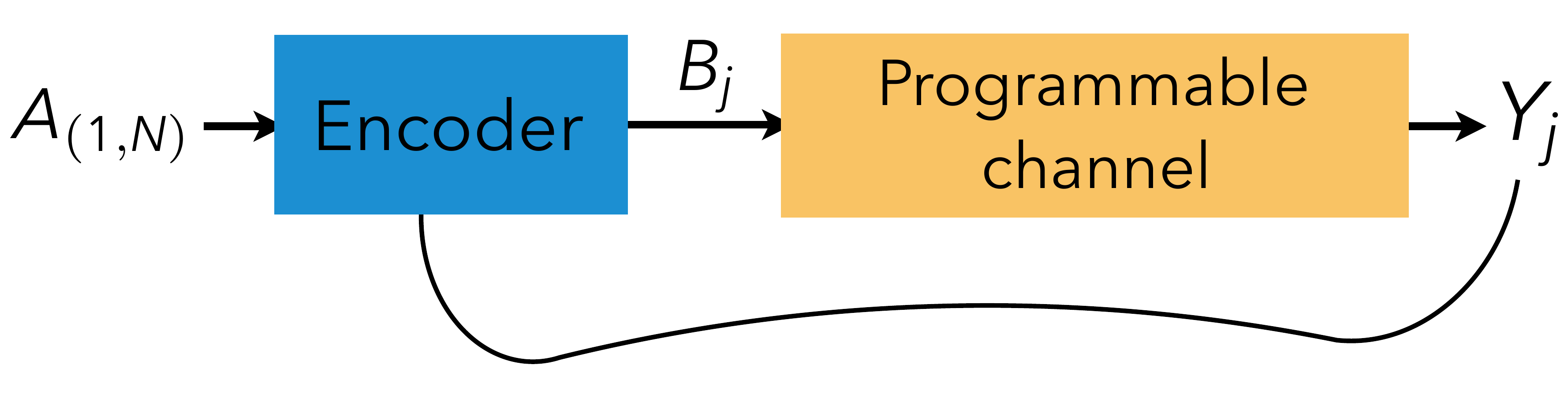}
\caption{Schematic depiction of the classical communication scheme induced by the quantum AD, Fig.~\ref{f15}a. The input sequence $\vecund{\alpha}\in A_{(1,N)}$ is encoded (blue/gray box) into single letters $\beta_{j}\in B_{j}$ that are sent one-by-one on a classical memory channel (yellow/light-gray box) with output $y_{j}\in Y_{j}$, for each $j=1,\cdots,N$. 
The adaptive passive interactions $\hat{U}_{j}$ of the quantum scheme here correspond to a classical feedback to the encoder, i.e., the encoding function that generates each $\beta_{j}$ depends on the input message $\vecund{\alpha}$ and on all previous results $y_{(1,j-1)}$. The single-mode PI Gaussian channel $\Phi_{PI}$ and adaptive POVM $\mathcal{M}_{j}$ employed on each letter $\beta_{j}$ instead correspond to several uses of a classical programmable channel, whose memory at each use depends only on previous outcomes through the parameter $\lambda$ characterizing the measurement, as in \eqq{cond}.}\label{f16}
\end{figure}\\
As a function of the encoded variables $\beta_{j}$, the rest of the AD scheme can then be rewritten as a single-mode classical \emph{programmable} channel, i.e., a channel with memory $\lambda$ that can be chosen adaptively depending on previous outcomes. The corresponding conditional probability at the $j$-th use then is
\begin{equation}\label{cond}
P_{Y_{j}|B_{j},Y_{(1,j-1)}}\big(y_{j}\big|\beta_{j},\lambda\left(y_{(1,j-1)}\right)\big)=\operatorname{Tr}\left[\hat{E}_{y_{j}}\left(\lambda\left(y_{(1,j-1)}\right)\right)\Phi_{PI}\left(\dketbra{\beta_{j}}\right)\right],
\end{equation}
where $\hat{E}_{y_{j}}(\lambda)$ are the elements of the POVM $\mathcal{M}_{j}(\lambda)$ as in \eqq{POVM}.

In light of the previous observations we can conclude that the AD of Fig.~\ref{f15}a, with additional classical communication from Bob to Alice, is equivalent to the classical programmable channel \eqq{cond} with feedback, as shown in Fig.~\ref{f16}. Hence the AD's optimal rate, \eqq{opDef}, is upper bounded by the feedback capacity of \eqq{cond}. Similarly, the capacity of the programmable channel without feedback for a single use is equal to the SDs optimal rate, \eqq{opSep}.
Eventually, the two classical capacities just defined are related via the following theorem, which is a generalization of Shannon's feedback capacity theorem, see \eqq{shanCap} and~\cite{ShanFeed,covThomBOOK}, to the class of programmable channels considered:
\begin{theorem}\label{feedMemo} 
The feedback capacity of a classical programmable channel is equal to its capacity without feedback and it is additive. \end{theorem}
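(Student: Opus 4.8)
The plan is to reduce the statement to the classical converse for channels with feedback, in the spirit of Shannon's feedback theorem~\cite{ShanFeed,covThomBOOK}, by exploiting the defining structural property of the programmable channel \eqref{cond}: the program $\lambda(y_{(1,j-1)})$ that selects the transition law at the $j$-th use is a \emph{deterministic function of the previous outputs alone}. One inequality is immediate: since a decoder with feedback can always ignore the feedback link and apply the same fixed POVM $\mathcal{M}(\lambda)$ independently on each use, the feedback (equivalently, adaptive) rate is at least the single-use no-feedback rate $R_{SD}(E)$ of \eqref{opSep}. The whole content therefore lies in the reverse inequality $R_{AD}(E) \le R_{SD}(E)$, which simultaneously yields additivity because it is a single-letter bound.

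First I would set up a length-$N$ code with feedback: a message $W$ uniform on $\{1,\dots,2^{NR}\}$ and encoding maps $\beta_j = \beta_j\big(W, y_{(1,j-1)}\big)$ producing the effective inputs $B_j$, subject to the average-energy constraint \eqref{eneCon}. Applying Fano's inequality to a decoder with vanishing error probability gives $NR \le I(W : Y_{(1,N)}) + N\varepsilon_N$ with $\varepsilon_N \to 0$. I would then single-letterize $I(W : Y_{(1,N)})$ via the chain rule,
\begin{equation}
I(W : Y_{(1,N)}) = \sum_{j=1}^{N}\Big[ H(Y_j \mid Y_{(1,j-1)}) - H(Y_j \mid W, Y_{(1,j-1)}) \Big].
\end{equation}
The key step is the identity $H(Y_j \mid W, Y_{(1,j-1)}) = H\big(Y_j \mid B_j, \lambda(Y_{(1,j-1)})\big)$: because $B_j$ is a function of $(W, Y_{(1,j-1)})$ and, by \eqref{cond}, the output $Y_j$ is conditionally independent of $(W, Y_{(1,j-1)})$ once $B_j$ and the program $\lambda(Y_{(1,j-1)})$ are fixed, no information beyond the current input and the (output-determined) program enters the channel. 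Consequently each summand is bounded by the mutual information of a single use of the programmed channel, maximized over the input law and over $\lambda \in \Lambda$, i.e.\ by the single-use capacity.

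To close the argument under the energy constraint I would bound the $j$-th summand by $R_{SD}(E_j)$, where $E_j = \mathbb{E}\,|B_j|^2$ is the mean energy allotted to the $j$-th use, and invoke concavity and monotonicity of $E \mapsto R_{SD}(E)$ (established by the standard time-sharing argument for constrained capacities) together with $\tfrac{1}{N}\sum_j E_j \le E$ to conclude $\tfrac{1}{N}\sum_j R_{SD}(E_j) \le R_{SD}(E)$. This gives $R \le R_{SD}(E) + \varepsilon_N$ and hence $R_{AD}(E) = R_{SD}(E)$; the additivity claim $C^{(N)} = N\,C^{(1)}$ is the same computation read as the tightness of the single-letter bound for every $N$, achievability being supplied by i.i.d.\ repetition of the optimal single-use input and POVM.

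I expect the main obstacle to be justifying the conditional-independence identity rigorously, since this is precisely where programmability departs from a genuine discrete memoryless channel. Everything hinges on $\lambda$ depending on past \emph{outputs} only: the program is then recoverable from the outputs, so conditioning on it adds nothing, and the past cannot leak into $Y_j$ by any route other than $B_j$. Were the single-mode measurements non-destructive, or were $\lambda$ permitted to depend on the quantum state forwarded from earlier modes (equivalently on past inputs), this Markov structure would fail and super-additivity could reappear---exactly the caveat flagged after \eqref{POVM} and the open case of multi-mode ancillae. A secondary technical point is that the maximization in the single-use bound must range over the same restricted family $\{\mathcal{M}(\lambda)\}_{\lambda\in\Lambda}$ appearing in \eqref{opSep}, so that the converse bound matches $R_{SD}(E)$ exactly rather than some larger unconstrained capacity.
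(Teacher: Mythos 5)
Your proposal is correct and follows essentially the same route as the paper's proof: the trivial direction by ignoring feedback, then the converse via the chain rule for mutual information, the identity $H(Y_j\mid W,Y_{(1,j-1)})=H(Y_j\mid B_j,Y_{(1,j-1)})$ coming from the fact that the program depends only on past outputs and the encoding makes $B_j$ a function of $(W,Y_{(1,j-1)})$, a per-use bound by the single-use constrained capacity at energy $E_j$, and finally concavity of capacity in the energy together with the total average-energy constraint. The only cosmetic difference is your explicit use of Fano's inequality, which the paper sidesteps by defining the capacities directly as optimized mutual informations.
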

\begin{proof}
Suppose we employ the channel to transmit a classical message $w\in W$ with probability distribution $P_{W}(w)$, outputting $y_{j}\in Y_{j}$ for each use $j$; the most general technique allows a feedback to the sender, who encodes the input message into a sequence of letters $\beta_{j}\in B_{j}$ through an encoding function $\beta_{j}=f\left(w, y_{(1,j-1)}\right)$ for each use $j$. If $\beta$ represents the complex amplitude of a signal we must impose a total average-energy constraint as in \eqq{eneCon}. The feedback capacity of this classical programmable channel at constrained total average-energy per mode $E$ is obtained by maximizing the mutual information over the input distribution, the encoding functions and the programmable parameters $\lambda\left(y_{(1,j-1)}\right)$ for each use:  
\begin{equation}\label{multiMode}
C^{fb}_{\infty}(E)=\lim_{N\rightarrow\infty}\max_{\substack{P_{W}\left(w\right),\\ f\left(w, y_{(1,j-1)}\right),\\ \lambda\left(y_{(1,j-1)}\right)\in\Lambda}}\frac{I\left(W:Y_{(1,N)}\right)}{N}.
\end{equation}
Similarly, for independent uses of the channel without feedback, the capacity at constrained average-energy $E$ can be defined as
\begin{equation}\label{singMode}
C_{1}(E)=\max_{\substack{P_{B_{1}}\left(\beta_{1};E\right),\\ \lambda\in\Lambda}}I\left(B_{1}:Y_{1}\right).
\end{equation}
Now let us note that $C^{fb}_{\infty}(E)\geq C_{1}(E)$, since among all adaptive schemes involved in the optimization \eqq{multiMode} there is one which employs no feedback and the same single-mode measurements that are optimal for \eqq{singMode}. To prove the opposite consider the following:
 \begin{equation}
 \begin{aligned}
I\big(W:Y_{(1,N)}\big)&=\sum_{j=1}^{N}I\left(B_{j}:Y_{j}\big|Y_{(1,j-1)}\right)\\
&\leq\ave{\sum_{j=1}^{N}C_{1}\left(E_{j}\left(y_{(1,j-1)}\right)\right)}_{P_{Y_{(1,N)}}\left(\vecund{y}\right)}\leq N C_{1}(E),\label{final}
\end{aligned}
\end{equation}
where the first equality follows form the chain rule of mutual information, \eqq{chainMInfo}, and the fact that conditioning over $W$ and $Y_{(1,j-1)}$ is equivalent to conditioning over $B_{j}$ and $Y_{(1,j-1)}$ thanks to the encoding functions, i.e., $H\left(Y_{j}\big|W,Y_{(1,j-1)}\right)=H\left(Y_{j}\big|B_{j},Y_{(1,j-1)}\right)$. The first inequality instead is obtained by employing the definition of \eqq{singMode} as an upper bound on each mutual information term in the sum and writing explicitly the average over the output distribution;  the last inequality follows from concavity of the classical capacity as a function of the energy and the total average-energy per mode constraint, i.e., $\sum_{j}E_{j}(y_{(1,j-1)})=NE$.
Eventually by plugging \eqq{final} into the definition \eqq{multiMode} we obtain the upper bound $C^{fb}_{\infty}(E)\leq C_{1}(E)$.
\end{proof}
This implies that the AD's optimal rate is not greater than the SD's one. Since the former is certainly not smaller than the latter, we conclude $R_{AD}(E)=R_{SD}(E)$. 

\subsection{Implications of the AD-SD equivalence}\label{end}
Our analysis implies that a broad class of adaptive decoders for coherent communication on PI Gaussian channels, including a majority of those most easily realizable with current technology, cannot beat the optimal single-mode-measurement rate of information transmission. This in turn seems to suggest that such decoders cannot achieve the classical capacity of PIe Gaussian channels; however there is no actual proof that joint decoders are really necessary for the task, so that this possibility remains open. In any case our result does not mean that block-coding techniques and adaptive receivers are completely useless for practical applications; indeed in general there may exist specific AD schemes that are more convenient to implement than SD ones and perform equally well, e.g., the Hadamard codes discussed in Sec~\ref{sec:had}.\\
Let us also note that, despite our result is very powerful in decoupling the AD's multi-mode structure for any kind of single-mode POVM, still the difficult optimization of the SD rate of Eq.~\eqq{opSep} is left if one wants an explicit expression of the rate for any set of POVMs. For example we can simplify this calculation for the set of single-mode receivers comprising a coherent displacement followed by any other kind of single-mode operation (the Kennedy receiver of \eqq{kennedy} belongs to this set). Indeed let us define the variance of a single-mode input probability distribution $P_{A}(\alpha)$ over coherent states as $V=\ave{|\alpha|^{2}}_{P_{A}(\alpha)}-|\ave{\alpha}_{P_{A}(\alpha)}|^{2}$; the energy is instead $E=\ave{|\alpha|^{2}}_{P_{A}(\alpha)}$. One can decide to put a constraint either on the energy or on the variance of the input signals and the former is stricter than the latter. It can then be shown that the net effect of the displacement in a coherent-state receiver is simply to enlarge the family of allowed input distributions from the energy- to the variance-constrained ones so that the optimal rate \eqq{opSep} can be computed on a shrunken set of allowed POVMs. \\
In particular, for the case of a single-mode Kennedy receiver, Eq.~(\ref{kennedy}), the SDs optimal rate for this has been computed in the low-energy limit $E\ll 1$ in~\cite{guhaDet,guhaDet1}, showing that it equals
\begin{equation}
R^{(oken)}_{SD}(E)=E\log\frac{1}{E}-E\log\log\frac{1}{E}+O(E).
\end{equation}
Moreover the same authors have shown that an AD scheme without unitaries has the same optimal rate and conjectured that also adaptive unitaries do not help. Our result exactly implies the validity of this conjecture for the particular choice of POVMs Eq.~(\ref{kennedy}). \\
Eventually our result intersects with those of ~\cite{takeokaGuha,shorAd}, expanding the set of adaptive receivers whose optimal rate is equal to that of separable ones. Indeed~\cite{takeokaGuha} compute the capacity of coherent communication with arbitrary adaptive Gaussian measurements, showing it is separable; here instead we considered a restricted interaction set, i.e., passive Gaussians, but an extended single-mode measurement one, i.e., arbitrary POVMs. As for~\cite{shorAd}, it is stated there that adaptive schemes based on partial single-mode measurements of all the modes may increase the optimal rate; here we considered only destructive single-mode measurements but included the simplest kind of interactions and still could not surpass separable decoding rates.  In particular, as previously noticed, our AD includes the use of ancillary systems if they interact with just one of the received modes, since it can be thought of as a part of the single-mode destructive measurements. Unfortunately the interaction of ancillary systems with multiple modes is not included, since it results in non-destructive measurements that could provide an advantage over SDs.

\chapter{Conclusions}\label{ch:conc}

\ifpdf
    \graphicspath{{Chapter4/Figs/Raster/}{Chapter4/Figs/PDF/}{Chapter4/Figs/}}
\else
    \graphicspath{{Chapter4/Figs/Vector/}{Chapter4/Figs/}}
\fi

In this thesis we have studied the problem of decoding classical information encoded on quantum states at the output of a quantum channel. We have treated the problem from an abstract perspective in Ch.~\ref{ch:Opt}, were we presented a method to decompose any quantum measurement into a sequence of easier nested measurements through a binary-tree search. Furthermore we showed that this decomposition can be used to build a capacity-achieving decoding protocol for classical communication on quantum channels and to solve the optimal discrimination of some sets of quantum states. These results clarified the structure of optimal quantum measurements, showing that it can be recast in a more operational fashion that could be more appealing to experimenters interested in increasing the performance of their decoding devices.\\
On the other hand, in Ch.~\ref{ch:Imple}, we considered a more practical approach and described three receiver structures for coherent states of the electromagnetic field with applications to single-mode state discrimination and multi-mode decoding at the output of a quantum channel. We treated the problem bearing in mind the technological limitations faced nowadays in the field of optical communications: we evaluated the performance of general decoding schemes based on such technology and reported increased performance of two schemes, the first one employing a non-Gaussian transformation and the second one employing a code tailored to be read out easily by the most common detectors. Eventually we characterized a large class of multi-mode adaptive receivers based on common technological resources, obtaining a no-go theorem for their capacity.\\
It turned out that there is still much work to do in order to simplify the theoretical predictions of the past and achieve reasonable decoder proposals. Conversely, from the practical perspective there is a strong need for novel optical-manipulation devices to go beyond the current decoding-rate performances.



\begin{appendices} 

  \chapter{The law of large numbers via Chernoff bound}\label{app:chernoff}
    In this appendix we compute an exponential bound for the law of large 
    numbers, which guarantees the convergence of the error probability of our protocol to zero.
   Indeed consider the small quantities $\epsilon, \epsilon_1, \epsilon_2$ 
    which appear in Secs.~\ref{sec:commQ},\ref{sec:bisDec}. These quantities 
    describe the high probability of finding respectively the average state $\hat{\rho}^{\otimes n}$ and the codeword states $\hat{\rho}_{\vecund{x}}$
    in their typical spaces, Eqs.~(\ref{typSpace},\ref{ctypSpace}).
    This is why they are connected, through the classical typical sets, Eqs.~(\ref{tSet},\ref{ctSet}), to 
    the law of large numbers. \\ 
    Consider for example the average state $\hat{\rho}_{\mathcal{S}}$ of the 
    source. We can easily prove that the probability of $N$ copies of the quantum state $\hat{\rho}_{\mathcal{S}}$ 
being in its $\delta-$typical space is equivalent to the probability of a random 
sample sequence $\vecund{x}$ of the corresponding classical source 
being in the classical $\delta-$typical space, i.e.,
\begin{equation}
\begin{aligned}
    \tr{\hat{\Pi}\hat{\rho}^{\otimes N}}&=\tr{\sum_{\vecund{j}\in T_{\delta,N}(\mathcal{J})}\ket{e_{\vecund{j}}}\bra{e_{\vecund{j}}} \sum_{\vecund{j}'\in T_{\delta,N}(\mathcal{J})}\lambda_{\vecund{j}'}\ket{e_{\vecund{j}'}}\bra{e_{\vecund{j}'} 
    }}\\
    &=\sum_{\vecund{j}\in T_{\delta,N}(\mathcal{J})}\lambda_{\vecund{j}}=Pr\left\{\vecund{j}\in T_{\delta,N}(\mathcal{J})\right\}.
\end{aligned}
\end{equation}  
A similar result is obtained for each codeword state $\hat{\rho}_{\vecund{x}}$, 
namely
\begin{equation}
  \tr{\hat{\Pi}_{\vecund{x}}\hat{\rho}_{\vecund{x}}}=Pr\left\{\vecund{j}\in T_{\delta,N}(\mathcal{J}|\vecund{x})\right\}.
\end{equation}
These probabilities can be bounded from above with the help of the law of large 
numbers. Consider for example the average typical space and the random variable $J$ determined by the labels of the spectrum of $\hat{\rho}_{\mathcal{S}}$, i.e., $j\in\mathcal{J}$. Then choose another random variable $Z$ taking values $z=z^{(j)}=-\log_2{\lambda_{j}}$.
 We also choose the same probability distribution both for $J$ and $Z$, i.e., $p_{z}=p_{z^{(j)}}=\equiv \lambda_j$. Then the law of large numbers states 
that, for any $\delta>0$, the probability that the average of $Z$ over $N$ 
extractions,
\begin{equation}
  \oneover{N}\sum_{n=1}^N z_n\equiv\bar{H}(\vecund{j}),
\end{equation} i.e., the sum of $N$ i.i.d. random variables, differs from its
expected value,
\begin{equation}
  \sum_{n=1}^N p_{z_n}z_n=H(J),
\end{equation}
for more than $\delta\geq0$ is lower than a quantity 
$\epsilon>0$, i.e.
\begin{equation}
 Pr\left\{\vecund{j}\in T_{\delta,N}(\mathcal{J})\right\}=Pr\left\{|\bar{H}(\vecund{j})-H(J)|\geq\delta\right\}\leq\epsilon.
\end{equation}
In usual derivations of this result the Chebyshev inequality is employed~\cite{covThomBOOK}, which 
gives a scaling behaviour $\epsilon\sim n^{-1}$. This is not sufficient for 
convergence of the error probability to $0$ for long sequences $N\rightarrow\infty$ 
in \eqq{dopolemma1}. Recalling also that the Chebyshev bound gives a dependence on the variance of the distribution, it is clear that such a scaling is a rough extimate, since the 
law of large numbers is known to be valid also for infinite-variance 
distributions. We therefore use the Chernoff bound~\cite{covThomBOOK} to obtain a faster, 
indeed exponential, convergence.\\
Consider first the Markov inequality~\cite{gallagerBOOK}, valid for any nonnegative random variable 
$t>0$ and $\delta>0$:
\begin{align}
  Pr\left\{t\geq\delta\right\}&=\sum_{t\geq\delta}p_t\leq\sum_{t\geq\delta}p_t\frac{t}{\delta}\\
  &\leq\oneover{\delta}\sum_t t p_t=\frac{\bar{t}}{\delta},
\end{align}
where we used a bar sign to indicate the average over the probability 
distribution of the random variable. The first inequality follows from 
introducing terms which certainly are less than one, given the constraint on the 
sum. The second inequality follows from adding positive terms to the sum, since
the random variable is positive. We now choose $t=e^{sw}$, with $w$ a new random 
variable\footnote{Note that the Chebyshev inequality can be obtained by choosing instead $t=(w-\bar{w})^2$.}
and $\delta=e^{sA}$, without loss of generality.
The Markov inequality then reads
\begin{equation}
  Pr\left\{e^{sw}\geq e^{sA}\right\}\leq e^{-sA}g_w(s)
\end{equation}
for any $s,A$, where we called $g_w(s)=\overline{\exp{(sw)}}$ the moment generating function of the random variable 
$w$, i.e.
\begin{equation}
  \overline{w^N}=\frac{d^N g_w(s)}{ds^N}\Big|_{s=0}.
\end{equation}
Now observe that the above inequality between exponentials has two 
different meanings depending on the sign of $s$, implying both
\begin{align}
  &Pr\left\{w\geq A\right\}\leq e^{-sA}g_w(s)\quad s>0\label{codapiu}\\
  &Pr\left\{w\leq A\right\}\leq e^{-sA}g_w(s)\quad s<0.\label{codameno}
\end{align}
These two relations give bounds on the tails of the $w$ probability 
distribution. In order to evaluate how tight such bounds are, we consider the specific case of $w$ being the sum of $N$ i.i.d random variables $x_n$, 
implying for the moment generating function
\begin{align}
  g_w(s)=\overline{\exp\left(s\sum_{n=1}^N 
  x_n\right)}=\prod_{n=1}^N\overline{e^{sx_n}}=\left(g_x(s)\right)^N,
\end{align} 
and take $A=na$, without loss of generality. The previous inequalities become
\begin{align}
  &Pr\left\{\oneover{n}\sum_{n=1}^Nx_n\geq a\right\}\leq \exp\left[-N\left(sa-\ln 
  g_x(s)\right)\right]\quad s>0\label{cp}\\
  &Pr\left\{\oneover{n}\sum_{n=1}^Nx_n\leq a\right\}\leq \exp\left[-N\left(sa-\ln 
  g_x(s)\right)\right]\quad s<0\label{cm}.
\end{align}
We now need to evaluate the behaviour of the coefficient function in the 
exponential: 
\begin{equation}
  h(s)=sa-\ln g_x(s).
\end{equation}
Consider first some properties of $\mu_x(s)=\ln g_x(s)$, following from the nature of the 
moment generating function:
\begin{itemize}
  \item $\mu_x(s=0)=0$, since $g_x(s=0)=1$;
  \item $\mu'_x(s=0)=g'_x(s=0)/g_x(s=0)=\bar{x}$, since $g_x(s=0)=\bar{x}$;
  \item it is convex \begin{align}
  \mu''_x(s)&=\frac{g''_x(s)}{g_x(s)}-\left(\frac{g'_x(s)}{g_x(s)}\right)^2\\
  &=\ave{x^2}_{e}-\ave{x}_{e}^2=\ave{\left(x-\ave{x}_e\right)^2}_e\geq 0
  \quad \forall s,
\end{align}
where we have indicated with \begin{equation}
\ave{f(x)}_e=\frac{\overline{f(x)e^{sx}}}{\overline{e^{sx}}}
\end{equation} 
the probability average with weight $e^{sx}$.
\end{itemize}
 From the previous properties, it follows that the slope of the function, 
starting at $\bar{x}$ at the origin, increases for $s>0$ and decreases for 
$s<0$. Expanding $\mu_x(s)$ for small $s$ at second order, we have for the 
coefficient function
\begin{equation}
  h(s)\simeq (a-\bar{x})s-\frac{s^2}{2}\mu''_x(0).\label{approxcoeff}
\end{equation}
This approximate function (for small $s$) is zero at
\begin{equation}
  s^*=\frac{2(a-\bar{x})}{\mu''_x(0)}.\label{zeropoint}
\end{equation}
Consider now the $s>0$ inequality \eqq{cp}. If $a>\bar{x}$, then the zero $s^*$ is 
positive and inside the range of validity of the inequality. Thus $h(s)>0$ for all $s<s^*$ in the range: the first inequality has a tight bound. Vice versa if $a<\bar{x}$, the zero $s^*$ 
is negative and $h(s)<0$ in the whole range of validity of the first inequality, making it useless.
\\ The situation is reversed when considering the $s<0$ inequality \eqq{cm}. In this case we 
need $s^*<0$, i.e., $a<\bar{x}$, and for any $s>s^*$ in the range the coefficient function will be positive 
again, providing a tight bound for the second inequality.
By calling
\begin{equation}
  h_p=\sup_{s>0}h(s),\quad h_m=\sup_{s<0}h(s),\qquad h_p,h_m>0, 
\end{equation}
the supremum of $h(s)$ in each region, we can thus rewrite the inequalities 
as tight bounds, taking respectively $a=\bar{x}+\delta>\bar{x}$ in the first inequality and $a=\bar{x}-\delta<\bar{x}$, 
in the second one, for any $\delta>0$:
\begin{align}
  &Pr\left\{\oneover{n}\sum_{n=1}^Nx_n-\bar{x}\geq \delta\right\}\leq e^{-Nh_p}\\
  &Pr\left\{\oneover{n}\sum_{n=1}^Nx_n-\bar{x}\leq -\delta\right\}\leq e^{-Nh_m}.
\end{align}
Eventually we sum the previous inequalities to obtain the law of large numbers 
with exponentially decreasing tails
\begin{equation}
  Pr\left\{\left|\oneover{n}\sum_{n=1}^Nx_n-\bar{x}\right|\geq \delta\right\}\leq 
  e^{-Nh_p}+e^{-Nh_m}=O(e^{-N})=\epsilon.
\end{equation}
Observe that the small quantity $\epsilon>0$ obtained in this way, exponentially decreasing with increasing $N$, 
also depends on the difference parameter $\delta$ that we 
chose, as of course is to be expected. Indeed this dependence is implicit in the 
definition of $h_p$,$h_m$: by choosing $\delta$, we set different values of $a$ (for both the $s>0$ and $s<0$ cases)
and this in turn varies the point $s^*$ \eqq{zeropoint}, i.e., the range of values of $s$ ($s<s^*$ or $s>s^*$) which can be 
chosen to maximize the coefficient functions. In particular, since the 
expression \eqq{approxcoeff} is a small-s expansion, we do not know what the 
absolute supremum of $h(s)$ is and where it is located\footnote{The function $\mu_x(s)$ is convex, but we do not know its behaviour at large $s$.}. Thus by varying the 
range of $s$ accessible through the tuning of $\delta$, we may happen to exclude 
this and other local supremum points, resulting in (possibly discontinuously) varying values of $h_p$,$h_m$. 
\\
In any case, for our purpose we need only the existence of a range of values 
$s$, both above and below zero and depending on $\delta$, where the coefficient function $h(s)$ is 
positive, and this is guaranteed by the properties of the $\mu_x(s)$ function, 
respectively when $a>\bar{x}$ for positive $s$ and when $a<\bar{x}$ for negative 
$s$.


  \chapter{Proofs of measurement Lemmas}\label{app:measLemmas}
    We give here the proofs of the measurement Lemmas of Sec.~\ref{subsec:distance}. A useful tool is a statement a little bit more general than \eqq{normLemma}:
    \begin{lemma}\label{trdist}
    For any unnormalized density operator $\hat{\omega}$, its trace-norm can be written as
   \begin{equation}\label{trdisteq}
   \norm{\hat{\omega}}_{1}=\max_{-\hat{\mathbf{1}}\leq\hat{\Gamma}\leq\hat{\mathbf{1}}}\tr{\hat{\Lambda}\hat{\omega}}.
   \end{equation}
    \end{lemma}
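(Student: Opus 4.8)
The plan is to establish two matching inequalities that together pin down the maximum, exploiting the spectral decomposition of the Hermitian operator $\hat{\omega}$. Writing $\hat{\omega}=\sum_{i}\lambda_{i}\dketbra{v_{i}}$ with $\{\ket{v_{i}}\}$ an orthonormal eigenbasis and $\lambda_{i}\in\mathbb{R}$, the trace-norm is by definition $\norm{\hat{\omega}}_{1}=\sum_{i}\abs{\lambda_{i}}$. The statement is then just the dual (variational) characterization of the trace-norm restricted to the Hermitian case, and I would prove it by showing (a) that no admissible $\hat{\Lambda}$ can exceed $\sum_{i}\abs{\lambda_{i}}$, and (b) that a specific choice of $\hat{\Lambda}$ attains this value.

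First I would prove the upper bound. The constraint $-\hat{\mathbf{1}}\leq\hat{\Lambda}\leq\hat{\mathbf{1}}$ forces $\hat{\Lambda}$ to be Hermitian with $\abs{\bra{v}\hat{\Lambda}\ket{v}}\leq 1$ for every unit vector $\ket{v}$, in particular for the eigenvectors of $\hat{\omega}$. Expanding the trace in the eigenbasis of $\hat{\omega}$ then gives
\begin{equation}
\tr{\hat{\Lambda}\hat{\omega}}=\sum_{i}\lambda_{i}\bra{v_{i}}\hat{\Lambda}\ket{v_{i}}\leq\sum_{i}\abs{\lambda_{i}}\,\abs{\bra{v_{i}}\hat{\Lambda}\ket{v_{i}}}\leq\sum_{i}\abs{\lambda_{i}}=\norm{\hat{\omega}}_{1},
\end{equation}
valid for every feasible $\hat{\Lambda}$ and hence for the maximum over the feasible set.

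Next I would exhibit an optimal $\hat{\Lambda}$ saturating the bound. The natural candidate is the ``sign operator'' $\hat{\Lambda}_{\star}=\hat{\Pi}_{+}-\hat{\Pi}_{-}$, where $\hat{\Pi}_{\pm}$ project respectively onto the positive and negative eigenspaces of $\hat{\omega}$, equivalently $\hat{\Lambda}_{\star}=\sum_{i}\operatorname{sgn}(\lambda_{i})\dketbra{v_{i}}$. This operator is Hermitian and obeys $\hat{\Lambda}_{\star}^{2}=\hat{\Pi}_{+}+\hat{\Pi}_{-}\leq\hat{\mathbf{1}}$, so that $-\hat{\mathbf{1}}\leq\hat{\Lambda}_{\star}\leq\hat{\mathbf{1}}$ and it lies in the feasible set; moreover $\tr{\hat{\Lambda}_{\star}\hat{\omega}}=\sum_{i}\operatorname{sgn}(\lambda_{i})\lambda_{i}=\sum_{i}\abs{\lambda_{i}}=\norm{\hat{\omega}}_{1}$. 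Combined with the upper bound, this shows the maximum is attained and equals the trace-norm.

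There is no genuine obstacle here, the result being a routine consequence of the spectral theorem; the only point requiring mild care is the treatment of the kernel of $\hat{\omega}$ when defining $\hat{\Lambda}_{\star}$, which is harmless since those eigenvectors carry $\lambda_{i}=0$ and may be assigned either sign without affecting either side. Once Lemma~\ref{trdist} is in place, \eqq{normLemma} follows immediately by setting $\hat{\omega}=\hat{\rho}-\hat{\sigma}$ and inserting the factor $1/2$ from the definition \eqq{trd}, and this same variational formula will serve as the workhorse in the proofs of Lemmas~\ref{contra}, \ref{appclose} and \ref{gentop}.
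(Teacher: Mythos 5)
Your proof is correct and follows essentially the same route as the paper's: both exhibit the sign operator (the difference of projectors onto the positive and negative eigenspaces of $\hat{\omega}$) as the maximizer, and both obtain the upper bound by expanding the trace in the eigenbasis of $\hat{\omega}$ and using $\abs{\bra{v_{i}}\hat{\Lambda}\ket{v_{i}}}\leq 1$ for any feasible $\hat{\Lambda}$. The only difference is cosmetic — you work with the full spectral decomposition while the paper phrases it as $\hat{\omega}=\hat{A}-\hat{B}$ with disjoint positive parts — and your remark about the kernel is handled correctly.
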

    \begin{proof}
      For a hermitian operator we can always write $\hat{\omega}=\hat{A}-\hat{B}$, where $\hat{A},\hat{B}$ are positive matrices with disjoint supports, 
      representing $\hat{\omega}$ respectively in the positive and negative part of 
      its support. Consider then the operator $\hat{\Lambda}=\hat{\Pi}_A-\hat{\Pi}_B$, with $\hat{\Pi}_A$ 
      and $\hat{\Pi}_B$ being projectors respectively on the support of $\hat{A}$ and of $\hat{B}$. 
      For this operator we can clearly state that 
      $-\hat{\mathbf{1}}\leq\hat{\Lambda}\leq\hat{\mathbf{1}}$. By construction we obtain thus an operator which saturates the bound 
      \eqq{trdisteq}:
\begin{equation}
      \begin{aligned}
        \tr{\hat{\Lambda}\hat{\omega}}&=\tr{\left(\hat{\Pi}_A-\hat{\Pi}_B\right)\hat{A}}-\tr{\left(\hat{\Pi}_A-\hat{\Pi}_B\right)\hat{B}}\\
        &=\tr{\hat{A}}+\tr{\hat{B}}=\tr{|\hat{\omega}|}=\norm{\hat{\omega}}_1.
      \end{aligned}
      \end{equation}
      In order to complete the proof, we need to show that $\hat{\Lambda}$ 
      is the maximizing operator among all possible
      $-\hat{\mathbf{1}}\leq\hat{\Gamma}\leq\hat{\mathbf{1}}$. First observe, by diagonalising $\hat{A}$ 
      and $\hat{B}$, that
      \begin{align}
        &\tr{\hat{\Lambda} 
        \hat{A}}=\sum_k\alpha_k\bra{a_k}\hat{\Lambda}\ket{a_k}\leq\sum_k\alpha_k\braket{a_k}{a_k}=\tr{\hat{A}}\\
        &\tr{\hat{\Lambda} 
        \hat{B}}=\sum_k\beta_k\bra{b_k}\hat{\Lambda}\ket{b_k}\geq\sum_k\beta_k(-\braket{b_k}{b_k})=-\tr{\hat{B}}.
      \end{align}
      Thus 
      \begin{align}
        \tr{\hat{\Lambda} \hat{\omega}}=\tr{\hat{\Lambda} \hat{A}}-\tr{\hat{\Lambda} \hat{B}}\leq 
        \tr{\hat{A}}+\tr{\hat{B}}=Tr|\hat{\omega}|=\norm{\hat{\omega}}_1.
      \end{align}
    \end{proof}
    
    \begin{proof}[Proof of Lemma \ref{appclose}]
      Consider that
      \begin{equation}
        2D(\hat{\rho},\hat{\sigma})=\norm{\hat{\rho}-\hat{\sigma}}_1=\max_{-\hat{\mathbf{1}}\leq\hat{\Gamma}\leq\hat{\mathbf{1}}}\tr{\hat{\Gamma}(\hat{\sigma}-\hat{\rho})}\geq\tr{\hat{\hat{E}}(\hat{\sigma}-\hat{\rho})},
      \end{equation}
      which follows from applying Lemma \ref{trdist} and from the fact that $0\leq \hat{\hat{E}}\leq\hat{\mathbf{1}}$ 
      surely is one of the operators included in the maximization procedure. The 
      result \eqq{appcloseq} is then easily obtained by separating the trace and rearranging terms in the 
      previous inequality.
    \end{proof}
    
    \begin{proof}[Proof of Lemma \ref{gentop}]
      Consider that
\begin{equation}\label{gentspezzato}
      \begin{aligned}
2D\left(\sqrt{\hat{E}}\hat{\rho}\sqrt{\hat{E}},\hat{\rho}\right)&=\norm{\hat{\rho}-\sqrt{\hat{E}}\hat{\rho}\sqrt{\hat{E}}}_1\leq\norm{\hat{\rho}-\sqrt{\hat{E}}\hat{\rho}}_1+
   \norm{\sqrt{\hat{E}}\hat{\rho}-\sqrt{\hat{E}}\hat{\rho}\sqrt{\hat{E}}}_1
  \\&=\norm{\left(\hat{\mathbf{1}}-\sqrt{\hat{E}}\right)\sqrt{\hat{\rho}}\cdot\sqrt{\hat{\rho}}}_1+
  \norm{\sqrt{\hat{E}}\cdot\hat{\rho}\left(\hat{\mathbf{1}}-\sqrt{\hat{E}}\right)}_1.
   \end{aligned}
   \end{equation}
     thanks to the triangular inequality for the trace distance, \eqq{triangTD}. Now for the first term we consider the spectral decomposition $\{\sqrt{\lambda_k},\ket{f_k}\}$ of $\sqrt{\hat{\rho}}$ and 
     use again the triangular inequality for the trace norm:
     \begin{equation}
     \begin{aligned}
      \norm{\left(\hat{\mathbf{1}}-\sqrt{\hat{E}}\right)\sqrt{\hat{\rho}}\sum_k\sqrt{\lambda_k}\ket{f_k}\bra{f_k}}_1 
      &\leq\sum_k\sqrt{\lambda_k}\norm{\left(\hat{\mathbf{1}}-\sqrt{\hat{E}}\right)\sqrt{\hat{\rho}}\ket{f_k}\bra{f_k}}_1\\
     &=\sum_k\sqrt{\lambda_k}Tr\sqrt{\ket{f_k}\bra{f_k}\sqrt{\hat{\rho}}\left(\hat{\mathbf{1}}-\sqrt{\hat{E}}\right)^2\sqrt{\hat{\rho}}\ket{f_k}\bra{f_k}}\\
     &=\sum_k\sqrt{\lambda_k}\sqrt{\bra{f_k}\sqrt{\hat{\rho}}\left(\hat{\mathbf{1}}-\sqrt{\hat{E}}\right)^2\sqrt{\hat{\rho}}\ket{f_k}}.
     \end{aligned}
     \end{equation}
    Apply then the Cauchy-Schwarz inequality
    \begin{equation}
      |\vecund{x}\cdot\vecund{y}|^2\leq|\vecund{x}|^2\cdot|\vecund{y}|^2,
    \end{equation}
    with $x_k=\sqrt{\lambda_k}$ and 
    $y_k=\sqrt{\bra{f_k}\sqrt{\hat{\rho}}\left(\hat{\mathbf{1}}-\sqrt{\hat{E}}\right)^2\sqrt{\hat{\rho}}\ket{f_k}}$,
to obtain 
\begin{equation}
\begin{aligned}
  \norm{\left(\hat{\mathbf{1}}-\sqrt{\hat{E}}\right)\sqrt{\hat{\rho}}\sum_k\sqrt{\lambda_k}\ket{f_k}\bra{f_k}}_1 
  &\leq\sqrt{\sum_k\lambda_k\sum_j\bra{f_j}\sqrt{\hat{\rho}}\left(\hat{\mathbf{1}}-\sqrt{\hat{E}}\right)^2\sqrt{\hat{\rho}}\ket{f_j}}\\
  &\leq\sqrt{\tr{\hat{\rho}\left(\hat{\mathbf{1}}-\sqrt{\hat{E}}\right)^2}},
\end{aligned}
\end{equation}
where we used the fact that $\tr{\hat{\rho}}=\sum_k\lambda_k\leq1$.
For the second term in \eqq{gentspezzato} use instead the spectral decomposition $\{\sqrt{\nu_k},\ket{e_k}\}$ of $\sqrt{\hat{E}}$ and proceed in a similar way as before:
\begin{equation}
\begin{aligned}
  \Big|\Big|\sum_k\sqrt{\nu_k}\ket{e_k}&\bra{e_k}\hat{\rho}\left(\hat{\mathbf{1}}-\sqrt{\hat{E}}\right)\Big|\Big|_1
\leq\sum_k\sqrt{\nu_k}\norm{\ket{e_k}\bra{e_k}\hat{\rho}\left(\hat{\mathbf{1}}-\sqrt{\hat{E}}\right)}_1\\
&=\sum_k\sqrt{\nu_k}\norm{\left(\hat{\mathbf{1}}-\sqrt{\hat{E}}\right)\hat{\rho}\ket{e_k}\bra{e_k}}_1=\sum_k\sqrt{\nu_k}\sqrt{\bra{e_k}\hat{\rho}\left(\hat{\mathbf{1}}-\sqrt{\hat{E}}\right)^2\hat{\rho}\ket{e_k}}\\
&\leq\sqrt{\sum_k\nu_k\sum_j\bra{e_j}\hat{\rho}\left(\hat{\mathbf{1}}-\sqrt{\hat{E}}\right)^2\hat{\rho}\ket{e_j}}\leq\sqrt{\tr{\hat{\rho}^2\left(\hat{\mathbf{1}}-\sqrt{\hat{E}}\right)^2}}\\
&\leq\sqrt{\tr{\hat{\rho}\left(\hat{\mathbf{1}}-\sqrt{\hat{E}}\right)^2}},
\end{aligned}
\end{equation}
where we used the triangular inequality, the invariance of the trace norm under hermitian conjugation, the Cauchy-Schwarz inequality, the 
fact that $\tr{\hat{E}}=\sum_k\nu_k\leq1$ and the property $\hat{\rho}^2\leq\hat{\rho}\leq\hat{\mathbf{1}}$. 
 The inequality \eqq{gentspezzato} then simply becomes
 \begin{equation}
   2D\left(\sqrt{\hat{E}}\hat{\rho}\sqrt{\hat{E}},\hat{\rho}\right)\leq2\sqrt{\tr{\hat{\rho}\left(\hat{\mathbf{1}}-\sqrt{\hat{E}}\right)^2}}\leq 2\sqrt{\tr{\hat{\rho}\left(\hat{\mathbf{1}}-\hat{E}\right)}},\label{quasifinale}
 \end{equation}
 since 
 \begin{equation}
   0\leq \hat{E}\leq\sqrt{\hat{E}}\leq\hat{\mathbf{1}}\Rightarrow 
   \left(\hat{\mathbf{1}}-\sqrt{\hat{E}}\right)^2=\hat{\mathbf{1}}+\hat{E}-2\sqrt{\hat{E}}\leq\hat{\mathbf{1}}-\hat{E}.
 \end{equation}
 Eventually we take the code average of \eqq{quasifinale} and use the 
 concavity of the square-root function and the hypotesis \eqq{gentopequno} to 
 obtain the thesis \eqq{gentopeqdue}:
 \begin{align}
   \ave{2D\left(\sqrt{\hat{E}}\hat{\rho}\sqrt{\hat{E}},\hat{\rho}\right)}&\leq2\sqrt{\ave{\tr{\hat{\rho}\left(\hat{\mathbf{1}}-\hat{E}\right)}}}
   \leq2\sqrt{\epsilon}.
 \end{align}
  \end{proof}


\chapter{Some cases of optimal discriminating measurement}\label{app:casesOpSol} 
In this appendix we study the function $\mathcal{\tiny{F}}_{\hat{Q}}(\hat{A},\hat{B},\hat{C})$ appearing in Eqs.~(\ref{DEFFQABC}) and discuss its optimization in the cases mentioned in Sec.~\ref{sec:opDisc}, proving Proposition~\ref{cases} and Remark~\ref{remark1}.
The optimization of $\mathcal{\tiny{F}}_{\hat{Q}}(\hat{A},\hat{B},\hat{C})$ is difficult because of the competing interests of the three terms composing it. Indeed each single term of \eqq{DEFFQABC} is maximized by a different operator $\hat{Q}$: the first one is maximum when $\hat{Q}$ is the projector on the positive support of $\hat{A}$; the second one is maximum when $\hat{Q}$ is the identity on the whole Hilbert space of the system; the third one is maximum when $\hat{Q}$ is zero. Hence we can solve the problem exactly only if the three operators exhibit specific properties. \\
We start by observing that the function is subadditive in all its arguments, i.e., for any set of operators $\{\hat{A}_{j}, \hat{B}_{j}, \hat{C}_{j}\}_{j=1,\cdots,n}$ it holds:
\begin{equation}
\mathcal{\tiny{F}}(\sum_{j}\hat{A}_{j},\sum_{j}\hat{B}_{j},\sum_{j}\hat{C}_{j})\leq\sum_{j}\mathcal{\tiny{F}}(\hat{A}_{j},\hat{B}_{j},\hat{C}_{j}).\label{subAdd}
\end{equation}
This follows from the subadditivity of the trace norm. 
We can now state some lemmas that help demonstrate Proposition~\ref{cases}. Throughout the Appendix, the notation $\hat{\mathbf{1}}_{X}$ represents the projector on the support of the operator $\hat{X}$.
\begin{lemma}\label{aPos}
Let us suppose that $\hat{A}$ is positive semidefinite, $\hat{B}$ has support inside the support of $\hat{A}$ and that $\hat{C}$ and $\hat{A}$ have orthogonal supports. Then 
\begin{equation}
\mathcal{\tiny{F}}\left(\hat{A},\hat{B},\hat{C}\right)=\tr{\hat{A}}+\norm{\hat{B}}_{1}+\norm{\hat{C}}_{1}.
\end{equation}
\end{lemma}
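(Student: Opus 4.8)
The plan is to prove the identity by a two-sided estimate: first establish $\mathcal{\tiny{F}}(\hat{A},\hat{B},\hat{C})\leq\tr{\hat{A}}+\norm{\hat{B}}_{1}+\norm{\hat{C}}_{1}$ as a universal upper bound, valid for every Hermitian triple and every feasible $\hat{Q}$, and then exhibit a single feasible operator that saturates it under the stated support hypotheses, so that the maximum defining $\mathcal{\tiny{F}}$ in \eqq{DEFFABC} must equal the claimed value.

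For the upper bound I would estimate $\mathcal{\tiny{F}}_{\hat{Q}}$ term by term for arbitrary $0\leq\hat{Q}\leq\hat{\mathbf{1}}$. The first term obeys $\tr{\hat{Q}\hat{A}}\leq\tr{\hat{A}}$, since $\hat{A}\geq0$ and $\hat{\mathbf{1}}-\hat{Q}\geq0$ give $\tr{(\hat{\mathbf{1}}-\hat{Q})\hat{A}}=\tr{\hat{A}^{1/2}(\hat{\mathbf{1}}-\hat{Q})\hat{A}^{1/2}}\geq0$. For the other two terms I would use the trace-norm contractivity $\norm{\hat{E}\hat{X}\hat{E}}_{1}\leq\norm{\hat{X}}_{1}$, holding for any Hermitian $\hat{X}$ and any $0\leq\hat{E}\leq\hat{\mathbf{1}}$, which follows from Lemma \ref{contra}: writing $\hat{X}=\hat{X}_{+}-\hat{X}_{-}$ as its positive and negative parts (disjoint supports), one has $\frac{1}{2}\norm{\hat{E}\hat{X}\hat{E}}_{1}=\trd{\hat{E}\hat{X}_{+}\hat{E}}{\hat{E}\hat{X}_{-}\hat{E}}\leq\trd{\hat{X}_{+}}{\hat{X}_{-}}=\frac{1}{2}\norm{\hat{X}}_{1}$. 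Taking $\hat{E}=\sqrt{\hat{Q}}$ and $\hat{E}=\sqrt{\hat{\mathbf{1}}-\hat{Q}}$ (both legitimate, as $0\leq\hat{Q}\leq\hat{\mathbf{1}}$ implies $0\leq\sqrt{\hat{Q}}\leq\hat{\mathbf{1}}$) bounds the second and third terms by $\norm{\hat{B}}_{1}$ and $\norm{\hat{C}}_{1}$ respectively, and summing gives the universal bound.

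For achievability I would choose $\hat{Q}=\hat{\mathbf{1}}_{\hat{A}}$, the projector on the support of $\hat{A}$, which is exactly where the three hypotheses enter. Being a projector, $\sqrt{\hat{Q}}=\hat{\mathbf{1}}_{\hat{A}}$ and $\sqrt{\hat{\mathbf{1}}-\hat{Q}}=\hat{\mathbf{1}}-\hat{\mathbf{1}}_{\hat{A}}$. Positivity of $\hat{A}$ gives $\tr{\hat{\mathbf{1}}_{\hat{A}}\hat{A}}=\tr{\hat{A}}$; the inclusion $\operatorname{supp}(\hat{B})\subseteq\operatorname{supp}(\hat{A})$ gives $\hat{\mathbf{1}}_{\hat{A}}\hat{B}\hat{\mathbf{1}}_{\hat{A}}=\hat{B}$, so the second term equals $\norm{\hat{B}}_{1}$; and the orthogonality of the supports of $\hat{C}$ and $\hat{A}$ gives $(\hat{\mathbf{1}}-\hat{\mathbf{1}}_{\hat{A}})\hat{C}(\hat{\mathbf{1}}-\hat{\mathbf{1}}_{\hat{A}})=\hat{C}$, so the third term equals $\norm{\hat{C}}_{1}$. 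Hence $\mathcal{\tiny{F}}_{\hat{\mathbf{1}}_{\hat{A}}}=\tr{\hat{A}}+\norm{\hat{B}}_{1}+\norm{\hat{C}}_{1}$, matching the upper bound and forcing equality for the maximum.

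I expect no serious obstacle; the delicate point is the derivation of the trace-norm contractivity from Lemma \ref{contra}, where one must verify that $\hat{X}_{\pm}$ have disjoint supports so that $\trd{\hat{X}_{+}}{\hat{X}_{-}}=\frac{1}{2}\norm{\hat{X}}_{1}$ exactly, and correctly read $\norm{\hat{E}\hat{X}\hat{E}}_{1}$ as twice the trace-distance between the two unnormalised images. It is worth emphasising that the upper bound needs no assumptions at all, so it is precisely the support conditions that make the inequality tight; this foreshadows the remaining cases of Proposition \ref{cases}, where a different choice of saturating $\hat{Q}$ will play the role that $\hat{\mathbf{1}}_{\hat{A}}$ plays here.
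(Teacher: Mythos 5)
Your proof is correct and takes essentially the same route as the paper's: a universal upper bound $\mathcal{F}_{\hat{Q}}\left(\hat{A},\hat{B},\hat{C}\right)\leq\tr{\hat{A}}+\norm{\hat{B}}_{1}+\norm{\hat{C}}_{1}$ (the paper packages it as subadditivity of $\mathcal{F}$ applied to the trivial splitting $(\hat{A},0,0)+(0,\hat{B},0)+(0,0,\hat{C})$, while you bound the three terms of $\mathcal{F}_{\hat{Q}}$ directly, which reduces to the same three inequalities), saturated by the identical choice $\hat{Q}=\hat{\mathbf{1}}_{A}$. One small nit: since $\tr{\hat{Q}\hat{A}}\leq\tr{\hat{A}}$ genuinely uses $\hat{A}\geq0$, your closing remark that the upper bound \emph{needs no assumptions at all} slightly overstates the case --- for general Hermitian $\hat{A}$ the universal bound is $\tr{\hat{A}_{+}}+\norm{\hat{B}}_{1}+\norm{\hat{C}}_{1}$.
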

\begin{proof}
Consider the set of operators $\left\{\hat{A}_{j}=\hat{A}\delta_{j,1}, \hat{B}_{j}=\hat{B}\delta_{j,2}, \hat{C}_{j}=\hat{C}\delta_{j,3}\right\}_{j=1,2,3}$ and apply the subadditivity property \eqq{subAdd}: 
\begin{equation}
\mathcal{\tiny{F}}\left(\hat{A},\hat{B},\hat{C}\right)\leq\mathcal{\tiny{F}}(\hat{A},0,0)+\mathcal{\tiny{F}}(0,\hat{B},0)+\mathcal{\tiny{F}}(0,0,\hat{C})=\tr{\hat{A}}+\norm{\hat{B}}_{1}+\norm{\hat{C}}_{1}.\label{subAddApp}
\end{equation}
The latter inequality can be saturated under the hypotheses of this lemma by taking $\hat{Q}=\hat{\mathbf{1}}_{A}$.\end{proof}

\begin{lemma}\label{aNeg}
Let us suppose that $\hat{A}$ is negative semidefinite, $\hat{C}$ has support inside the support of $\hat{A}$ and that $\hat{B}$ and $\hat{A}$ have orthogonal supports. Then 
\begin{equation}
\mathcal{\tiny{F}}\left(\hat{A},\hat{B},\hat{C}\right)= \norm{\hat{B}}_{1}+\norm{\hat{C}}_{1}.
\end{equation}
\end{lemma}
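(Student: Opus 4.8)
The plan is to establish the equality by squeezing $\mathcal{\tiny{F}}(\hat{A},\hat{B},\hat{C})$ between matching upper and lower bounds, mirroring the argument used for Lemma~\ref{aPos}. Recall from \eqq{DEFFABC} that $\mathcal{\tiny{F}}$ is the maximum over admissible operators $0\leq\hat{Q}\leq\hat{\mathbf{1}}$ of the three-term functional $\mathcal{\tiny{F}}_{\hat{Q}}=\tr{\hat{Q}\hat{A}}+\norm{\sqrt{\hat{Q}}\hat{B}\sqrt{\hat{Q}}}_{1}+\norm{\sqrt{\hat{\mathbf{1}}-\hat{Q}}\,\hat{C}\,\sqrt{\hat{\mathbf{1}}-\hat{Q}}}_{1}$ of \eqq{DEFFQABC}. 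It therefore suffices to (i) bound $\mathcal{\tiny{F}}_{\hat{Q}}$ from above by $\norm{\hat{B}}_{1}+\norm{\hat{C}}_{1}$ uniformly in $\hat{Q}$, and (ii) produce one explicit $\hat{Q}$ attaining this value.

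For the upper bound I would apply the subadditivity property \eqq{subAdd} with the trivial splitting into the three single-argument pieces, giving $\mathcal{\tiny{F}}(\hat{A},\hat{B},\hat{C})\leq\mathcal{\tiny{F}}(\hat{A},0,0)+\mathcal{\tiny{F}}(0,\hat{B},0)+\mathcal{\tiny{F}}(0,0,\hat{C})$. Since $\hat{A}\leq0$ one has $\mathcal{\tiny{F}}(\hat{A},0,0)=\max_{\hat{Q}}\tr{\hat{Q}\hat{A}}=0$, the maximum being reached at $\hat{Q}=0$. For the other two pieces I would use that $0\leq\hat{Q}\leq\hat{\mathbf{1}}$ forces both $\sqrt{\hat{Q}}$ and $\sqrt{\hat{\mathbf{1}}-\hat{Q}}$ to be contractions, so by contractivity of the trace norm under such sandwiching (as in Lemma~\ref{contra}, extended to indefinite Hermitian operators through Lemma~\ref{trdist}) we get $\norm{\sqrt{\hat{Q}}\hat{B}\sqrt{\hat{Q}}}_{1}\leq\norm{\hat{B}}_{1}$ and likewise for $\hat{C}$; hence $\mathcal{\tiny{F}}(0,\hat{B},0)=\norm{\hat{B}}_{1}$ and $\mathcal{\tiny{F}}(0,0,\hat{C})=\norm{\hat{C}}_{1}$, yielding $\mathcal{\tiny{F}}(\hat{A},\hat{B},\hat{C})\leq\norm{\hat{B}}_{1}+\norm{\hat{C}}_{1}$.

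For the lower bound the natural candidate is $\hat{Q}=\hat{\mathbf{1}}_{B}$, the projector onto $\operatorname{supp}(\hat{B})$, which is admissible as any projector satisfies $0\leq\hat{Q}\leq\hat{\mathbf{1}}$. I would then evaluate the three terms of \eqq{DEFFQABC} one at a time using the support hypotheses. The first term $\tr{\hat{\mathbf{1}}_{B}\hat{A}}$ vanishes because $\hat{B}$ and $\hat{A}$ have orthogonal supports, so $\hat{\mathbf{1}}_{B}$ projects onto a subspace annihilated by $\hat{A}$. In the second term $\sqrt{\hat{\mathbf{1}}_{B}}=\hat{\mathbf{1}}_{B}$ and, as $\hat{B}$ is Hermitian with range equal to that of $\hat{\mathbf{1}}_{B}$, one has $\hat{\mathbf{1}}_{B}\hat{B}\hat{\mathbf{1}}_{B}=\hat{B}$, contributing $\norm{\hat{B}}_{1}$. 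In the third term $\hat{\mathbf{1}}-\hat{\mathbf{1}}_{B}$ acts as the identity on $\operatorname{supp}(\hat{A})\supseteq\operatorname{supp}(\hat{C})$, so $(\hat{\mathbf{1}}-\hat{\mathbf{1}}_{B})\hat{C}(\hat{\mathbf{1}}-\hat{\mathbf{1}}_{B})=\hat{C}$, contributing $\norm{\hat{C}}_{1}$. Summing gives $\mathcal{\tiny{F}}_{\hat{\mathbf{1}}_{B}}(\hat{A},\hat{B},\hat{C})=\norm{\hat{B}}_{1}+\norm{\hat{C}}_{1}$, which combined with the upper bound proves the claim.

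I do not anticipate a genuine obstacle, since the argument is essentially support bookkeeping; the only points needing care are the contractivity step in the upper bound (which relies precisely on $\sqrt{\hat{Q}}$ and $\sqrt{\hat{\mathbf{1}}-\hat{Q}}$ being contractions, and on the trace-norm contractivity holding for indefinite Hermitian arguments) and the use of Hermiticity of $\hat{B}$, $\hat{C}$ together with the orthogonality and containment of supports to collapse the sandwiched operators in the lower-bound computation. It is worth noting that this lemma is the exact mirror image of Lemma~\ref{aPos}: the roles of the positive- and negative-support terms are interchanged, and the optimal witness shifts from $\hat{Q}=\hat{\mathbf{1}}_{A}$ to $\hat{Q}=\hat{\mathbf{1}}_{B}$, so the same proof template applies with this single substitution.
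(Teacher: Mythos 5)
Your proof is correct and follows essentially the same route as the paper: an upper bound via the subadditivity splitting $\mathcal{\tiny{F}}(\hat{A},\hat{B},\hat{C})\leq\mathcal{\tiny{F}}(\hat{A},0,0)+\mathcal{\tiny{F}}(0,\hat{B},0)+\mathcal{\tiny{F}}(0,0,\hat{C})=\norm{\hat{B}}_{1}+\norm{\hat{C}}_{1}$ (using $\hat{A}\leq 0$), saturated by the choice $\hat{Q}=\hat{\mathbf{1}}_{B}$. The paper's version is merely terser, leaving the evaluation of the three terms at $\hat{Q}=\hat{\mathbf{1}}_{B}$ implicit, which you spell out correctly.
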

\begin{proof}
Consider the same set of operators of Lemma~\ref{aPos} and apply again the subadditivity property \eqq{subAddApp}, then use the fact that $\hat{A}\leq0$: 
\begin{equation}
\mathcal{\tiny{F}}\left(\hat{A},\hat{B},\hat{C}\right)\leq\mathcal{\tiny{F}}(\hat{A},0,0)+\mathcal{\tiny{F}}(0,\hat{B},0)+\mathcal{\tiny{F}}(0,0,\hat{C})=\norm{\hat{B}}_{1}+\norm{\hat{C}}_{1}.
\end{equation}
The latter inequality can be saturated under the hypotheses of this lemma by taking $\hat{Q}=\hat{\mathbf{1}}_{B}$. 
\end{proof}
Hence we can prove Remark~\ref{remark1} and the first case of Proposition~\ref{cases}: let $\hat{A}=\hat{A}_{+}\oplus(- \hat{A}_{-})$ be the decomposition of $\hat{A}$ in terms of its positive and negative parts, with $\hat{A}_{\pm}\geq 0$, and suppose that $\hat{B}$, $\hat{C}$ have support respectively inside the support of $\hat{A}_{+}$, $\hat{A}_{-}$. Then consider the set of operators $\left\{\hat{A}_{j}=(-1)^{j+1}\hat{A}_{(-1)^{j+1}}, \hat{B}_{j}=\hat{B}\delta_{j,1}, \hat{C}_{j}=\hat{C}\delta_{j,2}\right\}_{j=1,2}$ and apply the subadditivity property \eqq{subAdd}, together with Lemmas~\ref{aPos}, \ref{aNeg}: 
\begin{equation}
\mathcal{\tiny{F}}\left(\hat{A},\hat{B},\hat{C}\right)\leq\mathcal{\tiny{F}}\left(\hat{A}_{+},\hat{B},0\right)+\mathcal{\tiny{F}}\left(-\hat{A}_{-},0,\hat{C}\right)=\operatorname{Tr}\left[\hat{A}_{+}\right]+\norm{\hat{B}}_{1}+\norm{\hat{C}}_{1},\label{spec}
\end{equation}
which is saturated by a measurement operator $\hat{Q}=\hat{\mathbf{1}}_{A_{+}}$. 
This expression is equivalent to that given in \eqq{value} under the current hypotheses, indeed in this case it holds 
\begin{equation}
\left(\hat{A}+\abs{\hat{B}}-\abs{\hat{C}}\right)_{+}=\left(\left(\hat{A}_{+}+\abs{\hat{B}}\right)\oplus\left(-\hat{A}_{-}-\abs{\hat{C}}\right)\right)_{+}=\hat{A}_{+}+\abs{\hat{B}},
\end{equation}
so that \eqq{spec} becomes
\begin{equation} 
\mathcal{\tiny{F}}\left(\hat{A},\hat{B},\hat{C}\right)=\operatorname{Tr}\left[\hat{A}_{+}+\abs{\hat{B}}\right]+\norm{\hat{C}}_{1}=\operatorname{Tr}\left[\left(\hat{A}_{+}+\abs{\hat{B}}-\abs{\hat{C}}\right)_{+}\right]+\norm{\hat{C}}_{1}.
\end{equation} 
As for the second and third cases of Proposition~\ref{cases}, let us first note that 
\begin{equation}\label{NQB}
\begin{aligned}
\norm{\sqrt{\hat{Q}}\hat{B}\sqrt{\hat{Q}}}_1 &\leq \norm{\sqrt{\hat{Q}}\hat{B}_+\sqrt{\hat{Q}}}_1 + \norm{\sqrt{\hat{Q}}\hat{B}_-\sqrt{\hat{Q}}}_1\\
&= \operatorname{Tr}\left[\hat{Q}\left(\hat{B}_+ + \hat{B}_-\right)\right] = \operatorname{Tr}\left[\hat{Q}\abs{\hat{B}}\right],
\end{aligned}
\end{equation}
where $\hat{B}_{\pm}$ are the positive and negative parts of $\hat{B}$ as defined above for $\hat{A}$, and analogously
\begin{equation} \label{NQC}
\norm{\sqrt{\hat{\mathbf{1}}-\hat{Q}}\hat{C}\sqrt{\hat{\mathbf{1}}-\hat{Q}}}_1 \le \tr{\left(\hat{\mathbf{1}}-\hat{Q}\right)\abs{\hat{C}}}.
\end{equation}
We then have
\begin{equation}\label{FQ2}
\mathcal{\tiny{F}}_{\hat{Q}}(\hat{A},\hat{B},\hat{C}) \leq \tr{\hat{Q}\left(\hat{A}+\abs{\hat{B}}-\abs{\hat{C}}\right)} + \norm{\hat{C}}_1\leq \tr{\left(\hat{A}+\abs{\hat{B}}-\abs{\hat{C}}\right)_+} + \norm{\hat{C}}_1.
\end{equation}
The second inequality in \eqq{FQ2} is saturated by taking $\hat{Q}=\hat{\mathbf{1}}_{(A+\abs{B}-\abs{C})_{+}}$. The inequalities~(\ref{NQB},\ref{NQC}) and hence the first inequality in \eqq{FQ2} are saturated in both the second and third cases of Proposition~\ref{cases}, though for different reasons:
\begin{itemize}
\item If $\hat{B}$ and $\hat{C}$ have a definite sign, then it holds $\hat{B}=\hat{B}_{+}$ or $\hat{B}=\hat{B}_{-}$, so that \eqq{NQB} is saturated and analogously \eqq{NQC};
\item If $\hat{A}$, $\hat{B}$, $\hat{C}$ all commute with each other, then Eqs.~(\ref{NQB},\ref{NQC}) are saturated by any operator $\hat{Q}$ which commutes with both $\hat{B}$ and $\hat{C}$. Eventually the choice $\hat{Q}=\hat{\mathbf{1}}_{(A+\abs{B}-\abs{C})_{+}}$ necessary to saturate \eqq{FQ2} satisfies this latter condition in the case considered.
\end{itemize}
Finally we note that the previous case of commuting operators, as well as further results, can also be derived by applying the symmetry property of the optimal success probabilities (\ref{prob4}, \ref{prob3}) to obtain recursive formulas, as discussed after Remark~\ref{recRem}, but still a full solution cannot be found in this way.


\chapter{Computation of the optimal success probability in the qubit case}\label{app:qubitOp}
In this appendix we derive the results (\ref{fQub}, \ref{defSign}) explicitly. 
As a preliminary recall that, for any three vectors $\vecund{a}, \vecund{b}, \vecund{c}\in\mathbb{R}^{3}$ and the vector of Pauli matrices $\vecund{\hat{\sigma}}$ it holds: 
\begin{align}
&\left(\vecund{a}\cdot\vecund{\hat{\sigma}}\right)\left(\vecund{b}\cdot\vecund{\hat{\sigma}}\right)=\left(\vecund{a}\cdot\vecund{b}\right)+i \left(\vecund{a}\times\vecund{b}\right)\cdot\vecund{\hat{\sigma}},\label{vec1}\\
&\left(\vecund{a}\times\left(\vecund{b}\times\vecund{c}\right)\right)=\left(\vecund{a}\cdot\vecund{c}\right)\vecund{b}-\left(\vecund{a}\cdot\vecund{b}\right)\vecund{c}.\label{vec2}
\end{align}
Moreover, given a positive operator $\hat{Q}$ on $\mathcal{H}_{2}$, the coefficients $c_{\sqrt{Q}}$, $\vecund{r}_{\sqrt{Q}}$ of its square root $\sqrt{\hat{Q}}$ can be expressed in terms of its coefficients $c_{Q}$, $\vecund{r}_{Q}$ as:
\begin{equation}
\begin{cases}
c_{Q}=\left(c_{\sqrt{Q}}\right)^{2}+\left(r_{\sqrt{Q}}\right)^{2}\\
r_{Q}=2c_{\sqrt{Q}}r_{\sqrt{Q}}
\end{cases}\leftrightarrow\quad
\begin{cases}
c_{\sqrt{Q}}=\frac{\sqrt{c_{Q}+r_{Q}}+\sqrt{c_{Q}-r_{Q}}}{2}\\
r_{\sqrt{Q}}=\frac{\sqrt{c_{Q}+r_{Q}}-\sqrt{c_{Q}-r_{Q}}}{2}.
\end{cases}\hspace{10pt}
\label{sqrtQ}
\end{equation}
with $\vecund{r}_{Q}\parallel\vecund{r}_{\sqrt{Q}}$.
In order to evaluate $\mathcal{\tiny{F}}_{Q}(A,B,C)$ we can compute its first two terms, while the third one is similar to the second one. Let us start with the product $\hat{Q}\hat{A}$: it is a generic operator with coefficients 
\begin{align}
c_{QA}&=c_{Q}c_{A}+\vecund{r}_{Q}\cdot\vecund{r}_{A}\label{cqa}\\
\vecund{r}_{QA}&=c_{Q}\vecund{r}_{A}+c_{A}\vecund{r}_{Q}+i\left(\vecund{r}_{Q}\times\vecund{r}_{A}\right),\label{rqa}
\end{align}
computed by applying \eqq{vec2}. Thus the first term of $\mathcal{\tiny{F}}_{Q}^{(2)}$ is simply $\tr{\hat{Q}\hat{A}}=2c_{QA}$. \\
As for the product $\sqrt{\hat{Q}}\hat{B}\sqrt{\hat{Q}}$, its first coefficient is simple: 
\begin{equation}
c_{\sqrt{Q}B\sqrt{Q}}=\oneover{2}\tr{\sqrt{\hat{Q}}\hat{B}\sqrt{\hat{Q}}}=c_{QB},
\end{equation}
easily obtained by relabelling \eqq{cqa}. The vector of coefficients instead is
\begin{equation}\label{rSqBSq}
\begin{aligned}
\vecund{r}_{\sqrt{Q}B\sqrt{Q}}&=c_{\sqrt{Q}B}\vecund{r}_{\sqrt{Q}}+c_{\sqrt{Q}}\vecund{r}_{\sqrt{Q}B}+i\left(\vecund{r}_{\sqrt{Q}B}\times\vecund{r}_{\sqrt{Q}}\right)\\
&=2c_{B}c_{\sqrt{Q}}\vecund{r}_{\sqrt{Q}}+\left(c_{\sqrt{Q}}\right)^{2}\vecund{r}_{B}+\left(\vecund{r}_{\sqrt{Q}}\cdot\vecund{r}_{B}\right)\vecund{r}_{\sqrt{Q}}-\left(r_{\sqrt{Q}}\right)^{2}\vecund{r}_{B}+\left(\vecund{r}_{\sqrt{Q}}\cdot\vecund{r}_{B}\right)\vecund{r}_{\sqrt{Q}}\\
&=c_{B}\vecund{r}_{Q}+2\left(\vecund{r}_{\sqrt{Q}}\cdot\vecund{r}_{B}\right)\vecund{r}_{\sqrt{Q}}+\left(\left(c_{\sqrt{Q}}\right)^{2}-\left(r_{\sqrt{Q}}\right)^{2}\right)\vecund{r}_{B},
\end{aligned}
\end{equation}
where we have first computed the product between $\sqrt{\hat{Q}}\hat{B}$ and $\sqrt{\hat{Q}}$, then substituted the expression for the former by relabelling once again the product $\hat{Q}\hat{A}$ and employed \eqq{sqrtQ}. We are interested in the absolute value of $\sqrt{\hat{Q}}\hat{B}\sqrt{\hat{Q}}$, i.e., the sum of the absolute value of its eigenvalues $\lambda^{(\pm)}_{\sqrt{Q}B\sqrt{Q}}=c_{\sqrt{Q}B\sqrt{Q}}\pm r_{\sqrt{Q}B\sqrt{Q}}$. Hence the only dependence of the final expression on \eqq{rSqBSq} is through its norm:
\begin{equation}\label{rMod}
\begin{aligned}
\left(r_{\sqrt{Q}B\sqrt{Q}}\right)^{2}&=\left(c_{B}r_{Q}\right)^{2}+4(\vecund{r}_{\sqrt{Q}}\cdot\vecund{r}_{B})^{2}\left(c_{\sqrt{Q}}\right)^{2}\\
&+\left(\left(c_{\sqrt{Q}}\right)^{2}-\left(r_{\sqrt{Q}}\right)^{2}\right)^{2}\left(r_{B}\right)^{2}+2c_{B}\left(\left(c_{\sqrt{Q}}\right)^{2}+\left(r_{\sqrt{Q}}\right)^{2}\right)\left(\vecund{r}_{Q}\cdot\vecund{r}_{B}\right)\\
&=\left(c_{Q}c_{B}+\vecund{r}_{Q}\cdot\vecund{r}_{B}\right)^{2}+\left(\left(r_{B}\right)^{2}-\left(c_{B}\right)^{2}\right)\left(\left(c_{Q}\right)^{2}-\left(r_{Q}\right)^{2}\right),
\end{aligned}
\end{equation}
which we have simplified by employing the relations \eqq{sqrtQ}. Eventually, we have to distinguish between two cases:
 \begin{itemize}
 \item If both $\hat{B}$ and $\hat{C}$ have definite sign then they can always be taken to be positive semidefinite, up to a relabeling $0\leftrightarrow 1$ of the second bits $k_{2}$ of the original states. Then we have 
 \begin{align}
 &\norm{\sqrt{\hat{Q}}\hat{B}\sqrt{\hat{Q}}}_{1}=\lambda^{(+)}_{\sqrt{Q}B\sqrt{Q}}+\lambda^{(-)}_{\sqrt{Q}B\sqrt{Q}}=2c_{\sqrt{Q}B\sqrt{Q}},\nonumber\\
 &\norm{\sqrt{\hat{\mathbf{1}}-\hat{Q}}\hat{C}\sqrt{\hat{\mathbf{1}}-\hat{Q}}}_{1}=2c_{\sqrt{\mathbf{1}-Q}C\sqrt{\mathbf{1}-Q}};
  \end{align}
 \item If instead $\hat{B}$ and $\hat{C}$ do not have a definite sign, then it must hold $\lambda^{(+)}_{\sqrt{Q}B\sqrt{Q}}\geq 0$ and $\lambda^{(-)}_{\sqrt{Q}B\sqrt{Q}}\leq 0$ and similar relations for $C$, so that 
  \begin{align}
 &\norm{\sqrt{\hat{Q}}\hat{B}\sqrt{\hat{Q}}}_{1}=\lambda^{(+)}_{\sqrt{Q}B\sqrt{Q}}-\lambda^{(-)}_{\sqrt{Q}B\sqrt{Q}}=2r_{\sqrt{Q}B\sqrt{Q}},\nonumber\\
 &\norm{\sqrt{\hat{\mathbf{1}}-\hat{Q}}\hat{C}\sqrt{\hat{\mathbf{1}}-\hat{Q}}}_{1}=2r_{\sqrt{\mathbf{1}-Q}C\sqrt{\mathbf{1}-Q}}.
  \end{align}
 \end{itemize}
 Let us also note that the third term $\norm{\sqrt{\hat{\mathbf{1}}-\hat{Q}}\hat{C}\sqrt{\hat{\mathbf{1}}-\hat{Q}}}_{1}$ can be expressed in terms of the coefficients of $\hat{Q}$ by observing that $c_{\mathbf{1}-Q}=(1-c_{Q})$ and $\vecund{r}_{\mathbf{1}-Q}=-\vecund{r}_{Q}$.\\
We can conclude that for $\hat{B}$ and $\hat{C}$ of non-definite sign
\begin{equation}
\mathcal{\tiny{F}}_{Q}^{(2)}\left(\hat{A},\hat{B},\hat{C}\right)=2(c_{QA}+r_{\sqrt{Q}B\sqrt{Q}}+r_{\sqrt{\mathbf{1}-Q}C\sqrt{\mathbf{1}-Q}}),
\end{equation} 
while for $\hat{B}$ and $\hat{C}$ of definite sign
\begin{equation}
\mathcal{\tiny{F}}_{Q}^{(2)}\left(\hat{A},\hat{B},\hat{C}\right)=2\left(c_{QA}+c_{QB}+c_{QC}\right),
\end{equation}
which give respectively Eqs.~(\ref{fQub}, \ref{defSign}) after inserting the values of the coefficients computed above, i.e., Eqs.~(\ref{cqa}, \ref{rMod}).


  \chapter{The Wigner function of a probabilistically-amplified coherent state}\label{app:wignerPA}
   In this appendix we compute the Wigner function $W^{(0)}$, \eqq{wignereco}, of the output state of the NHPA when a coherent state of amplitude $\alpha$ is employed as input. The function is named $W_{\alpha,out}(\beta)$, $\beta\in\mathbb{R}^{2}$ and it can be obtained by Fourier-transforming the corresponding characteristic function $\chi_{\alpha,out}(\beta)$, \eqq{charfunction}:
   \begin{equation}\label{charPA}
   \begin{aligned}
   \chi_{\alpha,out}(\beta)&=\tr{\mathcal{A}_{g,n}(\dketbra{\alpha})\hat{D}(\beta)}=\bra{\alpha}\hat{M}_{S}\hat{D}(\beta)\hat{M}_{S}^{\dag}\ket{\alpha}+\bra{\alpha}\hat{M}_{F}\hat{D}(\beta)\hat{M}_{F}^{\dag}\ket{\alpha}\\
   &=\bra{\alpha}\hat{D}(\beta)\ket{\alpha}+\sum_{k,j\leq n}s_{nk}s_{nj}\braket{\alpha}{k}\braket{j}{\alpha}\bra{k}\hat{D}(\beta)\ket{j}\\
   &-\sum_{k\leq n}s_{nk}\left[\braket{\alpha}{k}\bra{k}\hat{D}(\beta)\ket{\alpha}+\bra{\alpha}\hat{D}(\beta)\ket{k}\braket{k}{\alpha}\right]\\
   &+\sum_{k,j\leq n}f_{nk}f_{nj}\braket{\alpha}{k}\braket{j}{\alpha}\bra{k}\hat{D}(\beta)\ket{j}\\
   &=\chi_{\alpha,in}(\beta)+\sum_{k,j\leq n}(s_{nk}s_{nj}+f_{nk}f_{nj})\braket{\alpha}{k}\braket{j}{\alpha}d_{kj}(\beta)\\
   &-\sum_{k\leq n}s_{nk}\left[\braket{\alpha}{k}e_{k}(\beta,\alpha)+e^{*}_{k}(-\beta,\alpha)\braket{k}{\alpha}\right],
   \end{aligned}
   \end{equation} where in the second equality we have employed the definitions Eqs.~(\ref{mSucc},\ref{mFail}) of the Kraus operators of the NHPA map and set $s_{nk}=1-g^{-(n-k)}$, $f_{nk}=(1-g^{-2(n-k)})^{1/2}$, while in the third one we have defined $d_{kj}(\beta)=\bra{k}\hat{D}(\beta)\ket{j}$, $e_{k}(\beta,\alpha)=\bra{k}\hat{D}(\beta)\ket{\alpha}$, introduced the identity representation in the Fock basis and consider that the characteristic function of the input coherent state is $\bra{\alpha}\hat{D}(\beta)\ket{\alpha}$. The functions $d_{jk}(\beta)$, $e_{k}(\beta,\alpha)$ can be further manipulated using Eqs.~(\ref{dispaction},\ref{coherentstates}) to obtain:
   \begin{align}
   &d_{kj}(\beta)=\sum_{\ell=0}^{\min(k,j)}\frac{\sqrt{k!j!}}{\ell!(k-\ell)!(j-\ell)!}d_{kj\ell}(\beta),\\
   &e_{k}(\beta,\alpha)=e^{\frac{\beta\alpha^{*}-\beta^{*}\alpha}{2}}d_{k00}(\alpha+\beta),\\
   &d_{kj\ell}(\beta)=e^{-\frac{\abs{\beta}^{2}}{2}}\beta^{k-\ell}(-\beta^{*})^{j-\ell}.
   \end{align}
Now we take the complex Fourier transform of \eqq{charPA} with respect to $\beta$ to obtain the corresponding Wigner function. Let us note that  the only dependence on $\beta$ is through the functions $\chi_{\alpha,in}(\beta)$ and $d_{jk\ell}(\beta)$; the first straightforwardly gives the Wigner function of a coherent state $W_{\alpha,in}(\gamma)$, while the second one needs to be computed explicitly:
\begin{equation}
\begin{aligned}
\tilde{d}_{kj\ell}(\gamma)&=\int\frac{d^{2}\beta}{\pi^{2}}~d_{kj\ell}(\beta)e^{\gamma\beta^{*}-\gamma^{*}\beta}\\
&=\int\frac{d^{2}\vecund{r}_{\beta}}{(2\pi)^{2}}e^{-i\vecund{r}_{\beta}\Omega^{T}\vecund{s}_{\gamma}}e^{-\frac{q_{\beta}^{2}+p_{\beta}^{2}}{4}}\left(-q_{\beta}+ip_{\beta}\right)^{j-\ell}\left(q_{\beta}+ip_{\beta}\right)^{k-\ell}\\
&=\oneover{2\pi}\sum_{t=0}^{j-\ell}\sum_{u=0}^{k-\ell}\left(\begin{array}{c}j-\ell\\t\end{array}\right)\left(\begin{array}{c}k-\ell\\u\end{array}\right)(-1)^{j-\ell-t}(i)^{t+u} \\
&\cdot\int\frac{dq_{\beta}}{\sqrt{2\pi}}e^{-2iq_{\beta}p_{\gamma}}e^{-\frac{q_{\beta}^{2}}{4}}\left(q_{\beta}\right)^{j+k-2\ell-t-u}\int\frac{dp_{\beta}}{\sqrt{2\pi}}e^{2ip_{\beta}q_{\gamma}}e^{-\frac{p_{\beta}^{2}}{4}}\left(p_{\beta}\right)^{t+u}.
\end{aligned}
\end{equation}
Then it is sufficient t to apply the relation 
\begin{equation}
\int\frac{dx}{\sqrt{2\pi}}e^{-ixy}e^{-\frac{x^{2}}{2}}x^{t}=\left(\frac{-i}{\sqrt{2}}\right)^{t}e^{-\frac{y^{2}}{2}}H_{t}\left(\frac{y}{\sqrt{2}}\right),
\end{equation}
where $H_{t}$ is an Hermite polynomial of order $t$ to get:
\begin{equation}
\begin{aligned}
\tilde{d}_{kj\ell}(\gamma)&=\oneover{2\pi}e^{-\abs{\gamma}^{2}}\sum_{t=0}^{j-\ell}\sum_{u=0}^{k-\ell}\left(\begin{array}{c}j-\ell\\t\end{array}\right)\left(\begin{array}{c}k-\ell\\u\end{array}\right)\frac{(-1)^{j-\ell-t}(i)^{t+u}}{2^{\frac{j+k-2\ell}{2}}}\\
&\cdot H_{j+k-2\ell-t-u}\left(-p_{\gamma}\right)H_{t+u}\left(q_{\gamma}\right).
\end{aligned}
\end{equation}
The transform of $e_{k}(\beta,\alpha)$ with respect to $\beta$ then is
\begin{equation}
\tilde{e}_{k}(\gamma,\alpha)=\oneover{\sqrt{k!}}e^{\alpha\gamma^{*}-\alpha^{*}\gamma}\tilde{d}_{k00}\left(\gamma-\frac{\alpha}{2}\right).
\end{equation}
In conclusion, noting also that $W_{\alpha,in}(\gamma)=\tilde{d}_{000}(\gamma-\alpha)$ we have
 \begin{equation}\label{charPA}
   \begin{aligned}
   W_{\alpha,out}(\gamma)&=\tilde{d}_{000}(\gamma-\alpha)+\sum_{k,j\leq n}(s_{nk}s_{nj}+f_{nk}f_{nj})\braket{\alpha}{k}\braket{j}{\alpha}\tilde{d}_{kj}(\gamma)\\
 &-\sum_{k\leq n}s_{nk}\big[\braket{\alpha}{k}\braket{j}{\alpha}e^{\alpha\gamma^{*}-\alpha^{*}\gamma}\tilde{d}_{k00}\left(\gamma-\frac{\alpha}{2}\right)\\
 &+e^{-\alpha\gamma^{*}+\alpha^{*}\gamma}\tilde{d}_{0k0}\left(\gamma-\frac{\alpha}{2}\right)\braket{\alpha}{j}\braket{k}{\alpha}\big],
   \end{aligned}
   \end{equation}
   which can be computed numerically without any truncation of the sums.


\chapter{Secondary features of the amplifier decoding schemes} \label{app:altDecs}
In this appendix we prove some side observations related to the NHPA decoders discussed in Sec.~\ref{sec2}. We assume w.l.o.g. that $\alpha>0$, $\beta\in\mathbb{R}$. Let us start by proving the points 3 and 4 in the discussion of Sec.~\ref{sec2}:
\begin{enumerate}
\item If we set $\beta=0$ then the NHPA receiver perform better than the unoptimized Kennedy one, indeed
\begin{equation}
\begin{aligned}
P_{succ}(\alpha,0,g,n)&=1-\oneover{2}F\left((A_{g,n}(\ket{0}\bra{0}),A_{g,n}(\ket{2\alpha}\bra{2\alpha})\right)\\
&\leq 1-\oneover{2}F\left(\ket{0}\bra{0},\ket{2\alpha}\bra{2\alpha}\right)=P_{succ}(\alpha,0,g=1,n),
\end{aligned}
\end{equation}
where we have used the expression of the fidelity when one of its arguments is pure, \eqq{fidpure}, and the non-decreasing property under CPTP transformations of both its arguments, \eqq{nondec};
\item For $n=1$ and $\beta<0$ the NHPA success probability is a decreasing function of the gain, indeed in this case its expression is
\begin{equation}\label{psuccn1}
\begin{aligned}
P_{succ}(\alpha,\beta,g,1)&=\oneover{2}\Big[1-e^{-4\alpha^{2}-\beta^{2}}\left(e^{4\alpha\beta}+2(1-g^{-1})(1-e^{2\alpha\beta})\right)+e^{-\beta^{2}}\Big].
\end{aligned}
\end{equation}
We notice that the dependence on $g$ is via the term $e^{-4\alpha^{2}-\beta^{2}} (1-e^{2\alpha\beta})/g$. Accordingly  the probability is increasing for $\beta>0$, constant for $\beta=0$ and decreasing for $\beta<0$. 
Hence for $\beta= |\beta|$ positive we have 
\begin{equation}\label{psuccn11}
\begin{aligned}
P_{succ}(\alpha,\beta,g,1) &\leq P_{succ}(\alpha,\abs{\beta},\infty,1)\\
 & =  \oneover{2}\left[e^{-\beta^{2}} +1-e^{-4\alpha^{2}-\beta^{2}}\Big(e^{4\alpha|\beta|}+2\left(1-e^{2\alpha|\beta|}\right)\Big)\right],
\end{aligned}
\end{equation}
while for $\beta=- |\beta|$ negative we have 
\begin{equation}\label{psuccn12}
\begin{aligned}
P_{succ}(\alpha,\beta,g,1)\leq P_{succ}(\alpha,-\abs{\beta},1,1) =  \oneover{2}\left[e^{-\beta^{2}} +1-e^{-4\alpha^{2}-\beta^{2}}\; e^{-4\alpha|\beta|}\right].
 \end{aligned}
\end{equation}
Since for all $|\beta|$ we have 
$e^{-4\alpha|\beta|} \leq 1+ \left(1-e^{2\alpha|\beta|}\right)^2$,
it follows that 
\begin{equation}
 P_{succ}(\alpha,\abs{\beta},\infty,1) \leq P_{succ}(\alpha,-\abs{\beta},1,1).
 \end{equation}
Therefore we can conclude that the maximum of $P_{succ}(\alpha,\beta,g,1)$ is always achieved for $g=1$ by properly choosing the sign of the parameter $\beta$, i.e. no improvement can be gained via the action of the amplifier for $n=1$.
\end{enumerate}

Next we consider an alternative decoder were the NHPA is substituted with a generic PI Gaussian channel $\Phi_{PI}$, e.g., an ordinary parametric amplifier, showing that it gives no advantage over the optimized Kennedy detector. Indeed $\Phi_{PI}$ can be always decomposed as the concatenation of a quantum-limited  amplifier $\mathcal{A}_{\kappa,0}$ and attenuator $\mathcal{E}_{\eta,0}$,, as discussed in \eqq{ampattdeco}. By noticing that the latter has a dual channel $\mathcal{A}_{\kappa}^{*}=\kappa^{-1}\mathcal{E}_{\kappa^{-1}}$, \eqq{duality}, we can write the overlap between the coherent states $\ket{\alpha}$ and $\ket{\beta}$ after transformation of the former under $\Phi_{PI}$ as 
\begin{equation}
\begin{aligned} 
F\left(\dketbra{\beta}),\Phi_{PI}(\dketbra{\alpha})\right)&= F\left(\dketbra{\beta}),\left(\mathcal{A}_{\kappa,0}\circ\mathcal{E}_{\eta,0}\right)(\dketbra{\alpha})\right)\\
&=\kappa^{-1}F\left(\mathcal{E}_{\kappa^{-1}}(\dketbra{\beta}),\mathcal{E}_{\eta}(\dketbra{\alpha})\right)  \\
&=\kappa^{-1}F\left(\dketbra{\beta'},\dketbra{\sqrt{\eta}\alpha}\right), \nonumber 
\end{aligned}
\end{equation}
$\ket{\beta'}$ being an attenuated version  of $\ket{\beta}$ whose explicit value is irrelevant since it will be optimized. Thus, calling $\Delta_{\Phi_{PI},\alpha}$, $\Delta_{\alpha}$ the difference between the displaced-vacuum-overlaps of the two input states respectively with and without application of the channel $\Phi_{PI}$, we have $\Delta_{\Phi_{PI},\alpha}=\kappa^{-1}\Delta_{\sqrt{\eta}\alpha}\leq\Delta_{\alpha}$, i.e. no improvement in the success probability can be obtained by applying a PI Gaussian channel before the detection.\\

Eventually, we compute explicitly the overlap between a Fock state and a squeezed-displaced state needed to evaluate the success probability of the TS-based schemes, including the one with NHPA that we described in Sec.~\ref{sec2}.
\begin{equation}\label{sqdispOv}
\begin{aligned}
\braket{k}{\beta,r}&=\bra{k}\hat{U}_{sq}(r)\hat{D}(\beta)\ket{0}=\bra{k}\hat{D}(\tilde{\beta})\ket{r}=\sum_{\ell=0}^{\infty}\frac{c_{\ell}(r)}{\sqrt{(2\ell)!}}\bra{k}\hat{D}(\tilde{\beta})\ket{2\ell}\\
&=\sum_{\ell=0}^{\infty}\frac{c_{\ell}(r)}{\sqrt{(2\ell)!}}\bra{k}\left(\hat{D}(\tilde{\beta})\hat{a}^{\dag}\hat{D}^{\dag}(\tilde{\beta})\right)^{2\ell}\hat{D}(\tilde{\beta})\ket{0}\\
&=\sum_{\ell=0}^{\infty}\frac{c_{\ell}(r)}{\sqrt{(2\ell)!}}\bra{k}\left(\hat{a}^{\dag}-\tilde{\beta}^{*}\right)^{2\ell}\ket{\tilde{\beta}}\\
&=e^{-\frac{\abs{\tilde{\beta}}^{2}}{2}}\sum_{\ell=0}^{\infty}\frac{c_{\ell}(r)}{\sqrt{(2\ell)!}}\sum_{j=0}^{2\ell}\left(\begin{array}{c}2\ell\\j\end{array}\right)\left(-\tilde{\beta}^{*}\right)^{2\ell-j}\sum_{m=0}^{\infty}\frac{\tilde{\beta}^{m}}{\sqrt{m!}}\bra{k}\left(\hat{a}^{\dag}\right)^{j}\ket{m}\\
&=e^{-\frac{\abs{\tilde{\beta}}^{2}}{2}}\sum_{\ell=0}^{\infty}\sum_{j=0}^{\min(2\ell,k)}\frac{c_{\ell}(r)\sqrt{k!}}{(k-j)!\sqrt{(2\ell)!}}\left(\begin{array}{c}2\ell\\j\end{array}\right)\left(-\tilde{\beta}^{*}\right)^{2\ell-j}\tilde{\beta}^{k-j},
\end{aligned}
\end{equation}
where the second equality follows from commuting the squeezing and displacement operation, always possible up to redefining $\tilde{\beta}(\beta,r)$, which is irrelevant in our case since both $\beta$ and $r$ will be optimized; the third equality follows from the Fock representation of a squeezed state, \eqq{squeezedstates}; the third one from writing explicitly the Fock components of the expansion as powers of the photon-number increasing operator $\hat{a}^{\dag}$ and repeatedly introducing couples of operators $\hat{D}\hat{D}^{\dag}$ between them; the fifth equality follows from applying the displacement operation as in \eqq{dispaction}; the sixth one from expanding the binomial power and the coherent state $\ket{\tilde{\beta}}$ and the last one from computing the average $\bra{k}\left(\hat{a}^{\dag}\right)^{j}\ket{m}\propto\delta_{m,k-j}$. Let us note that \eqq{sqdispOv} can be computed numerically only by inserting a cutoff for the sum over $\ell$, as has been done to obtain the plots in Sec.~\ref{sec:cohDisc}.


\chapter{Approximate cavity implementation of a partial dephaser}\label{app:cavity}
In this appendix we report the full computation of the output state of the optical-cavity transformation described in Sec.~\ref{sec3}. Consider the state of the composite atom-cavity system after the injection of a coherent state $\ket{\alpha}$ into the cavity:
\begin{equation}
\ket{\alpha,G}=e^{-\frac{\abs{\alpha}^{2}}{2}}\sum_{k=0}^{\infty}\frac{\alpha^{k}}{\sqrt{k!}}\ket{k,G}.\label{in}
\end{equation}
In order to study the effect of the unitary evolution $\hat{U}_{t}=\exp(-i\hat{H}_{JC}t)$ with the Jaynes-Cummings hamiltonian, \eqq{ham}, we recall its spectral decomposition~\cite{nChuangBOOK}. In full generality we consider its off-resonance form that has an additional term $\delta\hat{Z}$ with respect to \eqq{ham}, with coefficient $\delta=(\omega-\omega_{0})/2$ proportional to the detuning between the cavity and atomic frequencies, respectively $\omega$, $\omega_{0}$. Considering the commutation property noticed in Sec.~\ref{sec3}, we can classify the eigenstates of the hamiltonian with respect to the average value of $\hat{N}$:
\begin{align}
&\ket{k,E}\rightarrow\ave{\hat{N}}_{k,E}=k+\oneover{2},\\
&\ket{k+1,G}\rightarrow\ave{\hat{N}}_{k+1,G}=k+1-\oneover{2}\equiv\ave{\hat{N}}_{k,E}.
\end{align}
Thus we can diagonalize the hamiltonian in the subspace \{$\ket{k,E}$, $\ket{k+1,G}$\} for any $k$. This restricted hamiltonian reads out
\[
\hat{H}_{JC}^{(k)}=\left(
\begin{array}{cc}
  \omega(k+\oneover{2})+\delta & g\sqrt{k+1}  \\
  \\
  g\sqrt{k+1} &  \omega(k+\oneover{2})-\delta     \\  
\end{array},
\right)
\]
its spectrum is
\begin{equation}
\lambda_{\pm}=\omega\left(k+\oneover{2}\right)\pm\Omega_{k},
\end{equation}
with $\Omega_{k}^{2}=\delta^{2}+g^{2}(k+1)$, while its eigenvectors can be written as
\begin{equation}\begin{aligned}
\ket{k,+}&=\cos\alpha_{k}\ket{k,E}+\sin\alpha_{k}\ket{k+1,E},\\
\ket{k,-}&=-\sin\alpha_{k}\ket{k,E}+\cos\alpha_{k}\ket{k+1,E},
\end{aligned}
\end{equation}
with
\begin{equation}\label{cosSen}
\begin{aligned}
\cos\alpha_{k}&=\frac{g\sqrt{k+1}}{\sqrt{g^{2}(k+1)+(\Omega_{k}-\delta)^{2}}},\quad\sin\alpha_{k}=\frac{\Omega_{k}-\delta}{\sqrt{g^{2}(k+1)+(\Omega_{k}-\delta)^{2}}}.
\end{aligned}
\end{equation}
The initial state \eqref{in} can be rewritten as
\begin{equation}
\ket{\alpha,G}=e^{-\frac{\abs{\alpha}^{2}}{2}}\left(\sum_{k=0}^{\infty}\frac{\alpha^{k+1}}{\sqrt{(k+1)!}}\ket{k+1,G}+\ket{0,G}\right),
\end{equation}
thus its evolution under the Jaynes-Cummings hamiltonian requires to study the evolution of states $\ket{k+1,G}$ and \ket{0,G}.
It is easily seen that the latter is eigenstate of $\hat{H}_{JC}$ with eigenvalue $-\omega_{0}/2$, while the former has to be expanded first in terms of $\ket{k,\pm}$, giving an evolution
\begin{equation}\label{evpsi2}
\begin{aligned}
\hat{U}_{t}\ket{k+1,G} &=\sin\alpha_{k}e^{-i\lambda_{+}t}\ket{k,+}+\cos\alpha_{k}e^{-i\lambda_{-}t}\ket{k,-}\\
&=e^{-i\omega\left(k+\oneover{2}\right)t}\left(e_{k}\ket{k,E}+d_{k}\ket{k+1,G}\right),
\end{aligned}
\end{equation}
with
\begin{equation}
\begin{aligned}
e_{k}&=-i\frac{g\sqrt{k+1}}{\Omega_{k}}\sin(\Omega_{k}t),\quad d_{k}=\cos(\Omega_{k}t)-i\frac{\delta}{\Omega_{k}}\sin(\Omega_{k}t),
\end{aligned}
\end{equation}
where we have used the relations $\cos(2\alpha_{k})=\delta/\Omega_{k}$, $\sin(2\alpha_{k})=g\sqrt{k+1}/\Omega_{k}$, derived from Eq.~(\ref{cosSen}).
Similarly, the state orthogonal to the latter evolves as
\begin{equation}
\hat{U}_{t}\ket{k,E} =e^{-i\omega\left(k+\oneover{2}\right)t}\left(d_{k}^{*}\ket{k,E}+e_{k}\ket{k+1,G}\right).\label{evpsi1}\\
\end{equation}\\

From now on we choose $\delta=0$, i.e., the condition of perfect resonance between cavity and atom assumed in Sec.~\ref{sec3}. This implies $\Omega_{k}=g\sqrt{k+1}$, $e_{k}=-i\sin(\Omega_{k}t)$ and $d_{k}=\cos(\Omega_{k}t)$. Moreover, we take $t=\tau=\pi/(2 g)$, which assures that a transition $\ket{1,G}\rightarrow\ket{0,E}$ will take place, since $e_{0}=-i$, $d_{0}=0$ with this choice. Note that the coefficients for higher values of $k>0$ are still different from zero. Hence we can write the evolved cavity-atom state at time $\tau$, up to an irrelevant global phase, as $\ket{\psi_{RABI}}=\hat{U}_{\tau}\ket{\alpha,G}$, see \eqq{firstRabi}, with the spurious terms equal to
\begin{equation}
\ket{\Delta}=\sum_{k=1}^{\infty}\frac{\alpha^{k+1}}{\sqrt{(k+1)!}}e^{-i\omega (k +1)\tau}\left(e_{k}\ket{k,E}+d_{k}\ket{k+1,G}\right).
\end{equation}
Next, as described in Sec.~\ref{sec3}, we proceed to decouple the cavity from the atom and applying a random dephasing field of frequency $\theta\gg\omega_{0}$, which cannot excite the atom and acts only on the cavity. The evolution for time $\tilde{\tau}=3\pi/(2g)$ is such that $\tilde{e}_{0}=-e_{0}$, $\tilde{d}_{0}=d_{0}=0$. Taking into account the fact that both kinds of states $\ket{k,E}$, $\ket{k+1,G}$ are now present in the global state and applying the transformation rules (\ref{evpsi2},\ref{evpsi1}) we obtain the final state
\begin{align}
&\ket{\Phi}_{CA}=e^{-\frac{\abs{\alpha}^{2}}{2}}\left(\ket{\Phi_{1}}_{C}\otimes\ket{G}_{A}+\ket{\Phi_{2}}_{C}\otimes\ket{E}_{A}\right),\\
&\begin{aligned}\ket{\Phi_{1}}&=\ket{\alpha_{T}}+\sum_{k=2}^{\infty}\frac{\alpha^{k}}{\sqrt{k!}}e^{-i \frac{2\pi}{g} \omega k-i\theta k}\left(e^{i\theta}e_{k-1}\tilde{e}_{k+1}+d_{k-1}\tilde{d}_{k+1}\right)\ket{k},
\end{aligned}\\
&\begin{aligned}\ket{\Phi_{2}}&=\sum_{k=1}^{\infty}\frac{\alpha^{k+1}}{\sqrt{(k+1)!}}e^{-i\frac{2\pi}{g}\omega (k+1)-i\theta k}\left(e_{k}\tilde{d}_{k}^{*}+e^{-i\theta}d_{k}\tilde{e}_{k}\right)\ket{k},
\end{aligned}
\end{align}
where $\ket{\alpha_{T}}$ is defined in Sec.~\ref{sec3}.
The output state of the field is obtained by tracing out the atomic component, $\hat{\rho}_{C}(\theta)=\dketbra{\Phi_{1}}+\dketbra{\Phi_{2}}$.\\
Eventually, we take the average over the random variable $\theta$. In this way, any term containing a non-zero phase factor $\exp(\theta(k-j))$, $k\neq j$, will vanish, leaving only the terms which are constant in $\theta$ in the superposition. With this procedure we aim at eliminating any coherence between the zero-one photon subspace and the rest of the photon-number states. Unfortunately this operation leaves some spurious terms of coherence between the states with one and two photons. The averaged state is \eqq{finCav} with coefficients given by 
\begin{equation}
\begin{aligned}
D&=\abs{e_{1}{\tilde{d}_{1}}^{*}}^{2}+\abs{\tilde{e}_{1}d_{1}}^{2},\quad E=\frac{e^{-i\frac{2\pi}{g}\omega}}{\sqrt{3}}e_{2}\tilde{d}_{2}^{*}d_{1}^{*}\tilde{e}_{1}^{*}\\
D_{k}(\alpha)&=\abs{e_{k-1}\tilde{e}_{k-1}}^{2}+\abs{d_{k-1}\tilde{d}_{k-1}}^{2}+\frac{\abs{\alpha}^{2}}{k+1}\abs{e_{k}\tilde{d}_{k}^{*}}^{2}+\abs{\tilde{e}_{k}d_{k}}^{2},\\
E_{k}(\alpha)&=\frac{e^{-i\frac{2\pi}{g}\omega}}{\sqrt{k+1}}d_{k}\tilde{d}_{k}e_{k-1}^{*}\tilde{e}_{k-1}^{*}+\frac{\abs{\alpha}^{2}}{k+1}\frac{e^{-i\frac{2\pi}{g}\omega}}{\sqrt{k+2}}e_{k+1}\tilde{d}_{k+1}^{*}d_{k}^{*}\tilde{e}_{k}^{*}.
\end{aligned}
\end{equation}
It can be seen that the spurious coherence between the one- and two-photon states is due to terms in $\ket{\Phi_{2}}$, originally associated to an excited atomic state. 


\chapter{Computation of the realistic PSK detection probability and Hadamard rate}\label{app:naiveProb}
In this appendix we compute the detection probability of PSK signals employing a naive yet realistic detection scheme. Better detection schemes can be developed by refining this one, as discussed in~\cite{marq}.
We employ a device similar to that of Fig. \ref{f11}, where the reflected part of the state after the beam-splitter is subject to a displacement of $-\alpha_{0}$ before photodetection. In this way we perfectly null the state $\ket{\alpha_{0}}$, if it was present, and can exclude its presence if a click is registered. When this happens we proceed with the PSK detection of $M-1$ signals. Specifically, for $M=3$ this second step will be an ordinary Dolinar detection, which is Helstrom-optimal, while for a higher number of pulse phases $M$, this procedure will establish a hierarchy of subsequent realistic PSK detections, excluding one state at each stage, all the way down to $M=2$ remaining states.  Accordingly, for $M=3$ and in the limit of infinite splitting steps $J\rightarrow\infty$, the conditional probability of guessing state $\ket{\alpha_{\ell}}$ if $\ket{\alpha_{m}}$ was sent, $P_{real}^{(M)}(\ell|m;\mathcal{E})$, is:
\begin{equation}
\begin{aligned}
P_{real}^{(3)}(0|0;\mathcal{E})&=1,\quad P_{real}^{(3)}(1|0;\mathcal{E})=0,\quad P_{real}^{(3)}(0|1;\mathcal{E})=\abs{\braket{0}{\alpha_{1}-\alpha_{0}}}^{2}=e^{-3\mathcal{E}},\\
P_{real}^{(3)}(1|1;\mathcal{E})&=\int_{0}^{\abs{\alpha_{1}-\alpha_{0}}^{2}}dx~e^{-x}P_{hel}^{(2)}\left(1|1;\frac{\abs{\alpha_{1}-\alpha_{2}}^{2}-x}{2}\right)\\
&=\int_{e^{-3\mathcal{E}}}^{1}dy~P_{hel}^{(2)}\left(1|1;\frac{3\mathcal{E}+\ln y}{2}\right),\\
P_{real}^{(3)}(2|1;\mathcal{E})&=\int_{e^{-3\mathcal{E}}}^{1}dy~P_{hel}^{(2)}\left(2|1;\frac{3\mathcal{E}+\ln y}{2}\right)=1-P_{real}^{(3)}(0|1;\mathcal{E})-P_{real}^{(3)}(1|1;\mathcal{E}),
\end{aligned}
\end{equation}
where the integrals have been obtained as in Sec.~\ref{secHadCap} up to a rescaling 
\begin{equation}
x\rightarrow x\cdot\abs{\alpha_{1}-\alpha_{0}}^{2}/\mathcal{E},
\end{equation} 
which takes into account the energy of the states after the nulling displacement. Given the symmetry of the states, the remaining conditional probabilities can be obtained by switching $1\leftrightarrow 2$ in the previous equations.
After substituting the previous expressions in \eqq{DDDlm}, we obtain the VP conditional probability of detection when this realistic PSK detection scheme is employed. Eventually the Hadamard rate for $M=3$ can be computed along the same lines of  Sec.~\ref{secHadRate} and it is equal to
\begin{equation}
\begin{aligned}
R^{(N,3)}_{real}(E)&=h\left(\frac{1-e^{-3\mathcal{E}}}{3N}\right)+2h\left(\frac{2-3e^{-\mathcal{E}}+e^{-3\mathcal{E}}}{6N}\right)-\frac{1}{3} h\left(1-e^{-\mathcal{E}}\right)\\
&-\frac{2}{3}\Bigg\{h\left(\frac{e^{-\mathcal{E}}-e^{-3\mathcal{E}}}{2}\right)+h\left(P_{vp\text{-}real}^{(3)}(1|1;\mathcal{E})\right)\\
&+h\left(\frac{2-3e^{-\mathcal{E}}+e^{-3\mathcal{E}}}{2}-P_{vp\text{-}real}^{(3)}(1|1;\mathcal{E})\right)\Bigg\},
\end{aligned}
\end{equation}
where $h(p)$ is defined as in Sec.~\ref{subsec:commC} and we have defined the conditional probability of correctly identifying the state $\ket{\alpha_{1}}$ with VP and realistic PSK detection for $M=3$ as in \eqq{vpRealGen}.\\
Similarly, for $M=4$ we have a two-stage hierarchy of realistic PSK detection where we first null the state $\ket{\alpha_{0}}$. If a click is registered, we are left with three states to discriminate and can resort to the case previously studied, though with a different symmetry between the states. In this case, we null the state $\ket{\alpha_{2}}$, which is equidistant from the other two remaining ones. Hence the conditional probability of realistic PSK detection reads out:
\begingroup
\allowdisplaybreaks
\begin{align}
P_{real}^{(4)}(0|0;\mathcal{E})&=1,\quad P_{real}^{(4)}(1,2,3|0;\mathcal{E})=0,\quad P_{real}^{(4)}(0|1;\mathcal{E})=e^{-2\mathcal{E}},\nonumber\\
P_{real}^{(4)}(0|2;\mathcal{E})&=\abs{\braket{0}{\alpha_{2}-\alpha_{0}}}^{2}=e^{-4\mathcal{E}},\quad P_{real}^{(4)}(1|2;\mathcal{E})=0,\nonumber\\
P_{real}^{(4)}(2|2;\mathcal{E})&=\int_{0}^{\abs{\alpha_{2}-\alpha_{0}}^{2}}dx~e^{-x}\cdot1=1-e^{-4\mathcal{E}},\nonumber\\
P_{real}^{(4)}(2|1;\mathcal{E})&=\int_{0}^{\abs{\alpha_{1}-\alpha_{0}}^{2}}dx~e^{-x} e^{-(\abs{\alpha_{1}-\alpha_{2}}^{2}-x)}=\int_{e^{-2\mathcal{E}}}^{1}dt~e^{-(2\mathcal{E}+\ln t)}=e^{-2\mathcal{E}}2\mathcal{E},\nonumber\\
P_{real}^{(4)}(1|1;\mathcal{E})&=\int_{0}^{\abs{\alpha_{1}-\alpha_{0}}^{2}}dx~e^{-x}\\
&\cdot\int_{0}^{(|\alpha_{1}-\alpha_{2}|^{2}-x)}dx'~e^{-x'}P_{hel}^{(2)}\left(1|1;\frac{(\abs{\alpha_{1}-\alpha_{3}}^{2}-x-x')}{2}\right)\nonumber\\
&={\displaystyle \int_{e^{-2\mathcal{E}}}^{1}}dt\int_{e^{-(2\mathcal{E}+\ln t)}}^{1}dt'P_{hel}^{(2)}\left(1|1;2\mathcal{E}+\ln(t t')\right),\nonumber\\
P_{real}^{(4)}(3|1;\mathcal{E})&=1-P_{real}^{(4)}(0|1;\mathcal{E})-P_{real}^{(4)}(2|1;\mathcal{E})-P_{real}^{(4)}(1|1;\mathcal{E}).\nonumber
\end{align}
\endgroup
As before, for symmetry reasons the remaining conditional probabilities can be obtained by switching the indexes $1\leftrightarrow3$.
Eventually the Hadamard rate for $M=4$ is given by
\begin{align}
R_{real}^{(4)}(E)&=h\left(\frac{3+4e^{-\mathcal{E}}-6e^{-2\mathcal{E}}-e^{-4\mathcal{E}}}{12N}\right)+h\left(\frac{3+8e^{-\mathcal{E}}-12e^{-2\mathcal{E}}+e^{-4\mathcal{E}}}{12N}\right)\nonumber\\
&+2h\left(\frac{1-4e^{-\mathcal{E}}+(3+2\mathcal{E})e^{-2\mathcal{E}}}{4N}\right)\nonumber\\
&-\frac{1}{4N}\left\{h\left(1-e^{-2\mathcal{E}}\right)+h\left(\frac{e^{-\mathcal{E}}-e^{-4\mathcal{E}}}{3}\right)+h\left(\frac{3-4e^{-\mathcal{E}}+e^{-4\mathcal{E}}}{3}\right)\right\}\\
&-\frac{1}{2N}\Bigg\{h\left(e^{-\mathcal{E}}-e^{-2\mathcal{E}}\right)+h\left(2(e^{-\mathcal{E}}-(1+\mathcal{E})e^{-2\mathcal{E}})\right)+h\left(P_{vp-real}^{(4)}(1|1;\mathcal{E})\right)\nonumber\\
&+h\left(1-4e^{-\mathcal{E}}+(3+2\mathcal{E})e^{-2\mathcal{E}}-P_{vp-real}^{(4)}(1|1;\mathcal{E})\right)\Bigg\},\nonumber
\end{align}
where we have defined the conditional probability of correctly identifying the state $\ket{\alpha_{1}}$ with VP and realistic PSK detection for $M=4$ as in \eqq{vpRealGen}.

\end{appendices}

\begin{spacing}{0.9}


\bibliographystyle{ieeetr} 
\cleardoublepage


\end{spacing}

\printthesisindex 

\end{document}